\newtheorem{de}{Definition}
\newtheorem{remark}{Remark}
\newtheorem{lemma}{Lemma}
\newtheorem{prop}{Proposition}
\newtheorem{thm}{Theorem}
\newtheorem{corollary}{Corollary}
\DeclareMathOperator{\mdiv}{div}
\DeclareMathOperator{\trace}{trace}
\DeclareMathOperator{\var}{Var}
\def\R{\mathbb{R}}
\def\MeasX{\mathcal{M}^X}
\def\MeasXC{\mathcal{M}^{X,C}}
\def\one{\mathbf{1}}
\def\eJ{\tilde{J}}
\def\X{{\mathcal X_0}}
\def\Haus{\mathcal{H}}
\def\sgn{\text{sgn}}
\def\df{\delta\!f}
\def\FB{\mathcal F}
\def\Cat{\text{Cat}}
\providecommand{\f}{\ensuremath{\boldsymbol{f}} }
\providecommand{\x}{\ensuremath{\boldsymbol{x}} }
\providecommand{\y}{\ensuremath{\boldsymbol{y}} }
\providecommand{\g}{\ensuremath{\boldsymbol{g}} }
\providecommand{\h}{\ensuremath{\boldsymbol{h}} }
\providecommand{\bzeta}{\ensuremath{\boldsymbol{\zeta}} }
\providecommand{\p}{\ensuremath{\boldsymbol{p}} }
\providecommand{\bJ}{\ensuremath{\boldsymbol{J}} }
\providecommand{\bu}{\ensuremath{\boldsymbol{u}} }
\def\Xtemp{X}
\def\ftemp{f}
\providecommand{\ie}{{\it i.e.} }
\providecommand{\dT}{d}
\providecommand{\nT}{T}
\providecommand{\dP}{n}
\providecommand{\nP}{P}
\providecommand{\abs}[1]{\left\lvert#1\right\rvert}
\providecommand{\sabs}[1]{\lvert#1\rvert}
\providecommand{\norm}[1]{\left\lVert#1\right\rVert}
\providecommand{\snorm}[1]{\lVert#1\rVert}
\providecommand{\prs}[1]{\left\langle #1\right\rangle}
\newcommand\rst[2]{{#1}_{\restriction_{#2}}}
\author[1,2]{B. Charlier}
\author[1,3]{N. Charon}
\author[1]{A. Trouvé}
\affil[1]{\small CMLA, UMR 8536, \'Ecole normale supérieure de Cachan, France}
\affil[2]{\small I3M, UMR 5149, Université Montpellier II, France}
\affil[3]{\small DIKU, University of Copenhagen, Denmark}
\title{The fshape framework for the variability analysis of functional shapes}
\begin{document}

\maketitle

\begin{abstract}
This article introduces a full mathematical and numerical framework for treating functional shapes (or fshapes) following the landmarks of shape spaces and shape analysis. Functional shapes can be described as signal functions supported on varying geometrical supports. Analysing variability of fshapes' ensembles require the modelling and quantification of joint variations in geometry and signal, which have been treated separately in previous approaches. Instead, building on the ideas of shape spaces for purely geometrical objects, we propose the extended concept of fshape bundles and define Riemannian metrics for fshape metamorphoses to model geometrico-functional transformations within these bundles. We also generalize previous works on data attachment terms based on the notion of varifolds and demonstrate the utility of these distances. Based on these, we propose variational formulations of the atlas estimation problem on populations of fshapes and prove existence of solutions for the different models. The second part of the article examines thoroughly the numerical implementation of the tangential simplified metamorphosis model by detailing discrete expressions for the metrics and gradients and proposing an optimization scheme for the atlas estimation problem. We present a few results of the methodology on a synthetic dataset as well as on a population of retinal membranes with thickness maps. 
\end{abstract}

\tableofcontents

\section{Introduction}
Shape spaces have emerged as a natural mathematical setting to think about shapes as a structured space, usually a differential manifold or even more a Riemannian manifold. In that setting, group actions of diffeomorphisms \cite{trouve98:_diffeom,dupuis1998variational,Miller2002,Trouve2,Miller2006,Trouve2011,Younes,michor2013zoo}  that should be rooted to the seminal work of V. Arnold \cite{Arnold1966} are powerful vehicles to build a full theoretical as well as computational framework for a comprehensive quantitative analysis of shape variability  recently coined as \emph{diffeomorphometry} and successfully applied in computational anatomy  \cite{doi:10.1142/S2339547814500010}.

The main purpose of the paper is to develop a framework embedding the situation of geometrical shapes carrying functional information that we call here fshapes for \emph{functional shapes}. The idea of functional shape is quite natural since in many scientific settings, a geometrical shape $X$ (basically a submanifold) has associated with it a scalar field $f:X\to\mathbb{R}$ attached to every geometrical point of the support $X$. What is much more unusual however is to consider a fshape { i.e.} the pair $(X,f)$ as a single object that should live in some well defined and well structured ensemble $\mathcal{F}$ of fshapes from which further processing can be derived. We believe this approach of putting the effort on a well grounded definition of shape spaces that play a similar role to functional spaces in modern analysis should be pushed forward to the setting of fshapes. This global point of view leading to the notion of geometry of shape spaces is now well established with fruitful development in many regards \cite{micheli2013sobolev} for purely geometrical shapes but as far as we know the extension of such approach to functional shapes has only started very recently in \cite{Charon_thesis}.

A core issue in working with fshapes is their mixed geometrical and functional nature so that smooth infinitesimal transformations of a given fshape $(X,f)$ should combine a geometrical and functional component simultaneously. The geometrical part should transport the supporting manifold $X$, and the functional one should modify the functional signal $f$. This notion of combined geometrical and functional infinitesimal transformation has been introduced previously in \cite{Trouve1995a} in the setting of images on a fixed support $X$ (where $X$ in the unit square or a flat torus). In that case the infinitesimal signal evolution $\delta f$ is the combination of two factors: on the one hand, the variation of the signal due to the geometric transport of pixel values, 
on the other hand a purely additive perturbation of the signal. This has been further studied in \cite{trouve05:_local_geomet_defor_templ} and conceptualize as the  \emph{metamorphosis} framework in \cite{Trouve1}. The situation we are looking at here is more general since now the support is a submanifold that can freely evolve during the metamorphosis (as in the example of Figure \ref{fig:statues_metam}). A first global outcome is that the associated space of fshapes is not only an orbit $G.X_0$ under the action of a subgroup of smooth diffeomorphisms $G$ of the ambient space on a geometrical template $X_0$: the orbit $G.X_0$ is now the base space of a vector bundle and above each manifold $X$ in the orbit we consider the full vector space $L^2(X)$ of square integrable signals so that we end up with a vector bundle $\mathcal{F}$ over the orbit (see Figure \ref{fig:fshape_bundle}). The precise construction of the fshape bundle $\mathcal{F}$ is described in Section 2 as well as the associated metamorphosis based Riemannian metric for which we can prove two  key results: the existence of geodesic between any two fshapes (Theorem \ref{thm:geo}) and the existence of Karcher means for a population of such fshapes in $\mathcal{F}$ (Theorem \ref{thm:Kar}). 

Obviously there is no hope to build a fshape bundle $\mathcal{F}$ that can contain any possible pair $(X,f)$ in particular since any two fshapes within $\mathcal{F}$ will have diffeomorphic supports. Every fshape in $\mathcal{F}$ should be understood as an ideal model for truly noisy observed fshapes that generically do not belong to $\mathcal{F}$. As a consequence, a second core issue is to build proper smooth data attachment terms or dissimilarity measures that can be defined between any arbitrary pair of fshapes with possibly non diffeomorphic supports. The powerful setting of mathematical currents equipped with dual norms for purely geometrical shapes \cite{Glaunes2} has been successfully extended to the situation of fshapes in \cite{Charon1}. However, we develop here (Section \ref{sec:atlas_estimation_formulation}) a different approach encoding non oriented tangential space information that is based on the concept of varifold along the lines of \cite{Charon2} and extended here to the new situation of functional varifolds. We believe that this new type of dissimilarity measure, which is able to compare fshapes on smooth manifolds as well as on polyhedral meshes, is well suited in combination with metamorphoses distances to define a theoretical and computational framework for fshapes. In particular, we establish that useful functional varifolds metrics can be induced by smooth embeddings of functional varifolds into Reproducible Kernel Hilbert Spaces (RKHS) associated with computable kernels. We then establish that the resulting metric has several key regularity properties with respect to smooth geometrical and functional variations of fshapes (Theorem \ref{theo:variation_formula}). Such regularity results open the way to various smooth and computationally tractable variational problems on fshapes involving observed fshapes. 

\begin{figure}[H]
\centering
  \includegraphics[width=.24\textwidth]{./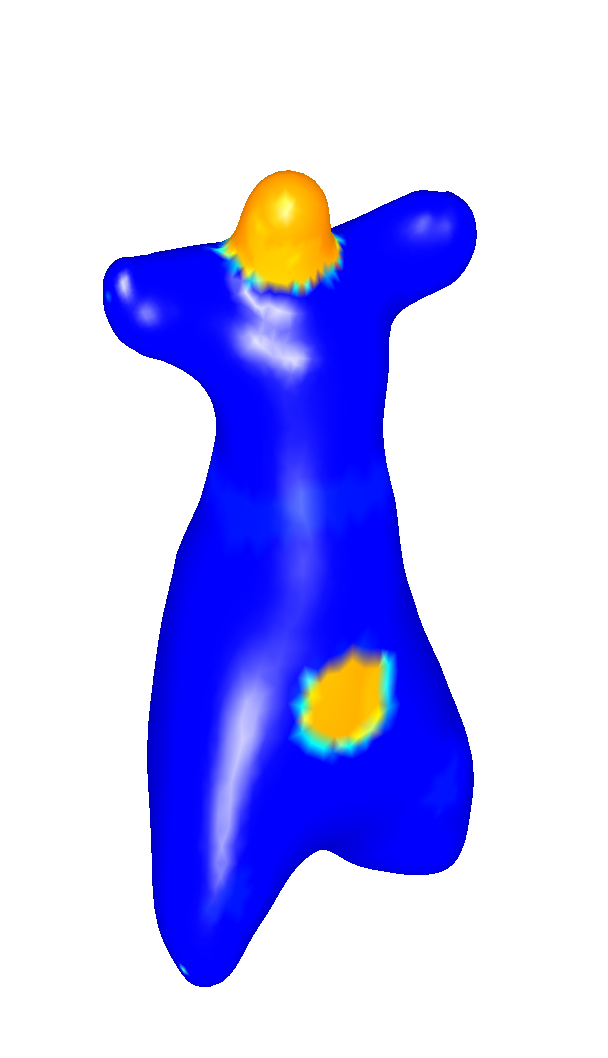}  \includegraphics[width=.24\textwidth]{./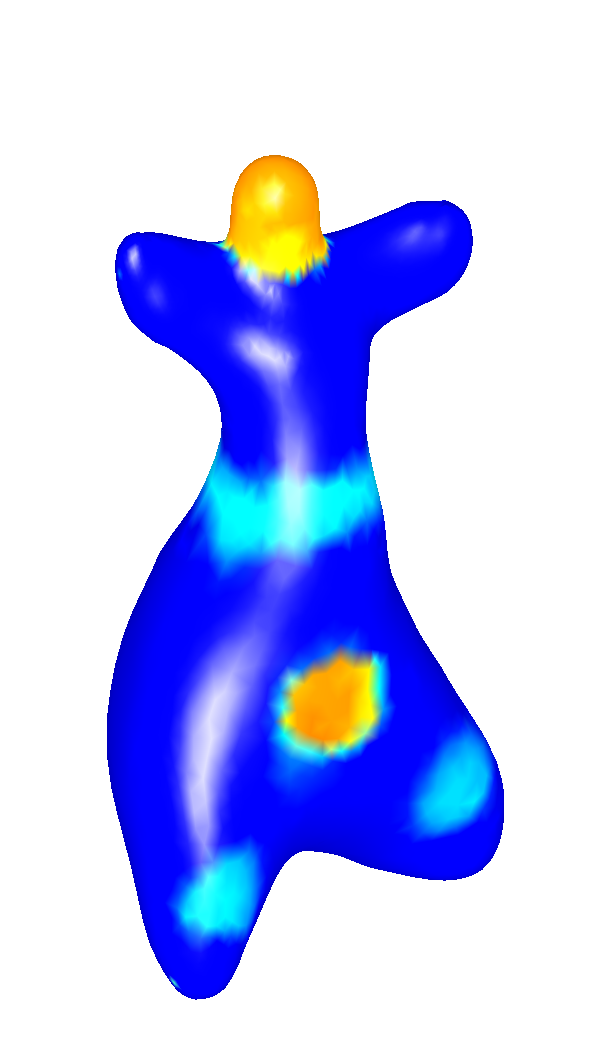}  \includegraphics[width=.24\textwidth]{./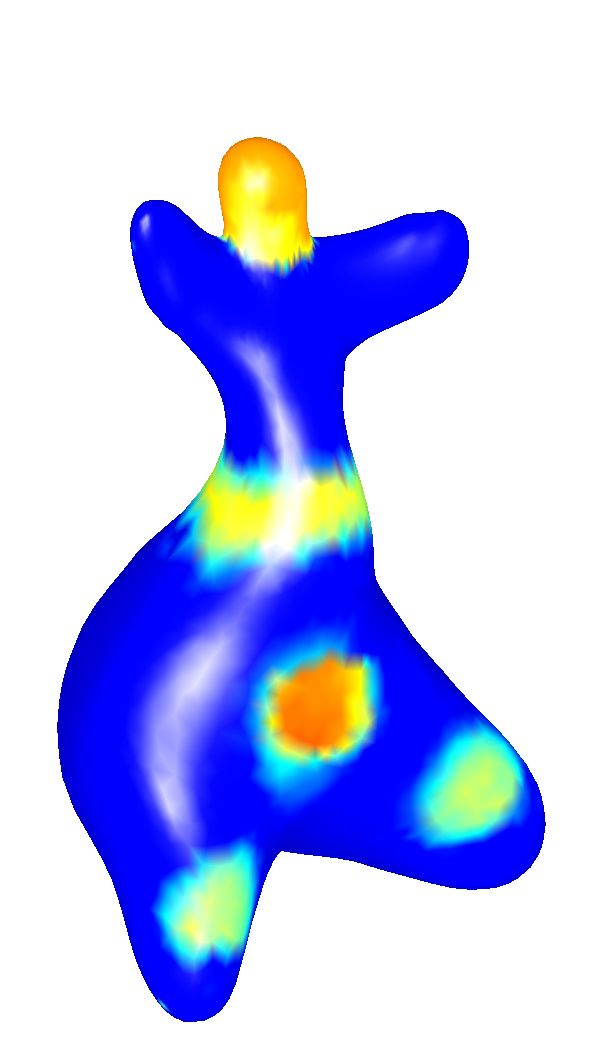}  \includegraphics[width=.24\textwidth]{./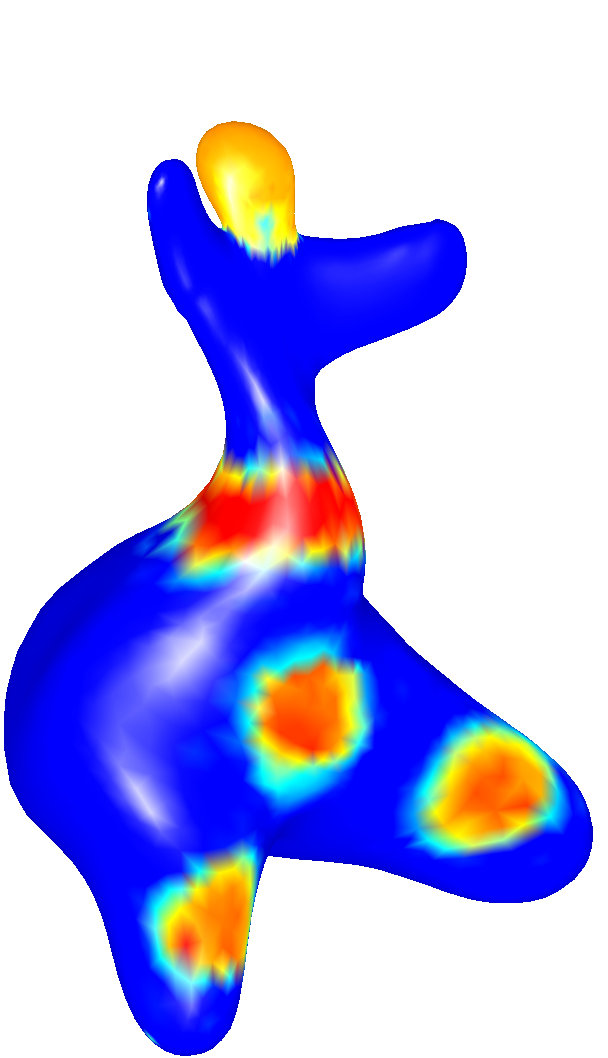}
\caption{Example of metamorphosis transformation of a fshape.}
\label{fig:statues_metam}
\end{figure}

We concentrate on the central problem of template (or atlas in the context of computational anatomy) estimation which is the extension of the Karcher mean problem to the much broader situation where the observed fshapes do not belong to any \emph{a priori} defined fshape bundle $\mathcal{F}$. A much more studied situation is the case of purely image data that has been pushed forward on practical ground in the framework of actions of diffeomorphisms in \cite{joshi2004unbiased}. Here the goal is to obtain, via a coherent fshape framework, a mean template $(X,f)$ encoding simultaneously the mean geometrical and functional information from a dataset $(X^i,f^i)$ of fshapes as the minimizer of a well posed variational problem. This variational problem is basically the sum of the square of the metamorphosis distances between the template $(X,f)$ and the approximating models $(\tilde{X}^i,\tilde{f}^i)$ of the observations $(X^i,f^i)$ within the fshape bundle ${\mathcal F}$ (associated to $(X,f)$)  with the addition of the square of the fvarifold discrepancy measure for the difference between the models $(\tilde{X}^i,\tilde{f}^i)$ and the true observation $(X^i,f^i)$. We prove that introducing a hypertemplate in the spirit of \cite{Ma2008,ma2010bayesian} and restricting the template to belong to the fshape bundle ${\mathcal F}_0$ of a hypertemplate $(X_0,f_0)$ is sufficient to establish a well posed variational problem (Section \ref{sec:atlas_estimation_formulation}). The proof of existence of an optimal template $(X,f)$ as the minimizer within the fshape bundle $\mathcal{F}_0$ is established on fairly general assumptions that cover most of the practical situations (Theorem \ref{theo:existence_atlas_fvar_tang}). Note that even restricted to the situation of purely geometrical shapes, as far as we know, there is no available results in the literature providing a rigorous proof of existence of an average shape from a population of noisy geometrical shapes. Available rigorous results have usually focused on the situation where the dataset is sampled from a Riemannian manifold (usually finite dimensional). When the functional part is also involved, the situation is more complex due to the interactions between function and geometry inside the Riemannian metamorphosis distance but also the fvarifold metric defining the dissimilarity term. We establish similar existence results in a slightly simpler situation, called the tangential model for the metamorphosis metric, where the infinitesimal signal variations along a path are computed with respect of a frozen $L^2(X)$ metric on the geometrical support $X$ of the initial fshape. 

Since the paper aims at presenting both a theoretical and a computational framework, we extensively explore the material needed to bridge the gap between the theoretical framework and its derivation into a computational and algorithmic one in the latter part of this paper. We believe that this part will give also more practical insights into what goals can be achieved with the fshape framework. Moreover, we provide the details necessary to conform to the minimal standard of reproducible research. Just as the absence of a well posed theoretical framework may often produce inconsistent algorithms, we believe that the lack of precise descriptions of the underlying numerical schemes and hidden ``tricks'' may also hinder the development of incrementally better solutions and fair comparison by other researchers. We start in Section \ref{part:discreteScheme} with the derivation of a discrete framework based on polyhedral representation of fshapes leading to particle based approximations of the fvarifold dissimilarity metric and of the Hamiltonian underlying the optimal control formulation of the geodesic trajectories. The actual implemented framework is based on the tangential model for the metamorphosis metric that leads to a slightly more straightforward variational problem. The optimization scheme is carefully described in Section 7 and numerical illustrations are provided in Section \ref{part:numerical} on real and synthetic datasets. A special attention is given in Section \ref{part.traps} on various  numerical issues and pitfalls that may affect the computation process. In particular we discuss the very important problem of open surfaces that introduce free boundaries evolutions during the optimisation process. The tuning of free parameters are also discussed. 

As discussed in this introduction, this paper describes on the one hand a new theoretical shape space framework to work with functional shapes with an emphasis on precise mathematical statements and proofs. The more technical aspects of the novel framework have been collected and presented in Appendix to ease the reading of the paper. The reader more interested in applying the framework could read Section 2 together with the first two Subsections of Section 3, skip Section 4 and 5 and jump over directly to Section 6 to 9 dealing with the numerical part of the paper.	

\section{Riemannian metamorphosis framework for fshapes}
In the classical Grenander's setting, shape spaces are modelled as sets of shapes homogeneous under the action of a group of space transformations. Metrics between shapes are then induced from right-invariant Riemannian metrics on the group itself. The goal of this section is to propose a similar but extended framework for the case of functional shapes. We show, in the first place, that sets of functional shapes can be structured naturally as vector bundles and we then define a Riemannian setting to model and quantify transformations within those spaces.  

\subsection{Fshape bundles}
Let's introduce a finite dimensional vector space $E$ that shall be the embedding space of all shapes. In the large deformations' model (LDDMM, \textit{cf} \cite{Trouve2}), diffeomorphisms are constructed as flows of time-varying vector fields. The basic ingredient is a \textit{reproducing kernel Hilbert space} (RKHS) of vector fields on $E$ that is denoted $V$ and which is continuously embedded in $C_0^1(E,E)$ the space of continuously differentiable functions from $E$ to $E$ vanishing at the infinity. Let $G=G_V$ be the associated group of diffeomorphisms obtained by flowing vector fields in $L^2([0,1],V)$ (see \cite{Trouve2011,Younes,arguillere14:_shape}). Let $\X$ be a homogeneous space generated by a finite volume $d$-dimensional rectifiable compact subset $X_0$ (see \cite{Morgan}) {\it i.e.} $\X=G.X_0=\{\phi(X_0) \ | \ \phi \in G\}$. 

Now we consider the space 
\begin{equation}
  \label{eq:FB1}
  \FB\doteq \{\ (X,f)\ |\ X\in\X \text{ and }f\in L^2(X)\ \}\,,
\end{equation}
where $L^2(X)$ is the space of square integrable functions on $X$, i.e the set of functions $f: X \rightarrow \R$ such that 
$$ \int_X f^{2}(x) d\Haus^d(x) < +\infty$$
for $\Haus^d$ the $d$-dimensional Hausdorff (or volume) measure. $\FB$ can be considered as a vector bundle with fiber $L^2(X_0)$ (here we will not try to define any explicit differentiable structure on it so that the last statement should be considered as formal). Each element of $\FB$ is thus given as a couple of a geometrical shape $X$ and a signal function living on $X$, that we call a \textit{functional shape} (see \cite{Charon1}) or fshape.  

\subsection{Metamorphosis distance on fshapes bundles}
\label{ssec:metamorphosis_bundles}
For a purely geometrical deformation $\phi \in G_V$, a natural transport of a fshape $(X,f)$ would be given by $\phi.(X,f)=(\phi(X), f\circ \phi^{-1})$, which corresponds to deforming the support $X$ and transporting the values of the signal $f$ onto the deformed shape, which is the exact generalization of image deformation. Such a model, originally presented in \cite{Charon1}, remains insufficient to account for variations of signals within fibers themselves. This justifies the following efforts to propose a model of joint geometric and functional transformations. 

\begin{figure}[h]
\centering
\usetikzlibrary{calc}\usetikzlibrary{decorations.markings}
	\begin{tikzpicture}[
			tangent/.style={
				decoration={
					markings,
					mark=
					at position #1
					with
					{
						\coordinate (tangent point-\pgfkeysvalueof{/pgf/decoration/mark info/sequence number}) at (0pt,0pt);
						\coordinate (tangent unit vector-\pgfkeysvalueof{/pgf/decoration/mark info/sequence number}) at (1,0pt);
						\coordinate (tangent orthogonal unit vector-\pgfkeysvalueof{/pgf/decoration/mark info/sequence number}) at (0pt,1);
					}
				},
				postaction=decorate
			},
			use tangent/.style={
				shift=(tangent point-#1),
				x=(tangent unit vector-#1),
				y=(tangent orthogonal unit vector-#1)
			},
			use tangent/.default=1
		,scale=.7]

		\coordinate (X0) at (-3,-1); \coordinate (X1) at (1.4,-2.2);

		\path[draw,thick] (X0) to[out=-90, in=80] ($(X0) + (-.5,-4)$);
		\path[draw,thick] (X1) to[out=-90,in=110] ($(X1) + (+.7,-4)$);

		\coordinate (a) at (-6,-3);\coordinate (b) at (4,-3); \coordinate (c) at (6,2); \coordinate (d) at (-4,2);
		\shadedraw[shading=ball,ball color=gray!20, opacity=.7] (a).. controls ($(a) +(1,1)$) and ($(b) -(2,2)$) .. (b) .. controls ($(b) +(1,-1)$) and ($(c) + (-1,-1)$).. (c) .. controls ($(c) +(-1,-1)$) and ($(d) +(1,-1)$).. (d) .. controls ($(d) +(1,-1)$) and ($(a) +(.5,1)$) .. (a);

		\node[below left] (XX) at (a) {$\mathcal{X}$};
		\draw[->, dashed, tangent=0.3] (X0) .. controls ($(X0) + (1,-1.5)$) and ($(X1) - (1,0.5)$).. (X1) node[pos=0.7, auto=left, below] {};
		\draw[use tangent,->, thick] (0,0) -- node[below left]{$v_t.X_t$} (1,0);

		\draw[] (X0)node[left]{$X_{}$} circle (1pt);
		\draw[] (X1)node[right]{$X_1 = \phi^v_1(X_{})$} circle (1pt);

		\path[draw, thick,tangent=.8,tangent=.6, tangent=.3] ($(X0) + (-1,6)$)node[left]{$L^2(X)$} to[out=-50, in=90] (X0) ;

		\coordinate[use tangent=1] (F0) at (0,0);  

		\draw [blue, thick, fill] (F0) node[left]{$f_{}$} circle (1pt);
		\draw [blue, thick, use tangent=2, ->] (0,0) --node[right]{$h_t$} (-1,0);
		\draw [blue, thick, use tangent=3,fill] (0,0)node[left]{$f_{}+\zeta^h_1$} circle (1pt);

		\path[draw, thick,tangent=.2] ($(X1) + (1.5,6)$)node[right]{$L^2(X_1)$} to[out=-120,in=90] (X1);
		\draw [blue, thick, use tangent,fill] (0,0) node[right]{$f_1 = (f_{}+\zeta^h_1) \circ (\phi^v_1)^{-1}$} circle (1pt);

		\coordinate[use tangent=1] (F1) at (0,0); 
		\path[draw,->,dashed] (F0) to[out=45, in=-180] (F1) ;
	\end{tikzpicture}
\caption{Fshape bundle and metamorphosis.}
\label{fig:fshape_bundle}
\end{figure}

Let's consider $(X,f)\in\FB$ and instantaneous velocities $(v,h)\in L^2([0,1],V\times L^2(X))$. We can define from $(v,h)$ a path $(\phi^v,\zeta)_{t\in [0,1]}$ where $\phi^v_t$ is the usual flow of $v$ starting from the identity and $t\mapsto \zeta_t=\int_0^th_sds$ is the path in $L^2(X)$ with instantaneous speed given by $h$. From that path, we get a path $t\mapsto (X_t,f_t)$ in $\FB$ defined by
\begin{equation}
  \label{eq:FB36}
  (X_t,f_t)\doteq (\phi^v_t.X,(f+\zeta^h_t)\circ(\phi^v_t)^{-1})\,.
\end{equation}
where $\phi^v_t.X\doteq \phi^v_t(X)$ is the natural diffeomorphisms transport action on rectifiable subsets of $E$. 
We denote $(\phi^{v}_1,\zeta^{h}_1)$ the end point value of the path. Now we denote for $\gamma_V$, $\gamma_f>0$
\begin{equation}
  \label{eq:FB3}
  E_{X}(v,h)\doteq \frac{\gamma_V}{2}\int_0^1|v_t|_V^2+\frac{\gamma_f}{2}\int_0^1\int_{X}|h_t|^2(x)|\rst{d_x\phi^v_t}{T_{x}X}|d\Haus^d(x)
\end{equation}
where $T_{x}X$ is the tangent space to $X$ at point $x$ (defined $\Haus^d$-almost everywhere on $X$ if $X$ is rectifiable, \textit{cf} \cite{Federer} 3.2.19) and, for any vector space $U$, $|\rst{d_x\phi^v_t}{U}|$ denotes the Jacobian of $\phi^v_{t}$ at $x$ restricted to $U$, which equals $|d_{x}\phi^v_{t}(u_{1})\wedge...\wedge d_{x}\phi^v_{t}(u_{d})|=(\text{det}(\langle d_{x}\phi^v_{t}(u_{i}),d_{x}\phi^v_{t}(u_{j})\rangle)_{i,j})^{1/2}$ if $(u_{1},...,u_{d})$ is an orthonormal basis of $U$.

It is clear that $E_X(v,h)<\infty$ since by definition $\int_0^1|v_t|^2_Vdt<\infty$ and since the usual controls on $\phi_t$ along finite energy paths give a uniform control in $(x,t)$ of $d_x\phi_t$ and the existence of an increasing function $C:\R_+\to\R_+$ (independent of $X$) such that
\begin{equation}
  \label{eq:FB4}
  E_X(v,h)\leq \frac{\gamma_V}{2}\int_0^1|v_t|_V^2dt+C(\int_0^1|v_t|_V^2dt)\frac{\gamma_f}{2}\int_0^1|h_t|^2_{L^2(X)}dt<\infty\,.
\end{equation}

Now for any $(X,f)$ and $(X',f')\in\FB$ we define
\begin{equation}
d_{\FB}((X,f),(X',f'))\doteq (\inf \{ E_X(v,h)\ |\ \phi^v_1.X=X',\ f'=(f+\zeta^h_1)\circ(\phi^v_1)^{-1}\ \})^{1/2}\label{eq:FB8}
\end{equation}
which is a true distance on $\FB$ as stated in the following Theorem:
\begin{thm}
  The function $d_\FB$ given by (\ref{eq:FB8}) defines a distance on $\FB$ ie. is symmetric, satisfies the triangle inequality and $d_{\FB}((X,f),(X',f'))=0$ iff $X=X'$ and $f=f'\in L^2(X)$.

Moreover, the distance is finite everywhere on $\FB\times \FB$.
\end{thm}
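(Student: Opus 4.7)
Since $\X = G_V\cdot X_0$, any two supports $X,X'\in\X$ are joined by some $\phi^v_1$ with $v\in L^2([0,1],V)$. Take $h_t\equiv f'\circ\phi^v_1-f$ constant in time, which lies in $L^2(X)$ since $\phi^v_1:X\to X'$ is bi-Lipschitz and $f'\in L^2(X')$. Then $\zeta^h_1 = f'\circ\phi^v_1-f$, so that $(\phi^v_1,\zeta^h_1)$ sends $(X,f)$ to $(X',f')$, and the estimate (\ref{eq:FB4}) ensures $E_X(v,h)<\infty$. Hence $d_\FB<\infty$ everywhere.

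\textbf{Symmetry.} Given admissible $(v,h)$ joining $(X,f)$ to $(X',f')$, time-reverse by setting $\tilde v_t\doteq -v_{1-t}$ and $\tilde h_t\doteq -h_{1-t}\circ(\phi^v_1)^{-1}\in L^2(X')$. The standard flow identity $\phi^{\tilde v}_t = \phi^v_{1-t}\circ(\phi^v_1)^{-1}$ shows that $(\tilde v,\tilde h)$ is admissible from $(X',f')$ back to $(X,f)$. The $V$-part of the energy is manifestly invariant under $t\mapsto 1-t$, while for the functional part the change of variable $y=\phi^v_1(x)$ combined with the chain-rule identity $|d_x\phi^{\tilde v}_t|_{T_xX'}| = |d_y\phi^v_{1-t}|_{T_yX}|/|d_y\phi^v_1|_{T_yX}|$ produces the cancellations that give $E_{X'}(\tilde v,\tilde h)=E_X(v,h)$. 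Passing to infima yields symmetry.

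\textbf{Triangle inequality.} Given admissible $(v,h)$ from $(X_1,f_1)$ to $(X_2,f_2)$ and $(v',h')$ from $(X_2,f_2)$ to $(X_3,f_3)$, concatenate after rescaling onto $[0,\alpha]$ and $[\alpha,1]$: put $V_t=v_{t/\alpha}/\alpha$, $H_t=h_{t/\alpha}/\alpha$ on $[0,\alpha]$, and use analogous rescalings of $v'$ and of $h'\circ\phi^v_1$ on $[\alpha,1]$, the composition by $\phi^v_1$ being needed to transport $h'\in L^2(X_2)$ back into $L^2(X_1)$. The cocycle of flows and the area formula yield total energy $E_{X_1}(v,h)/\alpha + E_{X_2}(v',h')/(1-\alpha)$; optimizing over $\alpha\in(0,1)$ produces the bound $(\sqrt{E_1}+\sqrt{E_2})^2$, and infimizing in $(v,h)$ and $(v',h')$ proves the triangle inequality.

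\textbf{Separation.} The ``if'' direction is clear from the null path. The converse is the main obstacle. Suppose $d_\FB((X,f),(X',f'))=0$ and pick $(v^n,h^n)$ with $E_X(v^n,h^n)\to 0$. The continuous embedding $V\hookrightarrow C_0^1(E,E)$ combined with $\int_0^1|v^n_t|_V^2\,dt\to 0$ forces $\phi^{v^n}_1\to\mathrm{id}$ in $C^1_{\mathrm{loc}}$; since $X'=\phi^{v^n}_1(X)$ for every $n$, this gives $X=X'$. For the signal, the uniform smallness of $d\phi^{v^n}_t-\mathrm{Id}$ implies $|d_x\phi^{v^n}_t|_{T_xX}|\geq 1/2$ for $n$ large, so the functional part of the energy yields $\int_0^1\|h^n_t\|_{L^2(X)}^2\,dt \leq 4E_X(v^n,h^n)/\gamma_f\to 0$. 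Cauchy--Schwarz in time then gives $\|\zeta^{h^n}_1\|_{L^2(X)}\to 0$, and pulling back by $(\phi^{v^n}_1)^{-1}$ (whose tangential Jacobian is uniformly bounded) yields $f'=(f+\zeta^{h^n}_1)\circ(\phi^{v^n}_1)^{-1}\to f$ in $L^2(X)$, hence $f=f'$. The genuinely delicate point here is deriving the uniform bi-Lipschitz control on $\phi^{v^n}_1$ from smallness of $\int|v^n|_V^2\,dt$ via the embedding $V\hookrightarrow C_0^1$, which is what lets the $L^2$ norms pass cleanly through the successive pullbacks.
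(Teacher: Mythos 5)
Your proof is correct and follows essentially the same route as the paper's: the same constant-in-time $h_t=f'\circ\phi^v_1-f$ for finiteness, the same time-reversal $(\tilde v_t,\tilde h_t)=(-v_{1-t},-h_{1-t}\circ(\phi^v_1)^{-1})$ for symmetry, the same rescaled concatenation with the optimal switching time (your optimization over $\alpha$ is exactly the paper's choice of $\alpha_*$), and the same compactness-plus-$L^2$-pullback argument for separation. The extra detail you supply on the Jacobian cancellation in the symmetry step is a welcome elaboration of what the paper only asserts.
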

\begin{proof}
The symmetry comes from a usual process of \emph{time reversal}. 

Let $(X,f),(X',f')\in \FB$ and $(v,h)\in L^2([0,1],V\times L^2(X))$ such that 
\begin{equation}
X'=\phi^v_1.X,\ f'=(f+\zeta^h_1)\circ(\phi^v_1)^{-1}\,.\label{eq:FB10}
\end{equation}

If we define
$$\tilde{v}_t\doteq-v_{1-t}\text{ and }\tilde{h}_t\doteq -h_{1-t}\circ(\phi^v_1)^{-1}$$ 
for any $t\in [0,1]$, then for $X_t\doteq \phi^v_t.X$ (so that $X_0=X$ and $X_1=X'$) we have 
$$\phi^v_t.X_0=X_t=\phi^{\tilde{v}}_{1-t}.X_1$$
and 
$(\tilde{v},\tilde{h})\in L^2([0,1], V\times L^2(X'))$ with 
\begin{equation}
E_{X}(v,h)=E_{X'}(\tilde{v},\tilde{h})\,.\label{eq:FB9}
\end{equation}
Since one easily checks that $\zeta^{\tilde{h}}_1=-\zeta^{h}_1\circ(\phi^v_1)^{-1}$ so that if $f'=(f+\zeta^h_1)\circ(\phi^v_1)^{-1}$ we have $f=(f'+\zeta^{\tilde{h}}_1)\circ(\phi^{\tilde{v}}_1)^{-1}$ with $X=\phi^{\tilde{v}}_1.X'$  This gives immediately the symmetry.\medskip
 
Concerning the triangle inequality, is comes from a usual process of\emph{ path concatenation}. 
Let $(X,f)$, $(X,f')$ and $(X'',f'')$ be three fshapes in $\FB$ such that $d_\FB((X,f),(X',f'))>0$ and $d_{\FB}((X',f'),(X'',f''))>0$. One easily checks that for $(v,h)\in L^2([0,1],V\times L^2(X))$ and $(v',h')\in L^2([0,1],V\times L^2(X'))$ with 
$$(X',f')=(\phi^{v}_1.X,(f+\zeta^h_1)\circ(\phi^v_1)^{-1})\text{ and }(X'',f'')=(\phi^{v'}_1.X',(f'+\zeta^{h'}_1)\circ(\phi^{v'}_1)^{-1})$$ 
then denoting for $\alpha,\beta>1$ such that $1/\alpha+1/\beta=1$ and $s\in[0,1]$
\begin{equation}
\Cat_{\alpha}((v',h'),(v,h))_s\doteq\beta(v'_{\beta(s-1/\alpha)},h'_{\beta(s-1/\alpha)}\circ\phi)\one_{s\geq 1/\alpha}+\alpha (v_{\alpha s},h_{\alpha s})\one_{0 \leq s<1/\alpha}\label{eq:FB5}
\end{equation}
with $\phi=\phi^v_1$, we have for $(\tilde{v},\tilde{h})\doteq \Cat_{\alpha}((v',h'),(v,h))$ that $(\tilde{v},\tilde{h})\in L^2([0,1],V\times L^2(X))$ and 
\begin{equation}
  \label{eq:FB11}
  (X'',f'')=(\phi^{\tilde{v}}_1.X, (f+\zeta^{\tilde{h}}_1)\circ(\phi^{\tilde{v}}_1)^{-1})
\end{equation}
so that 
\begin{equation}
  \label{eq:FB12}
  d_{\FB}((X,f),(X'',f''))\leq E_{X}(\tilde{v},\tilde{h})^{1/2}\,.
\end{equation}
However, for $\alpha_*=(E_X(v,h)^{1/2}+E_{X'}(v',h')^{1/2})/E_X(v,h)^{1/2}$, we check easily that 
\begin{equation}
E_X(\tilde{v},\tilde{h})^{1/2}=E_X(v,h)^{1/2}+E_{}(v',h')^{1/2}\label{eq:FB7}
\end{equation}
and the triangle inequality follows immediately.

Now, if $d_\FB((X,f),(X',f'))=0$ then there exists a sequence $\phi_n$ such that $\phi_n.X=\phi_n(X)=X'$ and $\phi_n\to\text{Id}$ on $X$. In particular, $X'$ is dense in $X$. As $X'$ is compact (since $X$ is), $X\subset X'$. By symmetry we get $X=X'$. Moreover, there exists also a sequence $\zeta_n\in L^2(X)$ such that $\zeta\to 0$ in $L^2(X)$ and $f'=(f+\zeta_n)\circ \phi_n^{-1}$ with $\phi_n(X)=X$. We get 
$$\int_X(f'-f)^2\leq 2\int_X (f\circ\phi_n^{-1}-f)^2+2\int_X(\zeta_n\circ\phi_n^{-1})^2\to 0$$
and the result is proved.

A last point to be verified is that the distance $d_\FB$ is finite between any  pair of fshapes. Indeed, by definition there exists between any $X$ and $X'\in \X$ a path $t\mapsto\phi_t^v.X$ with $v\in L^2([0,1],V)$. Now, if $h_t=(f'\circ \phi_1^v-f)$, we get $(v,h)\in L^2([0,1],V\times L^2(X))$ and $f'=(f+\zeta^v_1)\circ(\phi^v_1)^{-1}$ so that $d_\FB((X,f),(X',f'))\leq E_X(v,h)^{1/2}<\infty$.
\end{proof}
As stated in the introduction, the framework extends metamorphosis on images \cite{Trouve1} which corresponds to the case where $X_0$ is $E$ itself or is the unit cube of $E$ (with the extra condition that $\phi.X_0=X_0$ for any $\phi\in G$). Here the support can be $d$-dimensional and is not fixed. 

\subsection{Existence of geodesics}
A natural question is the existence of a minimizing geodesic between fshapes in $\FB$. We have the following Theorem:
\begin{thm}\label{thm:geo}
  For any $(X,f)$, and $(X',f')$ in $\FB$, there exists $(v,h)\in L^2([0,1],V\times L^2(X))$ such that $E_X(v,h)^{1/2}=d_\FB((X,f),(X',f'))$.

In particular, if 
\begin{equation}
(X_t,f_t)\doteq(\phi^v_t.X,(f+\int_0^th_sds)\circ(\phi^v_t)^{-1})\label{eq:FB13}
\end{equation}
the path $t\mapsto (X_t,f_t)$ can be considered as a minimizing geodesic between $(X,f)$ and $(X',f')$. 
\end{thm}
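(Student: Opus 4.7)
The natural strategy is the direct method of the calculus of variations. Since $d_\FB((X,f),(X',f'))<\infty$ by the previous theorem, select a minimizing sequence $(v_n,h_n)\in L^2([0,1],V\times L^2(X))$ satisfying the endpoint constraints of (\ref{eq:FB8}) and such that $E_X(v_n,h_n)\to d_\FB((X,f),(X',f'))^2$. The first term of $E_X$ yields the uniform bound $\int_0^1|v_{n,t}|_V^2dt\leq C$. Standard LDDMM controls (using the continuous embedding of $V$ into $C_0^1(E,E)$ together with Gronwall's inequality) then provide a two-sided uniform bound $0<c\leq J^{v_n}_t(x)\leq C'$, where $J^{v_n}_t(x)\doteq |\rst{d_x\phi^{v_n}_t}{T_xX}|$, and consequently a uniform $L^2([0,1]\times X)$ bound on $h_n$.

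By reflexivity, after extracting subsequences, $v_n\rightharpoonup v$ weakly in $L^2([0,1],V)$ and $h_n\rightharpoonup h$ weakly in $L^2([0,1]\times X, d\Haus^d dt)$. Classical continuity results for the flow map (continuity from the weak topology on $L^2([0,1],V)$ to uniform convergence of the flow and its first-order differential on any compact set) give that $\phi^{v_n}_t\to \phi^v_t$ and $d_x\phi^{v_n}_t\to d_x\phi^v_t$ uniformly on $[0,1]\times X$. In particular $J^{v_n}_t\to J^v_t$ uniformly, and passing to the limit in $\phi^{v_n}_1(X)=X'$ yields the geometric endpoint constraint $\phi^v_1(X)=X'$.

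For the signal endpoint, the relation $\zeta^{h_n}_1=f'\circ\phi^{v_n}_1-f$, combined with the $C^1$-uniform convergence of $\phi^{v_n}_1$ (and a density argument approximating $f'\in L^2(X')$ by continuous functions, with uniform Jacobian control ensuring that composition is continuous), yields strong convergence $f'\circ\phi^{v_n}_1\to f'\circ \phi^v_1$ in $L^2(X)$. Simultaneously, weak convergence of $h_n$ in $L^2([0,1]\times X)$ implies $\zeta^{h_n}_1\rightharpoonup \zeta^h_1$ weakly in $L^2(X)$; uniqueness of weak limits gives $\zeta^h_1=f'\circ\phi^v_1-f$, which is the required signal condition.

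It remains to prove $E_X(v,h)\leq \liminf_n E_X(v_n,h_n)$. The geometric term $\int_0^1|v_t|_V^2dt$ is weakly lower semicontinuous on a Hilbert space. The delicate point, which I expect to be the main technical obstacle, is the signal term $\int_0^1\int_X|h_t|^2 J^v_t\, d\Haus^d dt$, since the weight itself depends on the vector field via $J^{v_n}$. The key observation is that $\sqrt{J^{v_n}}\to \sqrt{J^v}$ uniformly (and is uniformly bounded above and below), while $h_n\rightharpoonup h$ weakly in $L^2$; by the standard strong-times-weak principle, $\sqrt{J^{v_n}}\,h_n\rightharpoonup \sqrt{J^v}\,h$ weakly in $L^2([0,1]\times X)$, so the norm lower semicontinuity delivers the required inequality. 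Hence $(v,h)$ attains the infimum and the associated path (\ref{eq:FB13}) is a minimizing geodesic between $(X,f)$ and $(X',f')$.
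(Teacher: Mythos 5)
Your proof is correct and follows essentially the same route as the paper: a minimizing sequence, the uniform two-sided Jacobian bounds to get weak compactness of $(v_n,h_n)$, uniform convergence of the flows to pass to the limit in the endpoint constraints, and the ``strong $\times$ weak'' trick on $\sqrt{|\rst{d_x\phi^{v_n}_t}{T_xX}|}\,h_n$ to obtain lower semicontinuity of the signal energy (the paper phrases this via testing against $h_\infty$ and Cauchy--Schwarz, which is the same argument). Your treatment of the signal endpoint via $\zeta^{h_n}_1=f'\circ\phi^{v_n}_1-f$ is in fact slightly more explicit than the paper's, which simply declares that weak convergence ``straightforward''.
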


\begin{proof}
  The proof extends the results for metamorphosis. It is sufficient to show that $(v,h)\to E_X(v,h)$ is lower semi-continuous for the weak convergence of the space $L^2([0,1],V\times L^2(X))$. 

Indeed, if this is the case, then from any minimizing sequence $(v_n,h_n)$ such that $E_X(v_n,h_n)\to d_\FB((X,f),(X',f'))^2$ we deduce that $v_n$ is bounded on $L^2([0,1],V)$ and using the inequality 

\begin{equation}
  \begin{split}
    \int_0^1\int_X|h_{n,t}(x)|^2d\Haus^d(x)dt &\leq \int_0^1(\sup_{x\in
    X}|\rst{d_x\phi^{v_n}_{t}}{T_{x}X}|^{-1})\int_X
    |h_{n,t}(x)|^2|\rst{d_x\phi^{v_n}_t}{T_{x}X}|d\Haus^d(x)\\ & \leq
    C(\int_0^1|v_{n,t}|_V^2dt)E_X(v_n,h_n)
  \end{split}
\label{eq:FB14}
\end{equation}
where $C$ is a increasing function depending only on $V$, we get that the sequence $(h_n)$ is bounded in $L^2([0,1],L^2(X))$. Hence, by weak compactness of strong balls in $L^2([0,1],V\times L^2(X))$, we can assume that, up to the extraction of a sub-sequence, that $(v_n,h_n)$ weakly converges towards $(v_\infty,h_\infty)\in L^2([0,1],V\times L^2(X))$ and by lower semi-continuity of $E_X$, we deduce that $E_X(v_\infty,h_\infty)\leq d_{\FB}((X,f),(X',f'))^2$. We only need to check that $(X',f')=(\phi^{v_\infty}_1.X,(f+\zeta^{h_\infty}_1)\circ(\phi^{v_\infty}_1)^{-1})$. This last result follows from the fact that, if $(v_n,h_n)$ weakly converges to $(v_\infty,h_\infty)$, then $\phi^{v_n}_1\to \phi^{v_\infty}_1$ uniformly on any compact sets (which is a well known result) and $(f+\zeta^{h_n}_1)\circ(\phi^{v_n}_1)^{-1}$ weakly converges to $(f+\zeta^{h_\infty}_1)\circ(\phi^{v_\infty}_1)^{-1}$. The last weak convergence is straightforward.

The proof of the lower semi-continuity itself is done now. We know from a classical result on weak convergence that $\int_0^1|v_{\infty,t}|_V^2dt\leq \liminf_{n\to\infty}\int_0^1|v_{n,t}|_V^2dt$. Moreover, we have
\begin{equation}
	\int_0^1\int_X|h_{\infty,t}(x)|^2|\rst{d_x\phi^{v_\infty}_{t}}{T_{x}X}|d\Haus^d(x)=\lim_{n\to\infty}\int_0^1 \int_X h_{n,t}(x)h_{\infty,t}(x)|\rst{d_x\phi^{v_\infty}_{t}}{T_{x}X}|d\Haus^d(x)\label{eq:FB17}
\end{equation}
by definition of the weak convergence of $(h_n)$. Now, since $d_x\phi^{v_n}_t\to d_x\phi^{v_\infty}_t$ uniformly on $t\in  [0,1]$ and $x\in X$, we get
\begin{equation}
  \label{eq:FB15}
  \begin{split}
	  \int_0^1&\int_X |h_{\infty,t}(x)|^2|\rst{d_x\phi^{v_\infty}_{t}}{T_{x}X}|d\Haus^d(x) \\
	  &=\lim_{n\to\infty}\int_0^1\int_Xh_{n,t}(x)h_{\infty,t}(x)(|\rst{d_x\phi^{v_n}_{t}}{T_{x}X}|)^{1/2}|\rst{d_x\phi^{v_\infty}_{t}}{T_{x}X}|^{1/2}d\Haus^d(x)\\
	  &\leq \liminf_{n\to\infty}\left(\int_0^1\int_X|h_{n,t}(x)|^2|\rst{d_x\phi^{v_n}_{t}}{T_{x}X}|d\Haus^d(x)\right)^{1/2}\left(\int_0^1\int_X|h_{\infty,t}(x)|^2|\rst{d_x\phi^{v_\infty}_{t}}{T_{x}X}|d\Haus^d(x)\right)^{1/2}
  \end{split}
\end{equation}
so that
\begin{equation}
  \label{eq:FB16}
  \int_0^1\int_X |h_{\infty,t}(x)|^2|\rst{d_x\phi^{v_\infty}_{t}}{T_{x}X}|d\Haus^d(x)\leq \liminf_{n\to\infty}\int_0^1\int_X|h_{n,t}(x)|^2|\rst{d_x\phi^{v_n}_{t}}{T_{x}X}|d\Haus^d(x)\,.
\end{equation}
\end{proof}

\subsection{Karcher means on fshape bundles}
\label{ssection:Karcher}
We consider here the existence problem of Karcher mean on the fshape bundle $\FB$. The problem can be stated as follows: let $(X^i,f^i)_{1\leq i\leq N}$ be a family of fshapes in $\FB$. Does there exist $(X_*,f_*)\in \FB$ minimizing the sum of the square distances to each $(X^i,f^i)$, {\it i.e.} solution of the minimizing problem:
\begin{equation}
  \label{eq:KM1}
  \min_{(X,f)\in \FB} \sum_{i=1}^N d_{\FB}((X,f),(X^i,f^i))^2\,.
\end{equation}
Note that this problem does not make sense on practical ground since usually there is no reason for a family of observed fshapes to belong to a pre-defined fshape bundle $\FB$. However, on a theoretical perspective the existence of such Karcher mean is a quite important point in addition to the existence of geodesic between any two fshapes in $\FB$. Such a result shows that $\FB$ has basic important properties for further statistical analysis.

Our main result is the following:
\begin{thm}\label{thm:Kar}
	For any family $(X^i,f^i)_{1\leq i\leq N}$ of fshapes in $\FB$, there exists at least a Karcher mean $(X_*,f_*)$ in $\FB$ {\it i.e.} a solution of problem (\ref{eq:KM1}).
\end{thm}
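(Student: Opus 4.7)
The plan is to apply the direct method of the calculus of variations, mirroring the arguments developed in the proof of Theorem \ref{thm:geo}, with one essential reparametrization. Rather than parametrizing each optimal path from the moving candidate $(X_n,f_n)$ to the fixed datum $(X^i,f^i)$, I would use the symmetry of $d_\FB$ to parametrize the same path from $(X^i,f^i)$ to $(X_n,f_n)$. This way the functional component $h_n^i$ of the optimal infinitesimal control lives in the Hilbert space $L^2([0,1], L^2(X^i))$, which does not depend on $n$; this is what makes weak compactness directly usable.

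Concretely, I would start from a minimizing sequence $(X_n,f_n)$ in $\FB$; by Theorem \ref{thm:geo} and symmetry, for each $i$ pick $(v_n^i,h_n^i) \in L^2([0,1],V\times L^2(X^i))$ realizing $E_{X^i}(v_n^i,h_n^i) = d_\FB((X^i,f^i),(X_n,f_n))^2$ with $\phi^{v_n^i}_1.X^i = X_n$ and $f_n = (f^i+\zeta^{h_n^i}_1)\circ (\phi^{v_n^i}_1)^{-1}$. Since $\sum_i E_{X^i}(v_n^i,h_n^i)$ converges to the infimum, $(v_n^i)$ is bounded in $L^2([0,1],V)$; the same inequality \eqref{eq:FB14} used in the geodesic existence proof then yields boundedness of $(h_n^i)$ in $L^2([0,1],L^2(X^i))$. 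By weak compactness of strong balls and a diagonal extraction across the finite family of indices $i$, I would extract a subsequence along which $(v_n^i,h_n^i) \rightharpoonup (v_\infty^i,h_\infty^i)$ for every $i$ simultaneously.

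The main obstacle is to produce a single candidate mean $(X_*,f_*) \in \FB$ compatible with all $N$ extracted limits. I would define $(X_*,f_*) \doteq (\phi^{v_\infty^1}_1.X^1,\,(f^1+\zeta^{h_\infty^1}_1)\circ (\phi^{v_\infty^1}_1)^{-1})$ and then verify the consistency conditions $\phi^{v_\infty^i}_1.X^i = X_*$ and $(f^i+\zeta^{h_\infty^i}_1)\circ (\phi^{v_\infty^i}_1)^{-1} = f_*$ for every $i$. The geometric part should follow from the classical uniform-on-compacts convergence $\phi^{v_n^i}_1 \to \phi^{v_\infty^i}_1$ together with that of the inverses, applied to the identity $\phi^{v_n^1}_1(X^1) = X_n = \phi^{v_n^i}_1(X^i)$, combined with compactness of the $X^i$ to pass to the limit via two-sided inclusions. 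The signal part relies on the same endpoint weak-convergence argument that concludes the proof of Theorem \ref{thm:geo}, applied to both $(f^1+\zeta^{h_n^1}_1)\circ (\phi^{v_n^1}_1)^{-1}$ and $(f^i+\zeta^{h_n^i}_1)\circ (\phi^{v_n^i}_1)^{-1}$, which are equal to the common $f_n$ for every $n$.

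With consistency in hand the conclusion is immediate: by the lower semi-continuity of $E_{X^i}$ established inside the proof of Theorem \ref{thm:geo}, for each $i$ we have
\[
d_\FB((X_*,f_*),(X^i,f^i))^2 \le E_{X^i}(v_\infty^i,h_\infty^i) \le \liminf_n E_{X^i}(v_n^i,h_n^i),
\]
and summing over $i$ yields $\sum_i d_\FB((X_*,f_*),(X^i,f^i))^2 \le \liminf_n \sum_i d_\FB((X_n,f_n),(X^i,f^i))^2$, which equals the infimum, so $(X_*,f_*)$ is a Karcher mean. The only genuinely new ingredient beyond the proof of Theorem \ref{thm:geo} is the geometric-plus-functional consistency of the $N$ different limits, and that is the step I expect to require the most care.
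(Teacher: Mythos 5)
Your overall strategy coincides with the paper's: symmetrize so that each control $(v^i_n,h^i_n)$ lives in the fixed Hilbert space $L^2([0,1],V\times L^2(X^i))$, extract weak limits, invoke the lower semi-continuity of $E_{X^i}$ established inside the proof of Theorem \ref{thm:geo}, and reduce everything to the consistency of the $N$ limits (the paper phrases this as weak closedness of the constraint set $W_0$ in $\prod_i L^2([0,1],V\times L^2(X^i))$). The geometric half of your consistency step, via uniform convergence of the flows and their inverses together with compactness of the supports, is exactly what the paper does.

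The gap is in the signal half. You propose to conclude by ``the same endpoint weak-convergence argument that concludes the proof of Theorem \ref{thm:geo}'', applied to the two sequences $(f^1+\zeta^{h^1_n}_1)\circ(\phi^{v^1_n}_1)^{-1}$ and $(f^i+\zeta^{h^i_n}_1)\circ(\phi^{v^i_n}_1)^{-1}$. In Theorem \ref{thm:geo} the endpoint support $X'$ is fixed by the constraint, so those compositions all live in the single Hilbert space $L^2(X')$ and weak convergence of the sequence makes sense; here the common support $X_n=\phi^{v^i_n}_1(X^i)$ moves with $n$, so the assertion that $f_n$ ``converges weakly'' to anything is not even well posed until everything is transported to a fixed domain. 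The paper handles this by introducing the transition maps $\psi^i_n=(\phi^{v^i_n}_1)^{-1}\circ\phi^{v^1_n}_1:X^1\to X^i$, rewriting the constraint as the identity $\tilde f^i_{1,n}\circ\psi^i_n=\tilde f^1_{1,n}$ in the fixed space $L^2(X^1)$, and then proving $\tilde f^i_{1,n}\circ\psi^i_n\rightharpoonup\tilde f^i_{1,\infty}\circ\psi^i_\infty$. That last convergence is the genuinely delicate point: it mixes a weakly convergent sequence of signals with a strongly convergent sequence of reparametrizations, and is proved by testing against Lipschitz functions (dense in $L^2(X^1)$ since $\rst{\Haus^d}{X^1}$ is a Radon measure on a compact metric space), changing variables, and using uniform bounds on $(\psi^i_n)^{-1}$ and its Jacobian --- this is the content of equations (\ref{eq:KM5})--(\ref{eq:KM9}). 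None of that is supplied by, or follows formally from, the fixed-domain step of Theorem \ref{thm:geo}; you would need to add this argument to close the proof.
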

\begin{proof}
The first step of the proof is to reparametrize problem (\ref{eq:KM1}) in the Hilbert space $$W\doteq \prod_{i=1}^NL^2([0,1],V\times L^2(X^i))\,.$$
Indeed, using the symmetry of the problem, we can exchange the role of $(X,f)$ and $(X^i,f^i)$ and look for a family of minimizing paths $t\mapsto (X^i_t,f^i_t)$, starting from every fshape $(X^i,f^i)$ and ending on a \emph{common} fshape $(X,f)\in\FB$, each path being parametrized by $(v^i,h^i)\in L^2([0,1],V\times L^2(X^i))$. Hence, introducing 
$$W_0\doteq\{ \alpha=(v^i,h^i)_{1\leq i\leq N}\in W\ |\ (X^i_1,f^i_1)=(X^1_1,f^1_1),\ \forall 1\leq i\leq N\ \}$$
(where $(X^i_1,f^i_1)=(\phi^{v^i}_1.X^i,(f^i+\zeta^{h^i}_1)\circ (\phi^{v^i}_1)^{-1}$)
we need to prove that if we define
\begin{equation}
  \label{eq:KM2}
  J(\alpha) \doteq \sum_{i=1}^N E_{X^i}(v^i,h^i)
\end{equation}
then $W_0$ is closed and  $J$ is lower semi continuous (l.s.c.) for the weak topology on $W$.  

Since the fact that $J$ is l.s.c. was established inside the proof of Theorem \ref{thm:geo}, we just need to check that $W_0$ is closed for the weak topology. So, let $(\alpha_n)_{n\geq 0}$ be a sequence in $W_0$ weakly converging to $\alpha_\infty\doteq(v^i_\infty,h^i_\infty)_{1\leq i\leq N}\in W$. We get that $\phi^{v^i_n}_1\to \phi^{v^i_\infty}_1$ and $(\phi^{v^i_n}_1)^{-1}\to (\phi^{v^i_\infty}_1)^{-1}$ for the uniform $C^1$ topology on compact sets so that in particular, if $X_*=\phi^{v^1_\infty}_1.X^1$ then since by construction $\phi^{v^i_n}_1.X^i=\phi^{v^1_n}_1.X^1$ we get that $d_H(\phi^{v^i_n}_1.X^i,X_*)\to 0$ where $d_H$ denotes the Hausdorff distance\footnote{We are using here that for $X$ compact, the mapping $v\mapsto \phi^v_1.X$ is continuous for the weak convergence on $v\in L^2([0,1],V)$ and the convergence for the Hausdorff metric on the set of all compact subsets of $E$} and
\begin{equation}
\phi^{v_\infty^i}_1.X^i=X_*\,.\label{eq:KM4}
\end{equation}
Now, for any $1\leq i\leq N$, if we denote $\tilde{f}^i_{1,n}\doteq (f^i+\zeta^{h^i_n}_1)$ and $\psi^i_n\doteq (\phi^{v_n^i}_1)^{-1}\circ\phi^{v_n^1}_1:X^1\to X^i$, we have (since $\alpha_n\in W_0$ and $\alpha_n\rightharpoonup \alpha_\infty$) 
\begin{equation}
  \label{eq:KM3}
  \tilde{f}^i_{1,n}\circ\psi^i_n=\tilde{f}^1_{1,n}\text{ and }\tilde{f}^1_{1,n}\rightharpoonup \tilde{f}^1_{1,\infty}
\end{equation}
where the last statement is straightforward. Let us check that $ \tilde{f}^i_{1,n}\circ\psi^i_n\rightharpoonup \tilde{f}^i_{1,\infty}\circ\psi^i_\infty$ so that we will get $\tilde{f}^i_{1,\infty}\circ\psi^i_\infty=\tilde{f}^1_{1,\infty}$ and since we have (\ref{eq:KM4}) we will get $\alpha_\infty\in W_0$ and $W_0$ weakly closed.

For that, let $g:X^1\to\R$ be a Lipschitz mapping on $X^1$ and denote now $\langle f,f'\rangle_{X^1}\doteq \int_{X^1}f(x)f'(x)d\Haus^d(x)$ the usual dot product on $L^2(X^1)$. Since $X^1$ is compact and $\Haus^d(X^1)<\infty$, we have that $\rst{\Haus^d}{X^1}$ is a Radon measure on the compact metric space $X^1$ so that such Lipschitz mapping $g$ are dense in $L^2(X^1)$ and we just need to check that $\langle \tilde{f}^i_{1,n}\circ \psi^{i}_n-\tilde{f}^i_{1,\infty}\circ\psi^{i}_\infty,g\rangle_{X^1}\to 0$ to prove the weak convergence. We have
\begin{equation}
  \label{eq:KM5}
  \begin{split}
    \langle \tilde{f}^i_{1,n}&\circ 
    \psi^{i}_n-\tilde{f}^i_{1,\infty}\circ\psi^{i}_\infty,g\rangle_{X^1}=\\
 &\underbrace{\langle \tilde{f}^i_{1,n}\circ 
    \psi^{i}_n-\tilde{f}^i_{1,n}\circ\psi^{i}_\infty,g\rangle_{X^1}}_{A_n}+
 \underbrace{\langle \tilde{f}^i_{1,n}\circ 
    \psi^{i}_\infty-\tilde{f}^i_{1,\infty}\circ\psi^{i}_\infty,g\rangle_{X^1}}_{B_n}\,.
  \end{split}
\end{equation}
Concerning the $B$ term, by change of variable and using the fact that the Jacobian $d(\psi^i_\infty)^{-1}$ is bounded on $X^1$ we get
\begin{equation}
  \label{eq:KM6}
  B_n=\langle \tilde{f}^i_{1,n}-\tilde{f}^i_{1,\infty},g\circ(\psi^i_{\infty})^{-1}|d(\psi^i_\infty)^{-1}.\xi|\rangle_{X^i}\to 0\,,
\end{equation}
where $\xi$ is the unit $d$-vector representing the tangent space defined $\rst{\Haus^d}{X^i}$-a.e. on $X^i$. Concerning the $A$ term, we have again by change of variable
\begin{equation}
  \label{eq:KM7}
  \begin{split}
    A_n =\langle
    \tilde{f}^i_{1,n},g\circ(\psi^i_{n})^{-1}(|d(\psi^i_n)^{-1}.\xi|&-|d(\psi^i_\infty)^{-1}.\xi|)\rangle_{X^i}\\
&+\langle
    \tilde{f}^i_{1,n},(g\circ(\psi^i_{n})^{-1}-g\circ(\psi^i_{\infty})^{-1})|d(\psi^i_\infty)^{-1}.\xi|\rangle_{X^i}\,.
  \end{split}
\end{equation}
The first term of (\ref{eq:KM7}) is bounded by
\begin{equation}
  \label{eq:KM8}
  |\tilde{f}^i_{1,n}|_{L^2(X^i)}|g|_{L^\infty(X^1)}\left | |d(\psi^i_n)^{-1}.\xi|-|d(\psi^i_\infty)^{-1}.\xi| \right |_{L^\infty(X^i)}\to 0
\end{equation}
since the weak convergence of $\tilde{f}^i_{1,n}$ implies that $|\tilde{f}^i_{1,n}|_{L^2(X^i)}$ stays bounded. The second term of (\ref{eq:KM7}) is bounded by
\begin{equation}
  \label{eq:KM9}
   k_g|\tilde{f}^i_{1,n}|_{L^2(X^i)} \left | (\psi^i_n)^{-1}-(\psi^i_\infty)^{-1} \right |_{L^\infty(X^i)}|d(\psi^i_\infty)^{-1}|_{L^\infty(X^i)}\to 0
\end{equation}
where $k_g$ is the Lipschitz constant of $g$ on $X^i$.
\end{proof}

\subsection{The tangential model}
\label{ssection:tang_model}
The previous metamorphosis metrics can be also approximated in a simpler setting, which becomes closer to a natural extension of the classical image framework in the context of fshapes. We shall call it the \emph{tangential model}. Instead of computing the cost of the functional evolution $t\mapsto \zeta_t$ along a time dependent $L^2(X_t)$ metric as
\begin{equation}
  \label{eq:FB25}
  \int_0^1\int_X|\dot{\zeta}_t(x)|^2|\rst{d_x\phi^v_t}{T_{x}X}|d\Haus^d(x)dt=\int_0^1\int_{X_t}|\dot{\zeta}_t\circ(\phi_t^v)^{-1}|^2(y)d\Haus^d(y)dt
\end{equation}
one can freeze the metric to its initial value $X_0=X$ neglecting the change of measure weight:
\begin{equation}
  \label{eq:FB26}
  \int_0^1\int_X|\dot{\zeta}_t(x)|^2|\rst{d_x\phi^v_t}{T_{x}X}|d\Haus^d(x)dt\approx \int_0^1\int_{X}|\dot{\zeta}_t(x)|^2d\Haus^d(x)dt
\end{equation}
which gives after optimization with fixed end points condition $\zeta_1$ the usual $L^2(X)$ cost
\begin{equation}
  \label{eq:FB27}
  |\zeta_1|_X^2\doteq \int_X|\zeta_1(x)|^2d\Haus^d(x)
\end{equation}
which can be interpreted naturally as the log-likelihood of a Gaussian noise in the statistical framework. 

Note as we say in introduction of this section, that this is the usual noise term that appears in the classical situation of image matching and this can be considered as some kind of tangential metric for the functional part. Concerning the geometrical part, we can keep the usual $L^2([0,1],V)$ penalization to define
\begin{equation}
  \label{eq:FB28}
  \tilde{E}_X(v,\zeta)\doteq\frac{\gamma_V}{2}\int_0^1|v_t|_V^2dt+\frac{\gamma_f}{2}|\zeta|^2_X
\end{equation}
and consider 
\begin{equation}
\tilde{d}_\FB((X,f),(X',f'))\doteq (\inf\{\ \tilde{E}_X(v,\zeta)\ |\  X'=\phi^v_1.X,\ f'=(f+\zeta)\circ(\phi^v_1)^{-1}\ \})^{1/2}\,.\label{eq:FB29}
\end{equation}
With that definition, $\tilde{d}_{\FB}$ is no more a distance satisfying the symmetry and triangle inequality as previously, but the minimization of $\tilde{E}_{X}$ can be seen as the minimization of the log-likelihood of $(v,\zeta)$ given $(X,f)$ and $(X',f')$ for a quite natural statistical model. 

In this framework, one can still obviously consider the problem of existence for two given fshapes $(X,f)$ and $(X',f')$ in $\FB$ of an optimal $(v_*,\zeta_*)$ such that 
\begin{equation}
  \label{eq:FB37}
\left\{
  \begin{array}[h]{lcl}
  (X',f') & = &(\phi^{v_*}_1.X,(f+\zeta_*)\circ(\phi^{v_*}_1)^{-1})\\
\text{and} &&\\
  \tilde{E}_X(v_*;\zeta_*) & = &\min\{ \tilde{E}_X(v,\zeta)\ |\ (v,\zeta)\in F_{(X',f')}\}\\
\text{where}&&\\
F_{(X',f')} & \doteq  & \{ (v,\zeta) \in L^2([0,1],V)\times L^2(X)\ |\ (X',f') = (\phi^{v}_1.X,(f+\zeta)\circ(\phi^{v}_1)^{-1})\}
\end{array}\right.
\end{equation}

The proof existence of $(v_*,\zeta_*)$  can be done along the very same lines that the proof of existence of geodesics and is omitted. The proof is even more simpler since the functional metric is frozen and does not change along the path. We get eventually the following result:
\begin{thm}
	For $(X,f)$ and $(X',f')$ in $\FB$, there exists an optimal solution of the exact matching problem between $(X,f)$ and $(X',f')$ {\it i.e.} there exists $(v_*,\zeta_*)$ in $L^2([0,1],V)\times L^2(X)$ such that (\ref{eq:FB37}) holds. 
\end{thm}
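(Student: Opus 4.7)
The plan is to apply the direct method of the calculus of variations, essentially following the lower semi-continuity argument of Theorem \ref{thm:geo} but with a substantial simplification: in the tangential model the functional cost $|\zeta|_X^2$ does not depend on $v$, which eliminates the interaction between the two components of the energy.

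First I would verify that the feasible set $F_{(X',f')}$ is non-empty, so that the infimum is finite. Since $X, X' \in \X = G.X_0$, there exists $v \in L^2([0,1],V)$ with $\phi^v_1.X = X'$, and taking $\zeta = f' \circ \phi^v_1 - f$ produces an admissible pair (with $\zeta \in L^2(X)$ by change of variables). Next, given a minimizing sequence $(v_n,\zeta_n) \in F_{(X',f')}$, the bound on $\tilde{E}_X(v_n,\zeta_n)$ immediately gives boundedness of $(v_n)$ in $L^2([0,1],V)$ and of $(\zeta_n)$ in $L^2(X)$. By weak compactness, up to a subsequence, $v_n \rightharpoonup v_*$ in $L^2([0,1],V)$ and $\zeta_n \rightharpoonup \zeta_\infty$ in $L^2(X)$.

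Second, I would pass to the limit in the constraints. By the classical continuity result for flows, $\phi^{v_n}_1 \to \phi^{v_*}_1$ in the $C^1$ topology uniformly on compact sets, and the same holds for the inverses. This gives $\phi^{v_*}_1.X = X'$ by Hausdorff convergence, as already used in the proof of Theorem \ref{thm:Kar}. For the functional constraint, I would rewrite the membership $(v_n,\zeta_n) \in F_{(X',f')}$ as $\zeta_n = f' \circ \phi^{v_n}_1 - f$ (well-defined since $\phi^{v_n}_1:X\to X'$ is bijective). The pullback operator $T_n: g \mapsto g \circ \phi^{v_n}_1$ sends $L^2(X')$ into $L^2(X)$ with norm controlled uniformly by the Jacobian of $(\phi^{v_n}_1)^{-1}$, which is uniformly bounded. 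On continuous functions $g$, one has $T_n g \to T_\infty g$ strongly in $L^2(X)$ by uniform convergence of $\phi^{v_n}_1$; by density in $L^2(X')$ together with the uniform bound on $\|T_n\|$, the convergence $T_n f' \to T_\infty f'$ extends strongly to $f' \in L^2(X')$. Therefore $\zeta_n \to f' \circ \phi^{v_*}_1 - f$ strongly in $L^2(X)$, which forces $\zeta_\infty = f' \circ \phi^{v_*}_1 - f$, and hence $(v_*, \zeta_\infty) \in F_{(X',f')}$.

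Finally, I would conclude by lower semi-continuity. The first term $\int_0^1 |v_t|_V^2 dt$ is weakly lower semi-continuous on $L^2([0,1],V)$, giving $\int_0^1 |v_{*,t}|_V^2 dt \leq \liminf_n \int_0^1 |v_{n,t}|_V^2 dt$. For the second term, the strong convergence just established yields $|\zeta_\infty|_X^2 = \lim_n |\zeta_n|_X^2$. Summing, $\tilde{E}_X(v_*, \zeta_\infty) \leq \liminf_n \tilde{E}_X(v_n, \zeta_n) = \inf_{F_{(X',f')}} \tilde{E}_X$, so $(v_*, \zeta_\infty)$ attains the infimum. I expect the only delicate point to be the strong $L^2$ convergence of the pullbacks $f' \circ \phi^{v_n}_1$ for $f'$ merely in $L^2(X')$; this is resolved by the density–uniform bound argument above, which replaces the more intricate bilinear argument needed in Theorem \ref{thm:geo} where the metric on the functional part itself depended on $v_n$.
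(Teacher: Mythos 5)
Your proof is correct and follows exactly the route the paper intends: the paper omits the proof, stating only that it proceeds ``along the very same lines'' as the existence of geodesics (Theorem \ref{thm:geo}), i.e.\ the direct method with weak compactness of the minimizing sequence, identification of the limit via uniform $C^1$ convergence of the flows, and lower semi-continuity of the energy. Your one refinement --- observing that in the tangential model $\zeta_n$ is determined by $v_n$ through $\zeta_n=f'\circ\phi^{v_n}_1-f$, so that the density--uniform-bound argument upgrades the convergence of the residuals to strong $L^2(X)$ convergence --- is sound and makes the identification of the feasible limit slightly cleaner than the weak-convergence argument of Theorem \ref{thm:geo}, consistent with the paper's remark that this case is ``even more simple'' because the functional metric is frozen.
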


 \section{Dissimilarity measure between fshapes}
 \subsection{Dissimilarity measures : state of the art}
 The previous framework has been focusing so far on the comparison of fshapes within one given common bundle $\mathcal{F}$. It is clear that it becomes inoperant in most practical situations where the elements of a dataset cannot be assumed to belong to the same bundle. The reason is that the geometrical supports of two subjects need not be obtained from one the other by a deformation belonging to the group $G$. In more usual cases of shape spaces such as sets of images, landmarks, curves or surfaces, the general approach has been to introduce {dissimilarity measures} between the subjects, which provides data attachment terms to perform inexact registration and atlas estimation in practice. 
 
 For images and landmarks, $L^{2}$ distances are natural and have been commonly used to compare such objects. The case of curves, surfaces and submanifolds is theoretically more involved and has drawn consistent attention in several fields of mathematics. In computational anatomy, one possible and very powerful setting was proposed through the adaptation of the concept of {currents} \cite{Glaunes,Durrleman}. The dissimilarity measures between oriented submanifolds of given dimension defined in these frameworks derive from Hilbert metrics on spaces of differential forms and combine the advantages of not relying on parametrizations or point to point matchings, of being easily computable for discrete shapes and robust to shape sampling. One potential drawback is the issue of consistent orientation that is required in the currents' representations. More recently, an alternative methodology based on {varifolds} was introduced and implemented in \cite{Charon2}, which has the interest of being close to the idea of currents while overcoming the problem of orientation. 
 
 Now, fshapes do pose additional difficulties with respect to the definition of dissimilarity measures. This problem has been only addressed very recently in \cite{Charon1} where the authors define the extended notion of {functional currents} that, similarly to usual currents, embeds fshapes in some dual of spaces of differential forms. The Hilbert metrics that are defined on functional currents can be then used again as dissimilarity measures to perform inexact registration between fshapes, as exposed in the article. Functional currents could be thus very well used in the generalized problem of atlas estimation that we focus on in this paper. This has been touched upon, essentially from the numerical point of view, in \cite{Charon_thesis} (chapter 4). Although the results and algorithms presented in this paper could be transposed quite easily to fshape dissimilarities provided by functional currents, we shall work instead with the slightly modified approach of varifolds. As aforementioned, this is a way to get rid of shape orientation which is usually quite desirable in applied situations. In the following section, we briefly present the {functional varifolds}' mathematical setting, that stands for a natural generalization of varifolds to fshapes. The more technical and discrete computations shall be detailed in Section \ref{part:discreteScheme} dedicated to the algorithmic part itself.

 \subsection{Functional varifolds}
 \label{sec:fvarifolds}
 The mathematical concept of varifold goes back to geometric measure theory and the original exposition of F. Almgren \cite{Almgren} which was further developed considerably by W. Allard in \cite{Allard} subsequently. The link to computational anatomy's problems is fairly more recent and is in most part presented in \cite{Charon2} or \cite{Charon_thesis}. We shall frequently refer the reader to these references for additional mathematical details. What we present in this section is an extension of varifolds to represent functional shapes, which is much related to the functional current idea of \cite{Charon1}.

\subsubsection{Representation of fshapes as functional varifolds} 
As previously, we call $E$ the $n$-dimensional vector space embedding all geometrical supports of fshapes. For any integer $1\leq d \leq n$, we will write $G_{d}(E)$ for the \textit{Grassmann manifold} of all $d$-dimensional (non-oriented) subspaces of $E$. $G_{d}(E)$ is a compact manifold that can be embedded trivially in the space $\mathcal{L}(E)$ by identifying any $V \in G_{d}(E)$ with the orthogonal projector $p_{V}$ on $V$. The usual space of $d$-dimensional varifolds is defined intuitively as the space of spatially spread distributions of Grassmannians or, more rigorously, as finite Borel measures on the product $E \times G_{d}(E)$ (\textit{cf} \cite{Charon2}). Now, to account for the existence of signals at each point, one can simply augment the varifold representation with a signal component, which we formalize by the following definition:
\begin{de}
 We say that $\mu$ is a $d$-dimensional functional varifold (fvarifold in short) if $\mu$ is a Borel finite measure on the space $E \times G_{d}(E) \times \mathbb{R}$ or equivalently if $\mu \in C_{0}(E \times G_{d}(E) \times \mathbb{R})'$. 
\end{de}
Note that we consider only the case of real-valued signals here, but this could be extended to different situations, as vector or tensor-valued signals (\textit{cf} \cite{Charon_thesis}). Now, any functional shape $(X,f)$, with $X$ a $d$-dimensional rectifiable subset, can be represented by a functional varifold $\mu_{(X,f)}$ defined by:
\begin{equation}
\label{eq:fshape_fvar}
 \mu_{(X,f)}(\omega)=\int_{X} \omega(x,T_{x}X,f(x)) d \mathcal{H}^{d}(x)
\end{equation}
where $\mathcal{H}^{d}$ is the $d$-dimensional Hausdorff measure on $E$. 

Some particularly simple functional varifolds are the Diracs that, in this context, can be written in the form $\delta_{(x,V,f)}$ with $x \in E$, $V \in G_{d}(E)$ and $f \in \mathbb{R}$, and act on any function $\omega \in C_{0}(E \times G_{d}(E) \times \mathbb{R})$ by the relation:
\begin{equation} \label{eq:fshape_dirac}
 \delta_{(x,V,f)}(\omega)=\omega(x,V,f) \,.
\end{equation}
We shall detail, in Section \ref{part:discreteScheme}, how to approximate polyhedral fshapes by a finite sums of Dirac in order to perform practical computations. 

Now, an important point is to express the way that geometrico-functional transformations act on functional varifolds, in such a way that this action is consistent with the transport of fshapes that we have been considering in section \ref{ssec:metamorphosis_bundles}. In a very similar fashion as with regular varifolds (\textit{cf} \cite{Charon2}), one can express it by usual pull-back and push-forward operations. Let $\phi \in \text{Diff}(E)$ and $\zeta$ a measurable function on $E$, this is given by the following set of equations: 
\begin{equation}
\label{eq:push_pull_var}
\left \{
\begin{array}[h]{l}
 \forall \omega \in C_{0}(E\times G_{d}(E)\times \mathbb{R}), \ \left ( (\phi,\zeta)_{\ast} \mu \right )(\omega)= \mu \left ( (\phi,\zeta)^{\ast}\omega \right ) \\
 \text{where} \\
 \left ( (\phi,\zeta)^{\ast}\omega \right )(x,V,f)=|\rst{d_{x}\phi}{V}|\,\omega(\phi(x),d_{x}\phi(V),f+\zeta) 
\end{array} 
 \right.
\end{equation}
where for $V \in G_{d}(E)$, $|\rst{d_{x}\phi}{V}|$ denotes as previously the Jacobian of $\phi$ along subspace $V$ ({\it i.e.} the volume change along $V$ at point $x$) and $d_{x}\phi(V)$ is the image of $V$ by the invertible linear application $d_{x}\phi$. It is then a simple verification that one has the following consistency property:
\begin{prop}
\label{prop:commutation_fvar}
Let $(X,f)$ be a fshape where $X$ is a $d$-dimensional rectifiable subset and $f$ a $L^{2}$ function on $X$. If $\phi \in \text{Diff}(E)$ and $\zeta\in L^{2}(X)$ then:
\begin{equation*}
 (\phi,\zeta)_{\ast} \mu_{(X,f)} = \mu_{(\phi(X),(f+\zeta)\circ\phi^{-1})}\,.
\end{equation*}
\end{prop}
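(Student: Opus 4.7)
The plan is to verify the identity by testing both sides against an arbitrary $\omega \in C_0(E \times G_d(E) \times \mathbb{R})$, unpacking each side from the definitions (\ref{eq:fshape_fvar}) and (\ref{eq:push_pull_var}), and then bridging the two expressions with the area formula for rectifiable sets.

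First I would compute the left-hand side directly: by definition of the push-forward and of the pull-back action on test functions,
\begin{equation*}
((\phi,\zeta)_\ast \mu_{(X,f)})(\omega) = \mu_{(X,f)}((\phi,\zeta)^\ast \omega) = \int_X |\rst{d_x\phi}{T_xX}|\, \omega(\phi(x), d_x\phi(T_xX), f(x)+\zeta(x))\, d\Haus^d(x).
\end{equation*}
Next I would write the right-hand side straight from (\ref{eq:fshape_fvar}):
\begin{equation*}
\mu_{(\phi(X),(f+\zeta)\circ \phi^{-1})}(\omega) = \int_{\phi(X)} \omega(y, T_y\phi(X), (f+\zeta)(\phi^{-1}(y)))\, d\Haus^d(y).
\end{equation*}

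The remaining step is to change variables $y=\phi(x)$ in the second integral. Two classical facts from geometric measure theory do the job. The first is the area formula for Lipschitz maps on rectifiable sets (Federer 3.2.5/3.2.20), which, since $\phi$ is a diffeomorphism, gives $d\Haus^d(y)_{\restriction \phi(X)} = |\rst{d_x\phi}{T_xX}|\, d\Haus^d(x)_{\restriction X}$ under pull-back by $\phi$. The second is that the approximate tangent space transforms as $T_{\phi(x)}\phi(X) = d_x\phi(T_xX)$ for $\Haus^d$-almost every $x\in X$, which follows from the definition of rectifiability together with the chain rule applied to the Lipschitz parametrizations of $X$ (Federer 3.2.19, 3.2.22). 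Substituting these into the right-hand side reproduces exactly the integral obtained for the left-hand side, so the two fvarifolds coincide on every test function $\omega$, whence as elements of $C_0(E\times G_d(E)\times \mathbb{R})'$.

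The main obstacle is not algebraic but measure-theoretic: one must ensure that the approximate tangent space to the pushed-forward rectifiable set really is the image of the approximate tangent space under $d_x\phi$, $\Haus^d$-a.e., and that the Jacobian factor produced by the area formula matches $|\rst{d_x\phi}{T_xX}|$ exactly. Both are standard for $C^1$ diffeomorphisms of $E$, the former following from the fact that rectifiability is preserved by diffeomorphisms and from uniqueness of the approximate tangent plane, so the proposition reduces, as expected, to a bookkeeping verification once these ingredients are in place.
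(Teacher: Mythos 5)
Your proof is correct and follows exactly the route the paper intends: the paper delegates this verification to the usual-varifold case in \cite{Charon2}, and the key ingredients you use (the area formula of Federer 3.2.20 together with the a.e.\ identity $T_{\phi(x)}\phi(X)=d_x\phi(T_xX)$) are precisely the ones the authors themselves invoke when they perform the analogous computation in the proof of Proposition \ref{chap_fcurrent:theo_controlnormdefor}. Nothing is missing.
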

The proof does not involve any additional difficulty than in the case of usual varifolds, for which we refer the reader to \cite{Charon2}.

Thus, functional varifolds enable all possible fshapes of given dimension to be embedded into a common space of distributions. The next step then is to equip such a space with a metric that shall induce a dissimilarity measure on the set of fshapes. This can be efficiently addressed by introducing reproducing kernels on the product $E \times G_{d}(E) \times \mathbb{R}$, as we explain in the following text.

\subsubsection{RKHS of functional varifolds} 
The use of reproducing kernel in the context of currents and varifold spaces has been argued and implemented many times, for instance in \cite{Glaunes,Durrleman,Charon_thesis}. The general scheme is to consider Hilbert spaces embedded in $C_{0}(E \times G_{d}(E) \times \mathbb{R})$ which are built as the reproducing kernel Hilbert space (RKHS) associated to a certain positive kernel on $E \times G_{d}(E) \times \mathbb{R}$. A natural and convenient (but not exhaustive) way to define kernels on product spaces is to consider tensor products of kernels, the classical result from kernel theory being that:
\begin{lemma}
 Let $A$ and $B$ be two sets and $k_{A}$, $k_{B}$ positive kernels respectively on $A$ and $B$. Then $k_{A} \otimes k_{B}$  defined for all $a_{1},a_{2} \in A$ and $b_{1},b_{2} \in B$ by:
 $$ k_{A} \otimes k_{B}((a_{1},b_{1}),(a_{2},b_{2})) \doteq k_{A}(a_{1},a_{2})k_{B}(b_{1},b_{2}) $$
 is a positive kernel on $A \times B$.
 \label{lemma:tensorprod_kernel}
\end{lemma}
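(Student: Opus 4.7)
The plan is to verify the defining condition of a positive kernel directly: for every finite collection $(a_1,b_1),\ldots,(a_N,b_N)\in A\times B$ and every real scalars $c_1,\ldots,c_N$, one must check that
\begin{equation*}
\sum_{i,j=1}^N c_i c_j\, k_A(a_i,a_j)\, k_B(b_i,b_j) \geq 0.
\end{equation*}
Two natural routes present themselves, and I would take whichever is most in line with the rest of the paper's notation.

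First route (feature map / tensor product of RKHS). Since $k_A$ is positive, it admits a feature representation, namely there exists a Hilbert space $H_A$ and a map $\Phi_A:A\to H_A$ with $k_A(a,a')=\langle \Phi_A(a),\Phi_A(a')\rangle_{H_A}$ (take $H_A$ to be the RKHS of $k_A$ and $\Phi_A(a)=k_A(a,\cdot)$). Similarly one gets $\Phi_B:B\to H_B$ for $k_B$. Form the Hilbert tensor product $H_A\otimes H_B$ and define $\Phi:A\times B\to H_A\otimes H_B$ by $\Phi(a,b)\doteq \Phi_A(a)\otimes \Phi_B(b)$. The defining property of the inner product on $H_A\otimes H_B$ gives
\begin{equation*}
\langle \Phi(a,b),\Phi(a',b')\rangle_{H_A\otimes H_B}=\langle \Phi_A(a),\Phi_A(a')\rangle_{H_A}\,\langle \Phi_B(b),\Phi_B(b')\rangle_{H_B}=k_A(a,a')\,k_B(b,b'),
\end{equation*}
so $k_A\otimes k_B$ is the kernel associated with the feature map $\Phi$, hence positive.

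Second route (Schur/Hadamard product of PSD matrices). Let $K_A$ and $K_B$ denote the $N\times N$ Gram matrices $(k_A(a_i,a_j))_{i,j}$ and $(k_B(b_i,b_j))_{i,j}$, both symmetric positive semi-definite by assumption. The matrix $M$ with entries $M_{ij}=k_A(a_i,a_j)k_B(b_i,b_j)$ is exactly the Hadamard product $K_A\circ K_B$. Diagonalizing $K_B=\sum_\ell \lambda_\ell\, u_\ell u_\ell^T$ with $\lambda_\ell\geq 0$, one obtains for any $c=(c_1,\ldots,c_N)$
\begin{equation*}
c^T M c=\sum_\ell \lambda_\ell \sum_{i,j}c_i c_j k_A(a_i,a_j)u_{\ell,i}u_{\ell,j}=\sum_\ell \lambda_\ell (c\odot u_\ell)^T K_A (c\odot u_\ell)\geq 0,
\end{equation*}
where $c\odot u_\ell$ is the componentwise product, and non-negativity follows from $K_A\succeq 0$.

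There is no real obstacle here: the statement is the classical fact that tensor products of positive kernels are positive, and the only point worth being careful about is the well-definedness of the tensor product Hilbert space (or equivalently the Schur product theorem), both of which are standard. I would favour the first route to keep the exposition in line with the RKHS language used throughout the section.
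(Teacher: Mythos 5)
Your proof is correct. Note that the paper itself gives no argument for this lemma at all --- it is simply quoted as ``the classical result from kernel theory'' --- so there is no in-paper proof to compare against. Both of your routes are standard and valid: the feature-map argument via $\Phi(a,b)=\Phi_A(a)\otimes\Phi_B(b)$ in $H_A\otimes H_B$, and the Schur product theorem obtained by diagonalizing $K_B$ and reducing $c^T(K_A\circ K_B)c$ to a nonnegative combination of quadratic forms in $K_A$. Either would serve as a complete proof; your preference for the first is reasonable given the RKHS language used throughout Section 3, and the second has the minor advantage of requiring no construction of the reproducing kernel Hilbert spaces (only the positive semi-definiteness of finite Gram matrices, which is the definition being verified).
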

Now going back to the case of functional varifolds itself, one has the following property:
\begin{prop}
  Let $k_{e}$ be a positive kernel on the space $E$ such that $k_{e}$ is continuous, bounded and for all $x \in E$, the function $k_{e}(x,.)$ vanishes at infinity, $k_{t}$ a kernel on the manifold $G_{d}(E)$ that is also continuous, and $k_{f}$ a kernel on $\mathbb{R}$ continuous, bounded and such that $k_{f}(f,.)$ vanishes at infinity for all $f \in \mathbb{R}$. Then $W$, the RKHS associated to the positive kernel $k \doteq k_{e} \otimes k_{t} \otimes k_{f}$, is continuously embedded into the space $C_{0}(E \times G_{d}(E) \times \mathbb{R})$. 
 \label{prop:kernel_fvar}
\end{prop}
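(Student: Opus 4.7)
The plan is to combine the reproducing property with the structural hypotheses on each factor kernel to show simultaneously that every element of $W$ is continuous, vanishes at infinity, and has sup norm controlled by the RKHS norm. By Lemma \ref{lemma:tensorprod_kernel} (applied twice), $k = k_e \otimes k_t \otimes k_f$ is a positive kernel on $E \times G_d(E) \times \mathbb{R}$, so the RKHS $W$ is well defined and contains the pre-Hilbert space $W_0$ of finite linear combinations of the sections $k_{(x,V,f)} \doteq k((x,V,f),\cdot)$, which is dense in $W$ by construction.

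First, I would establish the continuous sup-norm bound. For $\omega \in W$ and any point $(x,V,f)$, the reproducing property gives
\[
|\omega(x,V,f)| = |\langle \omega, k_{(x,V,f)}\rangle_W| \leq \|\omega\|_W \sqrt{k_e(x,x)\, k_t(V,V)\, k_f(f,f)}.
\]
Since $k_e$ and $k_f$ are bounded by hypothesis, and $k_t$ is continuous on the compact manifold $G_d(E)\times G_d(E)$ hence bounded, the diagonal of $k$ is uniformly bounded by some constant $M^2$, giving $\|\omega\|_\infty \leq M\|\omega\|_W$.

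Next, I would verify that each generator $k_{(x,V,f)}$ belongs to $C_0(E \times G_d(E) \times \mathbb{R})$. Continuity is immediate from the assumed continuity of $k_e$, $k_t$, $k_f$. For the vanishing at infinity, note that any sequence $(x_n,V_n,f_n)$ escaping every compact of $E \times G_d(E) \times \mathbb{R}$ must, since $G_d(E)$ is compact, have either $x_n$ escaping every compact of $E$ or $f_n$ unbounded in $\mathbb{R}$ (after extraction). In the first case, $|k_e(x,x_n)| \to 0$ by hypothesis and the other two factors remain uniformly bounded; in the second case the same argument applies with $k_f(f,f_n) \to 0$. In both cases $k_{(x,V,f)}(x_n,V_n,f_n) \to 0$, so $k_{(x,V,f)} \in C_0$. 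By linearity the whole of $W_0$ sits inside $C_0$.

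Finally, I would conclude by a density/closedness argument. Given $\omega \in W$, choose $\omega_n \in W_0$ with $\omega_n \to \omega$ in $W$. The sup-norm inequality $\|\omega_n - \omega_m\|_\infty \leq M\|\omega_n - \omega_m\|_W$ shows $(\omega_n)$ is uniformly Cauchy; since $C_0(E \times G_d(E) \times \mathbb{R})$ is complete for the sup norm and the $\omega_n$ lie in it, the uniform limit $\omega$ belongs to $C_0$. The continuity of the inclusion $W \hookrightarrow C_0$ is given by the same bound $\|\omega\|_\infty \leq M\|\omega\|_W$. The main obstacle is really the bookkeeping in step two — ensuring that the compactness of $G_d(E)$ is used correctly so that the $C_0$ property of the scalar factors $k_e$ and $k_f$ suffices to get $C_0$ on the full product space — but no deep ingredient beyond the reproducing property and the factorwise hypotheses is required.
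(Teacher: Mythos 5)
Your proposal is correct and follows essentially the same route as the paper's proof: reproducing property plus Cauchy--Schwarz for the bound $\|\omega\|_\infty \leq M\|\omega\|_W$, membership of the kernel sections in $C_0$, and a density/uniform-Cauchy argument to pass from $W_0$ to all of $W$. Your only addition is to spell out explicitly why compactness of $G_d(E)$ lets the $C_0$ hypotheses on $k_e$ and $k_f$ alone force the sections to vanish at infinity on the product, a point the paper leaves implicit.
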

\begin{proof}
The proof is very similar to the one of \cite{Charon2} in the case of varifolds. By lemma \ref{lemma:tensorprod_kernel}, $k$ is indeed a positive kernel on $E \times G_{d}(E) \times \mathbb{R}$ and by definition:
  \begin{equation}
\label{eq_def_newkernel}
  k \left ((x,V,f),(\tilde{x},\tilde{V},\tilde{f}) \right ) = k_{e}(x,\tilde{x})\,k_{t}(V,\tilde{V})\, k_{f}(f,\tilde{f})
 \end{equation} 
 and so, thanks to the assumptions on the kernels, $k((x,V,f),.)$ is continuous on $E \times G_{d}(E) \times \mathbb{R}$ and belongs to $C_{0}(E\times G_{d}(E)\times \mathbb{R})$. The vector space $W_{0}$ generated by these functions is thus included in $C_{0}(E\times G_{d}(E) \times \mathbb{R})$. Moreover, if $\omega \in W_{0}$, by the reproducing kernel property, we have that:
 \begin{equation*}
  \omega(x,V,f)=\delta_{(x,V,f)}(\omega)=\langle k((x,V,f),.), \omega \rangle_{W} \,.
 \end{equation*}
With Cauchy-Schwarz inequality: $|\omega(x,V,f)| \leq \|k((x,V,f),.) \|_{W}.\|\omega\|_{W}$. In addition, 
$$\|k((x,V,f),.) \|_{W}=\sqrt{k((x,V,f),(x,V,f))}$$ 
and all three kernels $k_{e}$, $k_{t}$ and $k_{f}$ are bounded so that $k$ is also bounded. We conclude that $|\omega|_{\infty} \leq \sqrt{|k|_{\infty}}.\|\omega\|_{W}$. Thus Cauchy sequences in $W_{0}$ for the $W$-norm are also Cauchy sequences for the infinite norm. It results that all their limits belong to $C_{0}(E\times G_{d}(E) \times \mathbb{R})$ and therefore $W$ is included in $C_{0}(E\times G_{d}(E) \times \mathbb{R})$. The previous inequality then holds for all $\omega \in W$, which shows that the inclusion embedding $\imath: \ W\hookrightarrow C_{0}(E\times G_{d}(E))$ is indeed continuous.
\end{proof}
Consequently, there exists a continuous mapping $i^*$ of the space of fvarifolds $C_{0}(E\times G_{d}(E)\times \mathbb{R})'$ into the dual of $W$. This induces a \textit{pseudo-distance} on fvarifolds resulting from the Hilbert structure of the RKHS. If we introduce the isometry $K_{W}: \ W' \rightarrow W$ defined by $\langle K_{W} \mu, \omega \rangle_{W} = \mu(\omega)$ for all $\mu \in W'$ and $ \omega \in W$, by the reproducing kernel property, we know that $K_{W} \delta_{(x,V,f)} = k((x,V,f),.)$. Then for all $x_{1},x_{2} \in E$, $V_{1},V_{2} \in G_{d}(E)$ and $f_{1}, f_{2} \in \mathbb{R}$, 
\begin{align*}
 \langle \delta_{(x_{1},V_{1},f_{1})}, \delta_{(x_{2},V_{2},f_{2})} \rangle_{W'} &= \langle K_{W}\delta_{(x_{1},V_{1},f_{1})}, K_{W}\delta_{(x_{2},V_{2},f_{2})} \rangle_{W} \\
 &= K_{W}\delta_{(x_{2},V_{2},f_{2})}(x_{1},V_{1},f_{1})
\end{align*}
and thus we have the following expression for the inner product between two Diracs:
 \begin{equation}
\label{eq_dot_product_diracs}
  \langle \delta_{(x_{1},V_{1},f_{1})}, \delta_{(x_{2},V_{2},f_{2})} \rangle_{W'}=k_{e}(x_{1},x_{2})\,k_{t}(V_{1},V_{2})\, k_{f}(f_{1},f_{2}) \,.
 \end{equation}
Now, if $(X,f)$ and $(Y,g)$ are two fshapes (of dimension $d$) and $\mu_{(X,f)}$ and $\mu_{(Y,g)}$ the associated fvarifolds (as defined previously), one can prove easily the following formula:
 \begin{equation}
\label{eq:dot_product_rect_var}
  \langle \mu_{(X,f)}, \mu_{(Y,g)} \rangle_{W'}=\int_{X} \int_{Y} k_{e}(x,y)\,k_{t}(T_{x}X,T_{y}Y) \, k_{f}(f(x),g(y)) d\mathcal{H}^{d}(x) d \mathcal{H}^{d}(y) \,.
 \end{equation}
 One can thus propose a dissimilarity measure between any two fshapes that is simply the norm of the difference in the space of fvarifolds {\it i.e.} $\|\mu_{(X,f)}-\mu_{(Y,g)} \|_{W'}^{2}$, which can be computed easily using \eqref{eq:dot_product_rect_var}. We shall detail more specifically the discrete expressions in Section \ref{part:discreteScheme}.

Yet, as we mentioned earlier, this does not necessarily result from a real distance on the space of varifolds because the dual application $\imath^{\ast}$ does not need to be an embedding. This is actually the case if and only if the RKHS $W$ is dense in $C_{0}(E \times G_{d}(E) \times \mathbb{R})$, in which case the kernel $k$ is said to be \textbf{$C_{0}$-universal}. In our previous construction, this holds in particular if all kernels $k_{e}, k_{t}$ and $k_{f}$ are themselves $C_{0}$-universal. In \cite{Carmeli}, authors study thoroughly construction of $C_{0}$-universal kernels on vector spaces. Notably, it is proven that all Gaussian kernels satisfy this property. Thus, such kernels are easy to provide in the case of kernels $k_{e}$ and $k_{f}$ defined on the spaces $E$ and $\mathbb{R}$. As for kernel $k_{t}$ on the Grassmann manifold, kernels can be defined by using the embedding of $G_{d}(E)$ in $\mathcal{L}(E)$ (we refer to \cite{Charon2} for more details), and therefore one can obtain a similar notion of Gaussian kernels that can be also shown to verify the $C_{0}$-universality property.

To conclude this section, let us mention a generalization of proposition \ref{prop:kernel_fvar} to higher-order regularities that shall be useful in the rest of the paper. 
\begin{prop}
  Let $k$ be a positive kernel on the product space $E \times G_{d}(E) \times \mathbb{R}$ such that $k$ is continuously differentiable of order $2p$ and such that all the derivatives of $k$ up to the order $p$ are bounded. Assume in addition that for any $(x,V,f) \in E \times G_{d}(E) \times \mathbb{R}$, the function $k((x,V,f),.)$ and all its derivatives up to order $p$ vanish at infinity. Then, the RKHS associated to $k$ is continuously embedded into $C_{0}^{p}(E\times G_{d}(E) \times \mathbb{R})$.
 \label{prop:kernel_fvar_reg}
\end{prop}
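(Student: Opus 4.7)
The plan is to imitate the strategy used to prove Proposition \ref{prop:kernel_fvar}, but now carrying along derivatives up to order $p$ in the reproducing-kernel relation. Writing $\xi=(x,V,f)$, the reproducing kernel property gives, for every $\omega$ in the pre-Hilbert space $W_0$ spanned by the sections $k(\xi,\cdot)$, the pointwise identity $\omega(\xi)=\langle k(\xi,\cdot),\omega\rangle_W$. I will differentiate this relation in the first variable up to order $p$; the key fact to establish is that under the stated regularity assumptions on $k$, one has $\partial_1^{\alpha} k(\xi,\cdot)\in W$ for every multi-index $\alpha$ with $|\alpha|\le p$ (derivatives on $G_d(E)$ being handled through charts or through the embedding $G_d(E)\hookrightarrow \mathcal{L}(E)$ already used in the paper), together with the norm formula
\begin{equation*}
\|\partial_1^{\alpha} k(\xi,\cdot)\|_W^2=\partial_1^{\alpha}\partial_2^{\alpha} k(\xi,\xi).
\end{equation*}
This is a standard RKHS fact for sufficiently smooth kernels and follows by approximating $\partial_1^{\alpha} k(\xi,\cdot)$ by difference quotients in $W_0$, whose limit exists in $W$ because $k\in C^{2p}$ and its derivatives are bounded.

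Granting that, one obtains for $\omega\in W_0$ the identity $\partial^{\alpha}\omega(\xi)=\langle \partial_1^{\alpha} k(\xi,\cdot),\omega\rangle_W$ and, by Cauchy--Schwarz,
\begin{equation*}
\bigl|\partial^{\alpha}\omega(\xi)\bigr|\le \sqrt{\partial_1^{\alpha}\partial_2^{\alpha}k(\xi,\xi)}\,\|\omega\|_W.
\end{equation*}
Since by hypothesis all derivatives of $k$ of order $\le p$ in each variable are bounded, there exists a constant $C_\alpha$ such that $|\partial^{\alpha}\omega|_\infty\le C_\alpha\|\omega\|_W$ for every $\omega\in W_0$ and every $|\alpha|\le p$. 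Summing over $|\alpha|\le p$ yields the continuity estimate $\|\omega\|_{C^p_0}\le C\,\|\omega\|_W$.

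To transfer the estimate from $W_0$ to $W$, I pass to the limit via density. Given $\omega\in W$, let $(\omega_n)\subset W_0$ be a Cauchy sequence converging to $\omega$ in $W$. The previous inequality shows that $(\omega_n)$ is Cauchy in $C^p$ for the uniform norm on all derivatives; since each $\omega_n$ is a finite linear combination of functions $k(\xi_i,\cdot)$, whose derivatives up to order $p$ vanish at infinity by assumption, each $\omega_n$ belongs to $C^p_0$. The uniform limit of a Cauchy sequence in $C^p_0$ is again in $C^p_0$, and that limit coincides with $\omega$ (the reproducing kernel property forces pointwise agreement). Hence $\omega\in C^p_0(E\times G_d(E)\times\mathbb{R})$, and the inclusion $W\hookrightarrow C^p_0(E\times G_d(E)\times\mathbb{R})$ is continuous.

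The only genuinely non-routine step is the identification $\partial_1^{\alpha} k(\xi,\cdot)\in W$ with the correct norm, because at first sight $\partial_1^{\alpha} k(\xi,\cdot)$ is just a smooth function, not manifestly an element of the RKHS. The standard resolution is to exhibit it as a norm limit in $W$ of finite-difference quotients of the form $h^{-|\alpha|}\sum_{\beta}(-1)^{|\beta|}\binom{\alpha}{\beta}k(\xi+h\beta,\cdot)$, whose $W$-norms converge to $\bigl(\partial_1^{\alpha}\partial_2^{\alpha}k(\xi,\xi)\bigr)^{1/2}$ thanks to the $C^{2p}$ regularity; the required uniformity in $\xi$ is supplied by the boundedness of the derivatives of $k$. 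This is the technical heart of the argument; once it is in place the rest of the proof is a verbatim adaptation of the proof of Proposition \ref{prop:kernel_fvar}.
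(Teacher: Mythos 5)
Your argument is correct and is exactly the standard proof that the paper itself invokes: the paper only states that the result "follows the same pattern as the one of Proposition \ref{prop:kernel_fvar} and can be adapted from \cite{Glaunes} (chapter 2)", and your write-up — establishing $\partial_1^{\alpha}k(\xi,\cdot)\in W$ by difference quotients with $\|\partial_1^{\alpha}k(\xi,\cdot)\|_W^2=\partial_1^{\alpha}\partial_2^{\alpha}k(\xi,\xi)$, applying Cauchy--Schwarz to the differentiated reproducing identity, and transferring the bound and the vanishing-at-infinity property from $W_0$ to $W$ by density — is precisely that adaptation. No gap; the only remark is that the hypothesis "derivatives up to order $p$ are bounded" must be read as derivatives of order up to $p$ in each variable (so that $\partial_1^{\alpha}\partial_2^{\alpha}k$ is controlled), which is the interpretation you correctly adopt.
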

The proof follows the same pattern as the one of proposition \ref{prop:kernel_fvar} and can be also adapted from the one given in \cite{Glaunes} (chapter 2). 

\subsection{Properties of the metrics}
\subsubsection{Control results}
We now show a few control results on fvarifolds norm that shall be useful for the following. As a measure on $E \times G_{d}(E) \times \R$, we will call the \textit{total variation norm} of a given fvarifold $\mu$ the quantity $\mu(E \times G_{d}(E) \times \R)$. In particular, one can easily check that the total variation norm of a rectifiable fvarifold $\mu_{(X,f)}$ equals $\Haus^{d}(X)$. We have the following control of $W'$-norms:
\begin{prop}
\label{prop:control_RKHSnorm_totalvar}
 For a RKHS $W$ continuously embedded into $C_{0}(E \times G_{d}(E) \times \R)$, there exists a constant $c_{W}>0$ such that for all $\mu \in C_{0}(E \times G_{d}(E) \times \R)'$:
 $$ \|\mu\|_{W'} \leq c_{W}.\mu(E \times G_{d}(E) \times \R) $$
\end{prop}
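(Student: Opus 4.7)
The plan is to simply unpack the two pieces of input data: the continuity of the embedding $W\hookrightarrow C_0(E\times G_d(E)\times \R)$ and the definition of the total variation. The proposition essentially says that composing the dual maps $C_0(E\times G_d(E)\times \R)' \hookrightarrow W'$ with the inclusion $W\hookrightarrow C_0(E\times G_d(E)\times \R)$ is continuous, which is tautological from functional analysis generalities.

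First, I would extract from the continuous embedding hypothesis a constant $c_W>0$ such that $\|\omega\|_\infty \leq c_W\|\omega\|_W$ for every $\omega\in W$ (this is just the operator norm of the inclusion map $\imath:W\hookrightarrow C_0$). Second, for any Radon measure $\mu\in C_0(E\times G_d(E)\times \R)'$ and any $\omega\in W$, the standard bound
\begin{equation*}
|\mu(\omega)| \leq \|\omega\|_\infty\,|\mu|(E\times G_d(E)\times \R)
\end{equation*}
holds, where $|\mu|$ denotes the total variation measure of $\mu$ (which coincides with $\mu$ itself when $\mu$ is a nonnegative measure, as is the case for fvarifolds). Combining these two inequalities yields $|\mu(\omega)| \leq c_W\,\|\omega\|_W\,|\mu|(E\times G_d(E)\times \R)$.

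Finally, taking the supremum over $\omega\in W$ with $\|\omega\|_W\leq 1$ on the left-hand side gives exactly $\|\mu\|_{W'} \leq c_W\,|\mu|(E\times G_d(E)\times \R)$, which is the desired conclusion (interpreting the right-hand side as the total variation norm as introduced in the paragraph preceding the proposition). There is no real obstacle here: the only delicate point is keeping the notation straight, since $\|\mu\|_{W'}$ denotes the norm of $\mu$ viewed through the transpose of the inclusion $\imath$, i.e.\ the operator norm of $\omega\mapsto \mu(\imath(\omega))$ on $W$. Everything else is a direct application of the definition of continuous embedding and of the elementary inequality between integration of a continuous function and the total variation of the integrating measure.
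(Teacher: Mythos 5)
Your proof is correct and follows essentially the same route as the paper: both arguments combine the bound $|\mu(\omega)|\leq \|\omega\|_{\infty}\,\mu(E\times G_{d}(E)\times\R)$ with the embedding inequality $\|\omega\|_{\infty}\leq c_{W}\|\omega\|_{W}$. The only cosmetic difference is that you take the supremum over the unit ball of $W$ directly, whereas the paper tests $\mu$ against the particular element $\omega=K_{W}\mu$ via the Riesz isometry and then divides by $\|\mu\|_{W'}$; the two computations of the dual norm are interchangeable.
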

\begin{proof}
 From the continuous embedding property, we have the existence of $c_{W}>0$ such that for all $\omega \in W'$, $|\omega|_{\infty} \leq c_{W} \|\omega\|_{W}$. Now, by definition of the kernel metric $\|\mu\|_{W'}^{2} = \mu(K_{W}\mu)$ for $K_{W}$ the Riesz isometry between $W'$and $W$ and thus:
 \begin{align*}
  \|\mu\|_{W'}^{2} = \mu(K_{W}\mu) &= \int K_{W}\mu(x,V,f) d\mu(x,V,F) \leq |K_{W}\mu|_{\infty} \mu(E \times G_{d}(E) \times \R) \\
  &\leq c_{W} \|K_{W}\mu\|_{W}\mu(E \times G_{d}(E) \times \R) \\
  &\leq c_{W} \|\mu\|_{W'} \mu(E \times G_{d}(E) \times \R)
 \end{align*}
 so that dividing on both sides by $\|\mu\|_{W'}$, we obtain the result.
\end{proof}
\noindent A direct corollary is that the RKHS norm of a rectifiable fvarifold is controlled by the $d$-volume of its geometrical support.

A second important property to establish is the continuity of RKHS norms with respect to deformations of geometrical supports. We first consider general $C^{1}$-diffeomorphisms of $E$ such that $\phi$ and $d\phi$ tends to $\text{Id}$ at infinity and we will denote $\|\phi-\text{Id}\|_{1,\infty}=|\phi-\text{Id}|_{\infty}+|d\phi - \text{Id}|_{\infty}$. The result we show is the following:
\begin{prop}
 \label{chap_fcurrent:theo_controlnormdefor}
Let $X$ be a $d$-dimensional rectifiable subset of $E$ of finite volume and $f \in L^{2}(X)$. Assume that $W$ is constructed as in proposition \ref{prop:kernel_fvar} and is continuously embedded into $C_{0}^{1}(E \times G_{d}(E) \times \R)$. Then: 
\begin{equation*}
 \|\mu_{(X,f)} - \mu_{(\phi(X),f \circ \phi^{-1})}\|_{W'} \rightarrow 0
\end{equation*}
as $\|\phi-\text{Id}\|_{1,\infty}\rightarrow 0$.
\end{prop}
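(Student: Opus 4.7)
The plan is to use the duality description of the $W'$-norm, reduce the problem to a single integral over $X$ via a change of variables, and then bound the integrand uniformly in $\omega$ on the unit ball of $W$. I start by writing
$$\|\mu_{(X,f)} - \mu_{(\phi(X),f\circ\phi^{-1})}\|_{W'} = \sup_{\|\omega\|_W \leq 1} \left|(\mu_{(X,f)} - \mu_{(\phi(X),f\circ\phi^{-1})})(\omega)\right|,$$
and, applying the area formula (the substitution $y = \phi(x)$ in the integral defining $\mu_{(\phi(X),f\circ\phi^{-1})}(\omega)$, in the spirit of Proposition \ref{prop:commutation_fvar}), the inner quantity becomes the single integral over $X$:
$$\int_X \left[\omega(x, T_xX, f(x)) - \omega(\phi(x), d_x\phi(T_xX), f(x))\,|\rst{d_x\phi}{T_xX}|\right] d\Haus^d(x).$$
The key simplification is that the signal argument is $f(x)$ in both terms, so $f$ plays no role beyond being a fixed measurable function.

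Next, I split the integrand into a continuity piece and a Jacobian piece by adding and subtracting $\omega(x,T_xX,f(x))\,|\rst{d_x\phi}{T_xX}|$. The continuity piece is bounded by $|d\omega|_\infty$ times $|\phi(x)-x| + d_{G_d}(d_x\phi(T_xX),T_xX)$, where $d_{G_d}$ is the Grassmannian metric induced by the embedding of $G_d(E)$ into $\mathcal{L}(E)$ via orthogonal projectors; the Jacobian piece is bounded by $|\omega|_\infty$ times $\big||\rst{d_x\phi}{T_xX}|-1\big|$. Since $W \hookrightarrow C_0^1(E \times G_d(E) \times \R)$ continuously, there is a constant $c_W$ with $|\omega|_\infty + |d\omega|_\infty \leq c_W\|\omega\|_W$ for every $\omega \in W$, which provides the required uniformity in $\omega$ on the unit ball.

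The technical heart of the argument is then the elementary fact that both the Grassmannian displacement $d_{G_d}(d_x\phi(T_xX),T_xX)$ and the Jacobian deviation $\big||\rst{d_x\phi}{T_xX}|-1\big|$ are controlled linearly by $\|d_x\phi - \text{Id}\|$ once this quantity is small enough that $d_x\phi$ restricted to $T_xX$ stays injective; one checks this by expanding the projector onto $d_x\phi(T_xX)$ and the induced $d$-volume form in a slightly perturbed orthonormal basis of $T_xX$. Combined with the $C_0^1$ bound above, this yields the pointwise estimate
$$\big|\omega(x,T_xX,f(x)) - \omega(\phi(x),d_x\phi(T_xX),f(x))\,|\rst{d_x\phi}{T_xX}|\big| \leq C\,\|\omega\|_W\,\|\phi-\text{Id}\|_{1,\infty},$$
valid uniformly in $x \in X$ for $\|\phi-\text{Id}\|_{1,\infty}$ small. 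Integrating against $\rst{\Haus^d}{X}$, using $\Haus^d(X)<\infty$, and taking the supremum over the unit ball of $W$ yields $\|\mu_{(X,f)} - \mu_{(\phi(X),f\circ\phi^{-1})}\|_{W'} \leq C\,\Haus^d(X)\,\|\phi-\text{Id}\|_{1,\infty}$, which is the desired convergence. The main obstacle is really just these linear perturbation estimates on the Grassmannian and on the tangential Jacobian; everything else is a mechanical duality-plus-change-of-variables argument.
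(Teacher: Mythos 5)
Your proof is correct and follows essentially the same route as the paper's: area formula to pull both measures back to a single integral over $X$, a split into a Jacobian-deviation term and a modulus-of-continuity term for $\omega$, uniform control of both via the $C_0^1$ embedding of $W$ and the linear perturbation bounds on $\bigl||\rst{d_x\phi}{T_xX}|-1\bigr|$ and the Grassmannian displacement. The only cosmetic difference is that you work directly with the duality supremum over the unit ball of $W$, whereas the paper organizes the same estimates through $W'$-norms of Dirac differences and the tensor factorization $W=W_g\otimes W_f$ to isolate the (inert) signal component.
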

\begin{proof}
 We start by writing:
 $$ \mu_{(X,f)} = \int_{X} \delta_{(x,T_{x}X,f(x))} d\Haus^{d}(x) $$
 and, in the same way, 
 $$ \mu_{(\phi(X),f \circ \phi^{-1})} = \int_{\phi(X)} \delta_{(y,T_{y}\phi(X),f\circ \phi^{-1}(y))} d\Haus^{d}(x) \,.$$
 Now, applying the area formula (corollary 3.2.20 in \cite{Federer}), we have:
 $$ \mu_{(\phi(X),f \circ \phi^{-1})} = \int_{X} \delta_{(\phi(x),d_{x}\phi(T_{x}X),f(x))} |\rst{d_{x} \phi}{T_{x}X}| d\Haus^{d}(x) \,.$$
 where $|\rst{d_{x}\phi}{T_{x}X}|$ is the local volume change at $x$ along $T_{x}X$. We obtain therefore: 
 \begin{equation}
  \label{eq:control1}
  \|\mu_{(X,f)} - \mu_{(\phi(X),f \circ \phi^{-1})}\|_{W'} \leq \int_{X} \left \| |\rst{d_{x} \phi}{T_{x}X}|.\delta_{(\phi(x),d_{x}\phi(T_{x}X),f(x))} - \delta_{(x,T_{x}X,f(x))} \right \|_{W'} d\Haus^{d}(x) \,.
 \end{equation}
 Now, we assumed that the kernel $k$ of $W$ is built as a tensor product $k= k_{e}\otimes k_{t} \otimes k_{f}$. We can then write $W=W_{g} \otimes W_{f}$ where $W_{g}$ is the RKHS of kernel $k_{e} \otimes k_{t}$ and $W_{f}$ the one of kernel $k_{f}$. It is then straightforward to check that:
 \begin{align*}
	 & \left \| |\rst{d_{x} \phi}{T_{x}X}|.\delta_{(\phi(x),d_{x}\phi(T_{x}X),f(x))} - \delta_{(x,T_{x}X,f(x))} \right \|_{W'} \\
	 &= \|\delta_{f(x)}\|_{W_{f}'}. \| |\rst{d_{x} \phi}{T_{x}X}|.\delta_{(\phi(x),d_{x}\phi(T_{x}X))} - \delta_{(x,T_{x}X)} \|_{W_{g}'} 
 \end{align*}
Since $\|\delta_{f(x)}\|_{W_{f}'} = \sqrt{k_{f}(f(x),f(x))}$ and $k_{f}$ is bounded, we have $\|\delta_{f(x)}\|_{W_{f}'} \leq \text{Cte}$. Moreover:
\begin{align}
	&\| |\rst{d_{x} \phi}{T_{x}X}|.\delta_{(\phi(x),d_{x}\phi(T_{x}X))} - \delta_{(x,T_{x}X)} \|_{W_{g}'} \nonumber\\ 
	&\leq \left | |\rst{d_{x} \phi}{T_{x}X}|-1 \right |. \|\delta_{(\phi(x),d_{x}\phi(T_{x}X))}\|_{W_{g}'} + \|\delta_{(\phi(x),d_{x}\phi(T_{x}X))}-\delta_{(x,T_{x}X)}\|_{W_{g}'} \,.
 \label{eq:control2}
\end{align}
Focusing on the first term, we have again, since $k_{e}$ and $k_{t}$ are bounded, that $\|\delta_{(\phi(x),d_{x}\phi(T_{x}X))}\|_{W_{g}'}$ is uniformly bounded on $X$. In addition, the volume variation $|\rst{d_{x} \phi}{T_{x}X}|-1$ converges to $0$ uniformly on $X$ whenever $\|\phi -\text{Id}\|_{1,\infty}\rightarrow 0$. As for the second term in the sum of \eqref{eq:control2}, we know that $\|\delta_{(\phi(x),d_{x}\phi(T_{x}X))}-\delta_{(x,T_{x}X)}\|_{W_{g}'} = \sup_{\|\omega\|_{W_{g}}=1} \left | (\delta_{(\phi(x),d_{x}\phi(T_{x}X))}-\delta_{(x,T_{x}X)} )(\omega) \right |$ and for any $\omega \in W_{g}$, we have:
\begin{align*}
 \left | (\delta_{(\phi(x),d_{x}\phi(T_{x}X))}-\delta_{(x,T_{x}X)} )(\omega) \right | &= \left | \omega(\phi(x),d_{x}\phi(T_{x}X))-\omega(x,T_{x}X) \right | \\
 &\leq |d\omega|_{\infty}.d_{E \times G_{d}(E)}\left ( (\phi(x),d_{x}\phi(T_{x}X)),(x,T_{x}X) \right ) 
\end{align*}
and $d_{E \times G_{d}(E)}\left ( (\phi(x),d_{x}\phi(T_{x}X)),(x,T_{x}X) \right )$ is obviously uniformly upper bounded for $x \in X$ by $\text{Cte}.\|\phi-\text{Id}\|_{1,\infty}$. In addition, with the assumptions on the kernels, we also have that $|d\omega|_{\infty} \leq \text{Cte} \|\omega\|_{W_{g}}$. It results eventually that $\|\delta_{(\phi(x),d_{x}\phi(T_{x}X))}-\delta_{(x,T_{x}X)}\|_{W_{g}'}$ is bounded above by $\text{Cte}.\|\phi-\text{Id}\|_{1,\infty}$. Now, since both terms in \eqref{eq:control2} converge to $0$ uniformly for $x \in X$, it results the same behavior for the integral of \eqref{eq:control1}, which completes the proof.
\end{proof}
Now, going back to the previous model of deformations, we consider diffeomorphisms $\phi=\phi_{1}^{v}$ obtained as the flow at time $1$ of a time varying vector fields $v\in L^{2}([0,1],V)$, where $V$ is a given Hilbert space of vector fields on $E$. The result of proposition \ref{chap_fcurrent:theo_controlnormdefor} implies the following important corollary:
\begin{corollary}
\label{cor:weak_continuity_W'}
 If $V$ is continuously embedded into $C_{0}^{2}(E,E)$ then, for any fixed and bounded rectifiable set $X$ of finite volume and $L^{2}$ signal $f$ on $X$, the application $v\mapsto \mu_{(\phi_{1}^{v}(X),f\circ (\phi_{1}^{v})^{-1})}$ is weakly continuous from $L^{2}([0,1],V)$ to $W'$.
\end{corollary}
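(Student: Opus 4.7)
The plan is to reduce the statement to a direct application of Proposition \ref{chap_fcurrent:theo_controlnormdefor}, via the classical continuous dependence of LDDMM flows on their driving vector fields for the weak topology on $L^{2}([0,1],V)$. In fact I expect to obtain strong continuity in the $W'$-norm, which trivially implies the weak continuity stated in the corollary.

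First, I would invoke the standard LDDMM result (see \cite{Trouve2011,Younes,arguillere14:_shape}): if $V\hookrightarrow C_0^2(E,E)$ continuously and $v_n\rightharpoonup v_\infty$ weakly in $L^{2}([0,1],V)$, then $\phi_1^{v_n}\to \phi_1^{v_\infty}$ and $(\phi_1^{v_n})^{-1}\to (\phi_1^{v_\infty})^{-1}$ in the $C^1$ topology uniformly on compact subsets of $E$ (the extra regularity $C_0^2$ is used to get $C^1$ convergence of the inverse flow). Next, set $Y\doteq \phi_1^{v_\infty}(X)$ (which is compact and of finite $d$-volume since $X$ is), $g\doteq f\circ (\phi_1^{v_\infty})^{-1}\in L^{2}(Y)$, and $\psi_n\doteq \phi_1^{v_n}\circ (\phi_1^{v_\infty})^{-1}$. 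Then by construction $\psi_n(Y)=\phi_1^{v_n}(X)$ and $g\circ \psi_n^{-1}=f\circ (\phi_1^{v_n})^{-1}$, so that
\begin{equation*}
\mu_{(\phi_1^{v_n}(X),\,f\circ (\phi_1^{v_n})^{-1})}=\mu_{(\psi_n(Y),\,g\circ \psi_n^{-1})}.
\end{equation*}

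The $C^1$ convergence results above, applied on a compact neighborhood of $Y$, yield $\psi_n\to \mathrm{Id}$ and $d\psi_n\to \mathrm{Id}$ uniformly in a neighborhood of $Y$. Tracing through the proof of Proposition \ref{chap_fcurrent:theo_controlnormdefor}, one sees that the hypothesis $\|\phi-\mathrm{Id}\|_{1,\infty}\to 0$ is used only via uniform control of $|\rst{d_x\phi}{T_xX}|-1$ and of $d_{E\times G_d(E)}((\phi(x),d_x\phi(T_xX)),(x,T_xX))$ on $X$ itself; applying this with $(Y,g,\psi_n)$ in place of $(X,f,\phi)$ gives
\begin{equation*}
\bigl\|\mu_{(Y,g)}-\mu_{(\psi_n(Y),\,g\circ \psi_n^{-1})}\bigr\|_{W'}\longrightarrow 0,
\end{equation*}
which is the desired strong, hence weak, convergence.

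The only delicate point—essentially the main obstacle—is justifying that Proposition \ref{chap_fcurrent:theo_controlnormdefor} applies to $\psi_n$: one must check that $\psi_n$ is indeed a $C^1$-diffeomorphism with $\psi_n-\mathrm{Id}$ and $d\psi_n-\mathrm{Id}$ vanishing at infinity (so that the tangent projections and Jacobians involved in the proposition are well defined), and that the relevant pointwise bounds converge uniformly on $Y$. Both follow from $V\hookrightarrow C_0^2(E,E)$: the flow $\phi_1^v$ and its inverse differ from $\mathrm{Id}$ by maps vanishing at infinity together with their first derivatives, and the composition $\psi_n$ inherits this property. This completes the reduction and yields the corollary.
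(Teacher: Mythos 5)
Your proof is correct and follows essentially the same route as the paper's: invoke the classical result that weak convergence $v_n\rightharpoonup v_\infty$ in $L^2([0,1],V)$ gives $C^1$ convergence of the flows on compact sets, then conclude via Proposition \ref{chap_fcurrent:theo_controlnormdefor}. The only difference is that you spell out the composition $\psi_n=\phi_1^{v_n}\circ(\phi_1^{v_\infty})^{-1}$ needed to put the statement into the exact form of that proposition, and you observe that uniform convergence on a compact neighborhood of the bounded deformed support suffices --- details the paper's proof leaves implicit.
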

\begin{proof}
 This essentially relies on classical results on differential equations and flows that can be found in \cite{Glaunes,Younes}. It is shown in particular that, since $V\hookrightarrow C_{0}^{2}(E,E)$, diffeomorphisms $\phi_{1}^{v}$ and their differentials tend to $\text{Id}$ at infinity. In addition, if $v_{n}$ is a sequence that weakly converges to $v$ in $L^{2}([0,1],V)$, then $\phi_{1}^{v_{n}}$ and $d\phi_{1}^{v_{n}}$ converge respectively to $\phi_{1}^{v}$ and $d\phi_{1}^{v}$ uniformly on every compact subset of $E$. Then, thanks to proposition \ref{chap_fcurrent:theo_controlnormdefor}, we are allowed to conclude that $\mu_{(\phi_{1}^{v_{n}}(X),f\circ (\phi_{1}^{v_{n}})^{-1})} \xrightarrow{W'} \mu_{(\phi_{1}^{v}(X),f\circ (\phi_{1}^{v})^{-1})}$, which proves the weak continuity.   
\end{proof}

\subsubsection{Variation formula for fvarifold metrics} \label{part:varForm}
We now extend the variation formula shown in \cite{Charon2} to the present setting of functional varifolds. The aim is to have a theoretical description of the variations of fvarifold metrics with respect to variations of a functional shape in both its geometrical support and signal. As we shall see, the behaviour is qualitatively similar, except for the orientation, to the setting of fcurrents that was examined in \cite{Charon_thesis}. Fixing a functional shape $(X,f)$ with $X$ a compact submanifold and $f: \ X \rightarrow \mathbb{R}$ a $C^{1}$ signal on $X$, we wish to compute variations of terms like $\langle \mu_{(X,f)}, \mu \rangle_{W'}$ for any $\mu \in W'$. Using the previous isometry $K_{W}$, we know that $\langle \mu_{(X,f)}, \mu \rangle_{W'} = \langle K_{W} \mu_{(X,f)}, K_{W} \mu \rangle_{W}= \mu_{(X,f)}(\omega)$ with $\omega = K_{W} \mu$ thanks to the reproducing kernel property and thus one is led to consider variation of terms $\mu_{(X,f)}(\omega)$ for $\omega \in W$. Rigorously, such variations can be expressed with respect to the action of infinitesimal deformations in the geometrico-functional domain. One can define such infinitesimal geometrico-functional deformations by considering, exactly as in the previous tangential model, a $C^{1}$ compactly supported vector field $v$ on $E$ and a $L^{2}$ function $h$ on $X$. Now we shall denote by $\phi_{t}$ the flow of $v$ at time $t$ and by $\psi_{t}$ the application defined on $X \times \mathbb{R}$ by $\psi_{t}(x,m)=(\phi_{t}(x),m+th(x))$. In addition, $\psi$ can be extended straightforwardly to the whole space $E \times \mathbb{R}$ by setting $\psi_{t}(x,m)=(\phi_{t}(x),m+t\tilde{h}(x))$ where $\rst{\tilde{h}}{X}=h$ and $\tilde{h}=0$ elsewhere. The function $\tilde{h}$ is then measurable on $E$ and the push-forward action $\psi_{t}^{\ast}\omega$ is well-defined by equation \eqref{eq:push_pull_var}. The variation we wish to compute can be expressed by:   
\begin{align*}
	\rst{\dfrac{d}{dt}}{t=0}\mu_{(\phi_{t}(X),(f+th)\circ \phi^{-1})}(\omega)&= \rst{ \dfrac{d}{dt}}{t=0} \int_{X} \psi_{t}^{\ast}\omega(x,T_{x}X,f(x)) d\mathcal{H}^{d}(x) \\
	&= \int_{X} \rst{ \dfrac{d}{dt}}{t=0} \psi_{t}^{\ast}\omega(x,T_{x}X,f(x)) d\mathcal{H}^{d}(x) \\
 &= \int_{X} (\pounds_{(v,h)}\omega)(x,T_{x}X,f(x)) d\mathcal{H}^{d}(x)
\end{align*}
where $\pounds_{(v,h)}\omega \doteq \rst{ \dfrac{d}{dt}}{t=0} \psi_{t}^{\ast}\omega$ is a notation for the derivative of function $\omega$ in the direction of $(v,h)$. In the following, we shall adopt the shortcut notation $\int_{X} g$ to denote the integral $\int_{X} g(x) d\mathcal{H}^{d}(x)$. Now, the result is the following:
\begin{thm}
 \label{theo:variation_formula}
Let $X$ be a compact submanifold and $f$ a $C^{1}$ function on $X$. Let $v$ be any $C^{1}$ vector field with compact support defined on $E$ and $h$ a $L^{2}$ function on $X$. We denote by $\psi_{t}(x,m)\doteq (\phi_{t}(x),m+th(x))$. Then, we have for any $C^{1}$ function $(x,V,m)\mapsto \omega(x,V,m)$ on $E \times G_{d}(E)\times \mathbb{R}$:
\begin{equation}
		\label{eq:variation_formula}
	\begin{split}
		\int_{X} (\pounds_{(v,h)}\omega)(x,T_{x}X,f(x)) d\mathcal{H}^{d}(x) = & \int_{X} \left ( \dfrac{\partial \omega}{\partial x} - \mdiv_{X} \left (\dfrac{\partial \omega}{\partial V} \right ) - \omega H_{X} | v^{\bot} \right ) + \dfrac{\partial \omega}{\partial m}.(h-\langle \nabla f,v^{\top}\rangle) \\
		&+ \int_{\partial X} \left \langle \nu, \omega v^{\top} + \left( \dfrac{\partial \omega}{\partial V} | v^\bot \right ) \right \rangle
	\end{split}
\end{equation}
where $v^\top$ and $v^\bot$ denote the tangential and normal part of $v$ along $X$, $\nu$ is the unit outward normal along $\partial X$, and $H_{X}$ the mean curvature vector to $X$.
 \end{thm}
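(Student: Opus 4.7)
The plan is to differentiate the pull-back $\psi_t^{*}\omega$ under the integral, obtain a pointwise expression for $\pounds_{(v,h)}\omega$, then reduce the geometric contributions to the classical first variation formula for varifolds from \cite{Charon2} and handle the signal part by an intrinsic chain-rule/integration-by-parts argument on $X$.

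First, starting from the pull-back formula in \eqref{eq:push_pull_var},
$$\psi_t^{*}\omega(x,V,m) = |\rst{d_x\phi_t}{V}|\cdot \omega\bigl(\phi_t(x),\, d_x\phi_t(V),\, m + t\tilde h(x)\bigr),$$
and using $\phi_0 = \text{Id}$ so that $|\rst{d_x\phi_0}{V}|=1$ and $d_x\phi_0=\text{Id}$, I would differentiate each factor at $t=0$ to get the pointwise identity
$$(\pounds_{(v,h)}\omega)(x,V,m) \;=\; (\mdiv_V v)(x)\,\omega(x,V,m) + \tfrac{\partial\omega}{\partial x}\!\cdot v(x) + \tfrac{\partial\omega}{\partial V}\!\cdot \dot V(v) + \tfrac{\partial\omega}{\partial m}\,h(x),$$
where $\dot V(v) \doteq \rst{\tfrac{d}{dt}}{t=0} d_x\phi_t(V)$ is the Grassmannian velocity. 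Evaluating at $(x,T_xX,f(x))$ and integrating over $X$, this exhibits the LHS of \eqref{eq:variation_formula} as a sum of geometric terms plus the signal term $\int_X \tfrac{\partial\omega}{\partial m}\,h$.

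Second, the three geometric terms — those independent of $h$ — match exactly the integrand of the classical first variation formula for (non-functional) varifolds, applied to the function $(x,V)\mapsto \omega(x,V,f(x))$ with the $m$-slot frozen at $f(x)$. Invoking that formula as proved in \cite{Charon2} produces
$$\int_X \bigl(\tfrac{\partial\omega}{\partial x} - \mdiv_X\tfrac{\partial\omega}{\partial V} - \omega H_X \,\bigm|\, v^\bot\bigr)\,d\mathcal{H}^{d} + \int_{\partial X} \Bigl\langle \nu,\, \omega v^\top + \bigl(\tfrac{\partial\omega}{\partial V}\,\bigm|\, v^\bot\bigr)\Bigr\rangle,$$
which is the geometric part of the theorem. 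The crucial consequence of this reduction is that $v^\top$ has been absorbed intrinsically: in the classical derivation, one splits $v = v^\top + v^\bot$ and applies the intrinsic divergence theorem on $X$ to the tangential scalar field $\omega\, v^\top$. In our setting the integrand $\omega(x,T_xX,f(x))$ additionally depends on $x$ through the composite $f(x)$, so that chain rule produces an extra term $\tfrac{\partial\omega}{\partial m}\,\langle\nabla f,v^\top\rangle$ on the left side of the integration by parts. Moving it to the right combines with the $\int_X \tfrac{\partial\omega}{\partial m}\,h$ term to give exactly $\tfrac{\partial\omega}{\partial m}(h-\langle \nabla f, v^\top\rangle)$, which is the signal contribution appearing in \eqref{eq:variation_formula}.

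The main technical obstacle is the careful bookkeeping of the Grassmannian velocity $\dot V(v)$: one must split it into tangential and normal pieces and verify that the tangential piece $\dot V(v^\top)$, generated by parallel transport of $T_xX$ along the tangential flow, cancels against the tangential residues of the integration by parts, so that only the normal piece survives and produces the term $\mdiv_X(\partial\omega/\partial V)\cdot v^\bot$. This cancellation is exactly the one carried out in \cite{Charon2} for ordinary varifolds, so with that classical identity in hand, the functional extension reduces to the chain-rule remark above together with the compact-support assumption on $v$ and the $C^1$ regularity of $X$, $f$ and $\omega$, which are sufficient to justify differentiation under the integral and the application of the divergence theorem on the compact submanifold with boundary $X$.
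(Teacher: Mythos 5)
Your proposal is correct and follows essentially the same route as the paper's proof in Appendix A: differentiate the pull-back at $t=0$, treat the volume-change, position, Grassmannian and signal contributions separately, reduce the purely geometric part to the first variation formula for ordinary varifolds of \cite{Charon2}, and recover the coupling term $\frac{\partial \omega}{\partial m}\langle \nabla f, v^{\top}\rangle$ from the chain rule applied to $\tilde{\omega}(x)=\omega(x,T_{x}X,f(x))$ when the divergence theorem is applied to $\tilde{\omega}\,v^{\top}$. The only difference is presentational — you invoke the classical formula as a black box where the paper rederives it — and your accounting of the cancellation of the tangential Grassmannian velocity via $\bigl(\frac{\partial \omega}{\partial V}\,\big|\,v^{\top}\bigr)=0$ matches the paper's synthesis step.
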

The proof is given in appendix \ref{appendix:varition_fvar}. This formula is interesting at several levels because it gives qualitative information on the gradients of the kernel metrics with respect to points and signal values of the fshape $(X,f)$. In particular, as already pointed out in \cite{Charon2}, for a constant signal $f$, we see that, for points in the interior of $X$, the variation with respect to the geometrical support only involves the orthogonal component of $v$ which means that the gradient of the metric is orthogonal to the shape. This is somehow quite natural since tangential components of $v$ do not change the shape itself in that situation. However, we see that for non-constant signals, a term involving the tangential component $v^{\top}$ appears in the variation and is concentrated on regions of important gradient of the signal $f$. The second important consequence to mention is the presence of special terms on the boundary of $X$. On points located on $\partial X$, the gradient is orthogonal to the boundary (but not necessarily to $X$). The presence of these boundary singularities have important consequences on the numerical behaviour of the gradients that we shall address with more details in section \ref{part.traps}.

\section{Mathematical formulation of atlas estimation}
\label{sec:atlas_estimation_formulation}
To go beyond Karcher means within a single bundle (as exposed in Section \ref{ssection:Karcher}) and address the problem of atlas computation, we have to consider now a family of (possibly noisy) observed fshapes $(X^i,f^i)_{1\leq i\leq N}$. The main point here is the optimization of a template $(\Xtemp,\ftemp)$ given the observations. We propose, in this section, a way to formalize the atlas estimation as a variational problem. The existence of solutions to such problems will be examined in the next section.  

\subsection{Template space}
Ideally, the template should be optimized among all the possible templates to avoid any bias effect towards a particular configuration but this looks as a quite badly posed optimization problem. We follow here a more simple and secure route by introducing \emph{a priori} a large but restricted space of possible templates as another vector bundle $\FB_0$ generated by a hypertemplate $(X_0,f_0)$. As presented previously, $\FB_0$ is defined from the $(X_0,f_0)$ as the set of every fshape defined from the mapping 
$(\phi,\zeta)\mapsto (\phi.X_0,(f_0+\zeta)\circ\phi^{-1})$ for $(\phi,\zeta)\in G_0\times L^2(X_0)$ with $G_0\doteq G_{V_0}$ is the group of diffeomorphisms generated by an additional RKHS space $V_0$ of vector fields on $E$:
\begin{equation}
  \label{eq:FB40}
  \FB_0\doteq\{\ (X,f)\ |\ (X,f)=(\phi.X_0,(f_0+\zeta)\circ\phi^{-1}),\ (\phi,\zeta)\in G_0\times L^2(X_0)\}\,.
\end{equation}
 One could take $V_0=V$ but we may want less regularity on $V_0$ than on $V$ to generate a larger space and get closer to the ideal case.

Now, given any $(\Xtemp,\ftemp)\in \FB_0$, we should consider the associated fshape bundle $\FB_{\Xtemp}$ 
\begin{equation}
  \label{eq:FB40b}
  \FB_{\Xtemp}\doteq\{\ (X',f')\ |\ (X',f')=(\phi.\Xtemp,(\ftemp+\zeta)\circ\phi^{-1}),\ (\phi,\zeta)\in G\times L^2(\Xtemp)\}
\end{equation}
corresponding to the previously introduced space $\FB$. 

\subsection{Variational formulation (metamorphosis riemannian setting)}
In the Riemannian setting, the problem of atlas estimation can be addressed as the minimization of  
\begin{equation}
  \label{eq:FB38}
  \begin{split}
    J^{riem}_1(&(\Xtemp,\ftemp),(X^i,f^i)_i) \doteq \\
&  d_{\FB_0}((X_0,f_0);(\Xtemp,\ftemp))^2   +\sum_{i=1}^N\left(d_{\FB_X}((\Xtemp,\ftemp);(\tilde X^i,\tilde f^i))^2+g(\tilde{X}^i,\tilde{f}^i;X^i,f^i)\right)
  \end{split}
\end{equation}
on 
\begin{equation}
  \label{eq:FB39}
  (\Xtemp,\ftemp)\in\FB_0\text{ and }(\tilde{X}^i,\tilde{f}^i)_{1\leq i\leq N}\in\FB_{\Xtemp}^N
\end{equation}
where $g$ is the ``out-of-orbit'' or previous data attachment term, which, in that case, can be defined, as we explained in section \ref{sec:fvarifolds} by fvarifold Hilbertian metrics. 

The problem can be then reformulated equivalently as a minimization problem on 

\begin{equation}
\label{eq:FB41}
\left\{
\begin{array}[h]{l}
  (v^0,h^0)\in L^2([0,1],V_0\times L^2(X_0))\\
  (v^i,h^i)\in L^2([0,1],V\times L^2(\Xtemp))\\
\text{with }\\
\Xtemp=\phi^{v^0}_1.X_0
\end{array}\right.
\end{equation}
of
\begin{equation}
  \label{eq:FB42}
  \begin{split}
	  &J^{riem}_2((v^0,h^0),(v^i,h^i)_i)\doteq  \frac{\gamma_{V_0}}{2}\int_0^1|v^0_t|_{V_0}^2dt +\frac{\gamma_f^0}{2}\int_0^1\int_{X_0}|h^0_t|^2|\rst{d_x\phi^{v^0}_t}{T_{x}X}|d\Haus^d(x)\\
    & +\sum_{i=1}^N\left(
    \frac{\gamma_{V}}{2}\int_0^1|v^i_t|_{V}^2dt+\frac{\gamma_f}{2}\int_0^1\int_{\Xtemp}|h^i_t|^2|\rst{d_x\phi^{v^i}_t}{T_{x}X}|d\Haus^d(x)+\frac{\gamma_W}{2}g((\tilde{X}^i,\tilde{f}^i),(X^i,f^i))\right)
    \end{split}
\end{equation}
where
\begin{equation}
  \label{eq:FB42b}
  \left\{
    \begin{array}[h]{l}
      \tilde{f}^i=(\ftemp+\zeta^{h^i}_1)\circ(\phi^{v^i}_1)^{-1}\\
      \text{and}\\
     \ftemp=(f_0+\zeta^{h^0}_1)\circ(\phi^{v^0}_1)^{-1}\,.
    \end{array}
\right.
\end{equation}
\subsection{Variational formulation (tangential setting)}
\label{sec:atlas_tangential_setting}
The atlas estimation problem can also be formulated using the tangential metric and with a statistical flavor. It seems quite natural to choose the template as the pivotal point for the linearization towards the hypertemplate $(X_0,f_0)$ and the observations $(X^i_{},f^i_{})$ that is to freeze the functional metric on the $L^2(\Xtemp)$ space. 

The derivation is then quite straightforward and we jump directly to the associated minimization problem which is the minimization of the functional
 \begin{equation}
   \label{eq:FB44}
   \begin{split}
    J^{tan}((v^0,\zeta^0),(v^i,\zeta^i)_i)&\doteq  \frac{\gamma_{V_0}}{2}\int_0^1|v^0_t|_{V_0}^2dt +\frac{\gamma_f^0}{2}|\zeta^0|_{\Xtemp}^2\\
 & +\sum_{i=1}^N\left(
 \frac{\gamma_{V}}{2}\int_0^1|v^i_t|_{V}^2dt+\frac{\gamma_f}{2}|\zeta^i|^2_{\Xtemp}+ \frac{\gamma_W}{2} g((\tilde{X}^i,\tilde{f}^i),(X^i,f^i))\right)
    \end{split}
 \end{equation}
in the variables
\begin{equation}
 \label{eq:FB41b}
 \left\{
 \begin{array}[h]{l}
   (v^0,\zeta^0)\in L^2([0,1],V_0)\times L^2(\Xtemp)\\
   (v^i,\zeta^i)\in L^2([0,1],V)\times L^2(\Xtemp)\\
 \text{with }\\
 \Xtemp=\phi^{v^0}_1.X_0
 \end{array}\right.
 \end{equation}
where we have
\begin{equation}
  \label{eq:FB43}
  \left\{
    \begin{array}[h]{l}
      \tilde{f}^i=(\ftemp+\zeta^i)\circ(\phi^{v^i}_1)^{-1}\\
      \text{and}\\
     \ftemp=f_0\circ(\phi^{v^0}_1)^{-1}+\zeta^0\,.
    \end{array}
\right.
\end{equation}
The signals $(\zeta^i)_{i=1,..,N}$ will be named the \textbf{residual functions} of the atlas. From this general setting, we can consider an even simpler situation when $f_0\equiv 0$. This new setting is equivalent to the minimization of 
\begin{equation}
   \label{eq:FB45}
   \begin{split}
    J^{tan}_0((v^0,f),(v^i,\zeta^i)_i)&\doteq  \frac{\gamma_{V_0}}{2}\int_0^1|v^0_t|_{V_0}^2dt +\frac{\gamma_f^0}{2}|\ftemp|_{\Xtemp}^2\\
 & +\sum_{i=1}^N\left(
       \frac{\gamma_{V}}{2}\int_0^1|v^i_t|_{V}^2dt+\frac{\gamma_f}{2}|\zeta^i|^2_{\Xtemp} + \frac{\gamma_W}{2} g((\tilde{X}^i,\tilde{f}^i),(X^i,f^i))\right)
    \end{split}
\end{equation}
in the variables
\begin{equation}
 \label{eq:FB45b}
 \left\{
 \begin{array}[h]{l}
   (v^0,\ftemp)\in L^2([0,1],V_0)\times L^2(\Xtemp)\\
   (v^i,\zeta^i)\in L^2([0,1],V) \times L^2(\Xtemp)\\
 \text{with }\\
 \Xtemp=\phi^{v^0}_1.X_0
 \end{array}\right.
 \end{equation}
where we have
\begin{equation}
  \label{eq:FB43b}
 \tilde{f}^i=(\ftemp+\zeta^i)\circ(\phi^{v^i}_1)^{-1}\,.
\end{equation}

\section{Existence results for fshape atlases}
\subsection{Introduction}
The existence of an atlas for a population of fshapes with fcurrents or fvarifold based data term does not follow from the same arguments than in the more well-known pure geometrical case. The optimization of a signal $f$ on the geometrical template and for the more sophisticated model of additional residuals $\zeta^i$ to match each observations $(X^i,f^i)$ introduces new difficulties. 

The main one is that the fvarifold data term $\|\mu_{(X,f^i)}-\mu_{(X,f)}\|_W^2$ is not \emph{lower semi-continuous} for the weak convergence in $L^2(X)$ as a function of $f$ for $X$ fixed. This comes from the \emph{non-linearities} appearing in the various kernels defining the RKHS norm on $W$. Yet, interestingly, it remains continuous for the weak convergence on the space of measures. We prove that minimizing measure sequences are tight and that existence results can be established under quite general assumption for the extended functional. 

However, the existence of strong solution (\ie a true fshape template $(X,f)$) does not seem to be guaranteed in full generality but is true as soon as the regularization on the $L^2$ penalization on $f$ (and on the residuals $\zeta^i$ when introduced) is strong enough. This condition, although imposed theoretically for the existence of a fshape solution, is not clearly required for practical fshape atlases that we have computed so far. We show also that the computed template function and residuals are smooth proving a regularization effect of fvarifold attachment terms. The result holds in both the tangential and metamorphosis setting for fshapes, which is summed up by Theorem \ref{theo:existence_atlas_fvar_tang} (tangential model) and Theorem \ref{theo:existence_atlas_fvar_meta} (metamorphosis model), whose proofs are the core issues of this section. 

The most technical part of the proof is the existence of the fshape template $(X,f)$ on which we will focus in the first place. The proof we present follows basically the so-called \textit{direct method} of geometric measure theory where we first show the existence of a minimizer in a larger space of varifolds and then show that this solution does indeed result from a true fshape. 

\subsection{Existence of the template fshape}
We shall focus first on the tangential setting of \ref{sec:atlas_tangential_setting}. In this part, we will also consider the simpler situation where all variables $v^{i}$ and $\zeta^{i}$ in equation \eqref{eq:FB44} are frozen and show that a minimum of the functional exists in the variables $X$ and $f$. We will start by assuming that, in addition, $X$ is also a fixed bounded and finite volume $d$-dimensional rectifiable subset of $E$.

\subsubsection{Existence with $X$ fixed}
\label{ssec:atlas_existence_Xfixed}
This subsection is dedicated to the proof of the following proposition: 
\begin{prop}
\label{prop:JP1}
  Assume that $W$ is continuously embedded in $C_0^2(E\times G_{d}(E) \times \R)$, that $X$ and $(X^i)_{1\leq i\leq N}$ are finite volume bounded $d$-dimensional rectifiable subsets and that $f^i\in L^2(X^i)$ for $1\leq i\leq N$. Assume that $\gamma_f/\gamma_W$ is large enough with $\gamma_f,\gamma_W\geq 0$. Then  
$$J_X(f)\doteq \frac{\gamma_f}{2}\int_X |f(x)|^2d\Haus^d(x)+\frac{1}{2}\sum_{i=1}^N \gamma_W\|\mu_{(X^i,f^i)}-\mu_{(X,f)}\|_{W'}^2$$
achieves its minimum on $L^2(X)$ and any minimizer $f_*$ is such that $f_*\in L^\infty(X)$. Moreover, if $X$ is also a $C^p$ submanifold and $W\hookrightarrow C_0^m(E\times G_{d}(E) \times \R)$ with $m\geq \max \{p,2\}$, then $f_*\in C^{p-1}(X)$.
\end{prop}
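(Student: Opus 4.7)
The plan is to apply the direct method, but in an enlarged space of Young measures over the fixed geometric marginal, so as to bypass the absence of weak-$L^2$ lower semicontinuity of the data term, and then collapse the generalized minimizer back to an honest $L^2$ function using the $L^2$ penalty. First, because $k_f$ is bounded, formula \eqref{eq:dot_product_rect_var} gives a uniform bound $\|\mu_{(X,f)}\|_{W'}^2 \leq C$ (and similarly for the cross terms) independent of $f$, so $J_X(f) \geq (\gamma_f/2)\|f\|_{L^2(X)}^2 - C'$ and any minimizing sequence $(f_n)$ is bounded in $L^2(X)$. I would then lift to the measures $\mu_n \doteq \mu_{(X,f_n)}$ on $E \times G_d(E) \times \R$: Chebyshev applied to the signal coordinate gives $\mu_n(\{|m|>R\}) \leq \|f_n\|_{L^2}^2/R^2$, and since the geometric marginal is fixed and supported on the bounded set $X$, $(\mu_n)$ is tight. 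Prokhorov yields a weak limit $\mu_*$ whose first marginal is still $(x,T_xX)_{*}\Haus^d|_X$, so it disintegrates as $\mu_* = \int_X (\delta_{(x,T_xX)} \otimes \nu_{*,x}) \, d\Haus^d(x)$ for some family of probability measures $\nu_{*,x}$ on $\R$.

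Because $W \hookrightarrow C_0$, the data term $\mu \mapsto \|\mu_{\mathrm{ref}} - \mu\|_{W'}^2$ is continuous for this weak convergence (it is a bounded quadratic form against the bounded continuous kernel $k$), while the $L^2$ term extends as $\mu \mapsto \int m^2 \, d\mu$ and is lower semicontinuous by Fatou after truncating $m^2$ at level $R$ and letting $R\to\infty$. Hence $\mu_*$ minimizes the lifted functional $\tilde J$ over Young measures with the prescribed geometric marginal, and in particular $\int_X\!\int m^2 \, d\nu_{*,x}(m) \, d\Haus^d(x) \leq \liminf_n \|f_n\|_{L^2}^2$.

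The main step, and where I expect the main obstacle, is to collapse $\mu_*$ to a Dirac Young measure. I would set $f_*(x) \doteq \int m \, d\nu_{*,x}(m)$, which lies in $L^2(X)$, and compare $\tilde J(\mu_{(X,f_*)})$ with $\tilde J(\mu_*)$. The $L^2$ term drops by exactly $(\gamma_f/2) \int_X \var(\nu_{*,x}) \, d\Haus^d(x)$ by Jensen. For the data term, I would expand $\|\mu_\nu\|^2$ and each $\langle \mu_\nu, \mu_{(X^i,f^i)} \rangle$ by second-order Taylor in the $m$-variables around $f_*(x)$: the first-order terms vanish because $f_*(x)$ is the barycenter of $\nu_{*,x}$, and the second-order remainders are uniformly controlled by $|\partial_m^2 k|_\infty \var(\nu_{*,x})$ thanks to $W \hookrightarrow C_0^2$. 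This yields $\tilde J(\mu_{(X,f_*)}) - \tilde J(\mu_*) \leq (C\gamma_W - \gamma_f/2) \int_X \var(\nu_{*,x}) \, d\Haus^d(x)$ for an explicit constant $C$ depending only on $\Haus^d(X)$, on the $\Haus^d(X^i)$, and on $|k|_{C^2}$. Taking $\gamma_f/\gamma_W$ large enough forces the right-hand side to be non-positive, with strict inequality unless $\var(\nu_{*,x}) = 0$ a.e.; hence $\nu_{*,x} = \delta_{f_*(x)}$, $\mu_* = \mu_{(X,f_*)}$, and $f_*$ is the sought minimizer of $J_X$ in $L^2(X)$.

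Finally, for boundedness and regularity, first-order optimality in $f$ yields a pointwise Euler-Lagrange identity of the form $\gamma_f f_*(x) = \gamma_W\,\Phi(x, f_*(x))$ for a.e.\ $x \in X$, where $\Phi$ is an integral of $k_e(x,\cdot)\,k_t(T_xX,\cdot)\,\partial_m k_f$ against $\sum_i \mu_{(X^i,f^i)}$ and against $\mu_{(X,f_*)}$ itself. The boundedness of $k$ and its first derivatives immediately gives $f_* \in L^\infty(X)$. When $X$ is $C^p$ and $W \hookrightarrow C_0^m$ with $m \ge \max\{p,2\}$, the map $x \mapsto (x, T_xX)$ is $C^{p-1}$ and $\Phi$ is jointly $C^{p-1}$ in $(x,m)$, so the identity --- read as a fixed-point equation for $f_*$ whose signal-derivative $\gamma_f - \gamma_W \partial_m \Phi$ is bounded away from zero as soon as $\gamma_f/\gamma_W$ is large --- bootstraps $f_*$ to $C^{p-1}(X)$.
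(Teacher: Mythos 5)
Your proposal is correct and shares the paper's overall architecture --- relax the problem to measures on $E\times G_{d}(E)\times\R$ with prescribed geometric marginal, extract a generalized minimizer, then collapse it to a genuine fshape --- but the two technical pivots are executed differently. For compactness, you obtain tightness directly from the coercivity of the $L^2$ term via Chebyshev on the signal coordinate before invoking Prokhorov, whereas the paper first runs the truncation homotopy $\rho_t$ of Appendix \ref{appendix:proof_propJP1} to restrict to measures supported in a compact set; both work, and yours is leaner, though the paper's truncation also hands it the $L^\infty$ bound essentially for free. For the collapse --- the real crux --- you compare $\mu_*$ with the fvarifold of its fiberwise barycenter: Jensen makes the $L^2$ term drop by exactly $\frac{\gamma_f}{2}\int_X\var(\nu_{*,x})\,d\Haus^d(x)$, while a second-order Taylor expansion in the signal variable (first-order terms killed by the barycenter choice, remainder bounded by $\sup_q|\partial_m^2 k(\cdot,q)|_\infty$ times the total volumes, finite by the $C_0^2$ embedding) shows the data term moves by at most $C\gamma_W\int_X\var(\nu_{*,x})\,d\Haus^d(x)$; for $\gamma_f/\gamma_W$ large this forces zero variance. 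The paper instead takes the first variation of $\eJ$ along fiberwise signal perturbations, obtains the Euler--Lagrange identity $\gamma_f f+\gamma_W A(x,V,f)=0$ $\nu_*$-a.e., and uses strict monotonicity of $f\mapsto\gamma_f f+\gamma_W A(x,V,f)$ (again for $\gamma_f/\gamma_W$ large) plus the Implicit Function Theorem to conclude that the fiber measures are Diracs --- an argument that simultaneously delivers the $L^\infty$ bound and the $C^{p-1}$ regularity, which you must rederive afterwards via the pointwise Euler--Lagrange equation, exactly as the paper does. Two minor points: your $\mu_*$ is only known to satisfy $\eJ(\mu_*)\le\inf_{L^2(X)}J_X$, not to minimize $\eJ$ over all of $\MeasX$, but your comparison argument only needs the former since it yields $J_X(f_*)\le\eJ(\mu_*)\le\inf_{L^2(X)}J_X$ directly; and the continuity of the kernel data term under weak convergence uses that the sequence is tight with fixed total mass, which you have.
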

The proof proceeds in two steps.

The first one is to show the existence of a minimizer in the larger space of fvarifolds: namely, for $X$ fixed as previously, we consider 
the set $\MeasX$ of all Borel finite measures $\nu$ on $E \times G_{d}(E) \times \mathbb{R}$ such that:
\begin{equation}
  \label{eq:JP8}
  \int h(x,V)d\nu(x,V,f)=\int_X h(x,T_{x}X)d\Haus^d(x)\,.
\end{equation}
for all continuous and bounded applications $h$ on $E \times G_{d}(E)$. These are fvarifolds that have a particular marginal on $E \times G_{d}(E)$. Note that any varifold $\mu_{(X,f)}$ for $f$ a $L^{2}$ function on $X$ belongs to $\MeasX$. In addition, $\MeasX$ is a closed subset for the usual weak convergence of measures defined by $\nu_n \rightharpoonup \nu_\infty$ if for any $\omega \in C_b(E \times G_{d}(E) \times \R)$
\begin{equation}
  \label{eq:JP9}
  \nu_n(\omega)\to \nu_\infty(\omega)\,.
\end{equation}  
Then, we consider the extended functional $\eJ$ defined on measures $\nu\in\MeasX$ by 
\begin{equation}
  \label{eq:JP1}
  \eJ(\nu)\doteq \frac{\gamma_f}{2}\nu(|f|^2)+\frac{\gamma_W}{2}\sum_{i=1}^N\|\nu-\mu_{(X^i,f^i)}\|_{W'}^2
\end{equation}
where $\nu(|f|^2)$ is the notation we shall use for $\int |f|^{2} d\nu$. Then, 
\begin{lemma}
 \label{lemma:JP1}
 There exists $\nu_{\ast} \in \MeasX$ that minimizes the functional $\eJ$. 
\end{lemma}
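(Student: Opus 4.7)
The plan is to apply the direct method on $\MeasX$, which is the natural enlargement of the set of fshape-type fvarifolds: the geometric marginal on $E \times G_{d}(E)$ is frozen to that of $X$, but the signal coordinate is allowed to carry an arbitrary conditional measure rather than being concentrated on a graph. Starting from a minimizing sequence $(\nu_n) \subset \MeasX$ with $\eJ(\nu_n) \to \inf_{\MeasX} \eJ$, testing \eqref{eq:JP8} against $h \equiv 1$ already gives the uniform mass bound $\nu_n(E \times G_{d}(E) \times \R) = \Haus^d(X)$, and confines the marginals on $E \times G_{d}(E)$ to the compact set $X \times G_{d}(E)$ (since $X$ is bounded and $G_{d}(E)$ is compact).

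The key step is tightness in the signal direction, where the $L^2$ penalty intervenes. From $\eJ(\nu_n) \le \text{Cte}$ one gets $\nu_n(|f|^2) \le C$ uniformly in $n$, so Markov's inequality yields
\begin{equation*}
\nu_n(E \times G_{d}(E) \times \{|f| > R\}) \le C/R^2
\end{equation*}
uniformly in $n$, which tends to $0$ as $R \to \infty$. Combined with compactness of the geometric marginals this gives tightness of $(\nu_n)$ in the space of finite Borel measures on $E \times G_{d}(E) \times \R$. Prokhorov's theorem then produces a subsequence (not relabelled) weakly converging, tested against $C_b(E \times G_{d}(E) \times \R)$, to some finite nonnegative measure $\nu_*$. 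The marginal constraint passes to the limit by testing \eqref{eq:JP8} with $\omega(x,V,f) = h(x,V)$ for $h \in C_b(E \times G_{d}(E))$, so $\nu_* \in \MeasX$.

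It remains to show lower semi-continuity of $\eJ$. The term $\nu \mapsto \nu(|f|^2)$ is l.s.c.\ by truncation: for each $R$, $\min(|f|^2, R^2)$ is bounded continuous so $\nu_n(\min(|f|^2, R^2)) \to \nu_*(\min(|f|^2, R^2))$, and sending $R \to \infty$ with monotone convergence gives $\nu_*(|f|^2) \le \liminf_n \nu_n(|f|^2)$. For the fvarifold data term I would use the kernel expansion
\begin{equation*}
\|\nu - \mu_{(X^i,f^i)}\|_{W'}^2 = (\nu \otimes \nu)(k) - 2\,\nu(K_W \mu_{(X^i,f^i)}) + \|\mu_{(X^i,f^i)}\|_{W'}^2.
\end{equation*}
The embedding $W \hookrightarrow C_0^2 \subset C_0$ (Proposition \ref{prop:kernel_fvar}) ensures $k$ is bounded continuous and $K_W \mu_{(X^i,f^i)} \in C_0$; tightness of $(\nu_n)$ entails tightness of $(\nu_n \otimes \nu_n)$, so both the double and single integrals pass to the limit and the data term is in fact continuous along the subsequence. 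Combining yields $\eJ(\nu_*) \le \liminf_n \eJ(\nu_n) = \inf_{\MeasX} \eJ$.

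The main obstacle is the tightness step: without the quadratic $L^2$ penalty on $f$ nothing would prevent the minimizing sequence from pushing mass toward $|f| = \infty$, since the bounded continuous kernel $k_f$ cannot detect such escape of mass, and the direct method would collapse on $\MeasX$. Closely linked is the requirement that $W \hookrightarrow C_0$, which upgrades the $W'$-term from merely l.s.c.\ to genuinely continuous under the weak convergence of measures and is the reason the tensor product kernel structure with bounded continuous factors from Proposition \ref{prop:kernel_fvar} is essential here.
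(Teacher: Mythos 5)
Your proof is correct, but it obtains compactness by a genuinely different mechanism than the paper. The paper's proof never extracts tightness from the $L^2$ penalty directly: instead it introduces the thresholding homotopy $\rho_t$ of Section \ref{ssec:perturbation}, differentiates $t\mapsto \eJ(\nu_t)$, and uses the uniform bound \eqref{eq:JP23} on $\partial\omega^i/\partial f$ to show that clipping the signal at the explicit level $a=K\frac{\gamma_W}{\gamma_f}\sum_i(\Haus^d(X^i)+\Haus^d(X))$ can only decrease $\eJ$. This reduces the search to the set $\MeasXC$ of measures supported on a fixed compact $C$, where tightness is automatic and Prokhorov applies trivially; the lower semi-continuity on $\MeasXC$ is then only asserted as ``an easy check.'' Your route replaces the homotopy by the Markov inequality $\nu_n(|f|>R)\le \nu_n(|f|^2)/R^2$ applied along a minimizing sequence, which is shorter and more standard, and you supply the l.s.c.\ details the paper omits (truncation plus monotone convergence for $\nu(|f|^2)$, and the kernel expansion showing the $W'$ term is actually weakly continuous, using $\nu_n\otimes\nu_n\rightharpoonup\nu_*\otimes\nu_*$ for tight sequences). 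What the paper's heavier argument buys is the a priori $L^\infty$ bound $|f|\le a$ $\nu_*$-a.e.\ on the minimizer's support, which is reused in the second step of the proof of Proposition \ref{prop:JP1} (Appendix \ref{appendix:proof_propJP1:ssec2}) and again in the atlas theorems; your argument does not yield this bound for the lemma itself, though it is recovered there anyway from the stationarity equation \eqref{eq:JP25}. Both proofs need $\gamma_f>0$, and neither needs $\gamma_f/\gamma_W$ large at this stage.
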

The second step consists in proving that $\nu_{\ast}$ can be actually expressed as a fvarifold associated to a true fshape $(X,f_{\ast})$ with $f_{\ast}\in L^\infty(X)$. We have detailed the full proofs of Lemma \ref{lemma:JP1} and Proposition \ref{prop:JP1} in Appendix \ref{appendix:proof_propJP1}.

\subsubsection{Existence with non-fixed $X$}
\label{subsection:atlas_existence_Xnonfixed}
We now consider the existence of a template when $X$ is no more fixed and is estimated as well. In this case, as we mentioned earlier, it is enough to introduce a RKHS Hilbert space $V_0$ continuously embedded in $C^2_0(E,E)$, an initial hypertemplate $X_0$ and consider also an optimization of the template $X$ in the orbit of $X_0$ under the action of $\phi^0\in  G_0$, the group of diffeomorphisms associated with $V_0$ {\it i.e.} $X=\phi^0.X_0$. To prove existence result, we will need to introduce a penalization depending on the distance between $X_0$ and $X$ {\it i.e.} on $d_{G_0}(\text{Id},\phi^0)$. A typical functional would be, if $\X$ is the orbit of $X_0$ under $G_0$ and $d_\X(X,X')\doteq \inf_{\phi\in G_0,\phi.X=X'}d_{G_0}(\text{Id},\phi)$ is the induced distance between two templates in the orbit, 
\begin{equation}
  \begin{split}
    J_1(X,f) & =\frac{\gamma_{V_0}}{2}d_\X(X_0,X)^2\\
&+\frac{\gamma_f}{2}\int_X|f(x)|^2d\Haus^d(x)+\frac{\gamma_{W}}{2}\sum_{i=1}^N\|\mu_{(X^i,f^i)}-\mu_{(X,f)}\|_{W'}^2\label{eq:JP27}
  \end{split}
\end{equation}
Since a diffeomorphism $\phi^{0}\in G_{0}$ is obtained as the flow of a time-varying vector field $v^{0} \in L^2([0,1],V_0)$, we consider the minimization of the following functional: 
\begin{equation*}
  \begin{split}
    J_2(v^0,f) & =\frac{\gamma_{V_0}}{2}\|v^0\|^2_{L^2([0,1],V_0)}\\
&+\frac{\gamma_f}{2}\int_X|f(x)|^2 d\Haus^d(x)+\frac{\gamma_{W}}{2}\sum_{i=1}^N\|\mu_{(X^i,f^i)}-\mu_{(X,f)}\|_{W'}^2
  \end{split}
\end{equation*}
for $v^0\in {L^2([0,1],V_0)}$, $f\in L^2(X)$ with $X=\phi^0.X_0$. However, it is more convenient for $X=\phi^0.X_0$ to consider the change of variable $f\mapsto f_0=f\circ\phi^0$ from $L^2(X)\to L^2(X_0)$ so that we keep working in a fixed space $L^2(X_0)$. Hence we end up with functional
\begin{equation}
\begin{split}
  J_3(v^0,f_0)& =\underbrace{\frac{\gamma_{V_0}}{2}\|v^0\|_{{L^2([0,1],V_0)}}^2}_{\text{penalization on $X$}}+\frac{\gamma_f}{2}\underbrace{\int_{X_0}
  |f_0(x)|^2|\rst{d_{x}\phi^0}{T_{x}X}|d\Haus^d(x)}_{=\int_X|f(x)|^2d\Haus^d(x)\text{ for }f=f_0\circ(\phi^0)^{-1}\in L^2(X)}\\
&+\frac{\gamma_{W}}{2}\sum_{i=1}^N \|\mu_{(X^i,f^i)}-\phi^{0}.\mu_{(X_0,f_0)}\|_{W'}^2
\end{split}\label{eq:JP26}
\end{equation}
with $f_0\in L^2(X_0)$ and $\phi.\mu\in W'$ denoting for a fvarifold $\mu$ its diffeormorphic transport by a $C^1$ diffeormorphism $\phi$ defined by $(\phi.\mu)(\omega)\doteq \int |\rst{d_x\phi}{V}|\omega(\phi(x),d_x\phi(V),f)d\mu(x,V,f)$ for $\omega\in W$. The existence of a minimizer $(v^0_{*},f_{0,*})$ for $J_3$ gives immediately the existence of a minimizer $(X_*,f_*)$ for $J_1$ with $X_*\doteq \phi^{v^0_{*}}_1.X_0$, $f_*\doteq f_{0,*}\circ (\phi^{v^0_{*}}_1)^{-1}\in L^2(X_*)$.

The existence result then becomes: 
\begin{prop}
\label{prop:JP2}
  Assume that $W$ is continuously embedded in $C_0^2(E\times G_{d}(E) \times \R)$, that $X_0$ and $(X^i)_{1\leq i\leq N}$ are finite volume and bounded $d$-dimensional rectifiable subsets and that $f^i\in L^2(X^i)$ for $1\leq i\leq N$. Assume $\gamma_{V_0}>0$ and $\gamma_f/\gamma_W$ is large enough with $\gamma_f,\gamma_W\geq 0$. Then 
  \begin{itemize}
	  \item $J_1$ given by equation \eqref{eq:JP27} achieves its minimum on $\{ (X,f)\ |\ X\in\X,\ f\in L^2(X)\}$;	  
          \item any minimizer $(X_*,f_*)$ is such that $f_*\in L^\infty(X_*)$;
	  \item if $X_0$ is also a $C^p$ submanifold and $W\hookrightarrow C_0^m(E\times G_{d}(E) \times \R)$ with $m\geq \max \{p,2\}$, $f_*\in C^{p-1}(X_*)$.
  \end{itemize}
\end{prop}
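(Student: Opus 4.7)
The plan is to carry through the direct method used in Proposition~\ref{prop:JP1} while allowing the geometric template to vary along with the signal. First I would reformulate the problem via the functional $J_3$ from~\eqref{eq:JP26}, so that the unknowns sit in the fixed space $L^2([0,1],V_0)\times L^2(X_0)$, and pick a minimizing sequence $(v^0_n, f_{0,n})$. Setting $\phi_n \doteq \phi^{v^0_n}_1$ and $X_n \doteq \phi_n(X_0)$, the goal is to extract a subsequence along which $(\phi_n, f_{0,n})$ converges in a sense strong enough to pass to the limit in each of the three terms of $J_3$.

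The geometric extraction is standard. Since $\gamma_{V_0}>0$, the penalization bounds $\|v^0_n\|_{L^2([0,1],V_0)}$, so after extraction $v^0_n \rightharpoonup v^0_\infty$, and the classical LDDMM flow estimates give $\phi_n \to \phi_\infty \doteq \phi^{v^0_\infty}_1$ together with $d\phi_n \to d\phi_\infty$ uniformly on compact subsets of $E$. In particular, the Jacobian $|\rst{d_x\phi_n}{T_xX_0}|$ is uniformly bounded above and below on $X_0$, so the weighted $L^2$-penalty on $f_{0,n}$ translates into a genuine bound in $L^2(X_0)$; a further extraction yields $f_{0,n}\rightharpoonup f_{0,\infty}$ weakly there.

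The main obstacle will be the passage to the limit in the fvarifold data term, which, as emphasized in the introduction to Section~5, is \emph{not} lower semicontinuous under weak $L^2$-convergence of the signal because of the nonlinearities of the kernel $k_f$. To bypass this I would follow exactly the strategy of Lemma~\ref{lemma:JP1} and view each $\mu_n \doteq \phi_n.\mu_{(X_0,f_{0,n})}$ as a Borel measure on $E\times G_d(E)\times\R$, then establish tightness and extract a weak-measure limit $\nu_\infty$. Tightness in the spatial and Grassmann factors follows from $\{\phi_n(X_0)\}$ being contained in a common compact set (thanks to the bound on $v^0_n$ and the embedding $V_0\hookrightarrow C^1_0$), while tightness in the signal factor comes from the uniform $L^2(X_0)$ bound on $f_{0,n}$ together with Markov's inequality. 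Since $W\hookrightarrow C_0^2$, every $K_W\mu_{(X^i,f^i)}$ belongs to $C_0$ and the map $\mu\mapsto\|\mu-\mu_{(X^i,f^i)}\|_{W'}$ is continuous along weakly converging tight sequences of measures, so the data term passes to the limit. Weak lower semicontinuity of the $V_0$-norm is classical, and lower semicontinuity of the weighted $L^2$ signal term follows from a standard weak-strong argument using the uniform convergence of the Jacobians.

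The candidate produced above is an abstract measure $\nu_\infty$, and it remains to identify it with a true fshape $\mu_{(X_\infty,f_\infty)}$. Its geometric marginal coincides with $\Haus^d_{|X_\infty}$ because $X_n\to X_\infty$ in Hausdorff distance with $d\phi_n\to d\phi_\infty$ controlling the tangent structure, which gives the correct marginal on $E\times G_d(E)$ for $\nu_\infty$. The identification that $\nu_\infty=\mu_{(X_\infty,f_\infty)}$ for some $f_\infty\in L^\infty(X_\infty)$ is then precisely the rectifiability step of Proposition~\ref{prop:JP1}, which only uses that $\gamma_f/\gamma_W$ be large enough and applies verbatim once the geometric marginal has been identified. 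This simultaneously yields existence of a minimizer $(X_*,f_*)$ and the $L^\infty$ bound on $f_*$. Finally, the $C^{p-1}$ regularity statement follows by freezing $X=X_*$ in $J_1$, reducing to the fixed-support setting, and invoking the regularity part of Proposition~\ref{prop:JP1} on the optimal signal alone.
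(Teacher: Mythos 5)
Your overall architecture — reformulating through $J_3$ so that the unknowns live in $L^2([0,1],V_0)\times L^2(X_0)$, extracting a weakly convergent $v^0_n$ with uniform convergence of $\phi_n$ and $d\phi_n$, and then relaxing the signal variable to the space of measures — is the paper's direct method, and the geometric half of your argument (compactness of the ball in $L^2([0,1],V_0)$, weak continuity of the fvarifold term in $v^0$ via Corollary \ref{cor:weak_continuity_W'}, identification of the geometric marginal of the limit measure) is sound. The genuine gap is in the identification step. The ``rectifiability step'' of Proposition \ref{prop:JP1} is a \emph{first-order optimality condition}: equation \eqref{eq:JP25} is derived by differentiating $t\mapsto\eJ(\nu_t)$ at $t=0$ along the signal perturbations \eqref{eq:JP20} and using that this derivative vanishes because $\nu_*$ \emph{minimizes} the relaxed functional $\eJ$ over the whole measure class $\MeasX$. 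Your $\nu_\infty$ is only the weak limit of $\mu_{(X_n,f_n)}$ along a minimizing sequence of the \emph{unrelaxed} problem: what you get is $\eJ(\nu_\infty)\le\inf_{f}J_X(f)$, which does not make $\nu_\infty$ a minimizer of $\eJ$ over $\MeasX$ unless you already know that $\inf_{\MeasX}\eJ=\inf_f J_X(f)$ — and that exactness of the relaxation is precisely what the identification step is supposed to prove, so invoking it here is circular. Without minimality over $\MeasX$, nothing forbids $\nu_\infty$ from charging several signal values in the fiber above a single $(x,T_xX)$: an oscillating minimizing sequence of signals produces exactly such a diffuse limit, and this is the very failure of lower semicontinuity that motivates the relaxation. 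So the sentence ``applies verbatim once the geometric marginal has been identified'' is where the proof breaks.

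The repair is the one the paper carries out: do not relax the minimizing sequence, but minimize $\eJ$ \emph{directly} over the measure class (jointly with $v^0$ in a weakly compact ball, everything pulled back to $X_0$, as in Appendix \ref{appendix:proof_existence_atlas_fvar_tang}). The truncation homotopy $\rho_t$ of \eqref{eq:extra9} together with the derivative estimate \eqref{eq:JP13} shows that for $\gamma_f/\gamma_W$ large one may restrict to measures supported in a fixed compact set $C$; Prokhorov and lower semicontinuity then give a genuine minimizer $\nu_*$ of $\eJ$ over $\MeasXC$, hence over $\MeasX$, to which the Euler--Lagrange argument legitimately applies and yields $\nu_*=\mu_{(X_*,f_*)}$ with $f_*\in L^\infty(X_*)$. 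Since $\eJ(\nu_*)\le\eJ(\mu_{(X,f)})=J_X(f)$ for every admissible fshape, this $(X_*,f_*)$ is automatically a minimizer of the original problem, and the measure $\nu_\infty$ you constructed is never needed. Your final points — the $L^\infty$ bound coming out of the Euler--Lagrange equation and the $C^{p-1}$ regularity obtained by freezing $X=X_*$ and applying the Implicit Function Theorem as in Proposition \ref{prop:JP1} — go through unchanged once this is corrected.
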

The proof of Proposition \ref{prop:JP1} can in fact be easily adapted to this new situation. We can consider the formulation of equation \eqref{eq:JP26} with $J_{3}$ a functional on the vector field $v^{0}$ and the function $f \in L^{2}(X_0)$. With respect to $v^{0}$, thanks to the penalization in \eqref{eq:JP26}, we can restrict the search of a minimum on a closed ball $B_0$ of given radius $b$ in $L^2([0,1],V_0)$, which guarantees at the same time that the Jacobians $|\rst{d_{x}\phi^0}{T_{x}X}|$ are uniformly lower bounded. This closed ball is also compact for the weak topology in $L^2([0,1],V_0)$. In addition, it follows from Corollary \ref{cor:weak_continuity_W'} that $v^{0}\mapsto \sum_{i=1}^N \|\mu_{(X^i,f^i)}-\phi^{0}_*\mu_{(X_0,f_0)}\|_{W'}^2$ is weakly continuous on $L^2([0,1],V_0)$ and it is also classical that $v^{0} \mapsto \|v^{0}\|_{L^{2}([0,1],V_0)}$ is lower semicontinuous for the weak convergence topology. Therefore, for all fixed $f_{0} \in L^{2}(X_{0})$, $v^{0}\mapsto J_{3}(v^{0},f_{0})$ is weakly lower semicontinuous on $L^2([0,1],V_0)$. It results that we obtain existence of a minimizing vector field $v^{0}$ and, reasoning as in the previous subsection and Appendix \ref{appendix:proof_propJP1}, one deduces easily the claim of Proposition \ref{prop:JP2}.  

\subsection{Existence of full fshape atlases (tangential setting)}
We now generalize the previous results to the existence of complete atlases of fshapes' datasets. In addition to the template $(X,f)$, one wants to simultaneously estimate  transformations from the template to each subject. In the tangential model of Section \ref{sec:atlas_tangential_setting}, these consist in deformations $(\phi^{i})_{i=1,..,N}$ obtained as flows of time varying vector fields $(v^{i})$ and residual signals $(\zeta^{i})$ that are $L^{2}$ functions on $X$. In that situation, following Section \ref{sec:atlas_tangential_setting}, the optimization functional for atlas estimation writes:
\begin{equation}
\label{eq:A1}
  \begin{split}
    &J(X,f,(\zeta^{i}),(v^i)) \doteq \frac{\gamma_{V_0}}{2}d_\X(X_0,X)^2 +\frac{\gamma_f}{2}\int_X|f(x)|^2d\Haus^d(x)\\
&+\frac{1}{2}\sum_{i=1}^N\left(\gamma_V\|v^i\|^2_{L^{2}([0,1],V)}+\gamma_\zeta\int_X|\zeta^i(x)|^2d\Haus^d(x)+\gamma_W\|\mu_{(X^i,f^i)}-\mu_{(\phi^{v^i}_{1}(X),(f+\zeta^i)\circ(\phi^{v^i}_{1})^{-1})}\|^2_{W'}\right)
\\
  \end{split}
\end{equation}

The main existence result is the following:
\begin{thm}
 \label{theo:existence_atlas_fvar_tang}
  Assume that $W$ is continuously embedded in $C_0^2(E\times G_d(E)\times \R)$, that $X_0$ and $(X^i)_{1\leq i\leq N}$ are finite volume bounded $d$-dimensional rectifiable subsets and that $f^i\in L^2(X^i)$ for $1\leq i\leq N$. Assume $\gamma_{V_0},\gamma_V>0$ and $\gamma_f/\gamma_W$ and $\gamma_\zeta/\gamma_W$ are large enough with $\gamma_f,\gamma_W,\gamma_\zeta\geq 0$. Then
  \begin{itemize}
	  \item $J$ given by equation \eqref{eq:A1} achieves its minimum on $\{ (X,f,\zeta^i, (v^i))\ |\ X\in\X,\ f\in L^2(X),\ \zeta=(\zeta^i)\in L^2(X)^N,\ (v^i)\in L^{2}([0,1],V)^N\}$;
	  \item any minimizer $(X_*,f_*,(\zeta^i_*),(v_{*}^i))$ is such that $f_*$ and $\zeta^i_*$ for $1\leq i\leq N$ are in $L^\infty(X_*)$;
	  \item if $X_0$ is also a $C^p$ submanifold and $W\hookrightarrow C_0^m(E\times G_{d}(E) \times \R)$ with $m\geq \max \{p,2\}$, $f_*$ and the $\zeta^i_*$'s are in $C^{p-1}(X_*)$.
  \end{itemize}
\end{thm}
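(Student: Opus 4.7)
The plan is to combine the direct method used for Proposition \ref{prop:JP1} with the hypertemplate reparametrization of Proposition \ref{prop:JP2}, applied jointly over the $N+1$ deformations $v^0,v^1,\ldots,v^N$ and the $N+1$ pulled-back signals on $X_0$. First, I would write $X = \phi^{v^0}_1(X_0)$ with $v^0\in L^2([0,1],V_0)$ and set $f_0 \doteq f\circ\phi^{v^0}_1$, $\zeta^i_0 \doteq \zeta^i\circ\phi^{v^0}_1$ in $L^2(X_0)$, turning $J$ into a functional on $(v^0, f_0, (\zeta^i_0), (v^i))$ in the fixed Hilbert space $L^2([0,1],V_0)\times L^2(X_0)\times L^2(X_0)^N\times L^2([0,1],V)^N$. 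The $L^2(X)$ integrals become Jacobian-weighted $L^2(X_0)$ integrals, and the data terms rewrite through the pushforward action of \eqref{eq:push_pull_var}.

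To extract a minimizing sequence I would use that the terms $\gamma_{V_0}\|v^0\|^2$ and $\gamma_V\|v^i\|^2$ confine $(v^0_n,(v^i_n))$ to a weakly compact ball, whose image under the flow map enjoys uniform $C^1$ convergence on compact sets and uniform two-sided Jacobian bounds along $X_0$ (the classical ODE estimates recalled in Corollary \ref{cor:weak_continuity_W'}). Provided $\gamma_f/\gamma_W$ and $\gamma_\zeta/\gamma_W$ are large enough, the argument in the proof of Proposition \ref{prop:JP1} (Appendix \ref{appendix:proof_propJP1}) bounds $\|f_{0,n}\|_{L^2(X_0)}$ and $\|\zeta^i_{0,n}\|_{L^2(X_0)}$ by the value of $J$, so after further extraction one has weak convergence $v^0_n\rightharpoonup v^0_*$, $v^i_n\rightharpoonup v^i_*$, $f_{0,n}\rightharpoonup f_{0,*}$ and $\zeta^i_{0,n}\rightharpoonup \zeta^i_{0,*}$.

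The main obstacle is the passage to the limit in the data attachment term, which depends non-linearly on the signals through the kernel $k_f$ and therefore fails to be lower semicontinuous under weak $L^2$ convergence of the signals alone. I would circumvent this exactly as in Proposition \ref{prop:JP1} and Lemma \ref{lemma:JP1}: for each $i$, lift the $n$-th attachment $\mu^i_n\doteq \mu_{(\phi^{v^i_n}_1(X_n),(f_n+\zeta^i_n)\circ(\phi^{v^i_n}_1)^{-1})}$ to a Borel measure on $E\times G_d(E)\times\R$, use the uniform $L^2$ signal bounds together with the uniform Jacobian control to prove tightness, and extract weak-$\ast$ limits $\mu^i_\infty$. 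Since $W$ is continuously embedded in $C_0(E\times G_d(E)\times\R)$, the functional $\nu\mapsto \|\mu_{(X^i,f^i)}-\nu\|_{W'}^2$ is continuous along such tight weakly convergent sequences of measures of uniformly bounded total variation, so the data term passes to the limit. The critical identification step, carried out along the same lines as in Appendix \ref{appendix:proof_propJP1}, shows that $\mu^i_\infty$ has the prescribed geometric marginal on $\phi^{v^i_*}_1(X_*)$ with $X_*=\phi^{v^0_*}_1(X_0)$, and that its signal fiber is pinned down by the Euler--Lagrange characterization forced by the $L^2$ regularization, yielding $\mu^i_\infty = \mu_{(\phi^{v^i_*}_1(X_*),(f_*+\zeta^i_*)\circ(\phi^{v^i_*}_1)^{-1})}$ with $f_* = f_{0,*}\circ(\phi^{v^0_*}_1)^{-1}$ and $\zeta^i_* = \zeta^i_{0,*}\circ(\phi^{v^0_*}_1)^{-1}$.

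Finally, the purely quadratic penalizations are weakly lower semicontinuous by convexity, and the Jacobian-weighted $L^2$ terms are l.s.c.\ by combining uniform Jacobian convergence with weak $L^2$ convergence of the pulled-back signals, so $(X_*, f_*, (\zeta^i_*), (v^i_*))$ is a minimizer. The regularity claims then follow as in Proposition \ref{prop:JP1}: the first-order optimality condition for $f_*$ and each $\zeta^i_*$ reads as an $L^2$ term plus a $W$-kernel convolution on $X_*$, so the assumed $C^m$ smoothness of the kernel together with the $C^p$ smoothness of $X_*$, inherited from $X_0$ through $\phi^{v^0_*}_1\in\Diff(E)$, upgrade the $L^2$ solutions first to $L^\infty$ and then to $C^{p-1}$.
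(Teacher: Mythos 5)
Your overall strategy --- hypertemplate reparametrization for $X$, relaxation of the signal variables to measures, Euler--Lagrange plus large $\gamma_f/\gamma_W$, $\gamma_\zeta/\gamma_W$ to show the minimizing measure is a graph --- is the paper's strategy, and you correctly identify the central difficulty (non-l.s.c.\ of the data term under weak $L^2$ convergence of the signals). But there is a genuine gap in the way you execute the relaxation. You extract weak $L^2$ limits $f_{0,*},\zeta^i_{0,*}$ of the pulled-back signals \emph{and} weak-$\ast$ limits $\mu^i_\infty$ of the lifted fvarifolds, and then assert that $\mu^i_\infty=\mu_{(\phi^{v^i_*}_1(X_*),(f_*+\zeta^i_*)\circ(\phi^{v^i_*}_1)^{-1})}$ with $f_*,\zeta^i_*$ the weak $L^2$ limits. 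That identification is false in general: if $f_{0,n}$ oscillates, its weak $L^2$ limit may be $0$ while $\mu_{(X,f_{0,n})}$ converges weakly-$\ast$ to a Young-measure-type limit that spreads mass over several signal values above each point of $X$ and is not a graph measure at all, let alone $\mu_{(X,0)}$. This is exactly the phenomenon that defeats the naive direct method here, so you cannot run the function-level and measure-level compactness arguments in parallel and declare their limits equal. Relatedly, for the Euler--Lagrange characterization to ``pin down the fiber'' you need the limit measure to be a \emph{minimizer} of a well-posed extended functional in which the $L^2$ penalties are also written as moments of the measure ($\nu(|f|^2)$, $\nu(|\zeta^i|^2)$), not as norms of the weak $L^2$ limits --- otherwise the stationarity condition you differentiate is not the one your limit object satisfies.

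The paper's proof (Appendix C) never takes weak $L^2$ limits of the signals. It poses the whole reduced problem on the space $\MeasX$ of \emph{joint} measures on $E\times G_d(E)\times\R\times\R^N$ with prescribed geometric marginal --- the joint fiber $\R\times\R^N$ is essential, since each data term involves $f+\zeta^i$ and the penalties on $f$ and on the $\zeta^i$ are separate, so tracking the $N$ pushforward measures $\mu^i_n$ individually, as you propose, loses the coupling needed to write the system \eqref{eq:A5}. It then uses the truncation homotopy $\rho_t$ (this is where $\gamma_f/\gamma_W$ and $\gamma_\zeta/\gamma_W$ large first enter, and where the $L^\infty$ bound of the second bullet originates) to restrict to compactly supported measures, gets a measure minimizer $(\nu_*,(v^i_*))$ by Prokhorov plus lower semicontinuity, and only \emph{afterwards} shows via the coupled Euler--Lagrange equations and the strict monotonicity of $f\mapsto\gamma_f f+\gamma_W A$ that $\nu_*$ is supported on the graph of functions $\tilde f(x,V),\tilde\zeta^i(x,V)$; the template signal and residuals are \emph{defined} by this implicit-function construction, not recovered as weak limits. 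Your final regularity paragraph is consistent with that mechanism, but the existence step needs to be reorganized along these lines to be correct.
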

The proof leans essentially on the same arguments as detailed in the previous subsections: we refer the reader to Appendix \ref{appendix:proof_existence_atlas_fvar_tang}. This result is of fundamental importance for the rest of the paper since it ensures that the atlas estimation problems that we shall study numerically in the next sections do have at least a solution, and we see that this holds with only $L^{2}$ regularity assumptions on the signals.

\subsection{Existence in the metamorphosis framework}
Interestingly, the previous existence of solutions in the simplified tangential setting can be also used to show existence of solutions to the corresponding variational problem in the metamorphosis framework given by \eqref{eq:FB38}. In that model, the subjects are obtained approximately as metamorphoses of the template $(X,f)$, which is itself a metamorphosis of an hypertemplate $(X_{0},f_{0})$ where $X_{0}$ is a bounded finite-volume rectifiable subset of $E$ and $f_{0} \in L^{2}(X_{0})$. The important lemma that bridges both approaches is the following:

\begin{lemma}
\label{lemma_meta_atlas}
  For $v$ fixed and $\zeta_1\in L^2(X)$ fixed, the infimum over $h\in L^2([0,1],L^2(X))$ of 
  $$ \frac{\gamma_f}{2}\int_0^1\int_X|h_{t}(x)|^2|\rst{d_x\phi^{v}_{t}}{T_{x}X}|d\Haus^d(x) $$
under the constraint that $\int_0^1h_tdt=\zeta_1$ is reached on a unique point $h^*\in L^2([0,1],L^2(X))$ given by
\begin{equation}
  \label{eq:F1}
  h^*_t(x)\doteq C(x) \frac{\zeta_1(x)}{|\rst{d_x\phi^{v}_{t}}{T_{x}X}|}
\end{equation}
where $C(x)\doteq(\int_0^1\frac{1}{|\rst{d_x\phi^{v}_{s}}{T_{x}X}|}ds)^{-1}$.
\end{lemma}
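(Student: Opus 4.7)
The plan is to reduce the problem to a pointwise (in $x$) constrained minimization and then apply the Cauchy--Schwarz inequality. Writing $J_t(x) \doteq |\rst{d_x\phi^v_t}{T_xX}|$ for convenience, the functional and the constraint decouple in $x$: for each fixed $x$ we want to minimize $\int_0^1 |h_t(x)|^2 J_t(x)\,dt$ over $t$-measurable functions $h_{\cdot}(x)$ satisfying $\int_0^1 h_t(x)\,dt = \zeta_1(x)$. Under the standing hypothesis on $v$ (and hence on $\phi^v$), the Jacobian $J_t(x)$ is bounded above and below by strictly positive constants uniformly in $(t,x)$, which legitimates the pointwise manipulations and all the integrability statements below.

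The first step would be to apply Cauchy--Schwarz with the splitting $h_t(x) = (h_t(x)J_t(x)^{1/2})\cdot J_t(x)^{-1/2}$:
\begin{equation*}
|\zeta_1(x)|^2 = \left| \int_0^1 h_t(x)\,dt\right|^2 \leq \left(\int_0^1 |h_t(x)|^2 J_t(x)\,dt\right)\left(\int_0^1 J_t(x)^{-1}\,dt\right),
\end{equation*}
which rearranges to the lower bound $\int_0^1 |h_t(x)|^2 J_t(x)\,dt \geq C(x)|\zeta_1(x)|^2$. Integrating in $x$ and multiplying by $\gamma_f/2$ gives a universal lower bound on the functional. The equality case in Cauchy--Schwarz forces $h_t(x)J_t(x)^{1/2}$ to be proportional (as a function of $t$) to $J_t(x)^{-1/2}$, i.e.\ $h_t(x) = \lambda(x)/J_t(x)$; imposing the constraint $\int_0^1 h_t(x)\,dt = \zeta_1(x)$ fixes $\lambda(x) = C(x)\zeta_1(x)$, which is exactly the claimed $h^*$.

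Next I would verify admissibility: since $J_t^{-1}$ is bounded on $[0,1]\times X$, $C$ is bounded on $X$, so $|h^*_t(x)|^2 \leq \text{Cte}\,|\zeta_1(x)|^2$ uniformly in $t$, whence $h^* \in L^2([0,1],L^2(X))$ and the Fubini-type exchange of integrals used above is justified. Substituting $h^*$ one checks by direct computation that equality is achieved in the pointwise bound, so the infimum equals $\frac{\gamma_f}{2}\int_X C(x)|\zeta_1(x)|^2\,d\Haus^d(x)$ and is attained at $h^*$.

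For uniqueness, the simplest route is to observe that the functional $h\mapsto \int_0^1\int_X |h_t(x)|^2 J_t(x)\,d\Haus^d(x)\,dt$ is the squared norm of $h$ for the equivalent Hilbert structure on $L^2([0,1],L^2(X))$ induced by the weight $J_t(x)$, and the constraint $\int_0^1 h_t\,dt = \zeta_1$ defines a closed affine subspace. Minimization of a strictly convex quadratic over a closed affine subspace of a Hilbert space has a unique solution, which must therefore coincide with the candidate $h^*$ constructed above. The main (very mild) point to watch is the uniform two-sided control on $J_t(x)$, which ensures both that the weighted norm is equivalent to the standard one and that $C(x)$ is well defined and bounded; everything else is a direct application of Cauchy--Schwarz together with its equality case.
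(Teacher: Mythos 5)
Your proof is correct and follows essentially the same route as the paper: both reduce to a pointwise-in-$x$ quadratic minimization under a linear constraint, which the paper solves by the change of variable $g_t=h_t\,|\rst{d_x\phi^v_t}{T_xX}|^{1/2}$ and the standard Hilbert-space projection argument, while you obtain the identical answer by applying weighted Cauchy--Schwarz together with its equality case. The two presentations are the same computation in different clothing, and your admissibility and uniqueness checks match the paper's reliance on the uniform two-sided bound on the Jacobian.
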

\begin{proof} The proof could be deduced from general optimization results but for sake of completeness we give here a short proof. Let us first notice that over $(x,t)\in X\times [0,1]$ we have   $0<\inf |\rst{d_x\phi^{v}_{t}}{T_{x}X}|\leq \sup |\rst{d_x\phi^{v}_{t}}{T_{x}X}|<+\infty$.

	Now, if we denote  $\alpha_t(x)\doteq |\rst{d_x\phi^{v}_{t}}{T_{x}X}|^{-1/2}$ and consider the change of variable $g_t(x)=h_t(x)/\alpha_t(x)$, the problem is equivalent (after the use of Fubini-Tonelli Theorem) to the minimization on $g\in L^2([0,1],L^2(X))$ of 
$$\frac{\gamma_f}{2}\int_X\left(\int_0^1|g_{t}(x)|^2dt\right)d\Haus^d(x)$$
under the constraint $\int_0^1\alpha_s(x)g_s(x)ds=\zeta_1(x)$ for a.e. $x\in  X$.

However, this is a separable problem and for any fixed  $x\in X$ we have to consider on $H\doteq L^2([0,1])$ the very simple quadratic minimization problem: $\inf |u|^2_H$ under the constraint $\langle \alpha_.(x),u\rangle_H=\zeta_1(x)$ that has the unique solution $u=\lambda(x)\alpha_.(x)$ with $\lambda(x)= \zeta_1(x)/|\alpha_.(x)|^2_H$.

Hence, denoting $h_t^*(x)=\zeta_1(x)\frac{|\alpha_t(x)|^2}{|\alpha_.(x)|_{H}^2}$, we get the result.
\end{proof}

Now, introducing, as in \ref{ssec:metamorphosis_bundles}, the Riemannian energies 
$$E_{X_{0}}(v^{0},h^{0}) \doteq \frac{\gamma_{V_{0}}}{2} \int_{0}^{1} |v_{t}|_{V_{0}}^{2} dt + \frac{\gamma_{f_{0}}}{2} \int_0^1\int_{X}|h^{0}_t|^2(x)|\rst{d_x\phi^{v^{0}}_t}{T_{x}X}|d\Haus^d(x)$$ 
and  
$$E_{X}(v,h) \doteq \frac{\gamma_{V}}{2} \int_{0}^{1} |v_{t}|_{V}^{2} dt + \frac{\gamma_{f}}{2} \int_0^1\int_{X}|h_t|^2(x)|\rst{d_x\phi^{v}_t}{T_{x}X}|d\Haus^d(x)$$ 
the atlas estimation functional becomes:
\begin{equation}
\label{eq:M1}
  \begin{split}
    J((v^{0},h^{0}),(v^{i},h^{i})) & \doteq E_{X_0}(v^{0},h^{0})\\
      & + \sum_{i=1}^N\left(E_{X}(v^{i},h^{i}) +\frac{\gamma_W}{2}\|\mu_{(X^i,f^i)}-\mu_{(\phi^{v^i}_{1}(X),(f+\zeta^{h^{i}}_{1})\circ(\phi^{v^i}_{1})^{-1})} \|^2_{W'}\right)
  \end{split}
\end{equation}
where $(X,f)=(\phi_{1}^{v^0}(X_0),(f_0+\zeta_{1}^{h^{0}})\circ(\phi_{1}^{v^0})^{-1})$.

\begin{thm}
 \label{theo:existence_atlas_fvar_meta}
 Let $W$ be continuously embedded in $C_0^2(E\times G_d(E)\times \R)$, $X_0$ and $(X^i)_{1\leq i\leq N}$ finite volume bounded $d$-dimensional rectifiable subsets and $f^i\in L^2(X^i)$ for $1\leq i\leq N$. Assume that $\gamma_{V_0},\gamma_V>0$ and that $\gamma_f/\gamma_W$ and $\gamma_\zeta/\gamma_W$ are large enough with $\gamma_f,\gamma_W,\gamma_\zeta\geq 0$. Under these assumptions,
 \begin{itemize}
  \item $J$ defined by equation \eqref{eq:M1} achieves its minimum on $v^{0} \in L^{2}([0,1],V_{0})$, $h^{0} \in L^{2}([0,1],L^{2}(X_{0}))$, $(v^i)\in L^{2}([0,1],V)^N$ and $(h^i)\in L^{2}([0,1],L^{2}(X))^N$;
  \item any minimizer $(X_*,f_*,(\zeta^i_*),(v_{*}^i))$ with $f_*=(f_0+\zeta_{1}^{h^{0}_*})\circ (\phi_{1}^{v^0_*})^{-1}$ and $\zeta^i_* =\zeta_{1}^{h^{i}_*} \circ (\phi_{1}^{v^{i}_*})^{-1}$ is such that $f_*$ and $\zeta^i_*$ for $1\leq i\leq N$ are in $L^2(X_*)$;
  \item if $X_0$ is also a $C^p$ submanifold, $f_{0}$ a $C^{p-1}$ function on $X$ and $W\hookrightarrow C_0^m(E\times G_{d}(E) \times \R)$ with $m\geq \max \{p,2\}$, then $f_*$ and the $\zeta^i_*$'s are in $C^{p-1}(X_*)$.  
 \end{itemize}
\end{thm}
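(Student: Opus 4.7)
The plan is to reduce the metamorphosis atlas problem to the tangential atlas existence result of Theorem \ref{theo:existence_atlas_fvar_tang}, using Lemma \ref{lemma_meta_atlas} as the bridge. The key observation is that the data attachment terms $\|\mu_{(X^i,f^i)}-\mu_{(\phi^{v^i}_1(X),(f+\zeta^{h^i}_1)\circ(\phi^{v^i}_1)^{-1})}\|^2_{W'}$ depend on the residual paths $h^i$ only through their endpoints $\zeta^i\doteq\zeta^{h^i}_1$, and similarly the hypertemplate-to-template term depends on $h^0$ only through $\zeta^0\doteq\zeta^{h^0}_1$. Consequently, for fixed vector fields $v^0,v^i$ and fixed endpoints $\zeta^0,\zeta^i$, Lemma \ref{lemma_meta_atlas} yields the explicit minimizing residual paths and shows that the infimum over the $h$-variables of the corresponding kinetic energies equals $\frac{\gamma_{f_0}}{2}\int_{X_0} C_{v^0}|\zeta^0|^2 d\Haus^d + \sum_i \frac{\gamma_f}{2}\int_X C_{v^i}|\zeta^i|^2 d\Haus^d$, where $C_v(x)\doteq(\int_0^1|\rst{d_x\phi^v_t}{T_xX}|^{-1} dt)^{-1}$.

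First I would introduce the reduced functional $\tilde J(v^0,\zeta^0,(v^i,\zeta^i))$ obtained by plugging these optimal paths into $J$. Minimizing $\tilde J$ is equivalent to minimizing $J$: any minimizer of $\tilde J$ is lifted to a minimizer of $J$ via the lemma's formula $h^*_t(x)=C_{v^*}(x)\zeta^*(x)/|\rst{d_x\phi^{v^*}_t}{T_xX}|$, and conversely. The reduced functional has exactly the same structure as the tangential atlas functional of equation \eqref{eq:A1}, the only difference being that the plain $L^2$-penalties $|\zeta^0|^2_{X_0}$ and $|\zeta^i|^2_X$ are replaced by their $C_v$-weighted analogues.

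Second, on any minimizing sequence the norms $\|v^0_n\|_{L^2([0,1],V_0)}$ and $\|v^i_n\|_{L^2([0,1],V)}$ are bounded thanks to the geometric penalties, so the classical LDDMM flow estimates (using $V_0,V\hookrightarrow C^2_0(E,E)$) give uniform upper and lower bounds on the Jacobians $|\rst{d_x\phi^{v_n}_t}{T_xX}|$ over $(x,t)\in X_0\times[0,1]$ (resp.\ $X\times[0,1]$). In particular the weights $C_{v_n}$ are bounded away from $0$ and $+\infty$ uniformly in $n$, so the weighted $L^2$-terms are uniformly equivalent to the standard $L^2$-norms. This reduces the existence analysis for $\tilde J$ to the one performed for Theorem \ref{theo:existence_atlas_fvar_tang}: restriction of the $v^0$ and $v^i$ variables to weakly compact closed balls, weak lower semicontinuity of the $V$-norms, weak continuity of the fvarifold map in $v$ (Corollary \ref{cor:weak_continuity_W'}), and the direct-method argument in the space $\MeasX$ of Appendix \ref{appendix:proof_propJP1} to handle the non-convex dependence of the data term on $\zeta^i$.

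The main obstacle will be the joint weak lower semicontinuity of the weighted quadratic terms when both $v_n$ and $\zeta_n$ vary simultaneously. To handle this I would exploit that if $v_n \rightharpoonup v_\infty$ in $L^2([0,1],V)$, then $d_x\phi^{v_n}_t\to d_x\phi^{v_\infty}_t$ uniformly in $(t,x)$ on compacts (hence $C_{v_n}\to C_{v_\infty}$ uniformly on $X$), and combine this with weak lower semicontinuity of $\zeta\mapsto \int C_{v_\infty}|\zeta|^2 d\Haus^d$ together with Fatou-type estimates to get joint lower semicontinuity; the passage from varifolds back to fshapes will follow verbatim the argument of Proposition \ref{prop:JP1}. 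Once existence of a minimizer $(v^0_*,\zeta^0_*,(v^i_*,\zeta^i_*))$ of $\tilde J$ is established, the $L^\infty$ bound and $C^{p-1}$-regularity of $f_*$ and the $\zeta^i_*$'s follow from Proposition \ref{prop:JP1} under the assumption that $\gamma_f/\gamma_W$ and $\gamma_\zeta/\gamma_W$ are large enough, and the lemma produces the optimal paths $h^0_*, h^i_*$ that realize the minimum of the original metamorphosis functional $J$.
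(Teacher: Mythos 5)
Your proposal is correct and follows essentially the same route as the paper's own proof: reduce via Lemma \ref{lemma_meta_atlas} to a tangential-type functional with $C_v$-weighted $L^2$ penalties, use the boundedness of the vector fields on minimizing sequences to get uniform two-sided bounds on the weights, and then rerun the direct method in the measure space $\MeasX$ exactly as in Theorem \ref{theo:existence_atlas_fvar_tang} and Proposition \ref{prop:JP1}. Your explicit treatment of the joint lower semicontinuity of the weighted quadratic terms (via uniform convergence of the Jacobians when $v_n\rightharpoonup v_\infty$) is a point the paper leaves implicit, but it is the same argument, not a different one.
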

The proof can be found in appendix \ref{appendix:proof_existence_atlas_fvar_meta}.


\section{A discrete framework for fshape}
\label{part:discreteScheme}

The previous sections mainly dealt with the setting of a well posed theoretical framework to work with fshapes. Here we would like to tackle the problem of the actual processing of fshapes. We believe that the description of an explicit numerical scheme is of the utmost importance for the fshape setting to be of practical utility. Our main motivations come from medical imaging where acquisition is done in digital forms. It means that data are discrete and in very high dimensions (tens of thousands points). Our goal is to demonstrate that the fshape framework may be used to handle real data and not just low dimensional examples.


\subsection{Fvarifold norm of polyhedral meshes} 
\label{part.discreteNotation}

We assume hereafter that data at hand are finite functional polyhedral meshes composed by $\nT$ cells of dimension $\dT$ immersed in $E=\R^\dP$. Most common examples are piecewise linear planar fcurves corresponding to the case $\dP=2$ and $\dT=1$, piecewise linear curves in space corresponding to $\dP=3$ and $\dT=1$, and piecewise triangular surfaces in the space corresponding to $\dP=3$ and $\dT=2$. A finite polyhedral fshape $(X,f)$ is fully described by three matrices: a first $\nP\times\dP$ matrix $\x$ contains the coordinates of the $P$ vertices $x_k$ in $E$ (\ie $\x=(x_k)_{1\leq k\leq P}$), a second $\nP\times1$ column vector $\f$ contains the $P$ values $f_{k} \doteq f(x_k)\in\R$ of the  signal (\ie $\f=(f_k)_{1\leq k\leq P}$) and a third $T\times(\dT+1)$ matrix contains the list of edges for each cell so that each row contains $(\dT+1)$ integers corresponding to the indices of the vertices of the cell. 

In the sequel, it will be convenient to introduce the corresponding geometrical diffeomorphic action in the discrete setting: for a diffeomorphism $\phi\in G$ and a discretized fshape $(\x,\f)=(x_k,f_k)_{1\leq k\leq P}$ we denote
\begin{equation}
  \label{eq:dicret_action}
  \phi.(\x,\f)\doteq (\phi.\x,\f) \text{ where }\phi.\x\doteq (\phi(x_k))_{1\leq k\leq P}
\end{equation}
Note that the discrete action is more straightforward than its continuous version $\phi.(X,f)=(\phi.X,f\circ\phi^{-1})$ since along a flowing particle $x_k$ we have $f\circ\phi^{-1}(\phi(x_k))=f(x_k)=f_k$ (Lagrangian point of view). Note also that the connectivity matrix (\ie list of edges) remains unchanged through the deformation process.

There is a rather natural way to approximately represent polyhedral meshes in the fvarifold space. The method is very similar to the one described in \cite{Charon2} in the case of fcurrents space. Using formula \eqref{eq:fshape_dirac}, each cell is coded by a Dirac $\delta_{(\hat x,V,\hat f)}$ so that a discrete polyhedral fshape is viewed as a distribution of Dirac spread in the space $\R^\dP$. Therefore, the measure $\mu_{(X,f)}$ of equation \eqref{eq:fshape_fvar} which is associated to $(X,f)$ is approximated by the following finite sum of Dirac
	\begin{equation}\label{eq.discrete}
		\mu_{(X,f)} \approx \mu_{(\x,\f)} \doteq \sum_{\ell=1}^{\nT} r_{\ell} \delta_{(\hat x_{\ell},V_{\ell},\hat f_{\ell})}
	\end{equation}
	where  $\hat x_{\ell}\in\R^{\dP}$ is the center of the $\ell$-th polyhedral cell, $V_{\ell}\in G_d(\R^{\dP})$ gives the direction of the cell, $\hat f_{\ell} \in \R^{}$ is the average value of the signal on this cell and finally $r_\ell\in\R$ is equal to the $d$-volume of the cell. Note that for the case of surfaces (resp. curves), $V_\ell$ will be simply a unit normal vector (resp. unit tangent) and the orientation will be removed directly ``by the kernel'': the formula of the kernel $k_t$ used to compute the scalar product in the fvarifold space will be invariant with respect to changes in the orientation as in equation \eqref{eq.GaussKern}.

	As an illustration, let us precisely describe how to get the representation of the $\ell$-th triangle $T_\ell$ belonging to a polyhedral functional surface $(X,f)$. If we assume that the vertices of $T_\ell$ (read in the connectivity matrix) are $(x_{k_1},x_{k_2},x_{k_3})\in(\R^3)^3$ for some $k_1,k_2,k_3 = 1,\cdots,\nP$, we have  
\begin{equation}\label{eq.fvarRepresentation}
	\begin{cases}	\hat x_\ell = \nicefrac{(x_{k_1}+x_{k_2}+x_{k_3})}{3},\\ V_{\ell} = \nicefrac{(x_{k_2}-x_{k_1}) \wedge (x_{k_3}-x_{k_1})}{ \norm{(x_{k_2}-x_{k_1}) \wedge (x_{k_3}-x_{k_1})}},\\ \hat f_{\ell} = \nicefrac{(f_{k_1}+f_{k_2}+f_{k_3})}{3},\\ r_{\ell} = \norm{(x_{k_2}-x_{k_1}) \wedge (x_{k_3}-x_{k_1})}.\end{cases}
\end{equation}
where $\|.\|$ is the standard Euclidean norm in $\R^3$ and $\wedge$ correspond to the cross product. To compute the fvarifold norm of a functional surface we use in our numerical experiments Gaussian kernels satisfying assumptions of Proposition \ref{prop:kernel_fvar}. Let us define:   
\begin{equation}\label{eq.GaussKern}
	\begin{cases}
		k_e(\hat x_1,\hat x_2)  =  \exp\big( -\nicefrac{\|\hat x_{1} - \hat x_{2}\|^2}{\sigma^2_e} \big),\  \\
		k_t(V_1,V_2)  =  \exp\big( -\frac{2}{\sigma^2_t}( 1 - \prs{V_1,V_2}^2 )\big),\   \\
		k_f(\hat f_1,\hat f_2)  =  \exp\big( -\nicefrac{|\hat f_{1} - \hat f_{2}|^2}{\sigma^2_f} \big).  
	\end{cases}
\end{equation}
where $\sigma_e,\sigma_t,\sigma_f >0$ are scale parameters fixed by the practitioner. Thence the scalar product of two fvarifolds $\mu_{(X,f)}$ and $\mu_{(Y,g)}$ associated to two polyhedral triangular functional surfaces $(X,f)$ and $(Y,g)$, and approximated respectively by $\sum_{\ell=1}^{\nT_{\x}}r_{\ell}\delta_{\hat x_{\ell},V_{\ell},\hat f_{\ell}}$ and $\sum_{\ell'=1}^{\nT_{\y}} s_{\ell'}\delta_{\hat y_{\ell'},W_{\ell'},\hat g_{\ell'}}$,  may be written:
\begin{equation}
	\label{eq.discreteprs}
	\prs{\mu_{(X,f)},\mu_{(Y,g)}}_{W'} \approx \prs{\mu_{(\x,\f)},\mu_{(\y,\g)}}_{W'} =\sum_{\ell=1}^{\nT_{\x}}\sum_{\ell'=1}^{\nT_{\y}} r_{\ell}s_{\ell'}k_e(\hat x_\ell,\hat y_{\ell'})k_t(V_\ell,W_{\ell'})k_f(\hat f_\ell,\hat g_{\ell'}).
\end{equation}
We discuss in Section \ref{part:bd} some numerical aspects of the fvarifolds.

\subsection{Discrete approximations of continuous fshapes. }

A continuous fshape $(X_c,f_c)$ can be approximated by a finite polyhedral fshape in the fvarifold space as follows. First, extract $\nP$ points and their corresponding signal from $(X_c,f_c)$. Then, compute a mesh with the extracted points to define a polyhedral fshape $(X,f)$. Finally, use \eqref{eq.discrete} to approximate $\mu_{(X,f)}$ by a finite sum $\mu^{}_{(\x,\f)}$ of Diracs. 

Although this method of approximation seems reasonable in many practical cases, we do not provide  explicit conditions to ensure the convergence of $\mu^{}_{(\x,\f)}$ toward $\mu_{(X_c,f_c)}$ when the number $\nP$ of extracted vertices tends to infinity. 
To the best of our knowledge, this is still an open problem and a famous illustration of pathological cases is the Schwarz polyhedron. Nevertheless, there exists various theoretical results concerning the polyhedral approximation of continuous surfaces for the currents norms and the varifold norms, see \textit{e.g.} \cite{Simon}. Moreover, in \cite{theseThibert}, the author studies sufficient conditions to ensure the convergence of the area of triangular meshes toward the area of the continuous surface. A general result extending the aforementioned works to our geometrico-functional framework involves some materials beyond the scope of this paper.

\subsection{Shooting equations for the metamorphosis Riemannian framework}
\label{Dyn_tangential}

In the large deformation setting for usual shapes, it was shown (see for instance \cite{Miller2006}) that the dynamic of optimal vector fields giving geodesics in groups of diffeomorphisms can be described through Hamiltonian systems of equations, called the \textbf{forward equations}. In the discrete situation of a finite set of points, these actually reduce to a coupled evolution of the position of the particles and extra variables called the \textit{momenta} attached to every particle. In addition, the Hamiltonian structure implies that all geodesic trajectories are eventually parametrized only by the initial positions and momenta, which is the principle underlying \textit{geodesic shooting algorithms} for shape matching (see \cite{arguillere14:_shape} for an extensive presentation of the Hamiltonian setting for shape deformation analysis).      

We now describe how to obtain similar shooting frameworks in our more general situation of fshapes, first for the fshape metamorphosis model. As discussed above, any element $(X,f)$ in $\FB$ will be discretized as a family $(\x,\f) = (x_k, f_k)_{1\leq k\leq \nP}$ of points $x$ with signal value $f$. In that setting, for a proper weighting matrix (the precise definition of this matrix depends on actual choices for the approximation of the $L^2$ norm), we can discretize the $L^2(X)$ dot product as $(D(\x)\h|\h)$ for any $\h=(h_k)\in\R^\nP$. The continuous problem \eqref{eq:FB3} is then approximated by a simple discrete control problem
\begin{equation}
 \label{eq:optimal_control_discrete}
\min \frac{\gamma_V}{2}\int_0^1|v_t|_V^2dt+\frac{\gamma_f}{2}\int_0^1 (D(\x_t)\h_t|\h_t)dt
\end{equation}
for fixed end point conditions $\rst{(\x_t,\f_t)}{t=0}$ and $\rst{(\x_t,\f_t)}{t=1}$ and controlled dynamic in $V\times \R^{\nP}$ given by
\begin{equation}
\label{eq:alain1}
  \left\{
  \begin{array}[h]{rcl}
    \dot{\x}_t & = & v_t.\x_t\\
    \dot{\f}_t & = & \h_t
  \end{array}\right.,
\end{equation}
where $v.\x=(v(x_k))_{1\leq k\leq P}$.
\begin{remark}
  An important point is to note that the discrete evolution is based on a Lagrangian particle based representation and to understand where the equations (\ref{eq:alain1}) are coming from. In the discrete setting, $t\mapsto \x_t=(x_{t,k})_{1\leq k\leq P}$ is the evolution the vertices of a polyhedral mesh i.e. $x_{t,k}=\phi_t^v(x_{k})$ where $\phi^v_t$ is the flow of the time dependent vector field $t\mapsto v_t$ and $\x=(x_k)$ are the vertices on the initial mesh $X$ so that $\dot{x}_{t,k}=v_t(\phi^v_t(x_k))=v_t(x_{t,k})$ which gives the first equation of (\ref{eq:alain1}). Moreover, $\f_t=(f_{t,k})_{1\leq k\leq P}$ are the signal values attached to $X_t$ at the vertices positions $\x_t=(x_{t,k})_{1\leq k\leq P}$ so that $f_{t,k}$ is the discretization of the continuous signal $f_t:X_t\to\mathbb{R}$ in \emph{Lagrangian coordinates} i.e. $f_{t,k}=f_t(\phi^v_t(x_k))=f_t(x_{t,k})$. In particular, one have $\dot{f}_{t,k}=\frac{d f_t(\phi^v_t(x_k))}{dt}=\frac{d\int_0^t h_s(x_k)ds}{dt}=h_t(x_k)=h_{t,k}$ where $\h_t=(h_{t,k})=(h_t(x_k))$ is the discretization of $h_t:X\to\mathbb{R}$ on the vertices $(x_k)$ which gives the second equation of  (\ref{eq:alain1}).  
\end{remark}
In that situation, introducing co-states or \textit{momenta} $(\p,\p^f) = (p_k,p^f_k)_{1\leq k\leq \nP}$ we get the associated Hamiltonian 
\begin{equation*}
  H((\x,\f),(\p,\p^f),(v,\h))=(\p|v.\x)+(\p^f|\h)-\frac{\gamma_V}{2}|v|_V^2-\frac{\gamma_f}{2}(D(\x)\h|\h)
\end{equation*}
so that we deduce from the Pontryagin's Maximum Principle that the optimal controls satisfy
\begin{equation}
  \label{eq:FB21a}
	v(\cdot)=\frac{1}{\gamma_V}\sum_{k=1}^{\nP} K_V(\cdot,x_k)p_{k}\text{ and } \h=\frac{1}{\gamma_f}D^{-1}(\x)\p^f
\end{equation}
where $K_V$ is the kernel associated with the RKHS space $V$. Then, plugging the optimal control into the Hamiltonian $H$ we get the reduced Hamiltonian given as
\begin{equation}
  \label{eq:FB21}
  H_r((\x,\f),(\p,\p^f))=\frac{1}{2\gamma_V}(K_{\x,\x}\p|\p)+\frac{1}{2\gamma_f}(D^{-1}(\x)\p^f|\p^f)
\end{equation}
with $K_{\x,\x}\doteq (K_V(x_k,x_{k'}))_{1\leq k,k'\leq \nP}$.
Note that the weighting matrix $D(\x)$ may be chosen to be diagonal (``mass lumping'') so that the computation of $D^{-1}(\x)$ is straightforward. From $H_r$, we can derive the forward equation given by the Hamiltonian dynamic:
\begin{equation}
\label{eq:forward_discrete_tang_a}
  \left\{
  \begin{array}[h]{rcl}
	  \dot{\x} & = & \partial_{\p} H_r(\x,\p) \\
    \dot{\p} & = & -\partial_{\x} H_r(\x,\p)\,. 
  \end{array} \right.
\end{equation}
As usual, since the Hamiltonian is not depending on time, it is a conserved quantity during the geodesic evolution so that we get from (\ref{eq:FB21a}) and (\ref{eq:FB21}) that
\begin{equation}
  \label{eq:8.4.1}
  \frac{\gamma_V}{2}\int_0^1|v_t|_V^2dt+\frac{\gamma_f}{2}\int_0^1 (D(\x_t)\h_t|\h_t)dt= \rst{H_r((\x,\f),(\p,\p^f))}{t=0}\,.
\end{equation}
We see also that the reduced Hamiltonian is a perturbation of the more familiar Hamiltonian for the pure geometrical case mentioned earlier, $H^{geo}_r(\x,\p)=\frac{1}{2\gamma_V}(K_{\x,\x}\p|\p)=\frac{\gamma_V}{2}|v|_V^2$ with a geometrico-functional term
\begin{equation}
H^{geo-fun}_r(\x,\f,\p^f)\doteq\frac{1}{2\gamma_f}(D^{-1}(\x)\p^f|\p^f)\,.\label{eq:FB35}
\end{equation}
Hence the shooting equations contains a new source term which is $\partial_{\x}H_r^{geo-fun}$ in the dynamic of the momenta $\p$:
\begin{equation}
  \label{eq:FB23}
  \dot{\p}=-\partial_{\x}H^{geo}_r(\x,\p)-\underbrace{\partial_{\x}H^{geo-fun}_r(\x,\f,\p^f)}_{\text{new source term}}
\end{equation}
and since the $H^{geo-fun}_r$ does not depend on $\f$, we have 
\begin{equation}
  \label{eq:FB24}
  \dot{\p^f}=-\partial_{\f}H^{geo-fun}_r(\x,\f,\p^f)=0\,.
\end{equation}
In particular, the functional speed $\dot{f}_k$ is only modulated by the evolution of the local weight $D(\x)_{k,k}$ which depends in the continuous limit both on the divergence of the ``tangential'' part of $v$ to the manifold spanned by the $x_k$'s and on its normal part when the mean curvature is non vanishing.

\subsection{Shooting equations for the tangential setting}
\label{Dyn_tangential2}
In the tangential model of section \ref{ssection:tang_model}, equations are simplified once again. Indeed, the $L^2$ metric being frozen to the initial position of $X$, the second term in \eqref{eq:optimal_control_discrete} becomes $\frac{\gamma_f}{2}\int_0^1 (D(\x_0)\h_t|\h_t)dt$ where $D(\x_0)$ is now a constant weighting matrix and the optimal functions $\h_t$ for fixed time 1 end point condition $\f_1$ are simply given by $\h_t\equiv\bzeta=\f_1-\f$, \ie $\f_t=\f+t\bzeta=t\f_1+(1-t)\f$ where $\bzeta = (\zeta(x_k))_{1\leq k\leq \nP}$ and $\zeta$ is the residual introduced in Section \ref{sec:atlas_tangential_setting}. In that situation, the new source term disappears in \eqref{eq:FB23} and the dynamic of $(\x,\p)$ is described by the usual Hamiltonian equations in the purely geometrical case, which gives:
\begin{equation}
\label{eq:forward_discrete_tang}
  \left\{
  \begin{array}[h]{rcl}
	  \dot{\x} & = & \phantom{-}\partial_{\p} H_r^{geo}(\x,\p) \\
	  \dot{\p} & = & -\partial_{\x} H_r^{geo}(\x,\p) \\
    \f_t & = & \f+t\bzeta
  \end{array} \right.
\end{equation}
These are the forward equations in the tangential model. Compared to metamorphoses, the dynamic on the signal part can be expressed in closed form while the evolution in spatial position is described by the usual Hamiltonian system without extra term. This makes it particularly simple for implementation, which shall be exploited in Section \ref{part:algo}. Optimizing over trajectories can be reduced to optimization with respect to initial momentum $p_0$ and signal $\f_1$ (or the residual $\bzeta$) since again $v(\cdot)=\frac{1}{\gamma_V}\sum_{k=1}^{\nP} K_V(\cdot,x_{k})p_{k}$, $\h=\bzeta$ and
\begin{equation}
  \label{eq:8.4.2}
  \frac{\gamma_V}{2}\int_0^1|v_t|_V^2dt+\frac{\gamma_f}{2}\int_0^1 (D(\x_0)\h_t|\h_t)dt=\rst{H_r^{geo}((\x,\p))}{t=0}+ \frac{\gamma_f}{2}(D(\x_0)\bzeta|\bzeta)\,.
\end{equation}
Variation of functionals with respect to $\p_0$ can be obtained by the usual \textbf{backward} integration of the adjoint Hamiltonian system ({\it c.f.} \cite{Charon_thesis} Chapter 1 or \cite{arguillere14:_shape} Section 4).    

\section{Algorithms to compute mean template of fshapes}\label{part:algo}

We present two different algorithms to compute a mean template from a sample of $N$ discrete fshapes. Both methods consist in solving a variational problem via an adaptive gradient descent algorithm presented in Section \ref{part:optim}. As the optimization method is the same in both cases, we just need to describe how to compute the functional and its gradient. These methods are implemented in Matlab, Cuda and C and some numerical experiments are shown Section \ref{part:numerical} and \ref{part.traps}. 

\subsection{Hypertemplate and tangential setting}
\label{part:HT}

Algorithms \ref{algo:j0tan}, \ref{algo:dj0tan} and \ref{algo:optim} give a way to implement the atlas estimation in the tangential setting described in Section \ref{sec:atlas_tangential_setting} when $f_0=0$. The principle of the method is illustrated by Figure \ref{fig:HT}. The method consists in computing a \emph{reduced} version based on geodesic shooting and initial momenta of the functional $J^{tan}_0$ of equation \eqref{eq:FB45}
 with Algorithm \ref{algo:j0tan} and its gradient with Algorithm \ref{algo:dj0tan} and then plugging this in the optimization box given by Algorithm \ref{algo:optim}. 
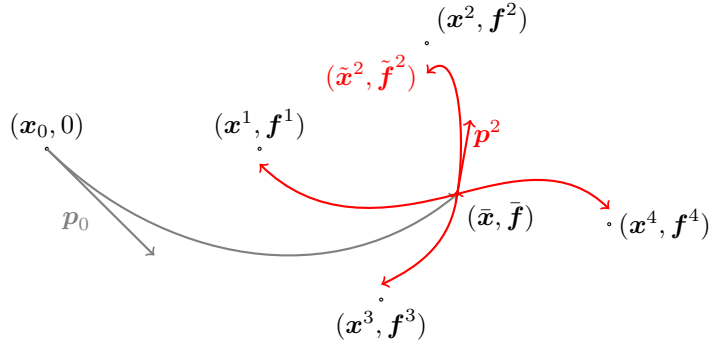
\begin{figure}[H]
	\begin{center}
\begin{tikzpicture}[scale=2]
	{\small
	\coordinate (ht) at (0,0)       ; 
	\coordinate (mt) at (2.7,-.3)     ; 
	\coordinate (dat1) at (1.4,0)   ;\coordinate (dat1p) at (1.4,-0.1)   ;
	\coordinate (dat2) at(2.5,0.7)  ;\coordinate (dat2p) at(2.5,0.5) ;
	\coordinate (dat3) at (2.2,-1)  ;\coordinate (dat3p) at (2.2,-.9) ;
	\coordinate (dat4) at (3.7,-0.5);\coordinate (dat4p) at (3.7,-0.4);

	\draw  (ht)  node[above]{$(\x_0,0)$}circle (.01);
	\draw  (mt)  node[below right]{$(\bar \x,\bar \f)$}circle (.01);
	\draw  (dat1)node[above]{$(\x^1_{},\f^1_{})$}circle (.01);
	\draw (dat2) node[above right]{$(\x^2_{},\f^2_{})$}circle (.01);
	\draw[white ] (dat2p) node[left, red]{$(\tilde \x_{}^2,\tilde \f_{}^2)$}circle (.01);
	\draw (dat3) node[below]{$(\x^3_{},\f^3_{})$}circle (.01);
	\draw  (dat4)node[right]{$(\x^4_{},\f^4_{})$}circle (.01);

	\draw (ht) edge[out=-45,in=-140,gray,->,thick] (mt) ;
	\draw (mt) edge[out=-97,in=25,red,->,thick] (dat3p);
	\draw (mt) edge[out=195,in=-45,red,->,thick] (dat1p);
	\draw[->,red,thick] (mt) edge[out=80,in=45](dat2p) ;
	\draw (mt) edge[out=15,in=135,red,->,thick] (dat4p);

	\draw[->,gray,thick] (ht) -- node[below left]{$\p_0$} +(-45:1);
	\draw[->,red,thick] (mt) -- node[above right]{$\p^2$} +(80:.5);

}
\end{tikzpicture}
	\end{center}
	\caption{The hypertemplate method: the estimated mean template $(\bar \x,\bar \f) = (\phi^{v^{\p_0}}_1(\x_0),\bar \f)$ is a deformed version of the hypertemplate $(\x_0,0)$.} \label{fig:HT}
\end{figure}

More precisely, the inputs are $N$ observations $(\x^{i}_{},\f^{i}_{})$, a hypertemplate $(\x_0,\f_0)$ with $\f_0=0$, the momenta $\p_0$ and $\p^1,\cdots,\p^N$ and the functionals $\f$ and $\bzeta^1,\cdots,\bzeta^N$. All the fshapes should be provided with their corresponding connectivity matrix. The optimization is conducted on the reduced version of $J^{tan}_0$ defined as
\begin{equation}
   \label{eq:FB45d}
   \begin{split}
	   \bJ^{tan}_{0}(\p_0,\f,(\p^i,\f^i)_{1\leq i\leq N})&\doteq  \frac{\gamma_{V_0}}{2}|v^{\p_0}|_{V_0}^2 +\frac{\gamma_f^0}{2}\sabs{\f}_{\x}^2\\
 & +\sum_{i=1}^N\left(
 \frac{\gamma_{V}}{2}|v^{\p^i}|_{V}^2+\frac{\gamma_f}{2}\sabs{\bzeta^i}^2_{\x} + \frac{\gamma_W}{2} g((\tilde{\x}^i,\tilde{\f}^i),(\x^i,\f^i))\right)
    \end{split}
\end{equation}
in the variables
\begin{equation}
 \label{eq:FB45bd}
 \begin{cases}
   (\p_0,\f)\in E^P\times \mathbb{R}^P\\
   (\p^i,\bzeta^i)\in E^P\times \mathbb{R}^P\\
 \text{with }\\
 \x=\phi^{v^{\p_0}}_1.\x_0,\,\tilde{\x}^i=\phi^{v^{\p^i}}_1.\x,\ \tilde{\f}^i_k=\f_{}+\bzeta^i_{}
 \end{cases}
 \end{equation}
 where $|\boldsymbol{l}|_ {\x}^2=(D(\x)\boldsymbol{l}|\boldsymbol{l})$ is the discretization of the $L^2(X)$ norm for $\boldsymbol{l}=(\boldsymbol{l}_k) \in \R^{\nP}$ defined on the vertices $\x \in E^{\nP}$ as introduced in Section \ref{Dyn_tangential}. 

 The hypertemplate may be difficult to define and the choice of this initial guess may obviously affect the quality of the atlas estimation. Nevertheless, the outputs of the algorithms are found to be stable even for the non-trivial experiments shown in Figure \ref{fig:statues_templates}. We recommend generating a simple and smooth fshape with a meshing program. This hypertemplate should be topologically equivalent to the observations as much as possible. Using one of the   observations as the hypertemplate could be tempting but it may induce some bias in the estimated atlas as the final mean template may contain the same specific features as the chosen observation (recall that the template is a diffeomorphic deformation of the hypertemplate). 

 At the end of the minimization procedure, the outputs are: an estimation of the mean template $(\bar{\x}, \bar{\f}) = (\phi^{v^{\p_0}}_1.\x_0,\bar{\f})$, the momenta $(\p^i)_{1\leq i\leq N}$ and the functional residuals $(\bzeta^i)_{1\leq i \leq N}$  so that $(\tilde \x^i,\tilde \f^i) =( \phi^{v^{\p^i}}_1.\bar{\x},\bar{\f}+\bzeta^{i})$ is close to $(\x^i,\f^i)$ for any $i=1,\cdots,N$.  

\begin{algorithm}[H]
	\caption{Computation of the energy $\bJ^{tan}_{0}$ with formula \eqref{eq:FB45d} \label{algo:j0tan}}

	\begin{algorithmic}[1]
		\State {\bf Require:} A hypertemplate $(\x_{0},\f_0=0)$ and $N$ fshapes $(\x^{i},\f^{i})$.
		\State {\bf Inputs:} A momentum $\p_{0}$, a template signal $\f$ and $N$ momenta $\p^{i}_{}$ and functional residuals $\bzeta^{i}$.
	\Begin
		\State Deformation: compute $\x$ by forward integration of $(\x_0,\p_{0})$. 			
		\For {$i=1$ to N}
			\State Deformation: compute $ \tilde \x^{i}$ by forward integration of $(\x,\p^{i})$; compute the signal $\tilde \f^{i} \gets \f+\bzeta^{i}$.
			\State Fvarifold norm: compute the fvarifold representation $ \mu_{(\tilde \x^{i}, \tilde \f^{i})}$ and $\mu_{(\x^{i},\f^{i})}$; compute $g_i \gets \snorm{\mu_{(\tilde \x^{i},\tilde \f^{i})}- \mu_{(\x^{i},\f^{i})}}^2_{W'}$.
			\State Penalty terms: compute $\sabs{v^{\p^i}}^2_{V} $ and $\sabs{\bzeta^{i}}^2_{\x}$.
		\EndFor
		\State Penalty terms: compute $\sabs{v^{\p_0}}^2_{V_0} $ and $\sabs{\f^{}}^2_{\x}$.
	\End
	\State {\bf Outputs:} $\frac{\gamma_{V_0}}{2}\sabs{v^{\p_0}}^2_{V_0}+ \frac{\gamma_{f}^0}{2} \sabs{\f^{}}^2_{\x} + \sum_i \limits \frac{\gamma_{f}}{2} \sabs{\bzeta^{i}}^2_{\x} +  \frac{\gamma_{V}}{2} \sabs{v^{\p^i}}^2_{V}+ \frac{\gamma_W}{2} g_i $.
	\end{algorithmic}
\end{algorithm}

\begin{algorithm}[H]
	\caption{Computation of the gradient $\nabla \bJ^{tan}_{0}$ \label{algo:dj0tan}}
	\begin{algorithmic}[1]
		\State {\bf Require:} A hypertemplate $(\x_{0}\f_{0})$ and $N$ fshapes $(\x^{i},\f^{i})$.
		\State {\bf Inputs:} A momentum $\p_{0}$, a template signal $\f_{}$ and $N$ momenta $\p^{i}$ and functional residuals $\bzeta^{i}$.
		\Begin
			\State Deformation: compute $\x_{0}$ by forward integration of $(\x_0,\p_{0})$.
			\For {$i=1$ to N}
				\State Deformation: compute $\x$ by forward integration of $(\x,\p^{i})$; compute the signal $\tilde \f^{i} \gets \f+\bzeta^i$.
				\State Gradient of $g_i$ wrt $\f,\bzeta^i,\p^i$ and $\x$: compute directly $\nabla_{\f} g_i$ and $\nabla_{\bzeta^i} g_i$; compute $(\nabla_{\x} g_i,\nabla_{\p^i} g_i)$ by backward integration of $(\nabla_{\tilde \x^i} g_i,0)$.
				\State Gradient of penalty terms: compute directly $\nabla_{\x} \sabs{v^{\p^i}}^2_{V}$, $\nabla_{\p^i} \sabs{v^{\p^i}}^2_{V}$, $\nabla_{\x} \sabs{\bzeta^{i}}^2_{\x}$ and $\nabla_{\bzeta^i} \sabs{\bzeta^{i}}^2_{\x}$.
			\EndFor
			\State Gradient of $\bJ^{tan}_{0}$ wrt $\p_0$: compute directly $\nabla_{\p_0} \sabs{v^{\p_0}}^2_{V_0}$; compute $\nabla_{\p_0} \sum_i \limits (\frac{\gamma_f}{2}\sabs{\bzeta^{i}}^2_{\x}+ \frac{\gamma_V}{2}\sabs{v^{\p^i}}^2_{V}+\frac{\gamma_W}{2}g_i)$ by backward integration of $\left( \sum_i \limits \nabla_{\x} (\frac{\gamma_f}{2}\sabs{\bzeta^{i}}^2_{\x}+ \frac{\gamma_V}{2}\sabs{v^{\p^i}}^2_{V})+ \frac{\gamma_W}{2}g_i,0 \right)$.
		\State Gradient of penalty term: compute $\nabla_{\f} \sabs{f}^2_{\x}$.
\End
\State {\bf Outputs:} 
$\nabla_{\p_0} \bJ^{tan}_{0} = \frac{\gamma_{V_0}}{2}\nabla_{\p_0} \sabs{v^{\p_0}}^2_{V_0} +\nabla_{\p_0} \sum_i \limits (\frac{\gamma_f}{2}\sabs{\bzeta^{i}}^2_{\x}+ \frac{\gamma_V}{2}\sabs{v^{\p^i}}^2_{V}+ \frac{\gamma_W}{2}g_i) $; 
$\nabla_{\f} \bJ^{tan}_{0}=\frac{\gamma_{f}^0}{2}\nabla_{\f} \sabs{\f}^2_{\x}+\frac{\gamma_W}{2}\nabla_{\f} \sum_i \limits g_i$; 
$\nabla_{\p^i} \bJ^{tan}_{0} = \frac{\gamma_V}{2}\nabla_{\p^i} \sabs{v^{\p^i}}^2_{V}+ \frac{\gamma_W}{2}\nabla_{\p^i} g_i $ for $i=1,..N$; 
$\nabla_{\bzeta^i} \bJ^{tan}_{0} =\frac{\gamma_f}{2} \nabla_{\bzeta^i} \sabs{\bzeta^{i}}^2_{\x} +\frac{\gamma_W}{2}\nabla_{\bzeta^i} g_i$ for $i=1,..N$.
	\end{algorithmic}
\end{algorithm}

\subsection{``Free'' mean fshape and tangential model}
\label{part:AlgoFree}

Algorithms \ref{algo:Jfreetan}, \ref{algo:dJfreetan} and \ref{algo:optim} use a more direct approach to compute a mean fshape compared to the hypertemplate method of Section \ref{part:HT}. The general framework is similar but the mean template is no longer defined as a deformed version of an hypertemplate. The coordinates of the points composing the mean template will be directly updated along the gradient descent. Notice that there is no extra cost as, in Algorithm \ref{algo:dj0tan}, we compute the gradient $\nabla_{\x} g_i$ in the backward integration step (see also Section \ref{part:bd}). The aim of this method is simply to use this gradient directly to update the mean fshape thereby making the evolution ``free'' along the optimization process. Figure \ref{fig:Free} gives an illustration of the method.

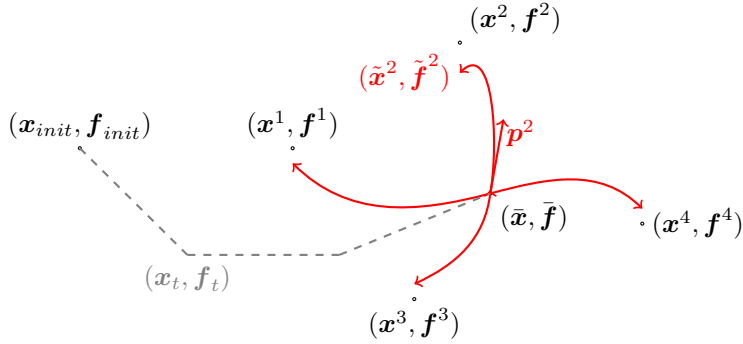
\begin{figure}[H]
	\begin{center}
\begin{tikzpicture}[scale=2]
	{\small
	\coordinate (ht) at (0,0); 
	\coordinate (mt) at (2.7,-.3)     ; 
	\coordinate (dat1) at (1.4,0)   ;\coordinate (dat1p) at (1.4,-0.1)   ;
	\coordinate (dat2) at(2.5,0.7)  ;\coordinate (dat2p) at(2.5,0.5) ;
	\coordinate (dat3) at (2.2,-1)  ;\coordinate (dat3p) at (2.2,-.9) ;
	\coordinate (dat4) at (3.7,-0.5);\coordinate (dat4p) at (3.7,-0.4);

	\draw  (ht)  node[above]{$(\x_{init},\f_{init})$}circle (.01);
	\draw  (mt)  node[below right]{$(\bar \x,\bar \f)$}circle (.01);
	\draw  (dat1)node[above]{$(\x^1_{},\f^1_{})$}circle (.01);
	\draw (dat2) node[above right]{$(\x^2_{},\f^2_{})$}circle (.01);
	\draw[white ] (dat2p) node[left, red]{$(\tilde \x_{}^2,\tilde \f_{}^2)$}circle (.01);
	\draw (dat3) node[below]{$(\x^3_{},\f^3_{})$}circle (.01);
	\draw  (dat4)node[right]{$(\x^4_{},\f^4_{})$}circle (.01);


	\draw (mt) edge[out=-97,in=25,red,->,thick] (dat3p);
	\draw (mt) edge[out=195,in=-45,red,->,thick] (dat1p);
	\draw[->,red,thick] (mt) edge[out=80,in=45](dat2p) ;
	\draw (mt) edge[out=15,in=135,red,->,thick] (dat4p);

	\draw[gray,thick,dashed] (ht) -- (0.707106781,-0.707106781) node[below]{$(\x_t,\f_t)$};
	\draw[gray,thick,dashed] (0.707106781,-0.707106781) --  (1.707106781,-0.707106781);
	\draw[gray,thick,dashed] (1.707106781,-0.707106781) --  (mt);
	
	\draw[->,red,thick] (mt) -- node[above right]{$\p^2$} +(80:.5);

}
\end{tikzpicture}
	\end{center}
	\caption{The ``free'' mean template method: the estimated mean template $(\bar \x, \bar \f)$ is a gradient-descent-based update of an initial fshape $(\x_{init},\f_{init})$. The notation $(\x_t,\f_t)$ symbolizes the state of the mean template at iteration $t$.} \label{fig:Free}
\end{figure}

The free mean fshape method is a gradient descent on the following functional:
\begin{equation}
	\bJ^{tan}_{free}((\x,\f),(\p^i,\f^i)_{1\leq i\leq N})\doteq  \frac{\gamma_f^0}{2}\sabs{\f}_{\x}^2 +\sum_{i=1}^N\left(\frac{\gamma_{V}}{2} \sabs{v^{\p^i}}_{V}^2+\frac{\gamma_f}{2}|\bzeta^i|^2_{\x} + \frac{\gamma_W}{2} g((\tilde{\x}^i,\tilde{\f}^i),(\x^i,\f^i))\right)
	\label{eq:free}
\end{equation}
in the variables
\begin{equation}
	\begin{cases}
	(\x,\f)\in E^{\nP} \times \R^{\nP},\\
	(\p^i,\bzeta^i)\in E^{\nP} \times  \R^{\nP}\\
	\text{with}\\
 \tilde{\x}^i=\phi^{v^{\p^i}}_1.\x,\ \tilde{\f}^i=\f+\bzeta^i
	\end{cases}
	\label{eq:freeVar}
\end{equation}
This algorithm does not have yet the same theoretical warranties as those given by Theorem \ref{theo:existence_atlas_fvar_tang} for the hypertemplate version. We are not able to show, with the same arguments, that there exists proper minimizers of the continuous counterpart of \eqref{eq:free} as the fshape $(X,f)$ is not constrained to belong to a single fshape bundle. However, we use in practice a regularized gradient descent to keep the evolution of the template smooth along the minimization. Therefore, if the size of the updates are small enough, the output of the free mean fshape algorithm may be considered as a discretization of a smooth deformation of $(X_{init},f_{init})$ as Property 4.2.1 of \cite{Charon_thesis} shows. This trick also prevents odd behaviours of the algorithm as discussed in Section \ref{part:bd}. Compared to the hypertemplate algorithm of previous section, this numerical scheme still reduces the constraint imposed on the template evolution which is likely to provide less dependency in the choice of initialization.    

The inputs are $N$ observations $(\x^{i}_{},\f^{i}_{})$, an initial fshape $(\x_{init},\f_{init})$, the momenta $\p^1,\cdots,\p^N$ and the functionals $\bzeta^1,\cdots,\bzeta^N$. All the fshapes should be provided with their respective connectivity matrix. The momenta $\p^1,\cdots,\p^N$ are usually initialized to 0 and $\bzeta^1,\cdots,\bzeta^N$ are usually initialized to a constant.

At the end of the minimization procedure, the outputs are: an estimation of the mean template $(\bar{\x},\bar{\f})$, the momenta $(\p^i)_{1\leq i\leq N}$ and the functional residuals $(\bzeta^i)_{1\leq i \leq N}$  so that $(\tilde \x^i,\tilde \f^i) = (\phi^{v^{\p^i}}_1.\bar{\x},\bar{\f}+\bzeta^{i})$ is close to $(\x^i,\f^i)$ for any $i=1,\cdots,N$.

\begin{algorithm}[H]
	\caption{Computation of the energy $\bJ^{tan}_{free}$ \label{algo:Jfreetan} of formula \eqref{eq:free}}

	\begin{algorithmic}[1]
		\State {\bf Require:} $N$ fshapes $(\x^{i},\f^{i})$.
		\State {\bf Inputs:}  A template fshape $(\x,\f)$, $N$ momenta $\p^{i}_{}$ and functional residuals $\bzeta^{i}$.
		\Begin
			\For {$i=1$ to N}
			\State Deformation: compute $\tilde \x^i$ by forward integration of $(\x,\p^{i})$; compute the signal $\tilde \f^{i}\gets \f+\bzeta^{i}$.
			\State Fvarifold norm: compute the fvarifold representation $\mu_{(\tilde \x^i,\tilde \f^i)}$ and $\mu_{(\x^i,\f^i)}$; compute $g_i \gets \snorm{\mu_{(\tilde \x^{i},\tilde \f^{i})}- \mu_{(\x^{i},\f^{i})}}^2_{W'}$.
			\State Penalty terms: compute $\sabs{v^{\p^i}}^2_{V} $ and $\sabs{\bzeta^{i}}^2_{\x}$.
		\EndFor
	\State Penalty term: compute $\sabs{\f^{}}^2_{\x}$.
	\End
	\State {\bf Outputs:} $\frac{\gamma_{f}^0}{2} \sabs{\f^{}}^2_{\x} + \sum_i \limits \frac{\gamma_{V}}{2} \sabs{v^{\p^i}}^2_{V}  + \frac{\gamma_{f}}{2} \sabs{\bzeta^{i}}^2_{\x} + \frac{\gamma_W}{2} g_i$.
	\end{algorithmic}
\end{algorithm}

\begin{algorithm}[H]
	\caption{Computation of the gradient $\nabla \bJ^{tan}_{free}$ \label{algo:dJfreetan}}
	\begin{algorithmic}[1]
		\State {\bf Require:} $N$ fshapes $(\x^{i},\f^{i})$.
		\State {\bf Inputs:} A template $(\x,\f)$, $N$ momenta $\p^{i}$ and functional residuals $\bzeta^{i}$.
		\Begin
		\For {$i=1$ to N}
		\State Deformation: compute $\x^i$ by forward integration of $(\x,\p^{i})$; compute the signal $\tilde \f^{i} \gets \f+\bzeta^i$
		\State Gradient of $g_i$ wrt $\f,\bzeta^i,\p^i$ and $\x$: Compute directly $\nabla_{\f} g_i$ and $\nabla_{\bzeta^i} g_i$; compute $(\nabla_{\x} g_i,\nabla_{\p^i} g_i)$ by backward integration of $(\nabla_{\tilde \x^i} g_i,0)$.
		\State Gradient of penalty terms: compute directly $\nabla_{\x} \sabs{v^{\p^i}}^2_{V}$, $\nabla_{\p^i} \sabs{v^{\p^i}}^2_{V}$, $\nabla_{\x} \sabs{\bzeta^{i}}^2_{\x}$ and $\nabla_{\bzeta^i} \sabs{\bzeta^{i}}^2_{\x}$.
	\EndFor
	\State Gradient of penalty term: compute directly $\nabla_{\f} \sabs{\f}^2_{\x}$ and $\nabla_{\x} \sabs{\f}^2_{\x}$.
\End
\State {\bf Outputs:} 
$\nabla_{\x} \bJ^{tan}_{free} = \frac{\gamma^0_f}{2}\nabla_{\x} \sabs{\f}^2_{\x} + \sum_i \limits \nabla_{\x} (\frac{\gamma_f}{2}\sabs{\bzeta^{i}}^2_{\x}+ \frac{\gamma_V}{2}\sabs{v^{\p^i}}^2_{V}+\frac{\gamma_W}{2}g_i) $; 
$\nabla_{\f} \bJ^{tan}_{free}=\frac{\gamma_{f}^0}{2}\nabla_{\f} \sabs{\f}^2_{\x}+\frac{\gamma_W}{2}\nabla_{\f} \sum_i \limits g_i  $; 
$\nabla_{\p^i} \bJ^{tan}_{free} = \frac{\gamma_V}{2}\nabla_{\p^i} \sabs{v^{\p^i}}^2_{V}+\frac{\gamma_W}{2} \nabla_{\p^i} g_i $ for $i=1,\cdots,N$; 
$\nabla_{\bzeta^i} \bJ^{tan}_{free} =\frac{\gamma_f}{2} \nabla_{\bzeta^i} \sabs{\bzeta^{i}}^2_{\x} +\frac{\gamma_W}{2}\nabla_{\bzeta^i} g_i $ for $i=1,\cdots,N$.
	\end{algorithmic}
\end{algorithm}

\subsection{Optimization} \label{part:optim}

An atlas estimation is a smooth, non-convex and high dimensional global optimization problem. We use an adaptive gradient descent algorithm to solve it numerically. This method is popular in geometric atlas estimation as it is quite robust and performs relatively well on various real examples, see \cite{Glaunes2004,Durrleman}. In our geometrico-functional framework, one main issue comes from the fact that the functionals to minimize depend on various types of variables living in different spaces: momenta, functional values or points coordinates. Adding the functional part in the optimization process makes it harder, compared to a pure geometrical approach, as some non desirable phenomenon may hold. For a thorough discussion, please see Section \ref{part:massCancel}.

Let us assume hereafter that we want to minimize a functional $\bJ(\bu_1,\cdots,\bu_K)$ depending on $K\geq 1$ types of variables. For instance, in an atlas estimation of fshapes with a hypertemplate (see Section \ref{part:HT}), we have $\bJ=\bJ^{tan}_{0}$ and $K=4$ (namely the variables $\f,\p,\f^i$ and $\p^i)$). 

If the $K$ types of variables are non homogeneous, we note that in practice, the different types of variables should not be updated at the same speed along the optimization process. It means that the gradient is not the proper descent direction to follow in order to reach a reasonable solution. As Algorithm \ref{algo:optim} shows, we use $K$ different steps sizes denoted $(\delta_i)_{1\leq i\leq K}$ and we adapt them separately. The quality of the final result will depend on a fine balance between the $\delta_i$'s. 

\begin{algorithm}[H]
	\caption{Adaptive gradient descent \label{algo:optim}}
	\begin{algorithmic}[1]
		\State {\bf Require:} Coefficients used to adapt the step sizes $0<s^-<1< s^+$.
		\State{\bf Inputs:} A functional $\bJ$ and its gradient $\nabla \bJ = \left( \nabla_{\bu_{i}} \bJ\right)_{i=1}^K$. Points $(\bu_{1}^{init},\cdots,\bu_{K}^{init})$ and step sizes $(\delta_{1}^{init},\cdots,\delta_{K}^{init})$.
		
	\Begin
			\State Initialize : $(\bu_1,\cdots,\bu_K) \gets (\bu_{1}^{init},\cdots,\bu_{K}^{init})$ and $(\delta_{1},\cdots,\delta_{K}) \gets (\delta_{1}^{init},\cdots,\delta_{K}^{init})$. 
		\Repeat
		\State Compute $J^{cur} \gets \bJ(\bu_{1},\cdots,\bu_{K})$ and $\nabla_{\bu_i} J^{cur} \gets \nabla_{\bu_i} \bJ^{}(\bu_{1},\cdots,\bu_{K})$.
		\State Update all the variables simultaneously : compute $J^{new} \gets \bJ^{}(\bu_{1}^{new},\cdots,\bu_{K}^{new})$ where $\bu_{i}^{new} \gets  \bu_{i}+\delta_{i} \nabla_{\bu_i} J^{cur}$ for $i=1,\cdots,K$.
		\State Adapt steps : {\bf if } {$J^{new}< J^{cur}$} {\bf then } $s \gets s^+$ {\bf else } $s \gets s^-$ {\bf end if}
		     \State Declare a boolean to break the following loop: breakLoop $\gets$ False.
		     \Repeat
		     \State Update each variable separately: for each $i=1,\ldots,K$ compute $\hat J^{new}_i \gets \bJ(\bu_{1}^{new}, \cdots, \hat{\bu}_{i}^{new},\cdots,\bu_K^{new})$ where $\hat \bu_i^{new} \gets \bu_{i} + s\delta_i \nabla_{\bu_i} J^{cur}$.
		\State Update every variable at the same time: $\hat J^{new}_{} \gets \bJ(\hat{\bu}_{1}^{new},\cdots,\hat{\bu}_K^{new})$.
		\State Keep the best configuration: let $(\bu_{1}^{*},\cdots,\bu_{K}^{*})$ and $(\delta_1^*,\cdots,\delta^*_K)$ be so that $(\bu_{1}^{*},\cdots,\bu_{K}^{*}) = (\bu_{1}^{} + \delta^*_1 \nabla_{\bu_1} J^{cur},\cdots,\bu_{K}^{} + \delta^*_K \nabla_{\bu_K}J^{cur})$ satisfies $\bJ(\bu_1^*,\cdots,\bu_K^*) =  \min \{J^{new},\hat{J}^{new}_1,\cdots,\hat{J}^{new}_K,\hat{J}_{}^{new}\}$.	     

		\If{$\bJ(\bu_1^*,\cdots,\bu_K^*) < J^{cur}$}
                	\State Update points and steps sizes: $(\bu_1,\cdots,\bu_K) \gets (\bu_1^*,\cdots,\bu_K^*) $ and $(\delta_1,\cdots,\delta_K) \gets (\delta_1^*,\cdots,\delta_K^*)$.
			\State breakLoop $\gets$ True.
		\Else 
			\State Decrease all steps sizes: $(\delta_1,\cdots,\delta_K) \gets (s\delta_1,\cdots,s\delta_K) $.
		\EndIf
		
		\Until{ ``breakLoop $=$ True'' or ``steps sizes are too small''.}

	\Until{``Maximum iteration is reached'' or ``descent is to small'' or ``steps sizes are too small''.}
	\End

	\State {\bf Output:} Points $(\bu_1,\cdots,\bu_K)$.
\end{algorithmic}
\end{algorithm}


Some tricks may be also used to improve the optimization strategy. For instance, it may be convenient to regularize some part of the gradient as discussed in Section \ref{part.traps}. Another efficient method is to change the fvarifold kernel widths $\sigma_e,\sigma_f,\sigma_t$ along the optimization process. In that case, the optimization involves several runs: a first gradient descent is performed at coarse scale by choosing large $\sigma_e,\sigma_f,\sigma_t$ and then a second gradient descent is performed with smaller $\sigma_e,\sigma_f,\sigma_t$ (with starting point equals to the end point of the first run) and so on.


\section{Numerical experiments}
\label{part:numerical}

\subsection{Synthetic dataset}

We now present a first set of results of the previous atlas estimation algorithms on a synthetic dataset of six textured statues constructed using Sculptris software. The set of subjects is shown on the last row of Figure \ref{fig:statues_matching+subjects}. The atlas is estimated using the free mean fshape tangential evolution algorithm of Section \ref{part:AlgoFree} after the dataset is preregistered with respect to translations. The template is initialized with a prototype fshape with zero-valued signal everywhere in order to avoid as much as possible bias toward one particular individual. Figure \ref{fig:statues_template_evol} shows the current template at several intermediate steps of the estimation. We use all Gaussians for the kernels $k_{e}$, $k_{t}$ and $k_{f}$ defining the data attachment terms as in equation \eqref{eq.GaussKern}. The template is obtained by refining twice the scales $\sigma_e$ and $\sigma_f$ throughout the process in order to have a coarse estimation of the atlas in the first 
place before being able to retrieve finer details from the dataset. The resulting template captures the average shape and signal behavior of the set of subjects: in particular, \textit{the algorithm tends to recover the signal patterns that are the most shared among the population}, as the uniform coloring of the head for instance.

\begin{figure}[H]
\centering
\begin{tabular}{ccccc}
  \includegraphics[height=4cm]{./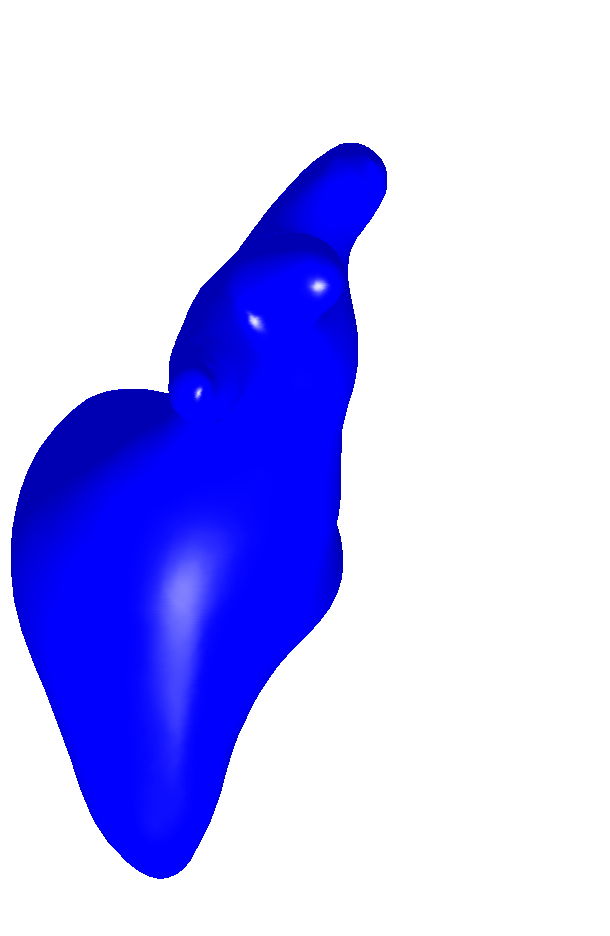} & \includegraphics[height=4cm]{./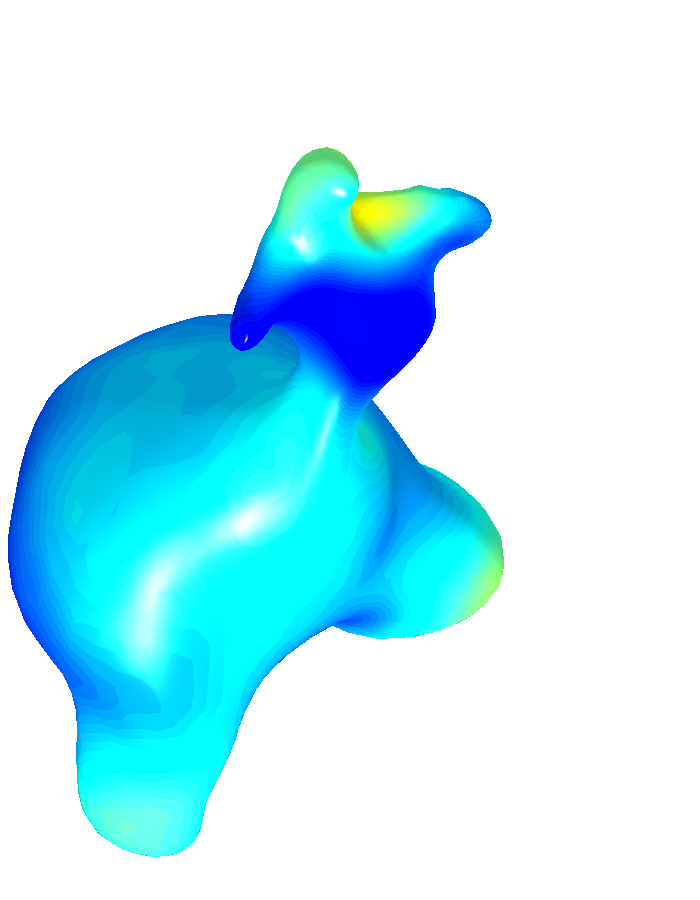} & \includegraphics[height=4cm]{./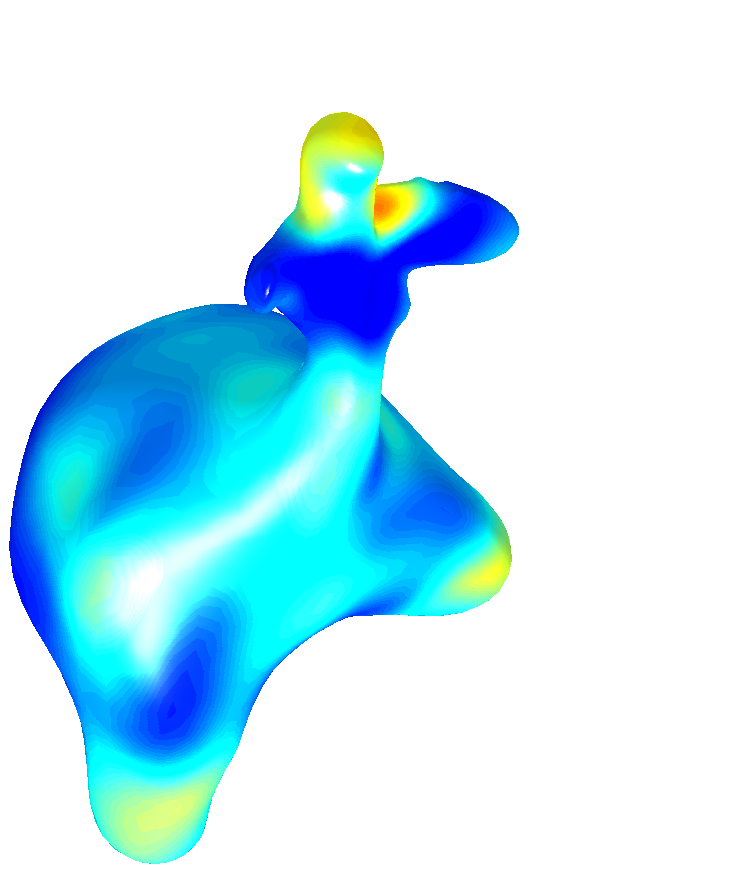} & \includegraphics[height=4cm]{./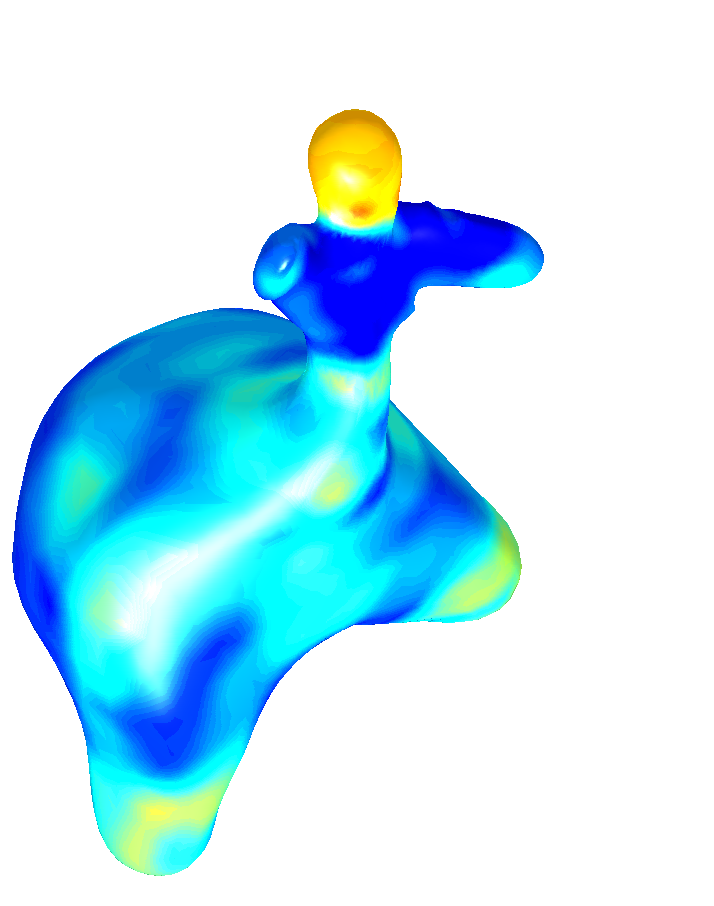} & \includegraphics[height=4cm]{./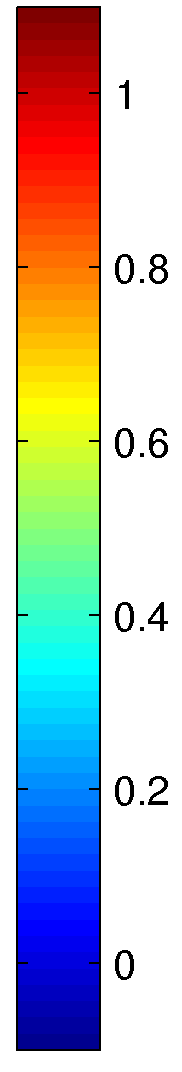} 
\end{tabular}
\caption{Template evolution along the gradient descent steps of the algorithm. On the left is the prototype fshape which serves as the initialization for the template. We use a multiscale approach that decreases, after a certain number of iterations, the characteristic sizes of the kernels on geometry and signal.}
\label{fig:statues_template_evol}
\end{figure}

In addition to the template, we obtain, at convergence, a set of residual signals as well as deformation momenta that map the template on the different subjects both in terms of shape and texture. The kernel for the space $V$ of deformation fields is taken as a sum of two Gaussian kernels, following the approach of \cite{Vialard2012} which allows to introduce multiscale deformations. We show two of such mappings in Figure \ref{fig:statues_match_2}. The whole set of shape and texture matchings compared to the original subjects is finally summed up in Figure \ref{fig:statues_matching+subjects}. In terms of numerics, the subjects of the dataset are unequally sampled between 2800 and 6500 points while the template of Figure \ref{fig:statues_template_evol} has 7000 points. Using our GPU implementation for kernel computations, each iteration of the algorithm takes approximately 38 secs for a total atlas estimation time of 3 hours (300 iterations) on a server equipped with a Nvidia GTX 555.

The dependency in the initialization of the template is also an important issue. Section \ref{part.invariances} explains how to make the previous procedures independent of rescaling or sampling changes of the fshapes. But obviously, in both the hypertemplate and 'free' template evolution algorithm, the shape of the template is still constrained to live in the diffeomorphic orbit of the initial one. It results that one cannot expect to remove completely the bias resulting from the initialization's choice. In Figure \ref{fig:statues_templates}, we show the template obtained after convergence for different initializations. Although these results do demonstrate some variations for the estimated template, they still show a quite remarkable stability in most of the important geometric and functional features of the dataset.

\begin{figure}[H]
\centering
\begin{tabular}{cccc}
  \includegraphics[width=3cm]{./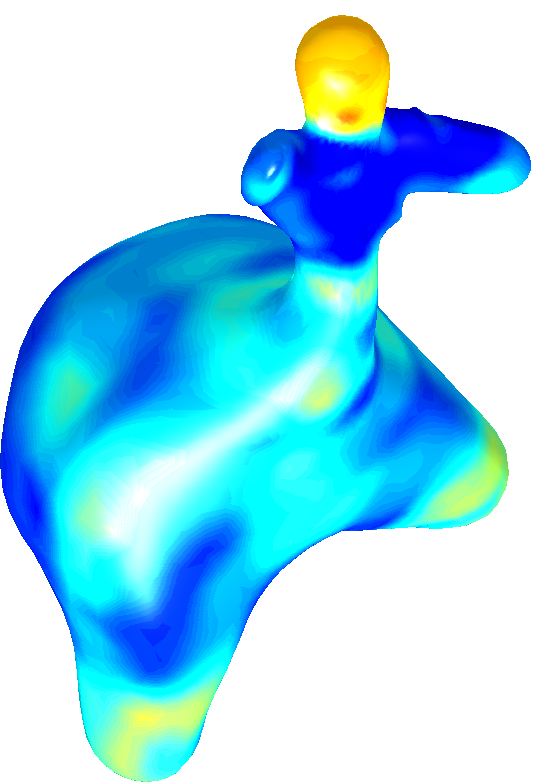} & \includegraphics[width=3cm]{./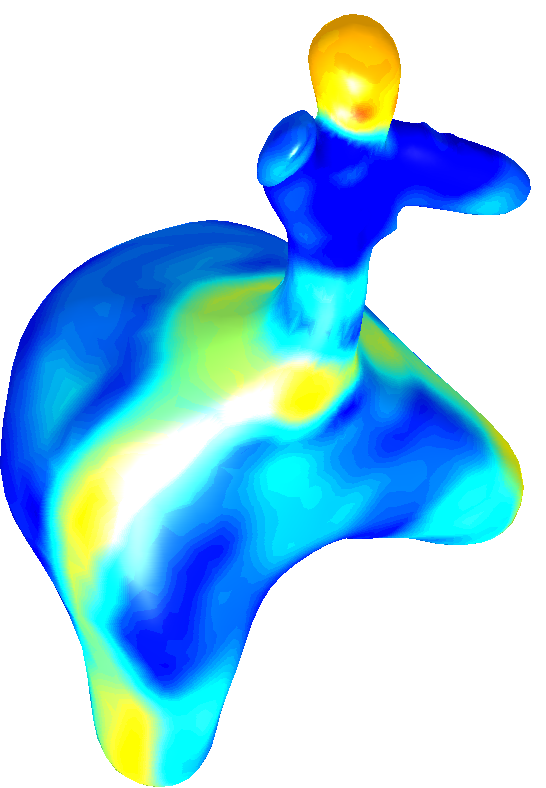} & \includegraphics[width=3cm]{./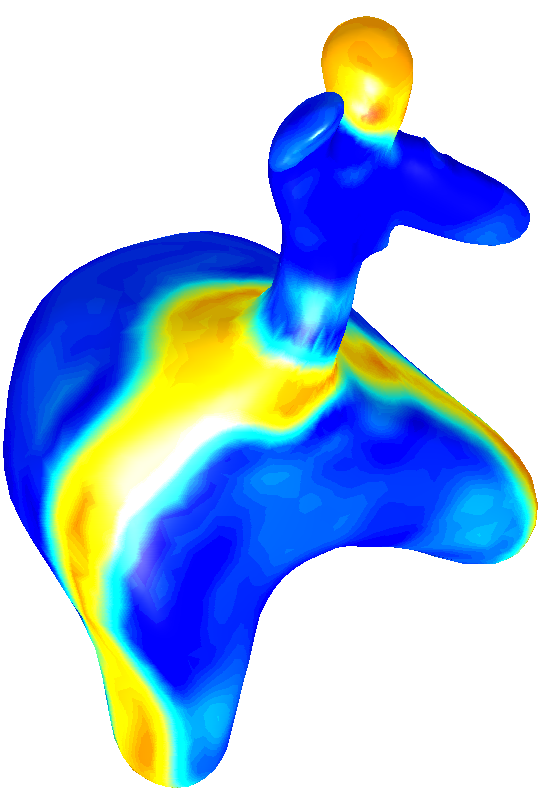} & \includegraphics[width=3cm]{./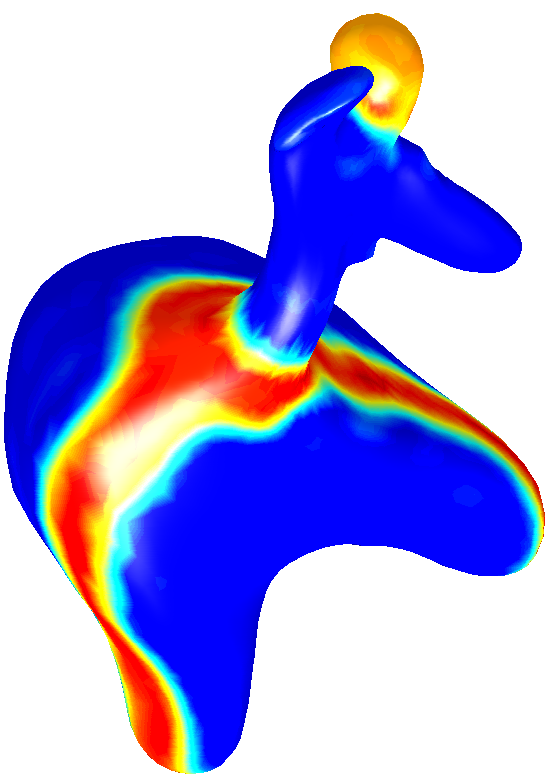} \\
  \includegraphics[width=3cm]{./figures/statues_template_final.png} & \includegraphics[width=3cm]{./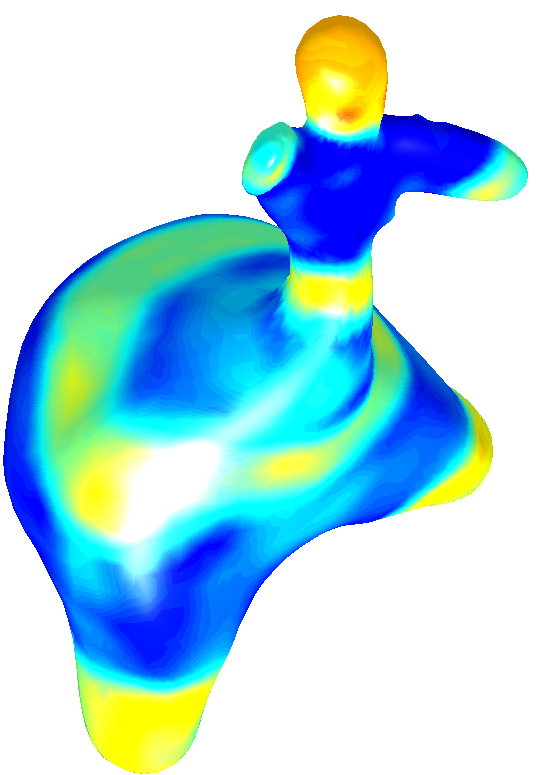} & \includegraphics[width=3cm]{./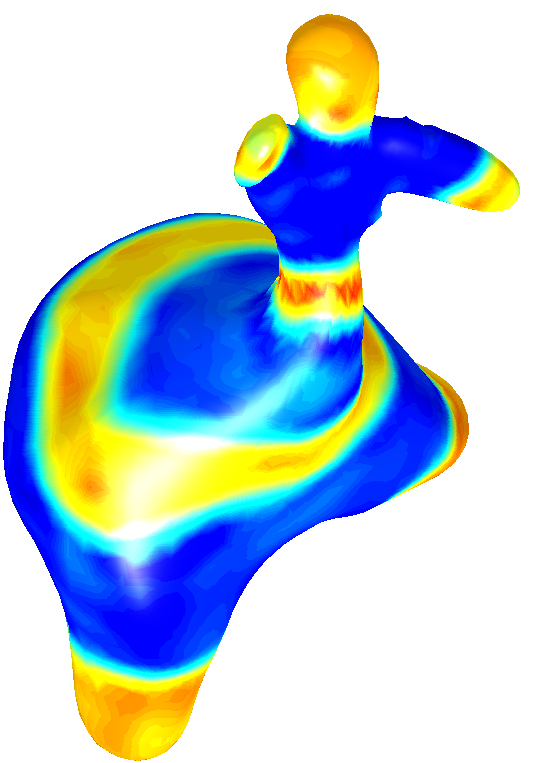} & \includegraphics[width=3cm]{./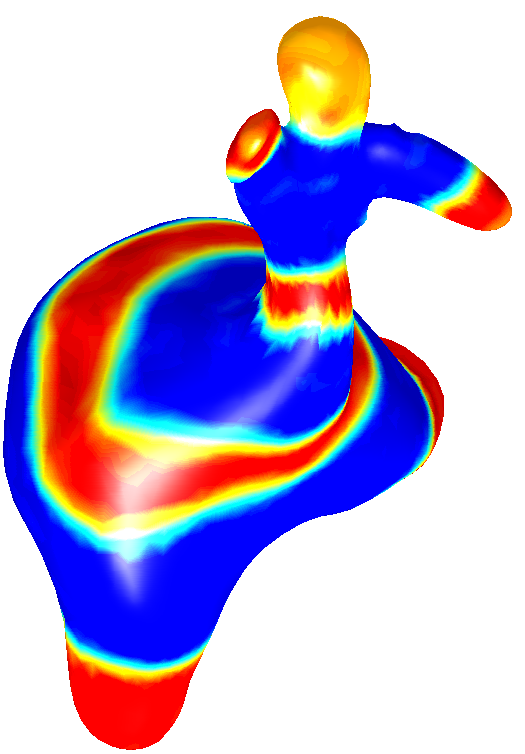} \\
  Template & t=1/3 & t=2/3 & t=1
\end{tabular}
\caption{Mapping of the template on subjects 2 and 5. From left to right, we display the template and several intermediate time steps of the geometrico-functional transformations, \ie $(\phi^{i}_{t}(\bar{\x}),\bar{\f}+\bzeta^{i}_t)$ where $\bzeta^{i}_{t}=t\bzeta^{i}$ as in formula \eqref{eq:forward_discrete_tang}.}
\label{fig:statues_match_2}
\end{figure}

\begin{figure}[H]
\centering
   \includegraphics[width=\textwidth]{./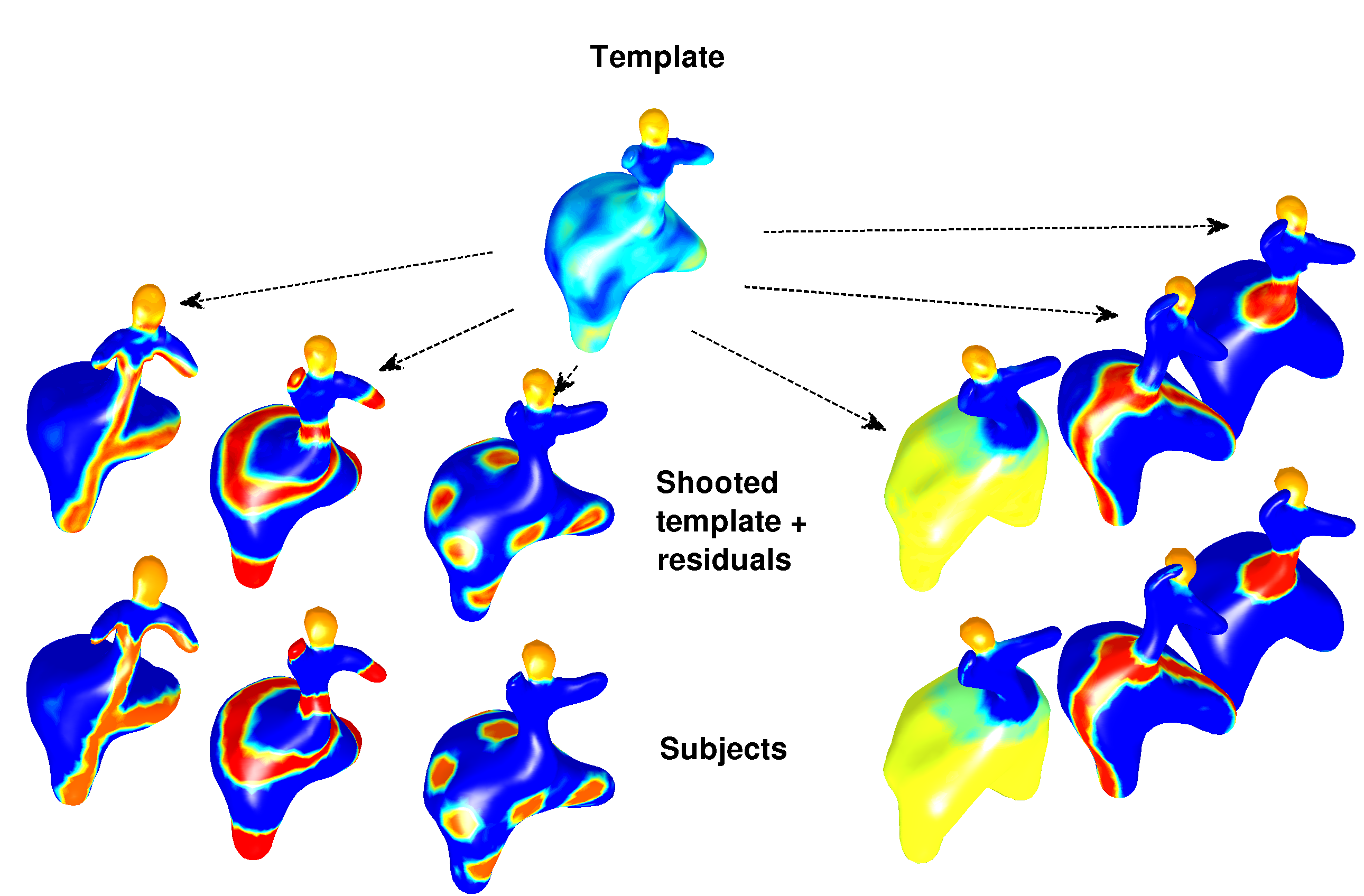} 
\caption{Estimated template and the geometrico-functional matchings to each subject.}
\label{fig:statues_matching+subjects}
\end{figure}

\begin{figure}
\centering
\begin{tabular}{cccc}
	Initialization & \parbox[c]{3.2cm}{\includegraphics[width=3cm]{./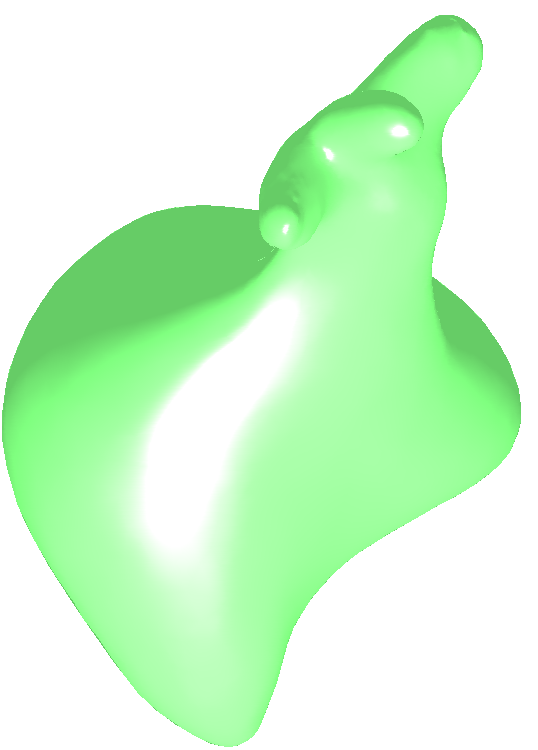}} &\parbox[c]{3.2cm}{ \includegraphics[width=3cm]{./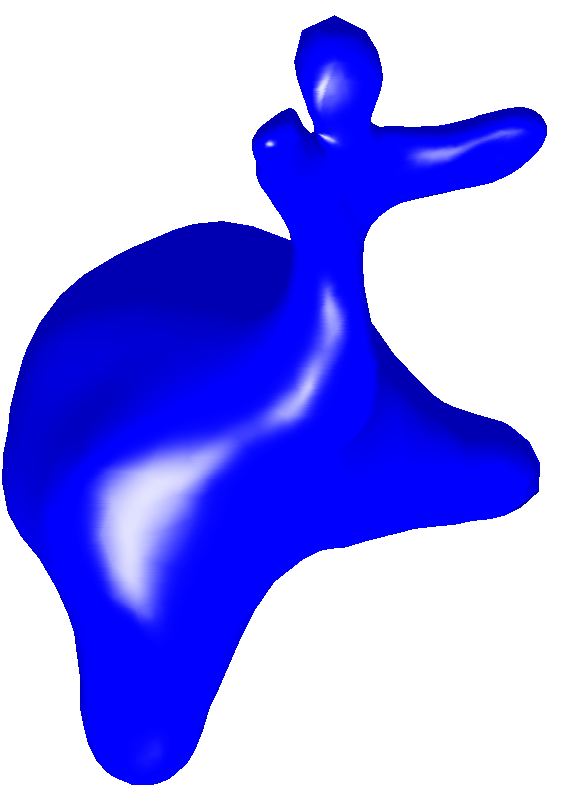}} &\parbox[c]{3.2cm}{ \includegraphics[width=3cm]{./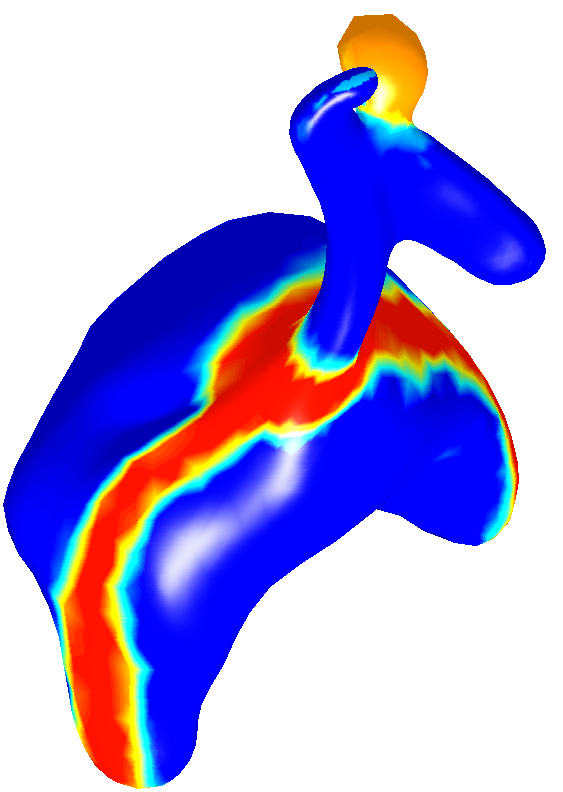}} \\
	Estimated template & \parbox[c]{3.2cm}{ \includegraphics[width=3cm]{./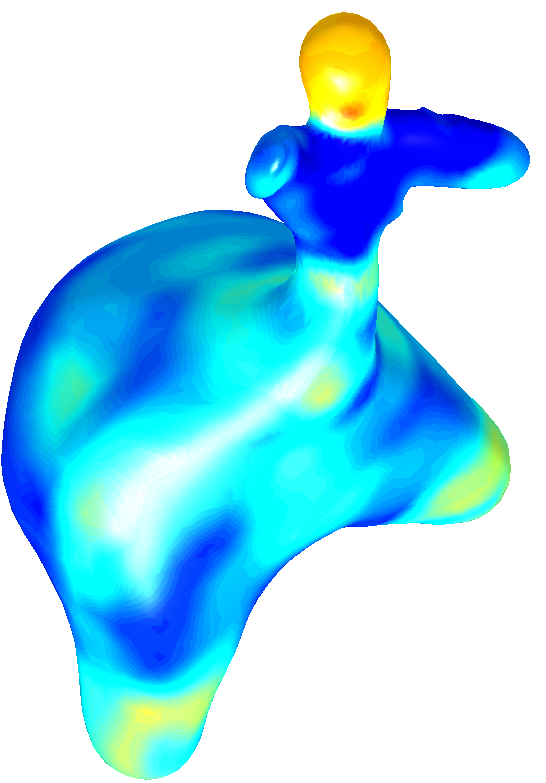}}  &\parbox[c]{3.2cm}{\includegraphics[width=3cm]{./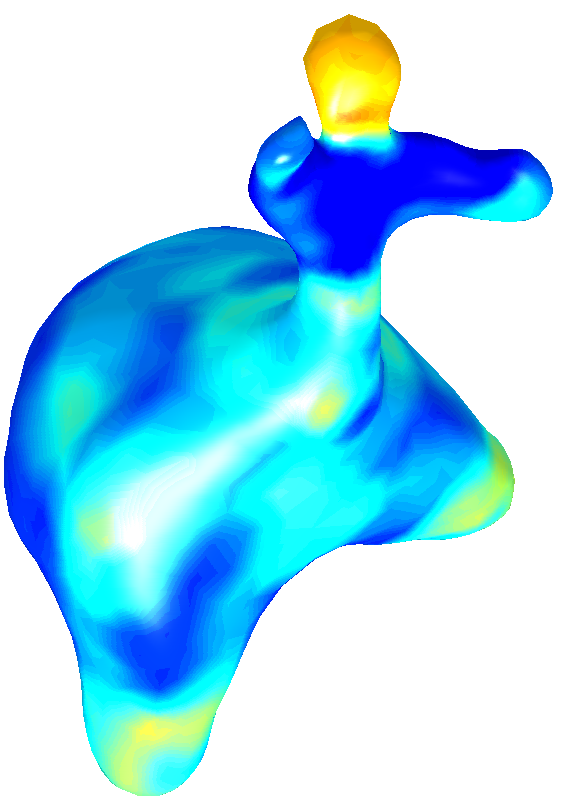}} & \parbox[c]{3.2cm}{\includegraphics[width=3cm]{./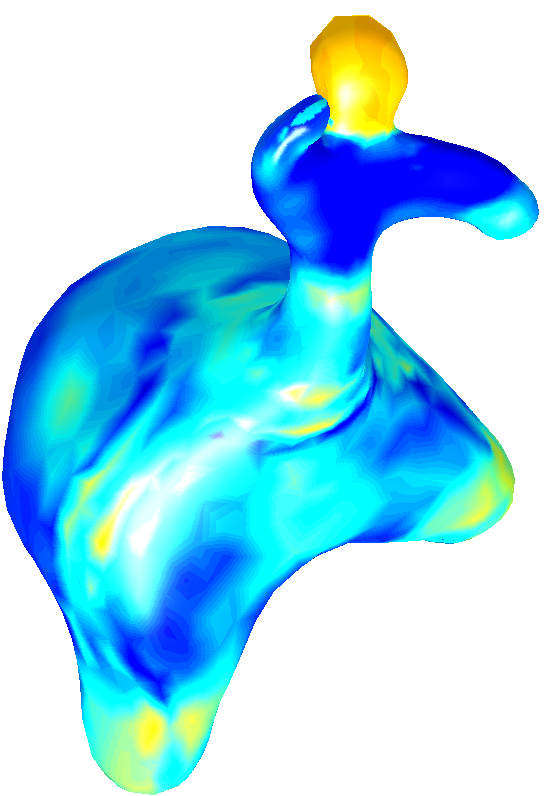}}
\end{tabular}
\caption{Variations of the final estimated template with respect to the initialization, respectively (from left to right) with the same prototype shape as in figure \ref{fig:statues_template_evol} but with constant 0.5 signal, with the shape of subject 4 and with subject 2.}
\label{fig:statues_templates}
\end{figure}

\subsection{OCT dataset}\label{part:OCT}

We now present a template estimation performed on a real dataset. The dataset contains surfaces segmented from volumetric images of the retina acquired by optical coherence tomography (OCT) as described in \cite{OCTseg,OCT}. The aim of these measurements is to detect early glaucoma by analysing changes of conformations of the inner layers of the retina around the optic disc (where the optic nerve slots into the retina). Data are fshapes:  the surfaces represent the lower boundary of the nerves fiber layer (NFL) and the signals represent the thickness of the NFL. A loss of thickness may be an indicator of glaucoma. We depict below two typical observations in two different views: Data 1 is a normal subject in Figures \ref{figData11}-\ref{figData12} and Data 2 is a glaucoma subject in Figure \ref{figData21}-\ref{figData22}. The ``view 1'' has the same scale for the 3 axis and ``view 2'' is a flipped version of ``view 1'' where the depth axis has been scaled ($\times 4$) to better represent fine reliefs. 
The typical size of square boundary ranges from $5$ to $7$mm and the signal ranges from $0$ to $0.3$mm. The color scale for the signal is the same for all the pictures of this Section. The overall geometry of the NFL boundary does not seem to be challenging but these surfaces contains many boundaries making the problem hard to handle in practice.  We discuss in Section \ref{part:bd} how we manage problems arising from boundary effects. Moreover, the difficulty is increased by the fact that some observations are not centered and the opening may be closed to the boundary of the acquisition area as in Figure \ref{fig:ResGlauc}. 

We ran our code, using the hypertemplate method described Section \ref{part:HT} on a dataset containing 51 observations (19 normal, 25 glaucomatous and 7 suspects). The observations were preregistered by hand with respect to translations so that the center of the opening is at the origin. We use downsampled data (raw data contains more than 130000 points and 270000 triangles each) and each observation contains about $5000$ points and $9850$ triangles. The hypertemplate is the flat rectangle with a hole and with a null signal depicted Figure \ref{figHT}. It contains $5700$ points and $11100$ triangles. Computations of the mean template $(\bar \x, \bar \f)$ (Figure \ref{figMT1} and \ref{figMT2}) and the deformations took 7 hours (120 iterations) using a server equipped with a Nvidia GTX 555 graphical processor unit.
\begin{figure}[H]
	\centering
	\begin{tabular}{cccc}
		\parbox[b]{.28\textwidth}{\subcaptionbox[.28\textwidth]{Relative positions of the hypertemplate and Data 1 and 2 (Figures \ref{fig:ResNormal} and \ref{fig:ResGlauc}) (view 1) \label{figHT}}{\includegraphics[width=.28\textwidth]{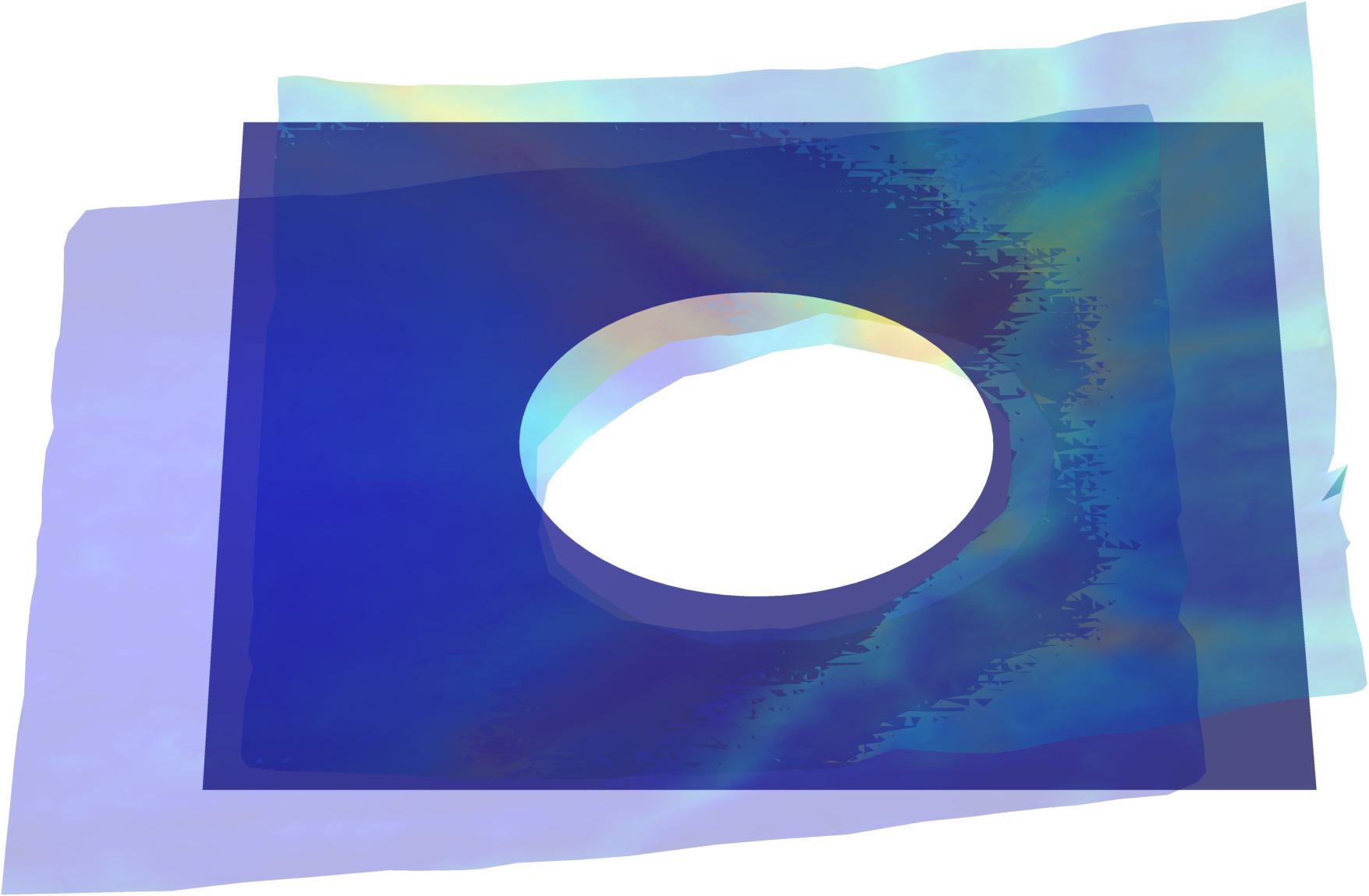}}}  &
		\parbox[b]{.28\textwidth}{\subcaptionbox[.28\textwidth]{Mean template $(\bar \x, \bar \f)$ (view 1)\label{figMT1}}{\includegraphics[width=.28\textwidth]{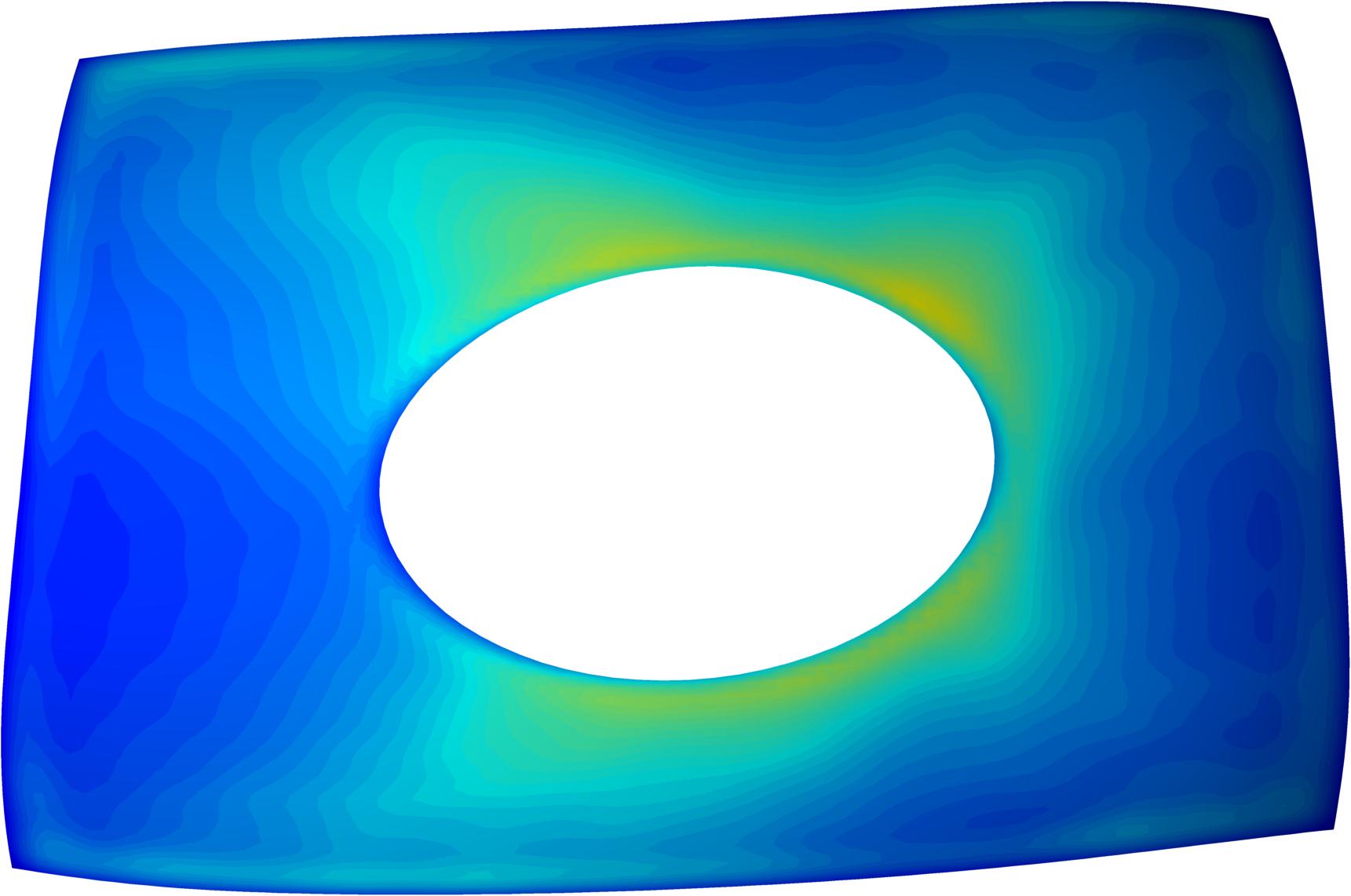}} } &
		\parbox[b]{.28\textwidth}{\subcaptionbox[.28\textwidth]{Mean template $(\bar \x, \bar \f)$ (view 2)\label{figMT2}}{\includegraphics[width=.28\textwidth]{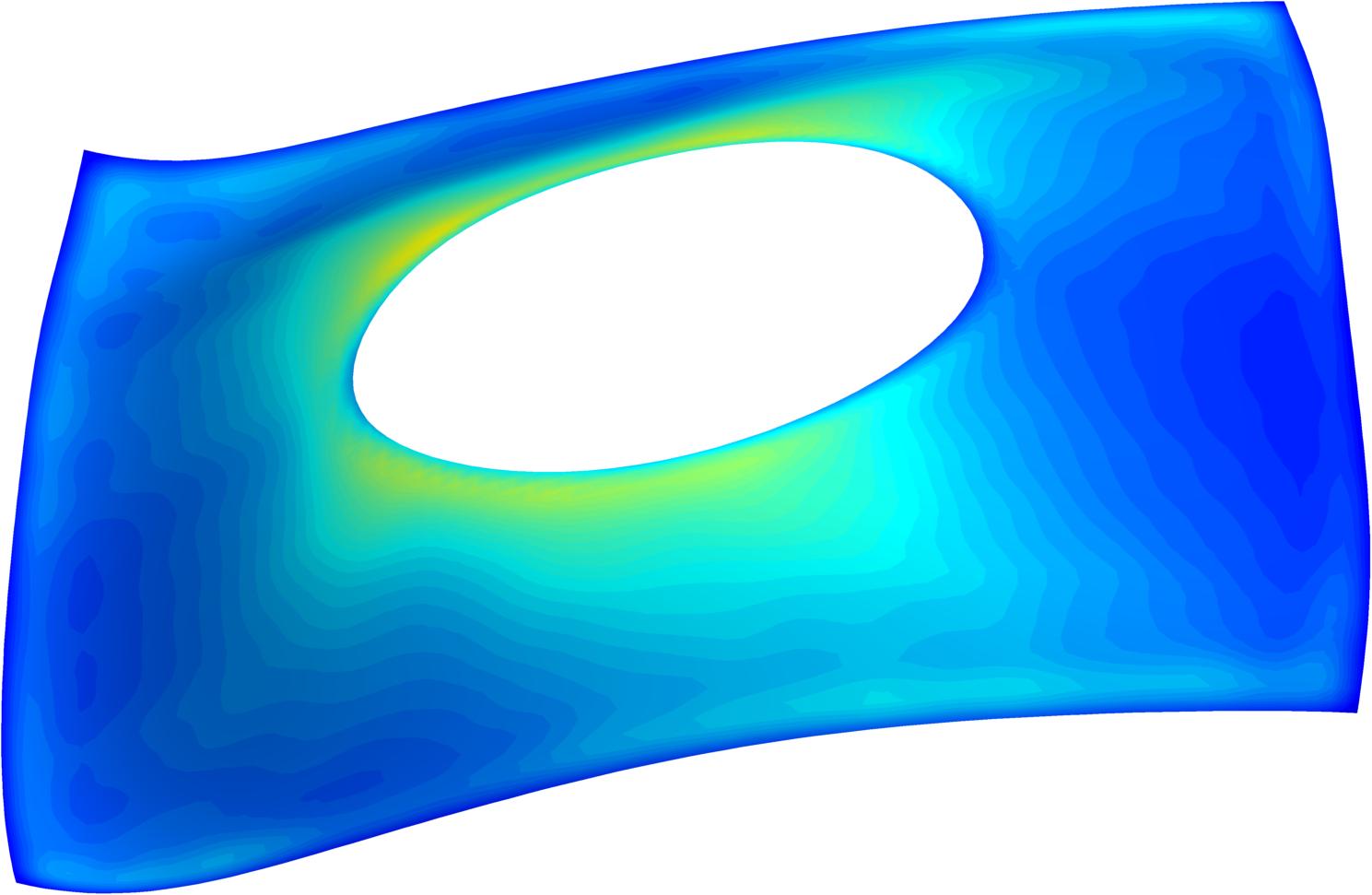}\hspace{1em}}} &
		\hspace{1em}\raisebox{-.3\height}{\includegraphics[height=5cm]{./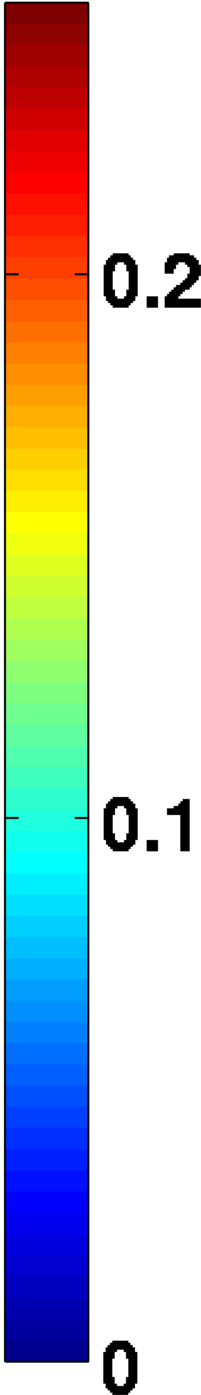}}
	\end{tabular}
		\caption{The hypertemplate and the estimated mean (computed with the full dataset containing 51 observations).\label{fig:OCTdataset}}
\end{figure}


The estimated template is depicted Figures \ref{figMT1} and  \ref{figMT2}. The average shape is a bended version of the flat hypertemplate which is a reasonable guess. The baseline signal $f$ captures a light but typical ``C''-shaped signal. This blurred signal is due to the rather high variability of signals across the dataset. The results of the deformations and functional residuals estimations for Data 1 and 2 are given by Figures \ref{fig:ResNormal} and \ref{fig:ResGlauc} respectively.  The functional part of all the 51 observations is well reconstructed and we are now able to compare these signals as they are all defined on the mean template, see Figure \ref{figSub10} and \ref{figSub20}. The deformations are also satisfying although most of the energy of the deformations is spent to match the (non informative) outer boundaries as data are misaligned, compare Figures \ref{figData10} and \ref{figData20}.
		\begin{figure}[H]
			\centering
			\subcaptionbox[.33\textwidth]{Relative position of data 1 and the mean template (view 1)\label{figData10}}{\includegraphics[width=.28\textwidth]{./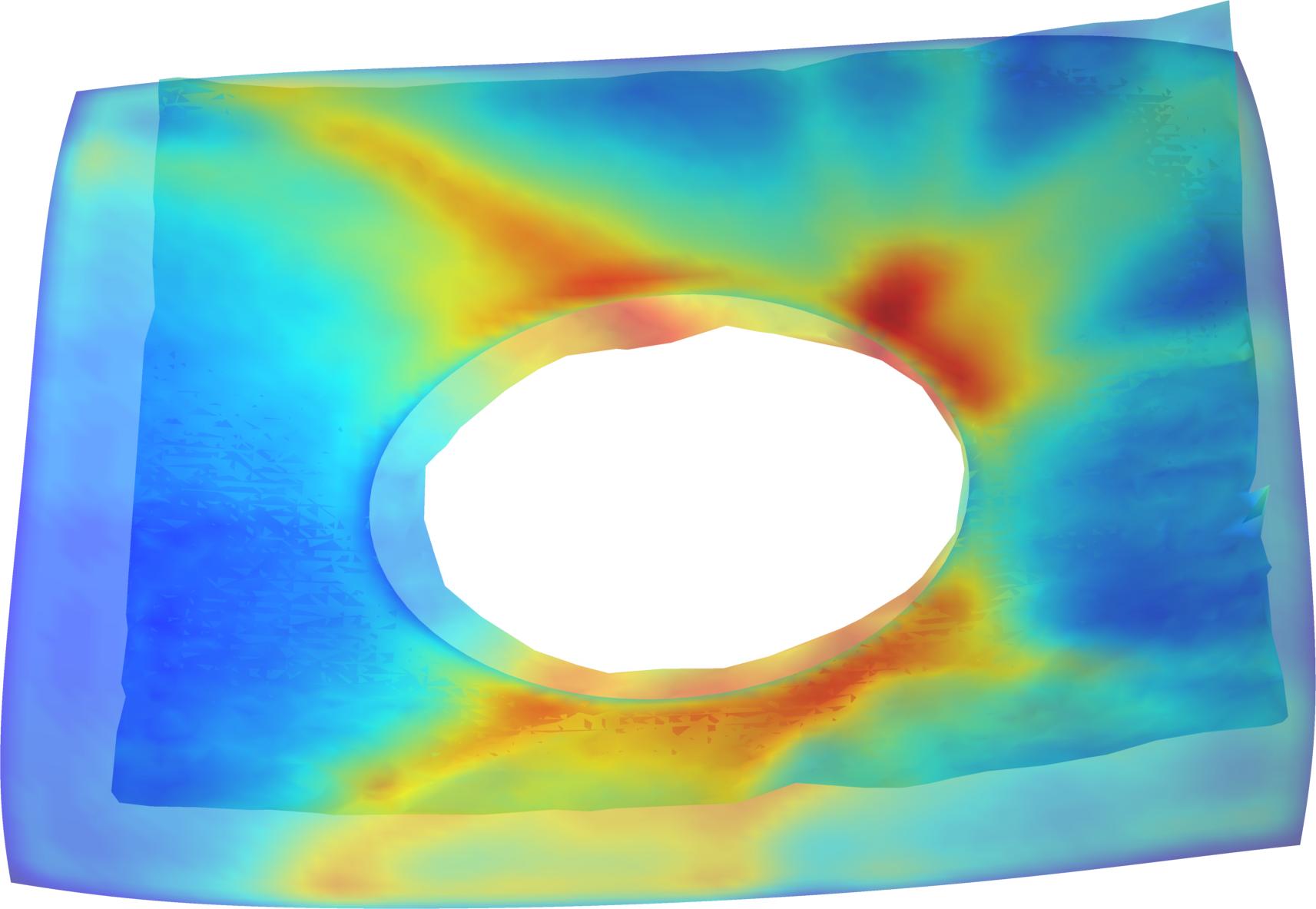}}\hfill
			\subcaptionbox[.33\textwidth]{Data 1 (view 1)\label{figData11}}{\includegraphics[width=.28\textwidth]{./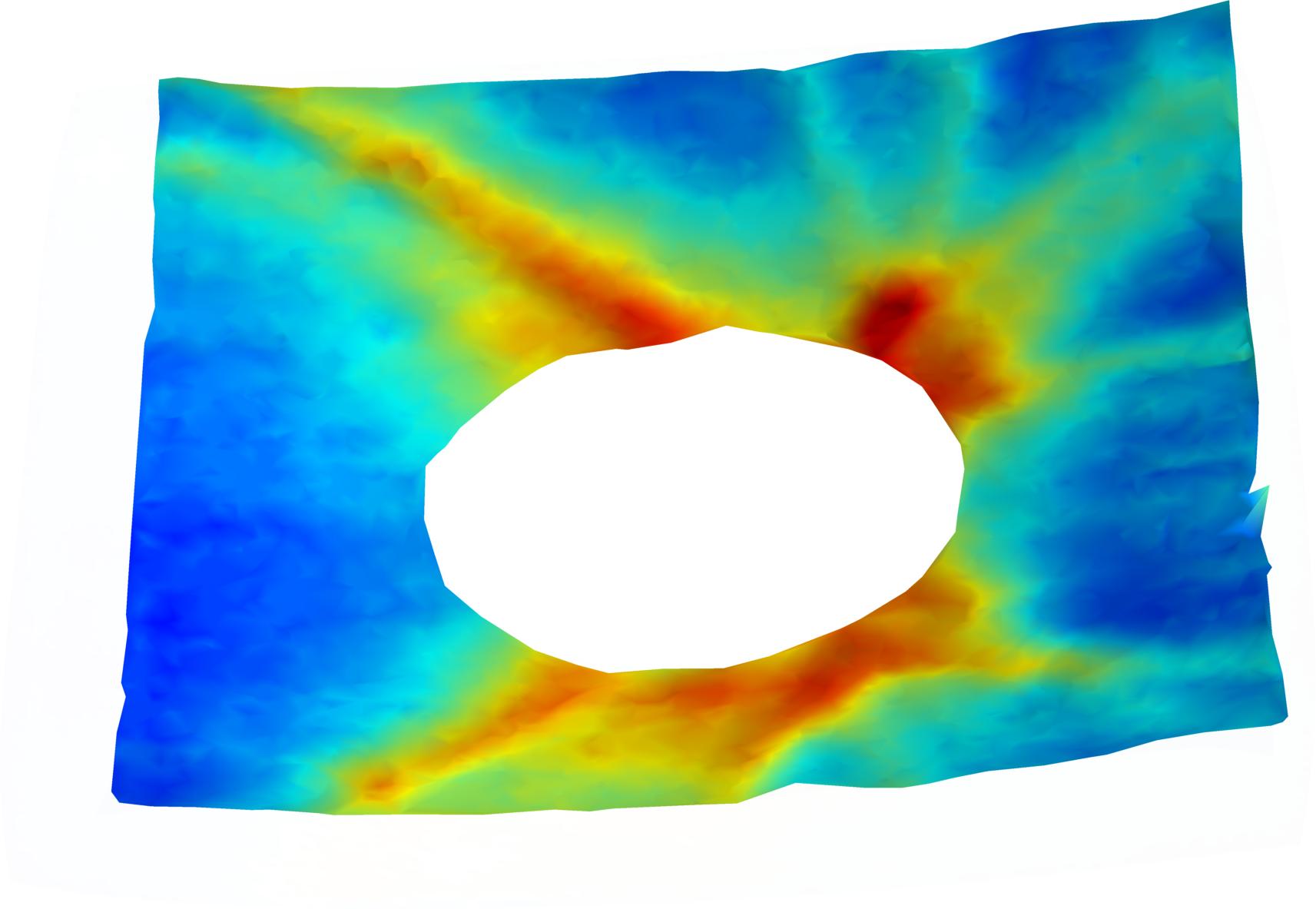}}\hfill
			\subcaptionbox[.33\textwidth]{Data 1 (view 2)\label{figData12}}{\includegraphics[width=.28\textwidth]{./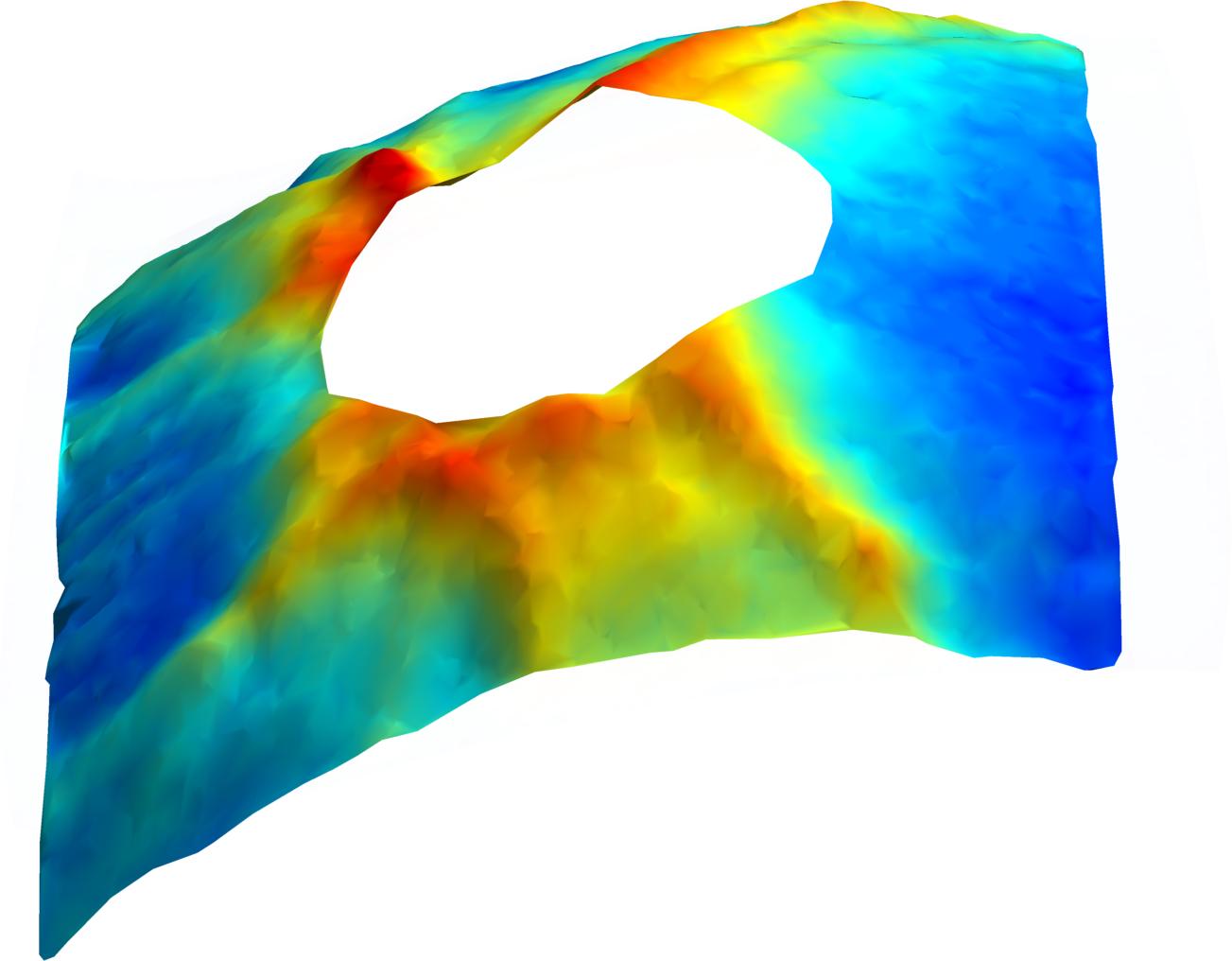}}\\
			\subcaptionbox[.33\textwidth]{Mean template and the residuals $(\bar \x, \bar \f + \bzeta^1)$ (view 1)\label{figSub10}}{\includegraphics[width=.28\textwidth]{./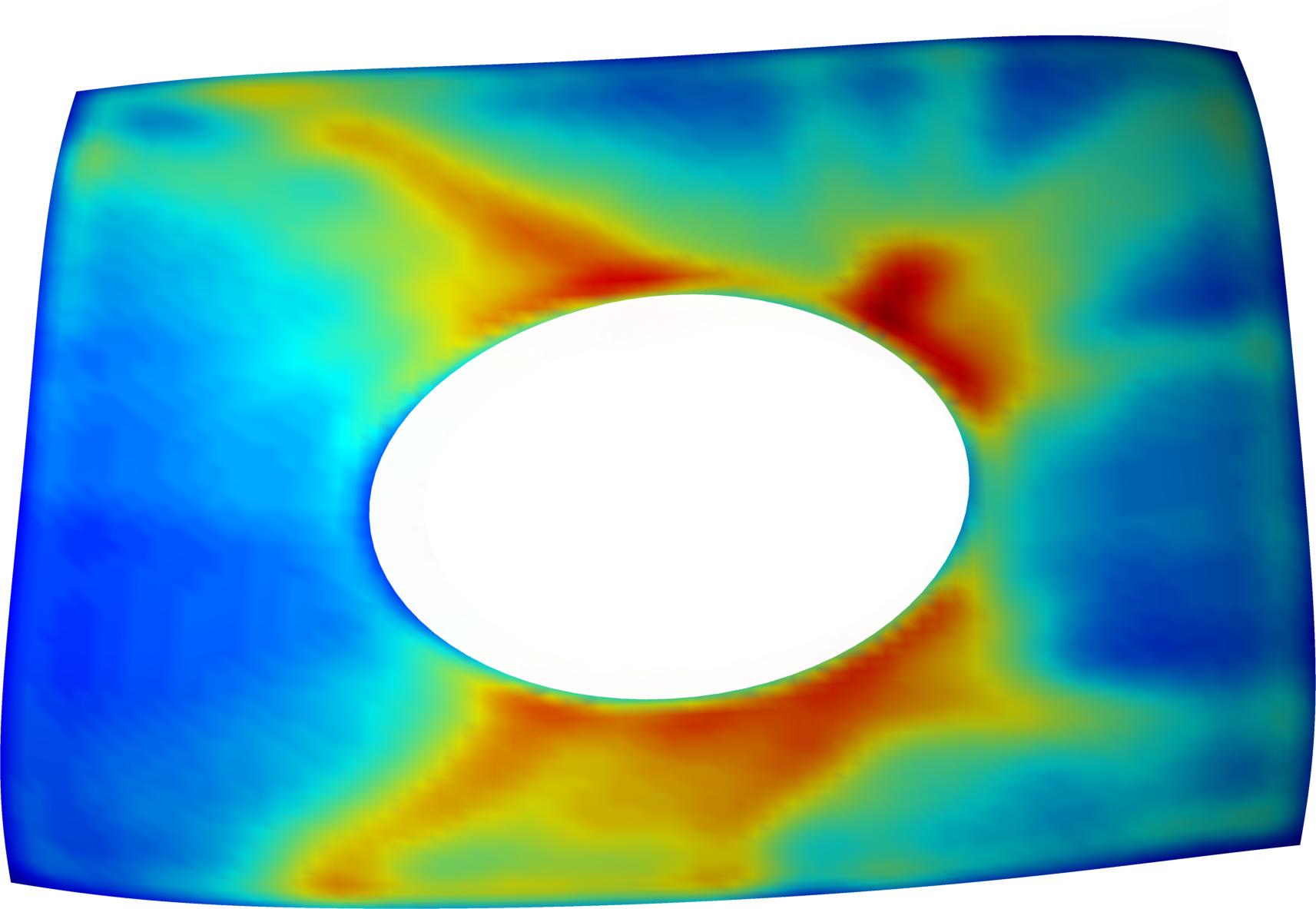}}\hfill
			\subcaptionbox[.33\textwidth]{Deformed mean template and residuals $\phi^{v^{\p^1}}.(\bar \x, \bar \f + \bzeta^1)$ (view 1)\label{figSub11}}{\includegraphics[width=.28\textwidth]{./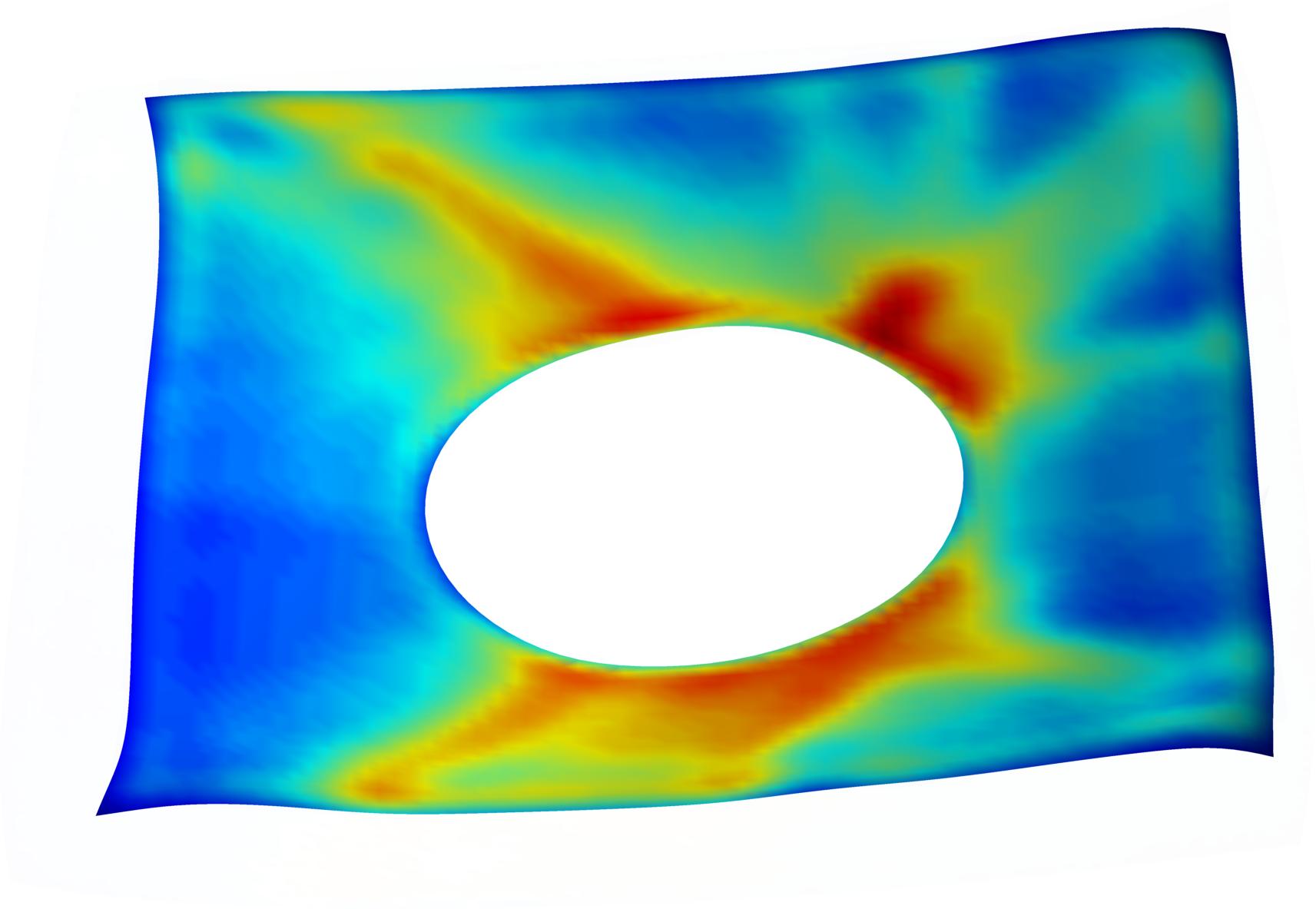}}\hfill
			\subcaptionbox[.33\textwidth]{Deformed mean template and residuals $\phi^{v^{\p^1}}.(\bar \x, \bar \f + \bzeta^1)$ (view 2)\label{figSub12}}{\includegraphics[width=.28\textwidth]{./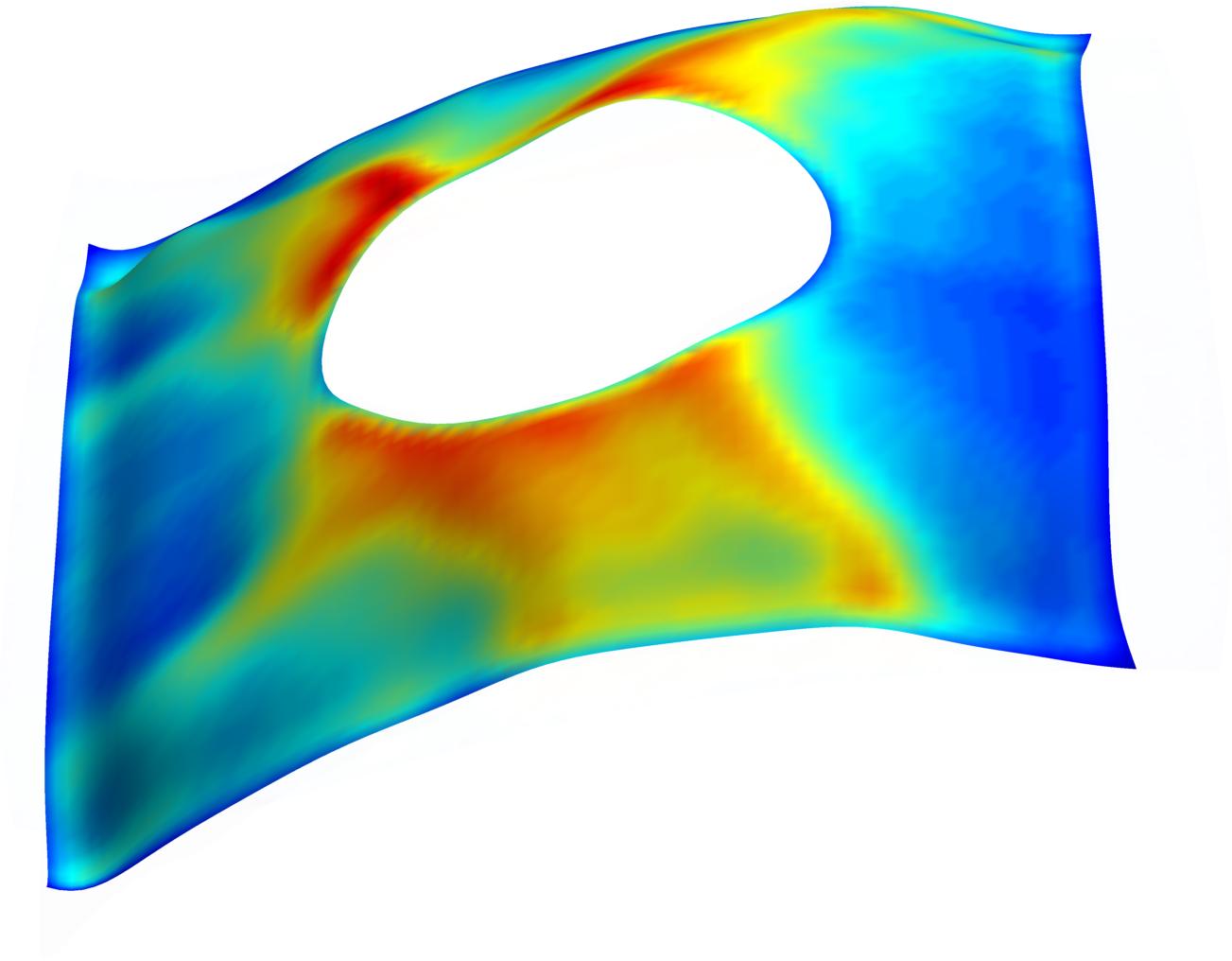}}
			\caption{Results for Data 1 (control dataset). Figure \ref{figSub11} and \ref{figSub12} should be compared with Figure \ref{figData11} and \ref{figData12} respectively.\label{fig:ResNormal}}
		\end{figure}
		\begin{figure}[H]
			\centering
			\subcaptionbox[.33\textwidth]{Relative position of data 2 and the mean template (view 1)\label{figData20}}{\includegraphics[width=.28\textwidth]{./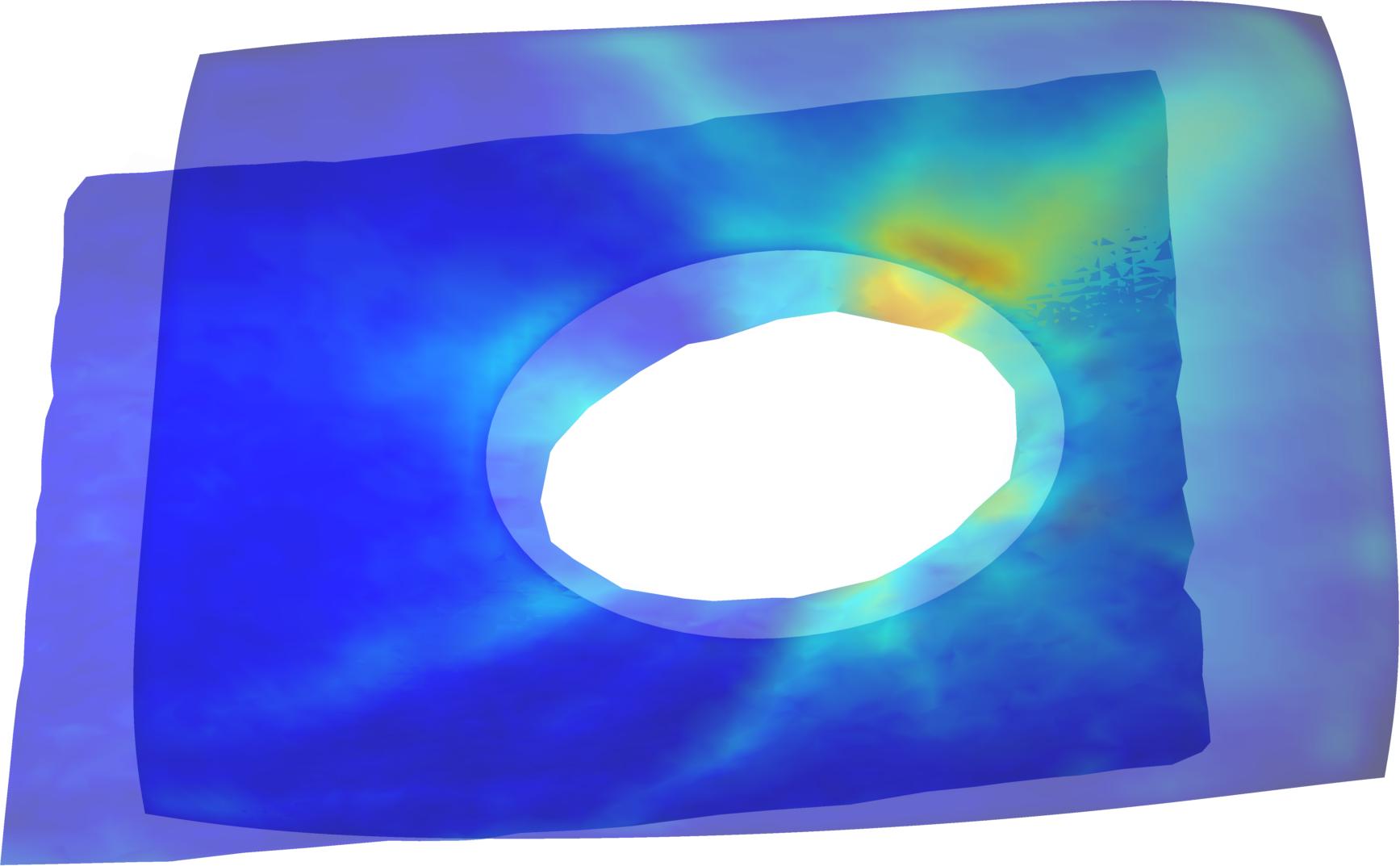}}\hfill
			\subcaptionbox[.33\textwidth]{Data 2 (view 1)\label{figData21}}{\includegraphics[width=.28\textwidth]{./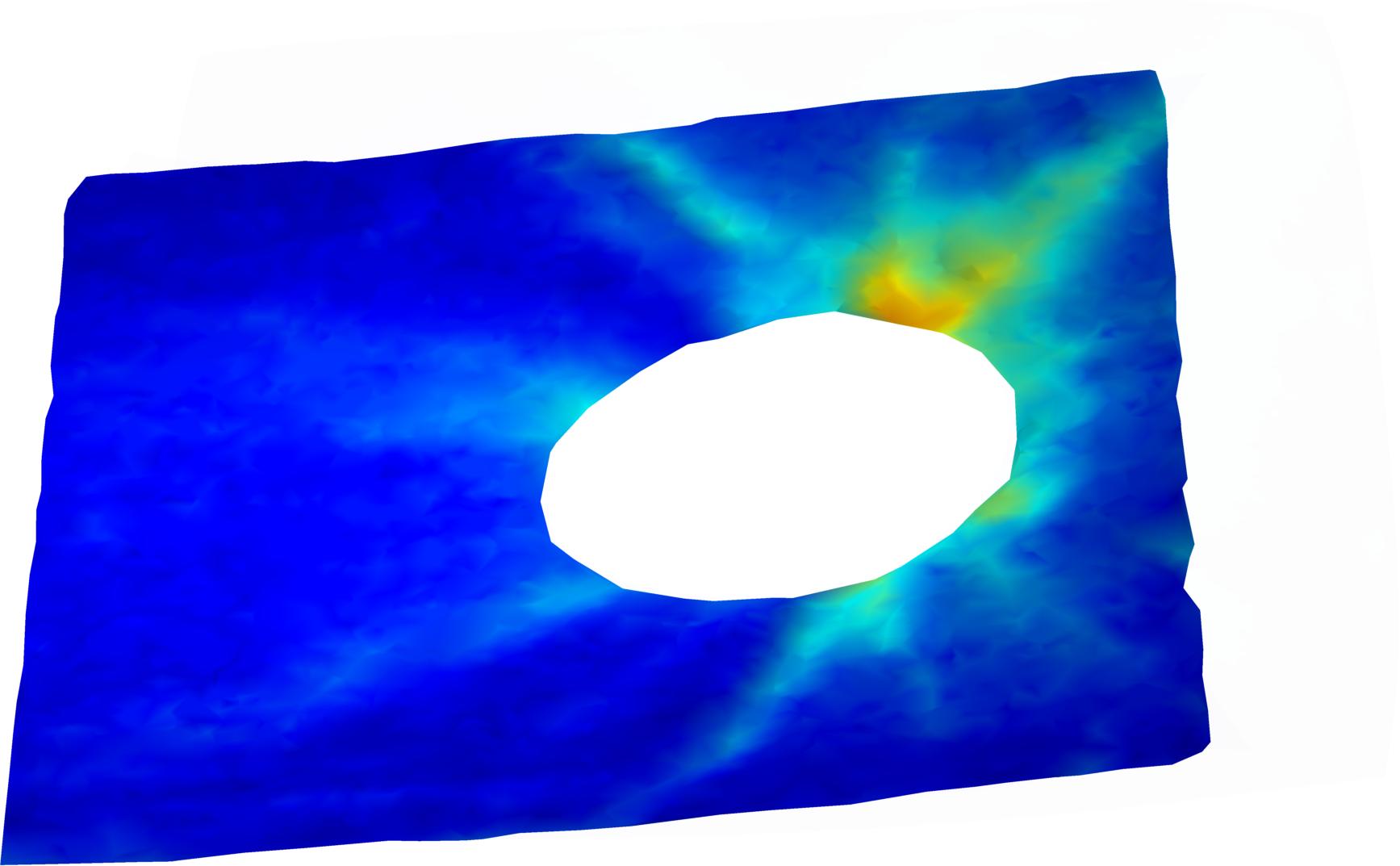}}\hfill
			\subcaptionbox[.33\textwidth]{Data 2 (view 2)\label{figData22}}{\includegraphics[width=.28\textwidth]{./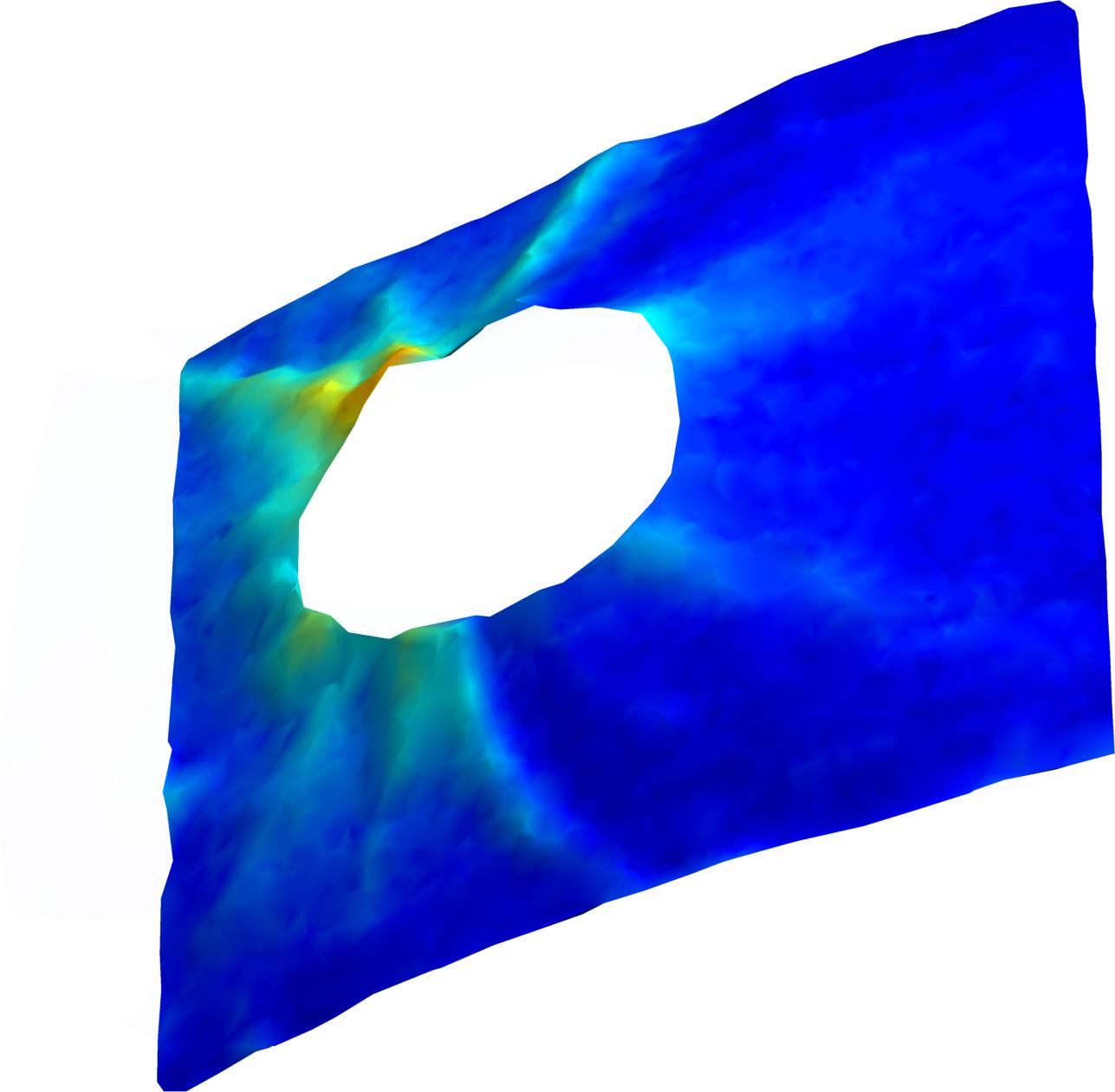}}\\
			\subcaptionbox[.33\textwidth]{Mean template and the residuals $(\bar \x, \bar \f + \bzeta^2)$ (view 1)\label{figSub20}}{\includegraphics[width=.28\textwidth]{./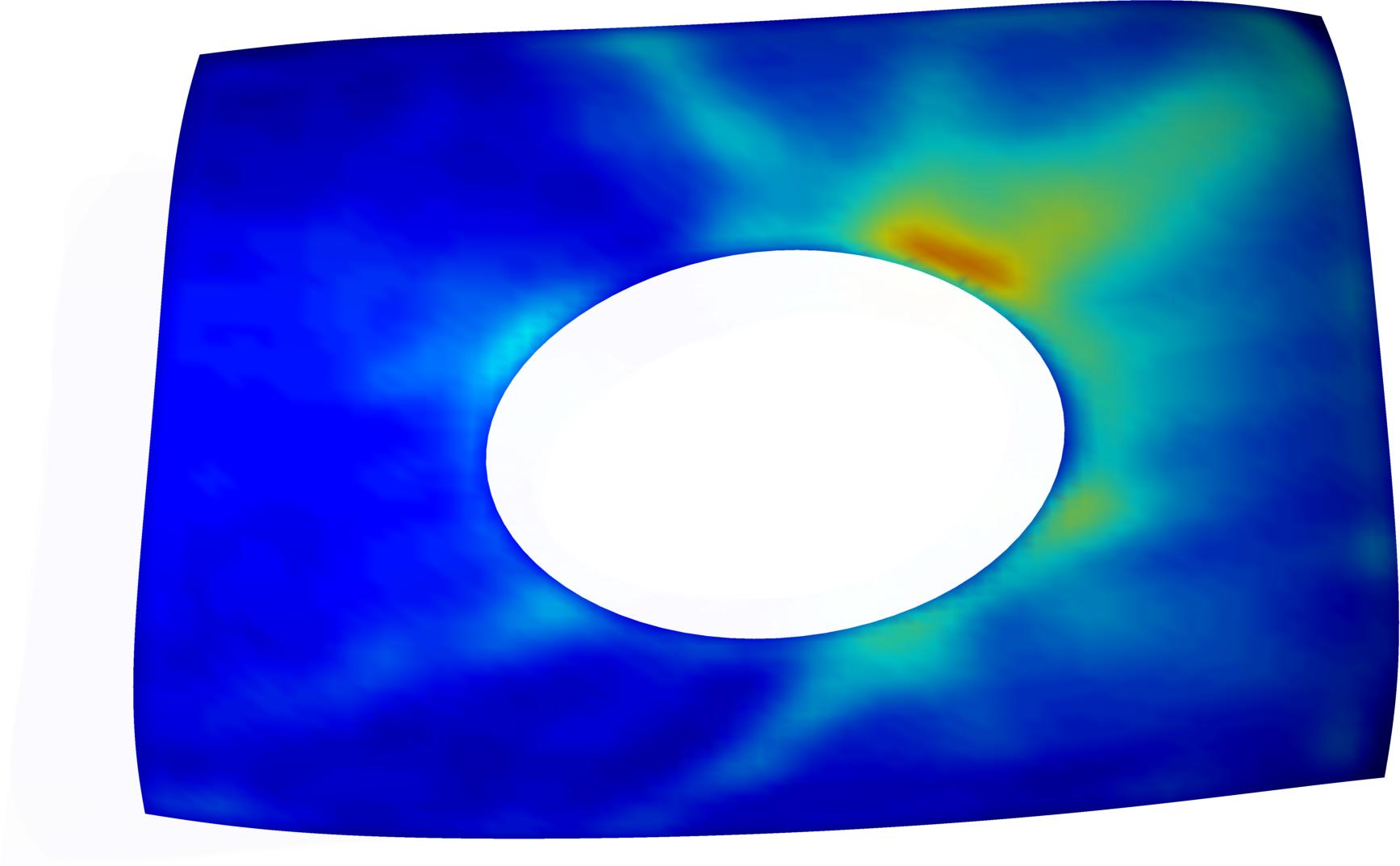}} \hfill
			\subcaptionbox[.33\textwidth]{Deformed mean template and residuals $\phi^{v^{\p^2}}.(\bar \x, \bar \f + \bzeta^2)$ (view 1)\label{figSub21}}{\includegraphics[width=.28\textwidth]{./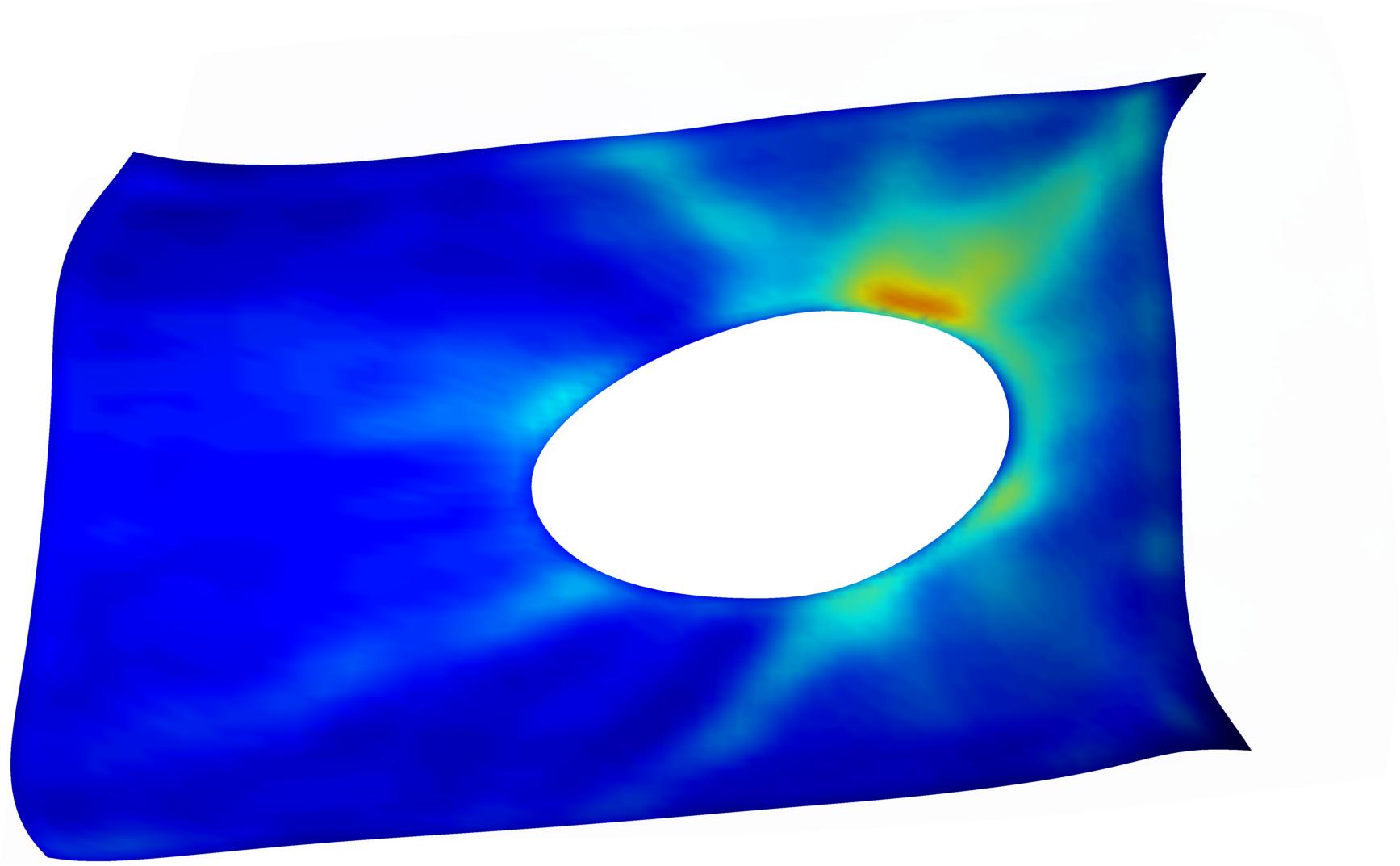}} \hfill
			\subcaptionbox[.33\textwidth]{Deformed mean template and residuals $\phi^{v^{\p^2}}.(\bar \x, \bar \f + \bzeta^2)$ (view 2)\label{figSub22}}{\includegraphics[width=.28\textwidth]{./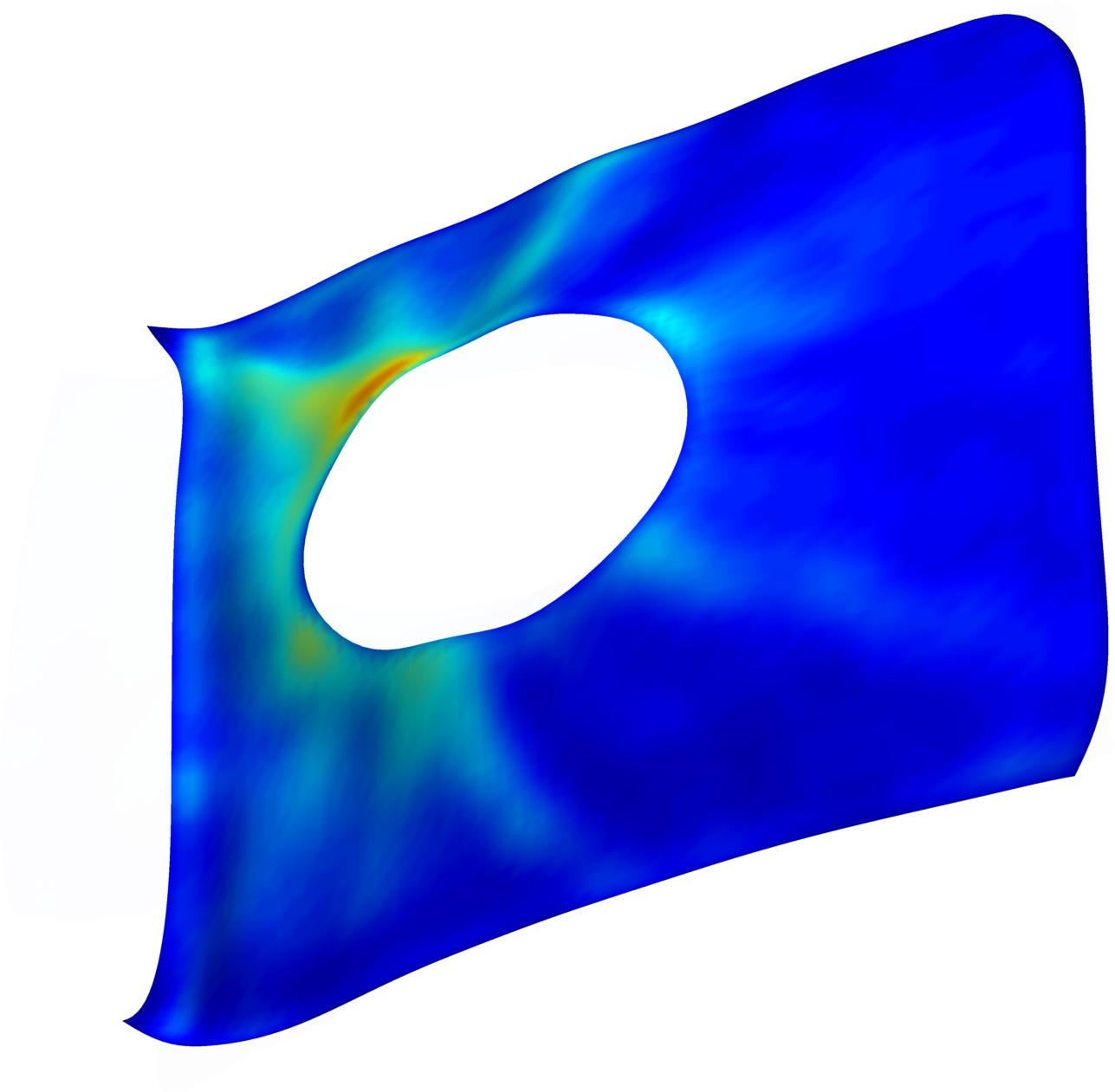}}
			\caption{Results for Data 2 (glaucoma dataset). Figure \ref{figSub21} and \ref{figSub22} should be compared with Figure \ref{figData21} and \ref{figData22} respectively.\label{fig:ResGlauc}}
		\end{figure}

\section{Numerical pitfalls}\label{part.traps}

In this section, we discuss some issues affecting the quality of the mean template estimation. We illustrate these problems by numerical experiments and propose some solutions to fix them.

\subsection{Mass cancellation}\label{part:massCancel}

The varifold norm was first introduced in the pure geometrical setting to avoid mass cancellation appearing with the classical current norms. As orientation matters in the currents' setting, a surface with positive area sufficiently crumpled may have an arbitrarily small current norm. This phenomenon is common in practical applications and particularly during an atlas estimation. This is described in detail in the introduction of \cite{Charon2} or in the Chapter 3 of \cite{Charon_thesis}. In the (pure geometrical) varifold setting, the norm of a surface cannot decrease too much when two pieces of surface are folded: Theorem $3.4.1$ in \cite{Charon_thesis} is in some sense a reciprocal inequality to Proposition \ref{prop:control_RKHSnorm_totalvar}. Unfortunately, there is no such a control for the fvarifold norm as one can exhibit a functional surface of positive area and non-zero signal with small fvarifold norm.

We follow here the notations introduced in Section \ref{part:discreteScheme}. Recall that, in formula \eqref{eq.discreteprs}, the inner product of two fvarifolds was approximated by a double sum of inner products of Diracs. We assume hereafter that the kernel fvarifold inner product satisfies the assumptions of Proposition \ref{prop:kernel_fvar}. Therefore, by formula \eqref{eq_dot_product_diracs} we have, for any Diracs $\delta_{(x_{1},V_{1},f_{1})}$ and $\delta_{(x_{2},V_{2},f_{2})}$:
\begin{equation}\label{eq:prsDiracs}
	\abs{\langle \delta_{(x_{1},V_{1},f_{1})}, \delta_{(x_{2},V_{2},f_{2})} \rangle_{W'}} \leq  \sabs{k_{e}(\cdot,\cdot)}_\infty \sabs{k_{t}(\cdot,\cdot)}_\infty  k_{f}(f_{1},f_{2}) \,.
\end{equation}
The right hand side of \eqref{eq:prsDiracs} may be arbitrarily small when $|f_1 - f_2|$ is large as $k_f(f_1,\cdot)$ is continuous and vanishes at infinity. It means, in particular, that sufficiently high variations in the signal make any two Diracs orthogonal with respect to the fvarifold inner product. Now, let $(\x,\f)$ be a finite polygonal mesh approximated by a finite sum of Diracs $\sum_{\ell=1}^{\nT} r_\ell \delta_{(\hat x_{\ell},V_{\ell},\hat f_{\ell})}$. Equations \eqref{eq:dot_product_rect_var} and \eqref{eq:prsDiracs} yield  
\begin{equation}\label{eq.normX}
	\snorm{\mu_{(\x,\f)}}_{W'}^2 \leq \sabs{k_e(\cdot,\cdot)}_\infty \sabs{k_t(\cdot,\cdot)}_\infty \Big( \sum_{\ell=1}^{\nT} r_\ell^2 k_f(\hat f_{\ell},\hat f_{\ell}^{})  +  \sum_{\ell,\ell'=1,\ell \neq \ell'}^{\nT_{}} r_{\ell}r_{\ell'} k_f(\hat f_{\ell},\hat f_{\ell'}^{}) \Big).
\end{equation}
From here on, we consider the case where $k_e$, $k_t$ and $k_f$ are the Gaussian kernels of equation \eqref{eq.GaussKern}. Let $0<\varepsilon <1$ and assume that $\hat f_\ell = K \sigma_f \ell$ for any $\ell=1,\ldots,\nT$ and $K= \sqrt{-\ln \varepsilon}$. Inequality \eqref{eq.normX} becomes, in that case~:
\begin{align*}
	\snorm{\mu_{(\x,\f)}}_{W'}^2 & \leq \sum_{\ell=1}^{\nT} r_{\ell}^2 + \sum_{\ell,\ell'=1, \ell\neq\ell'}^{\nT} r_{\ell}r_{\ell'} \exp(-K^2 |\ell -\ell'|^2) \\
	&\leq C_1(\nT) + \varepsilon  C_2(T), 
\end{align*}
where $C_1(\nT) =\sum_{\ell=1}^{\nT} r_{\ell}^2 $ and $C_2(\nT) = \sum_{\ell,\ell'=1, \ell\neq\ell'}^{\nT} r_{\ell}r_{\ell'}$. Thence, if $\nT$ remains fixed and $K \to \infty$ the fvarifold norm of $(\x,\f)$ is less of equal to $C_1(\nT)$. Assuming that the fshape $(\x,\f)$ is regularly discretized with $\nP$ points, we may consider that $r_\ell \approx \nP^{-1}$ and $C_1(\nT) \approx \nP^{-1}$. When the discretization of a fshape becomes finer, the number $\nT$ of triangles increases and the fvarifold norm of $(\x,\f)$ may be arbitrarily small. 

To illustrate this, we consider the discrete version of the variational problem of Proposition \ref{prop:JP1} in the simple case where $N=1$ and $\gamma_W=2$. It gives:
\begin{equation}\label{eq.Jfun}
	\begin{cases}
		\min_{\f\in\R^{\nP_{\x}}}\limits \bJ_{fun}(\f) \\
		\text{where } \\
		\bJ_{fun}(\f) =  \frac{\gamma_f}{2} \sabs{\f}^2_{\x} + \snorm{\mu_{(\x,\f)} - \mu_{(\y,\g)}}^2_{W'} 
	\end{cases}
\end{equation}
where the source fshape $(\x,\f)\in E^{\nP_{\x}} \times \R^{\nP_{\x}}$ and the target fshape $(\y,\g)\in E^{\nP_{\y}} \times \R^{\nP_{\y}} $ are two flat overlapping squares belonging to the same plane, see  Figure \ref{fig:sourceTarget}. The source $(\x,\f)$ contains $\nP_{\x} = 6400$ vertices and the target $(\y,\g)$ contains $\nP_{\y} = 400$ vertices distinct from the vertices of $\x$. 
The meaning of \eqref{eq.Jfun} is the following: we are trying to register $(\x,\f)$ onto the fixed target $(\y,\g)$ by tuning the signal $\f\in \R^{\nP_{\x}}$ of the source only. This is not obvious to figure out what a good solution of this problem should be. Note that Proposition \ref{prop:JP1} ensures the existence of a proper solution of the continuous version of \eqref{eq.Jfun} if $\gamma_f$ is large enough. 

\begin{figure}[H]
	\centering
	\begin{tabular}{cccc}
		\parbox[t]{.28\textwidth}{\begin{subfigure}{.28\textwidth}
			\centering
			\includegraphics[width=3cm]{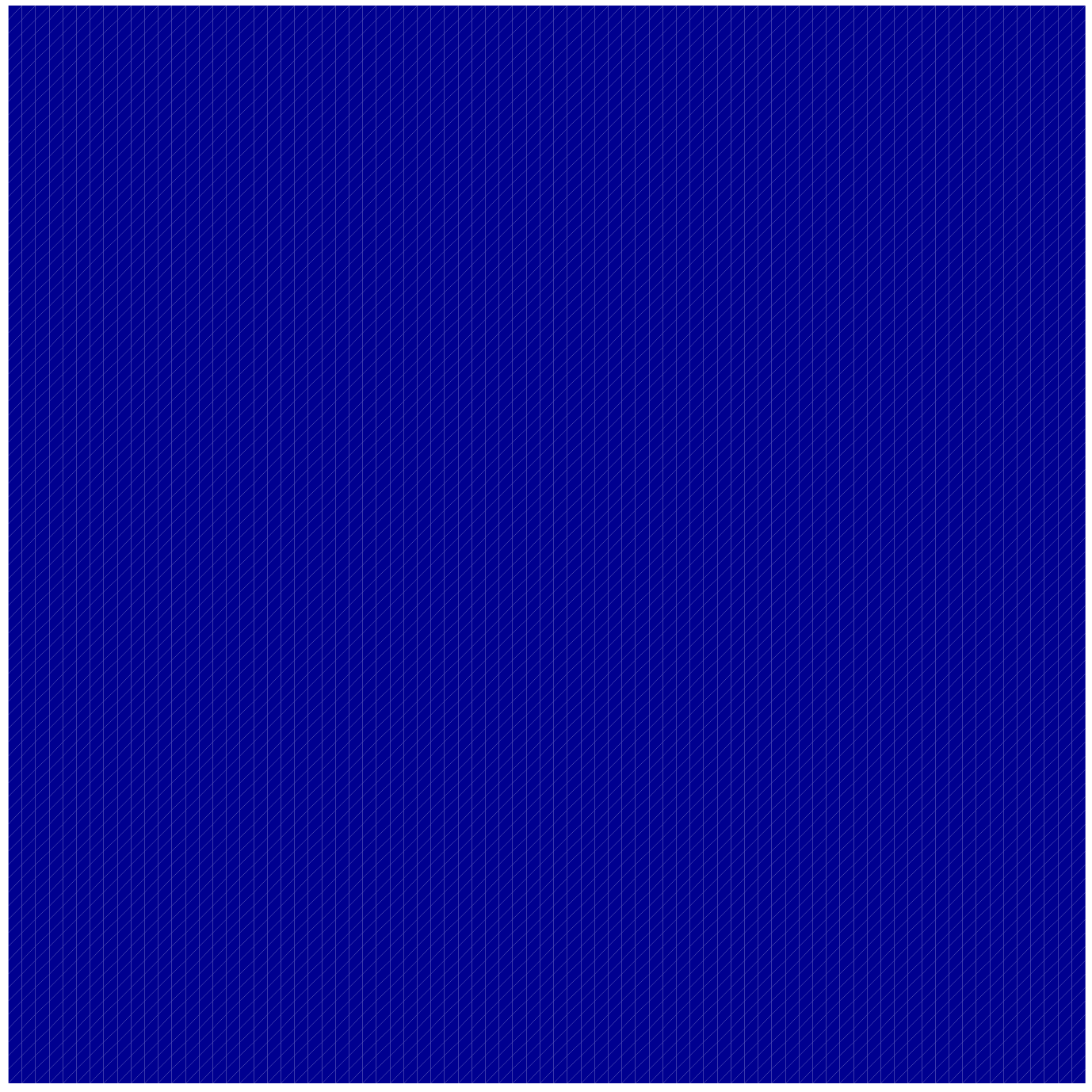}
			\caption{The source $\left(\x,\f \right)$}\label{figa}
		\end{subfigure}}&
		\parbox[t]{.28\textwidth}{\begin{subfigure}{.28\textwidth}
			\centering
			\includegraphics[width=3cm]{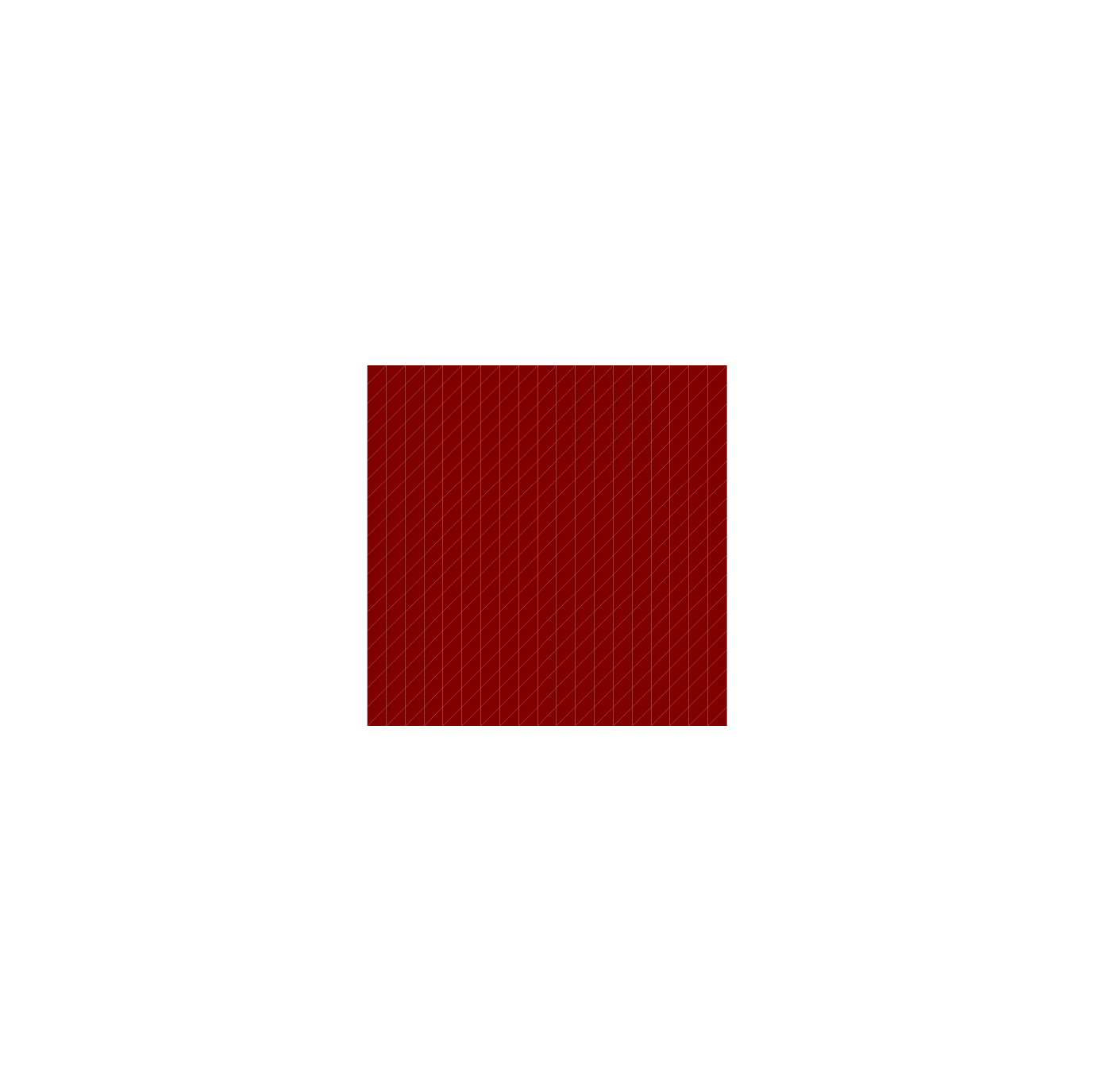}
			\caption{The target $\left( \y,\g \right)$}\label{figb}
		\end{subfigure}}&
		\parbox[t]{.28\textwidth}{\begin{subfigure}{.28\textwidth}
			\centering
			\includegraphics[width=3cm]{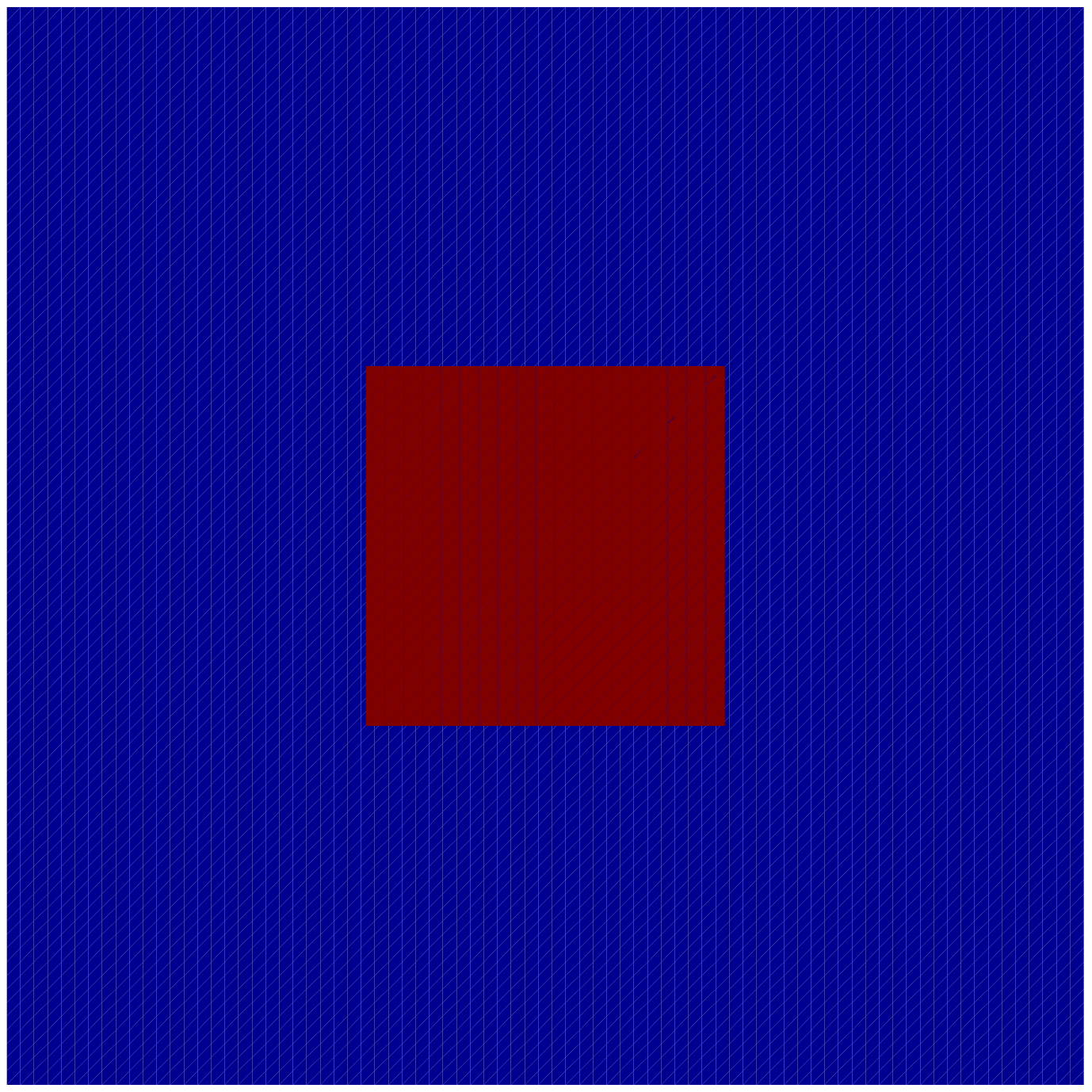}
	\caption{$\left( \x,\f \right)$ and $\left( \y,\g \right) $} \label{figc}
		\end{subfigure}
		}&
		\raisebox{-.4\height}{\includegraphics[height=3.8cm]{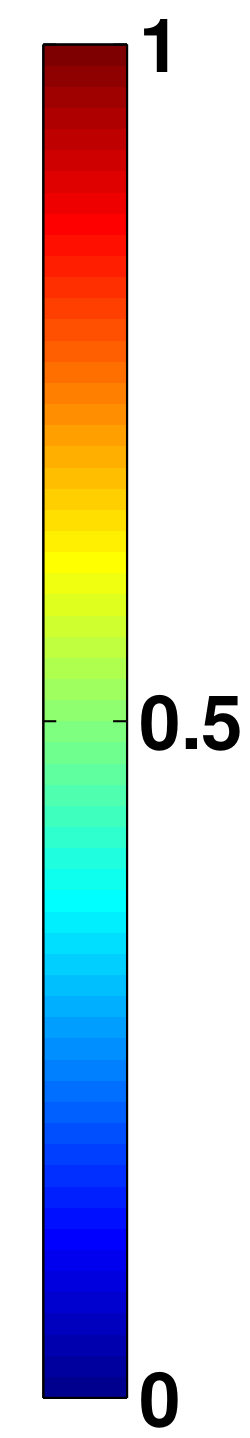}}
	\end{tabular}
		\caption{The source and target fshapes are respectively a big and a small flat square both lying in the $yz$-plane.  The initial source signal is 0 and the target signal is 1. Figure \ref{figc} simply depicts the two fshapes together.}
	\label{fig:sourceTarget}
\end{figure}

In Figure \ref{fig:oscill}, we show three outputs of an adaptive gradient descent on $\bJ_{fun}$ with various choices of penalty parameter $\gamma_f$. We denote $\f_* \in \R^{\nP_{\x}}$ the solution found after 800 iterations. Figure \ref{fig.Resa} shows the result with no penalty term, {\it i.e} $\gamma_f=0$. The signal $\f_*$ is almost equals to 1 on the overlapping square whereas $\f_*$ oscillates dramatically on the complementary part to decrease the fvarifold norm of the remaining part of $\x$. Figure \ref{fig.Resb} shows that with a small penalty, we are able to recover a signal close to 1 on the central square but oscillations are still present though weaker compared to Figure \ref{fig.Resa}. Finally, Figure \ref{fig.Resc} shows the solution $\f_*$ found with a larger $\gamma_f$. The oscillations have almost disappeared and the signal remains small in the non-overlapping part (in blue). The price to pay is a lower intensity in the central square (around $0.25$).
	
\begin{figure}[H]
	\centering
	\begin{subfigure}{.32\textwidth}
	\includegraphics[width=\textwidth]{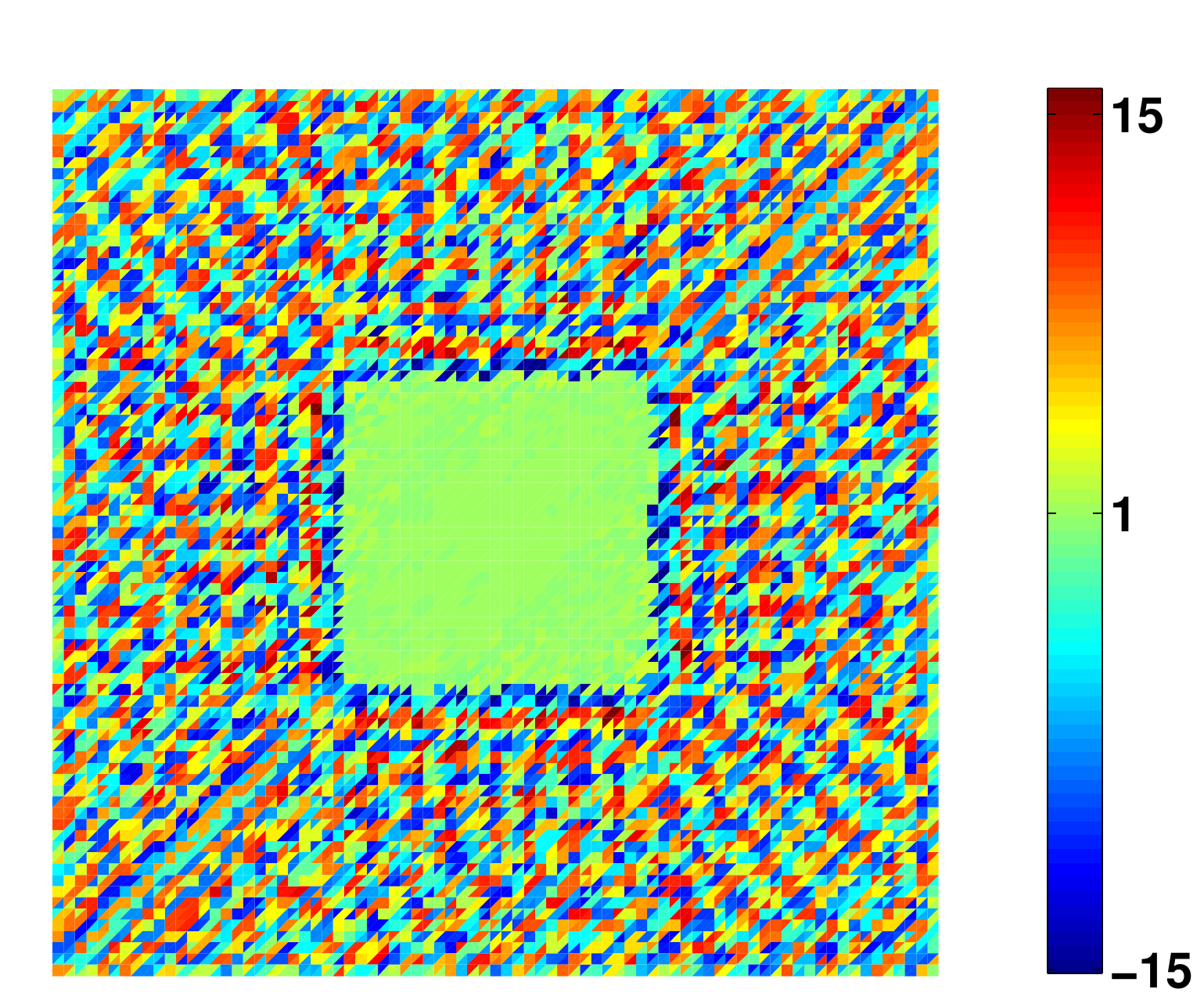}
	\caption{$\gamma_f = 0$\label{fig.Resa}}
		\end{subfigure}
		\begin{subfigure}{.32\textwidth}
	\includegraphics[width=\textwidth]{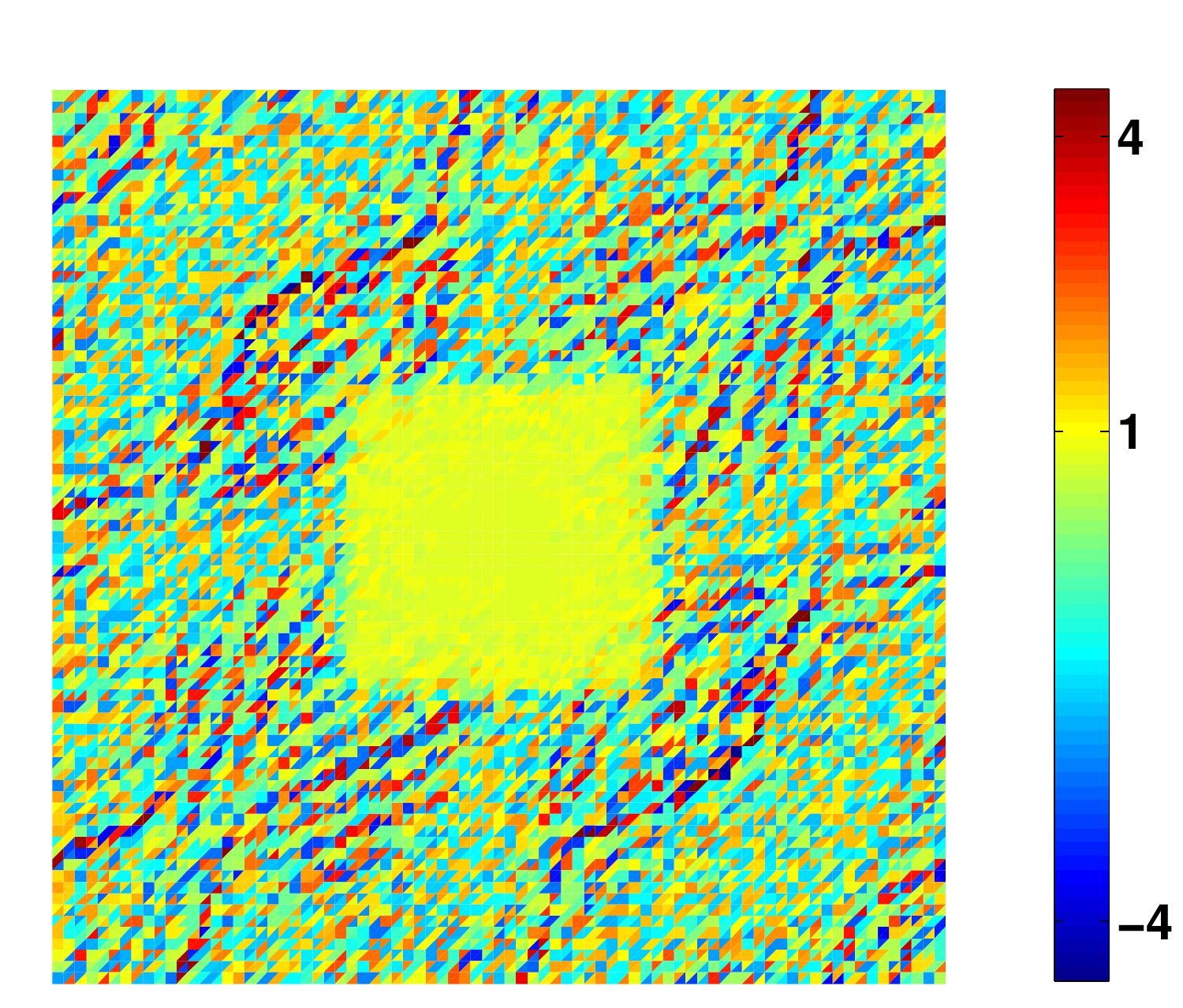}
	\caption{$\gamma_f = 4$\label{fig.Resb}}
		\end{subfigure}
		\begin{subfigure}{.32\textwidth}
	\includegraphics[width=\textwidth]{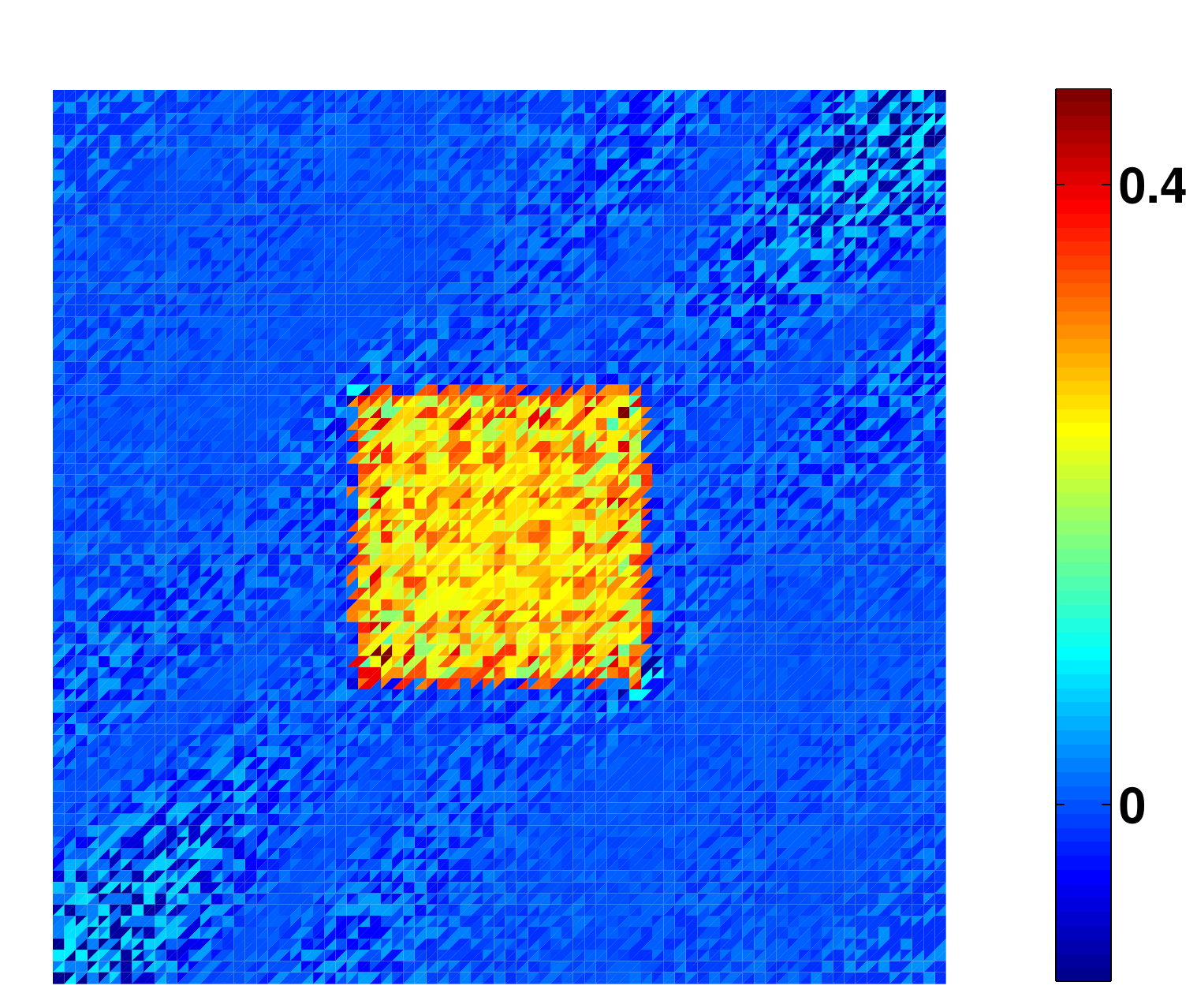}
	\caption{$\gamma_f = 20$\label{fig.Resc}}
		\end{subfigure}
		\caption{The fshape $(\x,\f_*)$ where $\f_*$ is the output of a gradient descent on $\bJ_{fun}$ of equation \eqref{eq.Jfun} with various value of $\gamma_f$.}
	\label{fig:oscill}
\end{figure}

Note that this example is an extreme case but such a problem arises in real datasets. In medical imaging, it is common to get functional data acquired on a subpart of a surface only. Even with a rigid registration during the preprocessing, subparts with a well defined signal may be highly non-overlapping on two different subjects. This may yield to oscillating solutions if the gradient descent is used naively. We note that in practice, regularizing the gradient with respect to $\f$ may fix this issue in some cases. The method is similar to the one described at the end of Section \ref{part:bd} for the geometric part of the gradient.  

\subsection{Boundary problems} \label{part:bd}

In this section, we discuss some problems arising when the ``free'' mean template method of Section \ref{part:AlgoFree} is used with fshapes containing boundaries. An issue concerns the gradient (with respect to $\x$) of the data attachment terms $g_i$ (see Algorithm \ref{algo:Jfreetan}) - note that the following discussion is relevant both for the purely geometric and functional varifolds or currents framework. The values of the signal has no influence here and we assume without loss of generality that the signals are 0. Typical examples of data with boundaries (inspired by the dataset presented in Section \ref{part:OCT}) are depicted in Figure \ref{fig:gradS}.

\begin{figure}[H]
	\centering
	\subcaptionbox[.48\textwidth]{Initialization $(\x_{init},0)$ \label{fig:source}}{\includegraphics[width=5cm]{./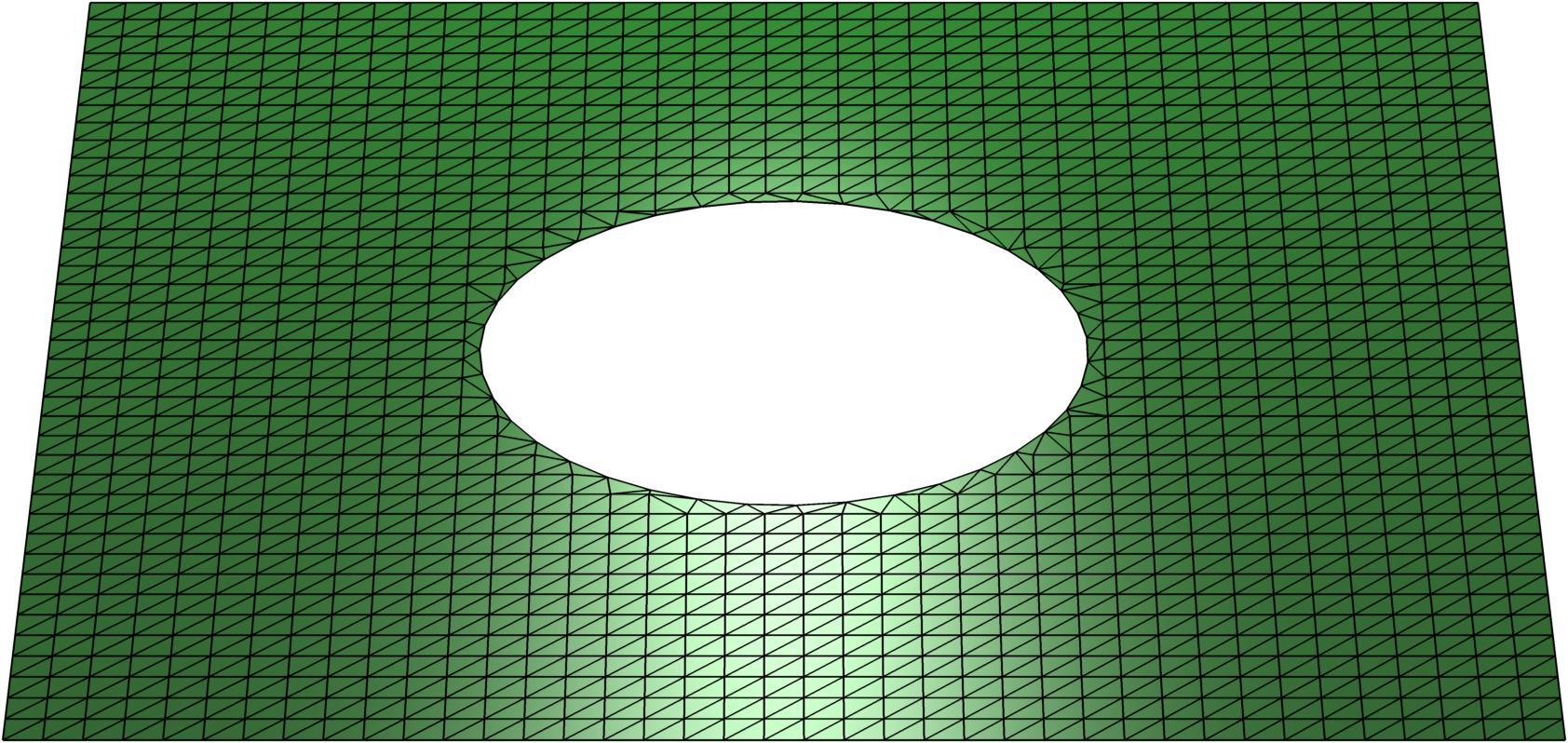}} \hspace{5em}
\subcaptionbox[.48\textwidth]{Target $(\y,0)$ \label{fig:target}}{\includegraphics[width=5cm]{./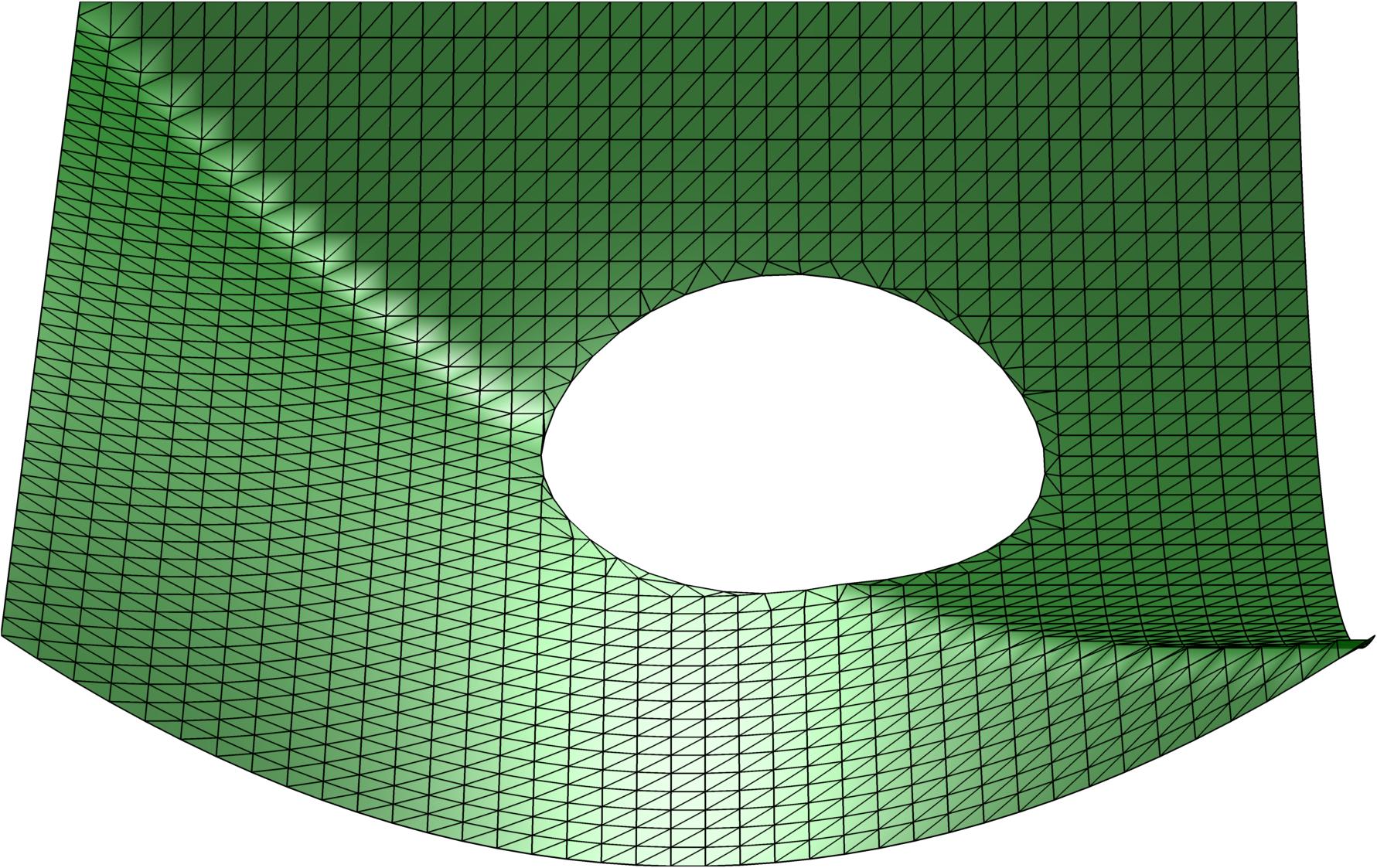}}
\caption{The initial fshape $(\x_{init},0)$ is a flat square with a centered opening. The target fshape $(\y,0)$ is a (non smooth) bended version of $(\x_{init},0)$. All the signals are zero.\label{fig:gradS}}	
\end{figure}

To illustrate the problem that one can face, let $E=\R^3$ and $(\y,0) \in E^{\nP_{\y}}\times \R^{\nP_{\y}}$ be a fixed discrete target fshape as in Figure \ref{fig:target}. We consider the following minimization problem,
\begin{equation}\label{eq:minJg}
	\begin{cases}
		\min_{\x\in E^{\nP}}\limits \bJ_{geo}(\x) \\
		\text{ where } \\
		\bJ_{geo}(\x) = \snorm{\mu_{(\x,0)} - \mu_{(\y,0)} }_{W'}^2,
	\end{cases}
\end{equation}
Our goal is to study the behaviour of a gradient descent in the variable $\x=(x_k)_{k=1}^{\nP} \in E^{\nP}$ on $\bJ_{geo}$ starting from an initial discrete fshape  $(\x_{init},0)$. In numerical experiments, the gradient descent on $\bJ_{geo}$ does not perform well if the fshape $(\x_{init},0)$ contains boundaries as in Figure \ref{fig:source}. The boundary of a discrete fshape is the set of vertices that are an end of an edge belonging to a single triangle (see Figure \ref{fig:bd}).

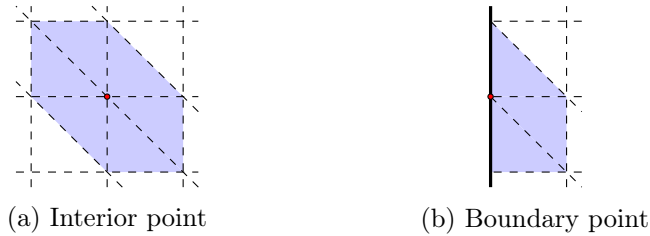
\begin{figure}[H]
	\begin{center}
		\begin{subfigure}{7cm}
			\centering
						\begin{tikzpicture}[scale=1.0]

				\fill[blue!20] (0,0) -- (-1,1) -- (0,1);
				\fill[blue!20] (0,0) -- (1,-1) -- (1,0);
				\fill[blue!20] (0,0) -- (0,-1) -- (-1,0);
				\fill[blue!20] (0,0) -- (1,-1) -- (0,-1);
				\fill[blue!20] (0,0) -- (1,0) -- (0,1);
				\fill[blue!20] (0,0) -- (-1,0) -- (-1,1);

				\draw[dashed] (-1.2,0) -- (1.2,0);
				\draw[dashed] (-1.2,1) -- (1.2,1);
				\draw[dashed] (-1.2,-1) -- (1.2,-1);

				\draw[dashed] (0,-1.2) -- (0,1.2);
				\draw[dashed] (-1,-1.2) -- (-1,1.2);
				\draw[dashed] (1,-1.2) -- (1,1.2);

				\draw[dashed] (1.2,-1.2) -- (-1.2,1.2);
				\draw[dashed] (1.2,-0.2) -- (-0.2,1.2);
				\draw[dashed] (.2,-1.2) -- (-1.2,0.2);

				\draw[fill=red] (0,0) circle (1pt);

			\end{tikzpicture}\caption{Interior point}
		\end{subfigure}
		\begin{subfigure}{4cm}
			\centering
						\begin{tikzpicture}[scale=1.0]
				\fill[blue!20] (0,0) -- (1,-1) -- (1,0);
				\fill[blue!20] (0,0) -- (1,-1) -- (0,-1);
				\fill[blue!20] (0,0) -- (1,0) -- (0,1);



				\draw[dashed] (1.2,0) -- (0,0);
				\draw[dashed] (1.2,1) -- (0,1);
				\draw[dashed] (1.2,-1) -- (0,-1);

				\draw[very thick] (0,-1.2) -- (0,1.2);
				\draw[dashed] (1,-1.2) -- (1,1.2);

				\draw[dashed] (-0,0) -- (1.2,-1.2);
				\draw[dashed] (-0,1) -- (1.2,-0.2);

				\draw[fill=red] (0,0) circle (1pt);
			\end{tikzpicture}\caption{Boundary point}
		\end{subfigure}
		\end{center}
		\caption{An example of a typical Delaunay triangulation and the characterization of a boundary point \label{fig:bd}}
	\end{figure}

The main issue on the gradient of $\bJ_{geo}$ is the following. The norm of $\frac{\partial}{\partial x_k} \bJ_{geo}$ has different orders of magnitude depending on the location of the points $x_k$ in the fshape: gradient of boundary points may be much larger than gradient of interior points, see Figure \ref{fig:gradUnbalance}. These unbalanced values between interior points and boundary points in the gradient induce undesirable effects such as self-crossings or changes in topology during the gradient descent as in Figure \ref{fig:targetAA}. Note also that the situation becomes worse when the number $\nP$ of points in $\x$ increases.

\begin{figure}[H]
	\centering
	\subcaptionbox[.45\textwidth]{The arrows represent the initial gradient $\nabla_{x} \bJ_{geo}(\x_{init})$ on $(\x_{init},0)$. The color represent the norm of this gradient (logarithmic scale).\label{fig:gradUnbalance}}{\includegraphics[width=6cm]{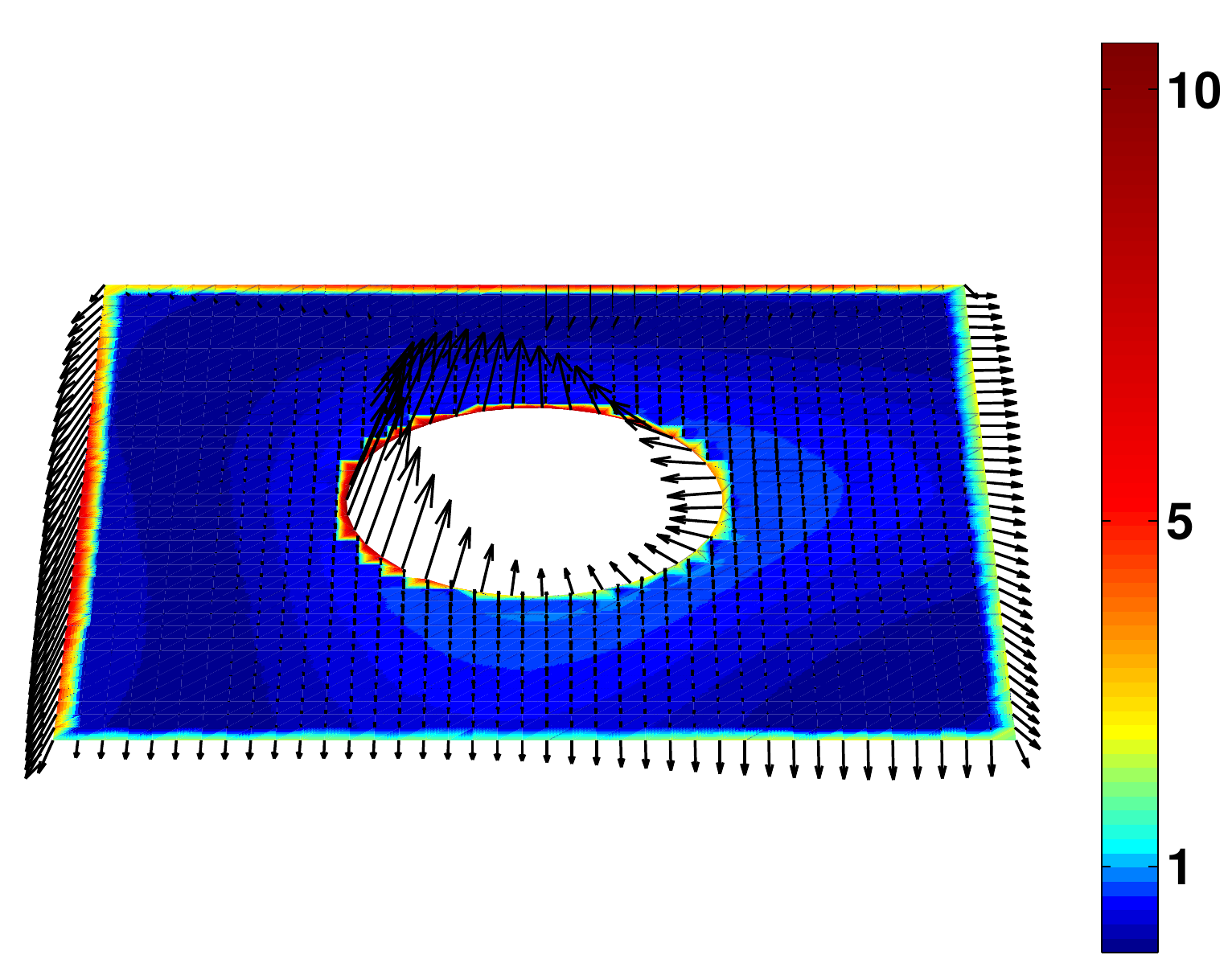}}\hspace{5em}
	\subcaptionbox[.45\textwidth]{Output of the gradient descent after few steps (has to be compared with \ref{fig:target}).\label{fig:targetAA}}{\includegraphics[width=5cm]{./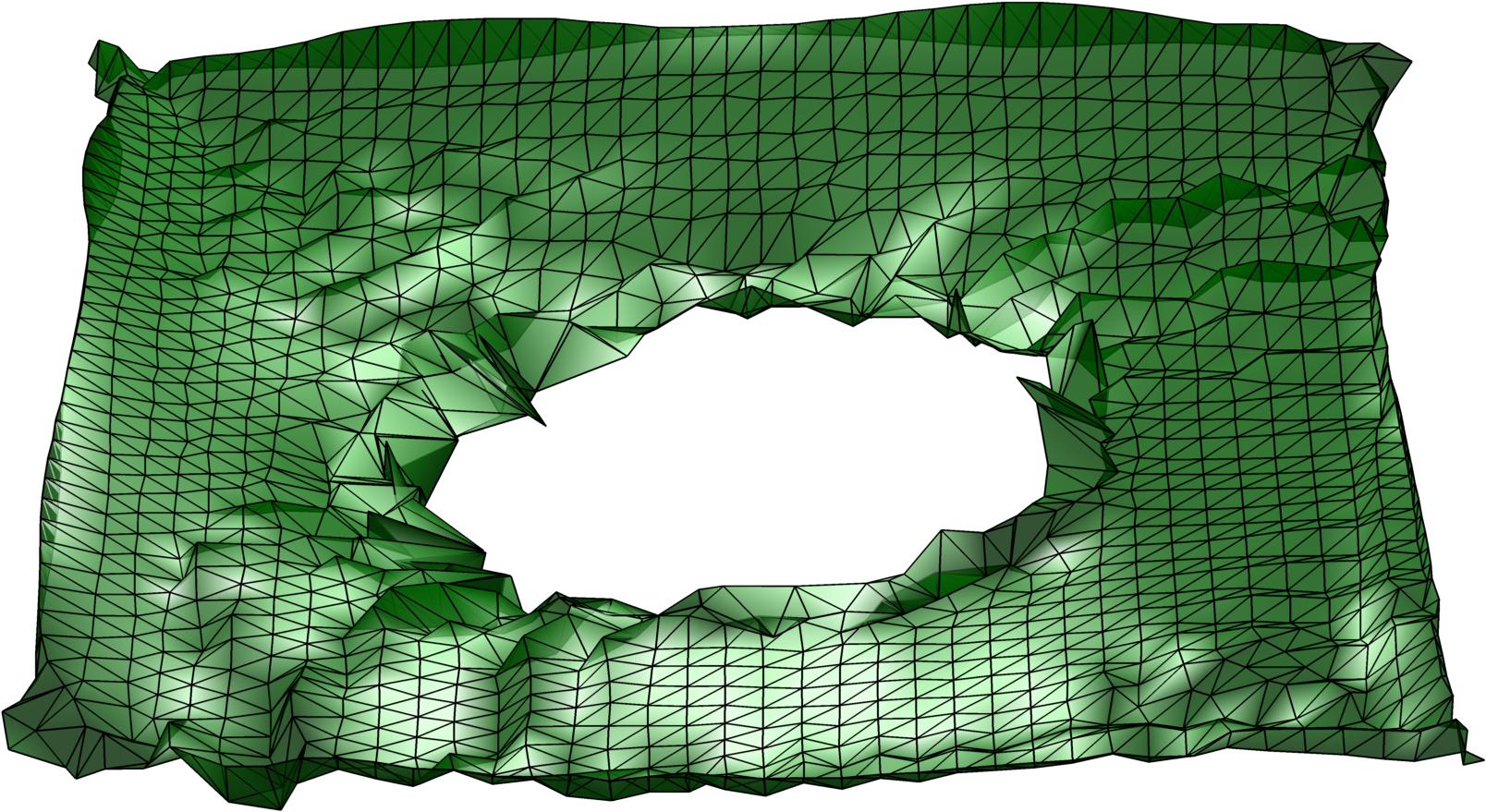}}
	\caption{Gradient descent on $\bJ_{geo}$ of equation \eqref{eq:minJg} starting from $(\x_{init},0)$ of Figure \ref{fig:source}.\label{fig:BPb2}}
\end{figure}

In order to explain the singularities of $\nabla_{\x} \bJ_{geo}$ at the boundary points, notice first that this is sufficient to focus on the gradient of the fvarifold inner product as we have $\nabla_{\x} \snorm{\mu_{(\x,\f)} - \mu_{(\y,\g)} }_{W'}^2 = \nabla_{\x} \prs{ \mu_{(\x,\f)},\mu_{(\x,\f)} }_{W'} - 2\nabla_{\x}\prs{\mu_{(\x,\f)},\mu_{(\y,\g)}}_{W'}$. Nevertheless, it is hard  to clearly understand this phenomenon by looking at the discrete formulas directly. It is then convenient to use the results of Section \ref{part:varForm} whose aim is to analyse the variations of the fvarifold inner product in the continuous setting. Thence, we may consider that $(\x,0)$ (resp. $(\y,0)$) is sampled from a continuous fshape $(X,0)$ (resp. $(Y,0)$) and that $\prs{\mu_{(\x,0)},\mu_{(\y,0}}_{W'}$ approximates its continuous counterpart $\prs{\mu_{(X,0)},\mu_{(Y,0)}}_{W'}$. We further assume that $(\x,0)$ is regularly discretized, so that the typical size of the interior volume elements $r_\ell$ is of order $P^{-1}$ and the typical size of the boundary volume elements is of order $\nP^{-\frac{\dT -1}{\dT}}$.

Formula \eqref{eq:variation_formula} gives an expression of the variations of $\prs{\mu_{(X,0)},\mu_{(Y,0)}}_{W'}$ when $(X,0)$ moves. The right hand side of \eqref{eq:variation_formula} is composed by two separated integral terms whose domains of integration are $X$ and its boundary $\partial X$ respectively. This formula then makes clear the differences between interior points and boundary points. A discrete approximation of the integral over the interior of $X$ should then involve terms of order $\nP^{-1}$ (\textit{i.e.} the volume of the surface element), whereas a discrete approximation of the integral over the boundary $\partial X$ should involve terms of order $\nP^{-\frac{\dT -1}{\dT}}$ (corresponding to the size of the volume element of the boundary). This heuristic explains why there exists a multiplicative factor between the size of the gradient at interior points and boundary points. We also know that this factor is of order $\nP^{\frac{1}{\dT}}$ meaning that the difference increases when the resolution of the discrete fshape $(\x,0)$ increases. 

\begin{figure}[h]
	\centering
	\subcaptionbox[.45\textwidth]{The arrows represent the initial regularized gradient $\tilde \nabla_{\x} \bJ_{geo}(\x_{init})$ on $(\x_{init},0)$. The color represent the norm of the gradient.\label{fig:RegGrad}}{\includegraphics[width=6cm]{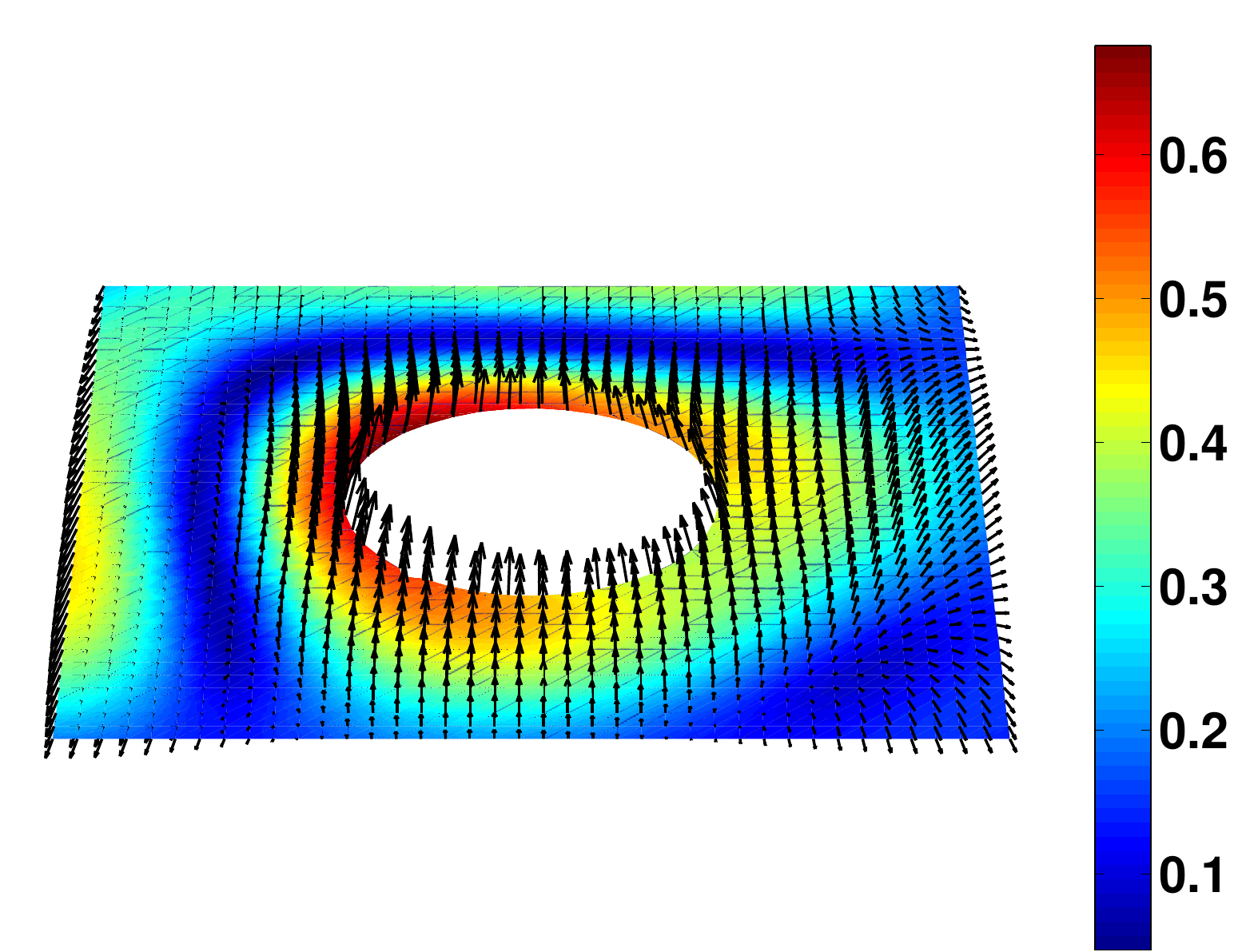}} \hspace{5em} 
	\subcaptionbox[45\textwidth]{Output of the regularized gradient descent after few steps (has to be compared with \ref{fig:target}).\label{fig:BPb3ee}}{\includegraphics[width=5cm]{./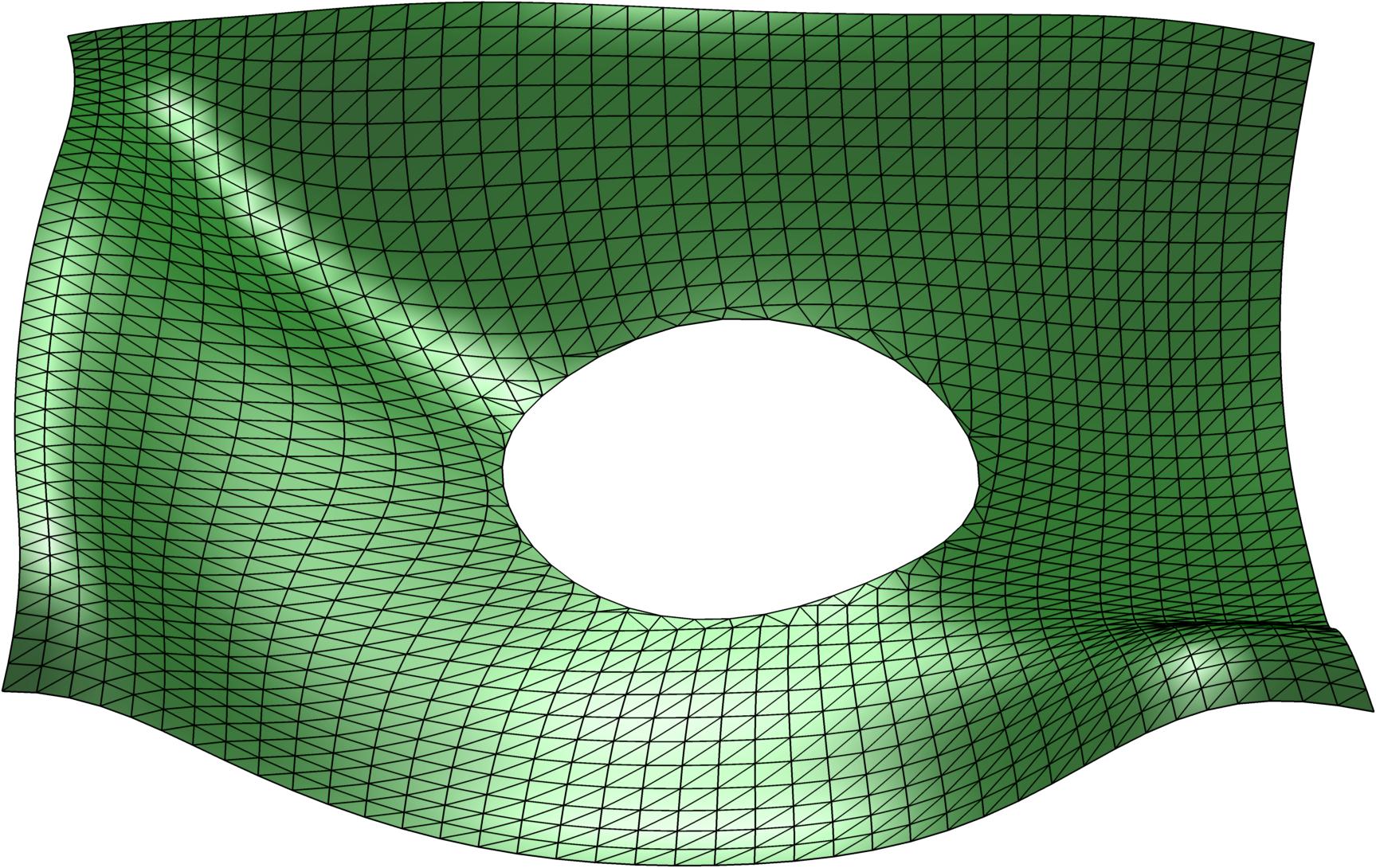}}
	\caption{Regularized gradient descent on $\bJ_{geo}$ of equation \eqref{eq:minJg} starting from $(\x_{init},0)$ of Figure \ref{fig:source}.\label{fig:BPb3}}
\end{figure}

Thus the gradient $\nabla_{\x} \bJ_{geo}$ computed thanks to the Euclidean metric involves singular terms on the boundary. A convenient way to address this problem is to compute a gradient \textit{via} a more regular metric. Let us introduce a (conveniently normalized) Gaussian kernel $K_{reg} : \R^{\dP} \times \R^{\dP} \to \R$ of scale $\sigma_{reg}$. Let $K_{\x,\x}^{reg} \doteq \big(K_{reg}(x_k,x_k)\big)_{1\leq k,k'\leq \nP} \in \R^{\nP\times\nP}$ and define
\[
	\tilde \nabla_{\x} \bJ_{geo}(\x) \doteq K_{\x,\x}^{reg} \nabla_{\x} \bJ_{geo}(\x)
\] 
This new gradient is, in some sense, a convolution of the singular gradient with a Gaussian kernel, see Figure \ref{fig:RegGrad}. As the kernel $K_{reg}$ is positive definite, $-\tilde \nabla \bJ_{geo}$ still defines a descent direction for $\bJ_{geo}$. Figure \ref{fig:BPb3ee} shows the output of a regularized gradient descent. This new gradient can be seen as the discretization of a dense vector field belonging to the RKHS $V_{reg}$ of kernel $K_{reg}$. The displacement of the template from its initial position in the regularized gradient descent can be then considered as the integration of a sequence of vector fields living in $V_{reg}$. It is shown in \cite{Charon_thesis} (Property 4.2.1) that the regularized gradient flow is well defined. Thence, provided that the steps along the gradient descent are sufficiently small, the evolution of the points of $\x$ may be considered as a diffeomorphic evolution (belonging to the orbit $G_{V_{reg}}.(\x_{init},0)$). Note that, in general, $\sigma_{reg}$ can be chosen to be different from the scale $\sigma_V$ of the deformations. This allows more flexibility in the template evolution during the gradient descent and justifies the methodology used in the algorithm described in Section \ref{part:AlgoFree}.

\subsection{Tuning the parameters} \label{part.invariances}

The procedure of atlas estimation for fshapes contains numbers of parameters which have to be tuned by the user. Although there exists some heuristics to choose the values of these parameters, it may be  hard to find a good range of settings for a particular dataset. One of our goals is to provide an algorithm behaving in a similar way whatever the scale and the resolution of the data are. In particular, the user should be allowed to tune the parameters at coarse resolution (when computation times are low) and perform a matching or an atlas estimation at high resolution with only few updates for the parameters values. Even if these refinements are not fundamental in theory, it makes the algorithm usable in practice and it has the great advantage of normalizing the choice of the balance parameters $\gamma_{0},\gamma_f^0,\gamma_V,\gamma_f,\gamma_W$ between the various penalization terms and the data attachment term, as well as the initial step sizes in Algorithm \ref{algo:optim} (these quantities may vary considerably otherwise). 

In this section, $\bJ$ denotes either the functional $\bJ_{0}^{tan}$ of equation \eqref{eq:FB45d} (Algorithms  \ref{algo:j0tan} and \ref{algo:dj0tan}) or $\bJ_{free}^{tan}$ of equation \eqref{eq:free} (Algorithms  \ref{algo:Jfreetan} and \ref{algo:dJfreetan}).

\subsubsection{Scale invariance} \label{part:scale}

We first describe how we normalize the functional $\bJ$ and its gradient in order to provide an algorithm with outputs that are invariant if a scaling is applied to the geometry or to the signal of the fshapes. To do so, we carefully examine the effect of a scaling (in the geometrical or in the signal space) on each term composing $\bJ$. Without this normalization step, the balance between the different terms of $\bJ$ would be modified by scale which is not desirable in practice.

Let us examine the behaviour of the functional when we apply a scaling on the geometrical space $\x\mapsto \x' = \lambda_e \x\in E^{\nP}$ and on the signal $\f \mapsto \f' = \lambda_f \f \in \R^{\nP}$ for some $\lambda_e,\lambda_f>0$. We assume that the scale parameters $\sigma_e$ and $\sigma_f$ of the Gaussian kernels $k_e$ and $k_f$ are also rescaled as they measure the size at which the fshapes are compared. These new kernels are denoted $k_e'$ and $k_t'$ and they verify $k_e'(\x',\y') = k_e(\x,\y)$ and $k_t'(\f',\g') = k_t(\f,\g)$. The scaled local volume elements  $r_\ell'$ now satisfy $r_\ell' = \lambda^{\dT}_e r_\ell$ and the scaled momenta $\p' = \lambda_e \p$ so that the displacement generated by $\p'$ is also scaled. All the terms composing $\bJ$ are normalized in the same way and we just detail the case of the fvarifold data attachment term $g_i$. By formula \eqref{eq.discreteprs}, we easily see that the scaled data attachment term is proportional to $\lambda^{2\dT}_e$ and does not depend on the scale of the signal, namely 
\[
	g_i'((\x',\f'),({\x^i}',{\f^i}'))  = \lambda^{2\dT}_e g_i((\x,\f),(\x^i,\f^i)).
\]
Therefore, we may compute a normalized version $(R_{e})^{-2\dT} g_i$ of $g_i$  with $R_{e} \doteq \max_{i} \left( \sqrt{\trace(\var(\x^i)) }\right)$  where $\var(\x^i)$ is the covariance matrix of the points cloud $\x^i$. From there on $\bJ_n$ denotes the functional containing the normalized terms of $\bJ$ (\ie each term being multiplied by an appropriate power of $R_{e}$ and an appropriate power of $R_{f} \doteq \max_i\sqrt{\var(\f^i)}$). By construction we (formally) have 
\[
\bJ_n(\x,\p,\f) = \bJ_n'(\x',\p',\f')
\]
where $\bJ_n'$ is computed for $\sigma'_e=\lambda_e\sigma_e$ and $\sigma'_f=\lambda_f\sigma_f$. 

We now examine the gradient's behaviour under scaling. The goal is to keep a similar dynamic during the gradient descent whatever the (geometric or functional) scales of the fshapes are. As an illustration, consider the gradient with respect to $\x$: we need $\bJ_n(\x-\delta\nabla_{\x} \bJ_n,\p,\f) = \bJ_n'(\x' - \delta\nabla_{\x'}\bJ_n',\p',\f')$ which  yields to 
\[
	\nabla_{\x'} \bJ_n' = \lambda_e \nabla_{\x} \bJ_n.
\]
As $\nabla_{\x'} g_i' = \lambda_e^{-1}\nabla_{\x} g_i$ we use the normalized version $(R_e)^{2-2\dT} \nabla_{\x} g_i$ of the gradient $\nabla_{\x} g_i$. Each term composing the gradient $\nabla \bJ_n$ is then normalized with an appropriate power of $R_e$ and $R_f$ and we denote $\nabla_n \bJ_n$ the normalized version $\nabla \bJ_n$. 

In our implementation of algorithm \ref{algo:j0tan} and \ref{algo:Jfreetan} (resp. \ref{algo:dj0tan} and \ref{algo:dJfreetan}), we compute $\bJ_n$ (resp. $\nabla_n \bJ_n$) rather than $\bJ$ (resp. $\nabla \bJ$). These normalizations then guarantee that using similar but rescaled (or translated and rotated) data will provide comparable energies and energy decreases during gradient descent.

\subsubsection{Consistency with respect to $\nP$}

We now focus on the behaviour of the template estimation algorithm when the sampling of the discretized template changes. The methodology and notations will be similar to Section \ref{part:scale}. Assume that we are working with surfaces: we wish to understand qualitatively the asymptotic behaviour of the expressions of the functional $\bJ$ and its gradient when the mesh on the template is refined (\ie when $\nP$ and $\nT$ increase). The idea is once again to keep a similar behaviour in the optimization process at coarse and fine resolutions. We illustrate the invariance with respect to the number of points by a numerical experiment presented in Figure \ref{fig:invariancePres} and \ref{fig:invariance} where the same geometrico-functional matching is performed at various resolutions.

Let $(\x^{\nP},\f^{\nP})_{\nP}$ be a sequence of regularly discrete fshapes sampled from a continuous fshape $(X,f)$ and containing respectively $\nP$ points. Let also $(\p^{\nP})_{\nP}$ be a sequence of discretized momenta (sampled from a continuous vector field $p$) attached to the points of $(\x^{\nP},\f^{\nP})$. If the discretization is sufficiently uniform, we may consider that the typical size of a local volume element $r_\ell$ is of order $\nP^{-1}$. When the number $\nP$ of points is large, we then have  
\[
\sabs{\f^{\nP}}^2_{\x^{\nP}} \doteq \sum_\ell \hat f_{\ell}^2 r_\ell \approx \int_X f^2(x) d\mathcal{H}^d(x)
\]
where $\hat f$ is defined by formula \eqref{eq.fvarRepresentation}. Notice that other discretization methods may be used to approximate the $L^2$ norm on $X$ as discussed in Section \ref{Dyn_tangential}. If we now assume that $\|\mu_{(\x^{\nP},\f^{\nP})}-\mu_{(X^{},f^{})}\|_{W'} \to 0 $, then formula \eqref{eq.discreteprs} implies that 
\[
	\snorm{\mu_{(\x^{\nP},\f^{\nP})}}_{W'}^2 \approx \snorm{\mu_{(X,f)}}_{W'}^2
\] 
when $\nP$ is large. Finally, if $\snorm{v^{\p^{\nP}} - v^p}_V \to 0$ 
then the initial velocity fields satisfies 
\begin{equation}\label{eq:momd}
	v^{\p^{\nP}}(\cdot) = \sum_{k=1} K(\cdot,x_k)p^P_k \approx \int_X K(\cdot,x)p(x) d\mathcal{H}^d(x) = v^p(\cdot)
\end{equation}
when $\nP$ is large. This means that the magnitude of the momenta should be proportional to the inverse of the density of points to generate comparable displacements at various resolution.  In particular, the update of the momentums $\p^{\nP}$ should be of order $\nP^{-1}$ as equation \eqref{eq:momd} suggests and $\nabla_{\p^{\nP}} \bJ$ is multiplied, in our code, by $\nP^{-1}$ to be at the right scale. In the same spirit, it can be shown that the gradients $\nabla_{\x^{\nP}} \bJ$ and $\nabla_{\f^{\nP}} \bJ$ are of order $\nP^{-1}$. We then multiply these terms by $\nP$ to ensure the homogeneity of the update and keep a similar dynamics along the optimization procedure even if the resolution of the meshes changes. We provide an example in Figure \ref{fig.enr} where the values of the functional along the gradient descent at various resolutions are plotted.

	\begin{figure}[H]
		\centering
			\subcaptionbox[.52\textwidth]{The source $(\x, \f_{})$ and the target $(\x^1,\f^1)$. Signal $\f=0.5$ is constant and $\f^1$ is equals to 1 on the head and on the tail and 0 elsewhere.\label{fig.pres}}{\includegraphics[width=.40\textwidth]{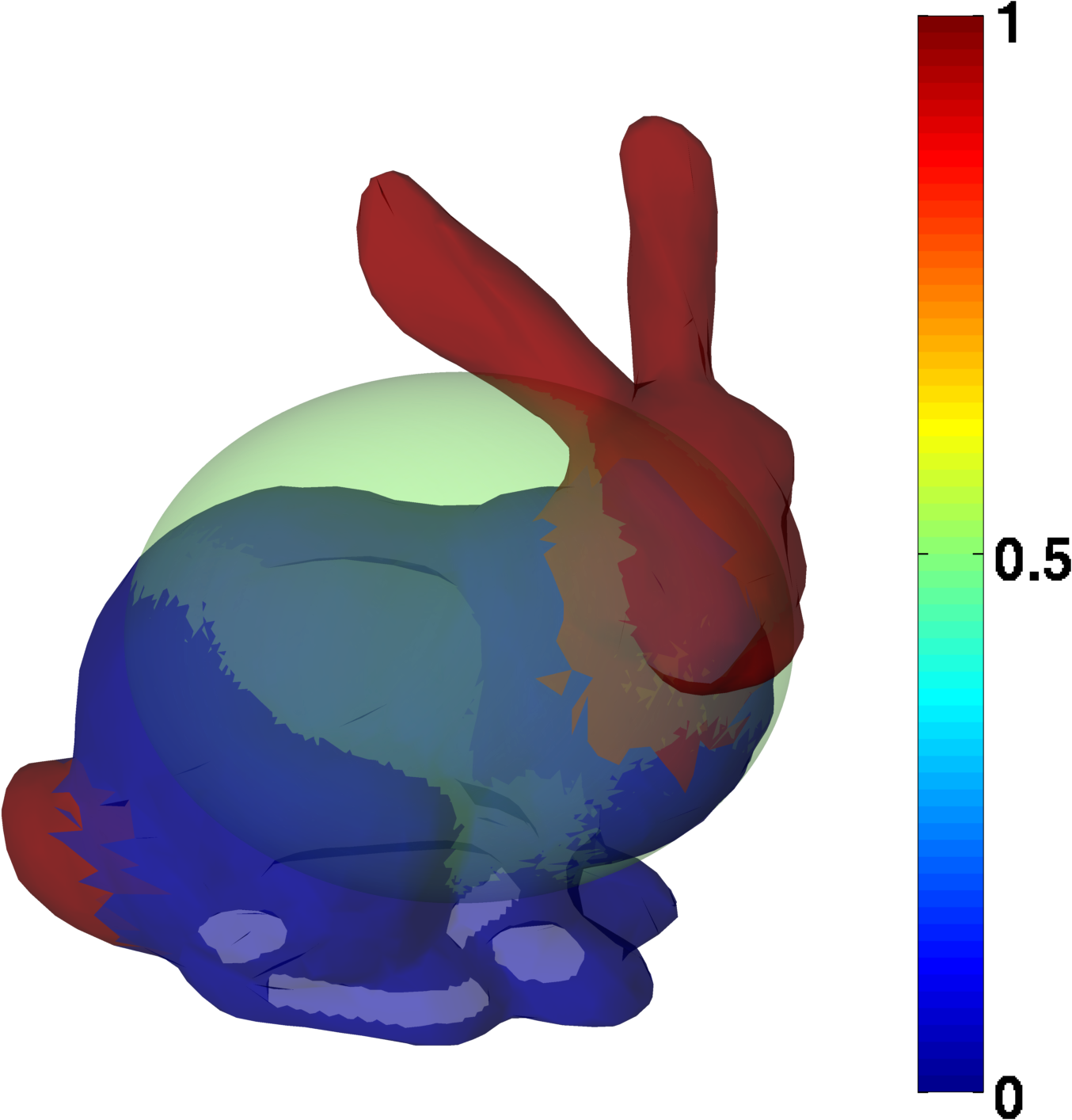}}
		 \hfill
		 \subcaptionbox[.48\textwidth]{Value of $\bJ^{tan}_0$ (y-axis) as a function of the step number (x-axis) at three different resolutions.\label{fig.enr}}{\includegraphics[width=.48\textwidth]{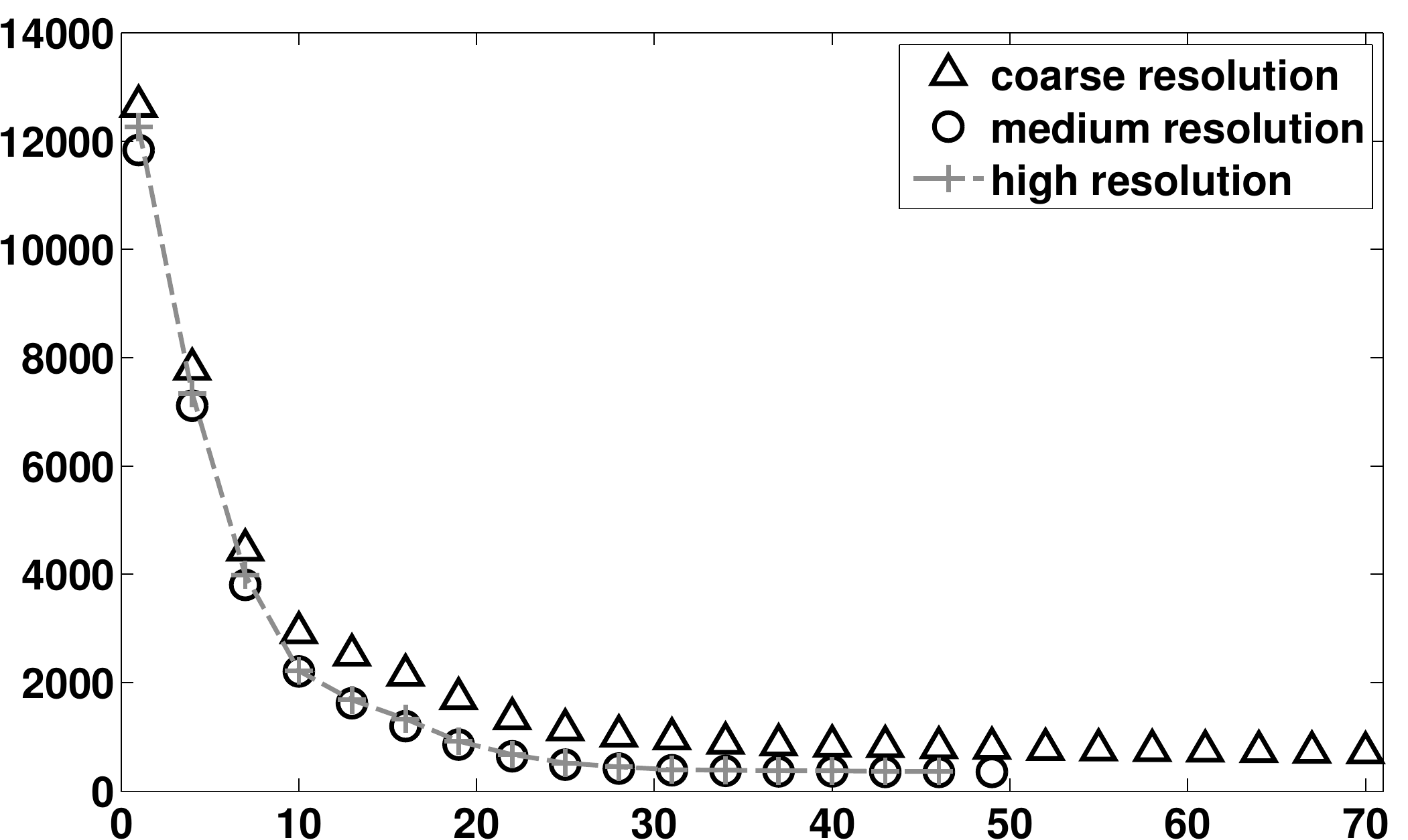}}
			\caption{Geometrico-functional registrations of an ellipsoid onto a functional version of the Stanford's Bunny. The fshapes are depicted in \ref{fig.pres} at medium resolution (around 15000 points). The experiment consists in minimizing the functional $\bJ^{tan}_0$ of equation \eqref{eq:FB45d} in $\p^1$ and $\bzeta^1$ only (here $N=1$). 
			The results are presented Figure \ref{fig:invariance}. \label{fig:invariancePres}}
			\end{figure}
	\begin{figure}[H]
		\centering
		\begin{tabular}{ccc}
			\parbox[t]{.65\textwidth}{\subcaptionbox[.65\textwidth]{ Geometrico-functional registration (at time $t=0$, $t=0.5$ and $t=1$) \label{fig.sourceLR}}{\includegraphics[width=.20\textwidth]{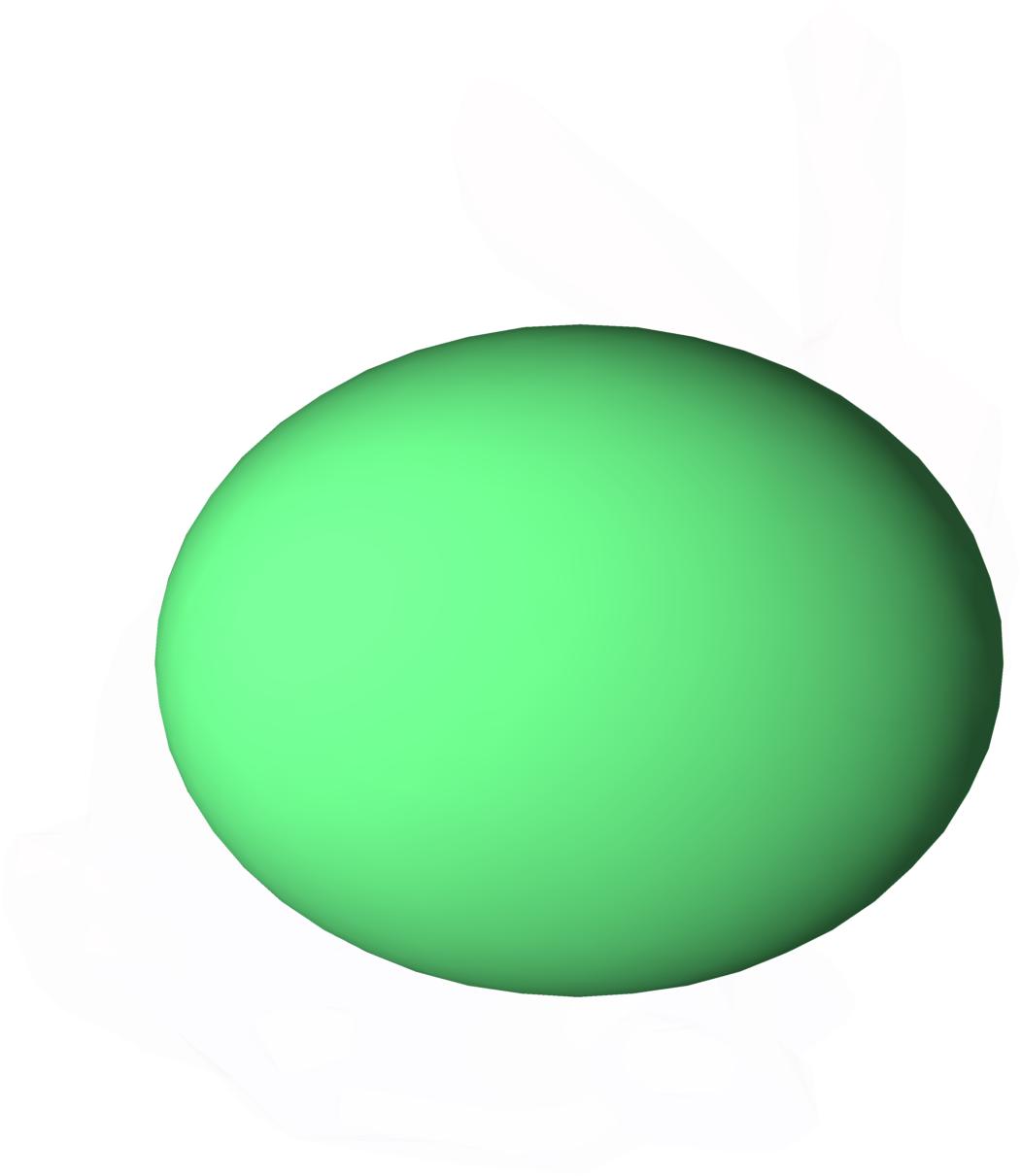}\includegraphics[width=.20\textwidth]{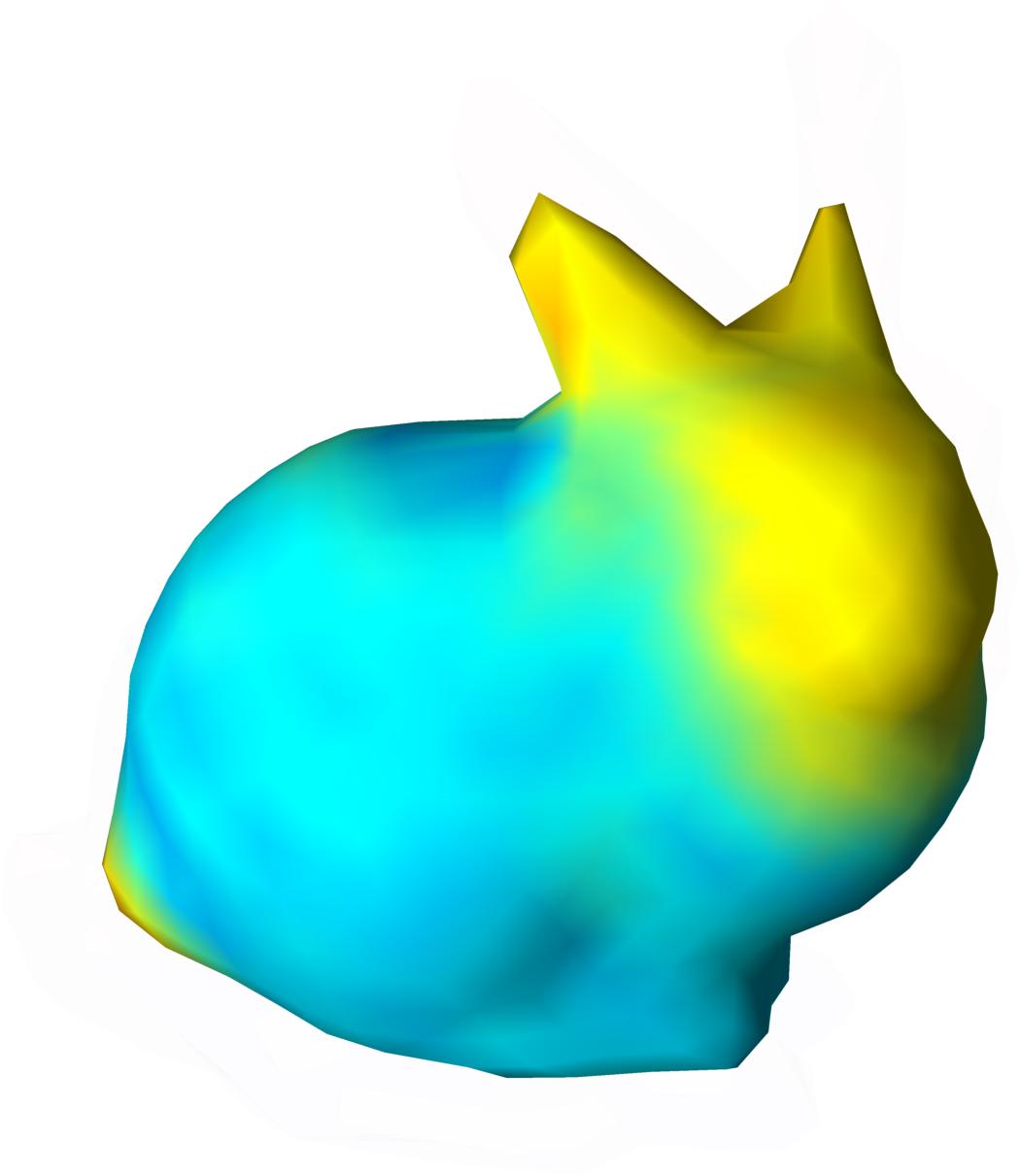}\includegraphics[width=.20\textwidth]{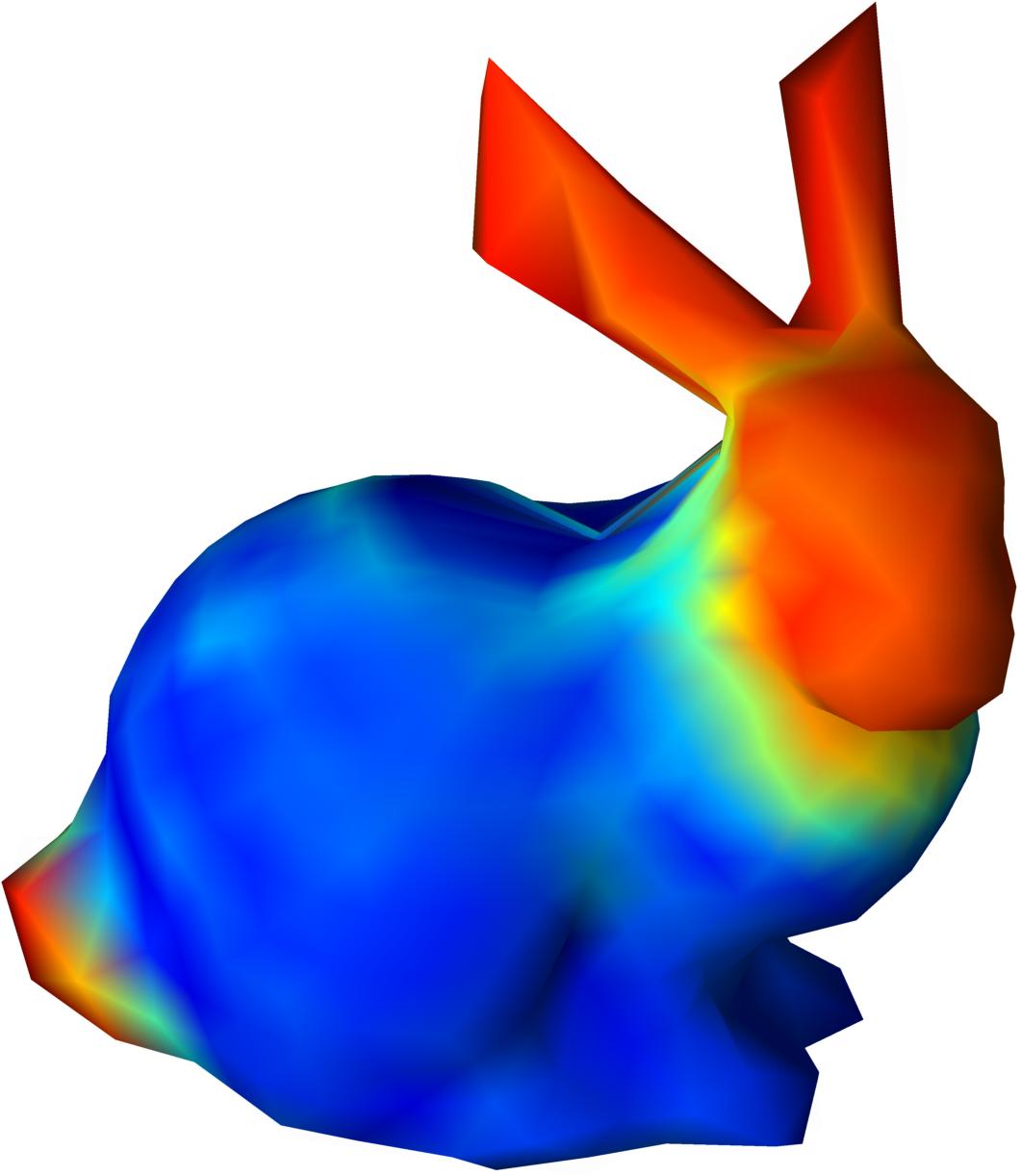}}}
			\parbox[t]{.2\textwidth}{\subcaptionbox[]{Target \label{fig.resultLR}}{\includegraphics[width=.20\textwidth]{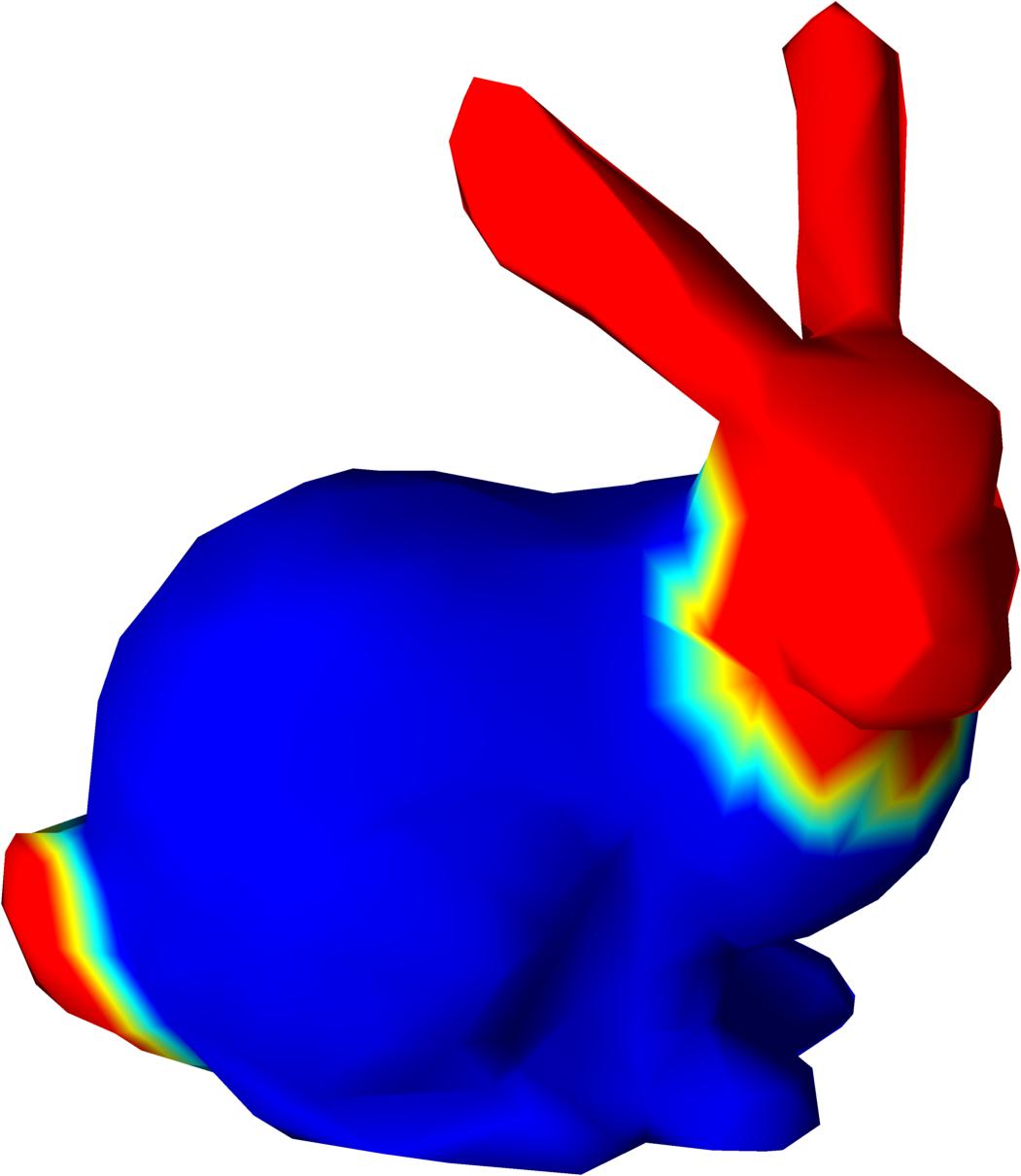}}}
			&\raisebox{-.1\height}{\hspace*{1.5em}\includegraphics[height=4cm]{./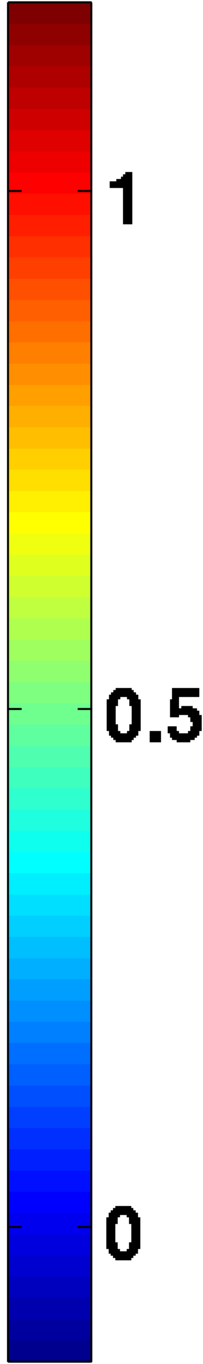}} \\
			\parbox[t]{.65\textwidth}{\subcaptionbox[.65\textwidth]{Geometrico-functional registration (at time $t=0$, $t=0.5$ and $t=1$)\label{fig.sourceHR}}{\includegraphics[width=.20\textwidth]{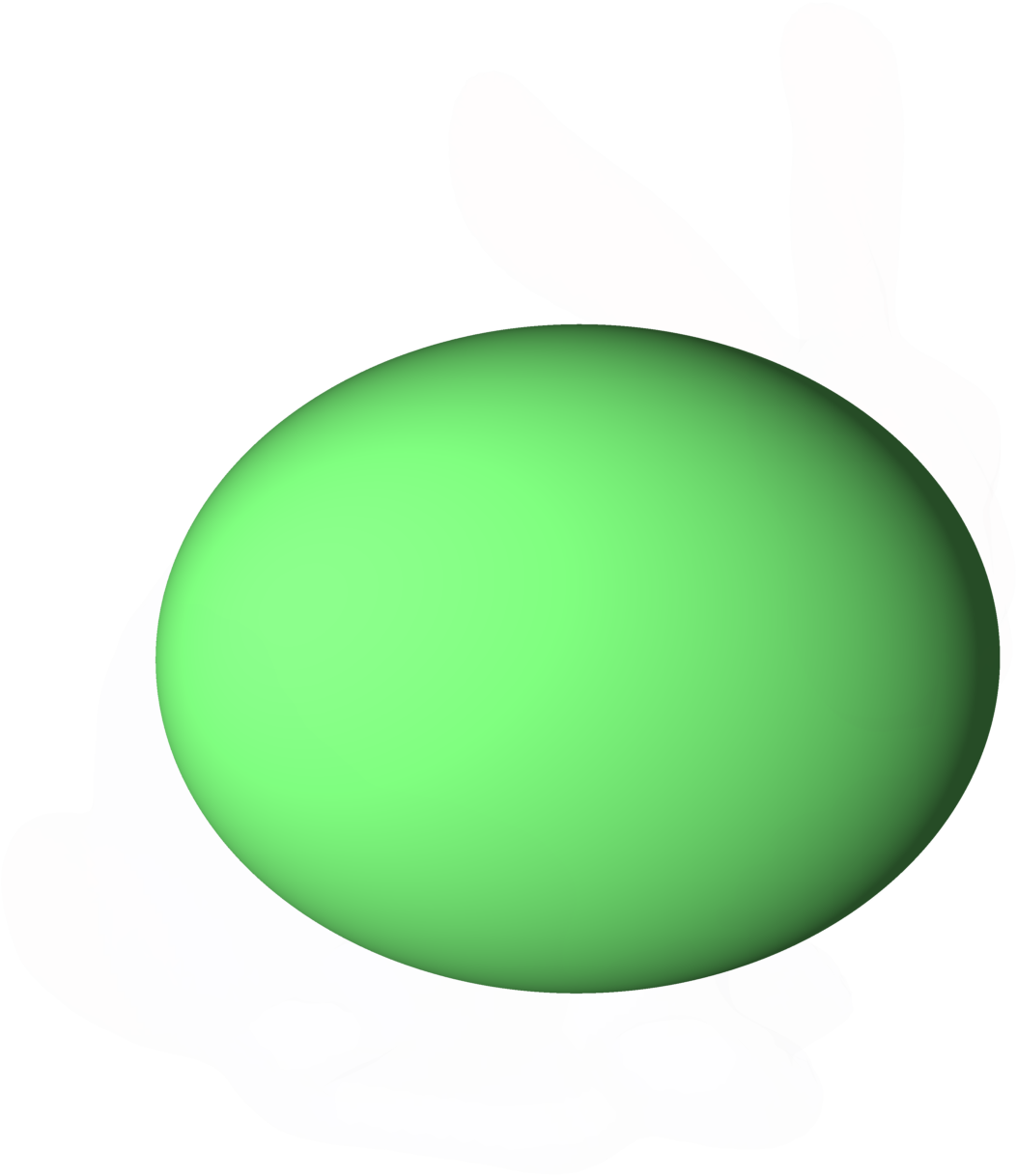}\includegraphics[width=.20\textwidth]{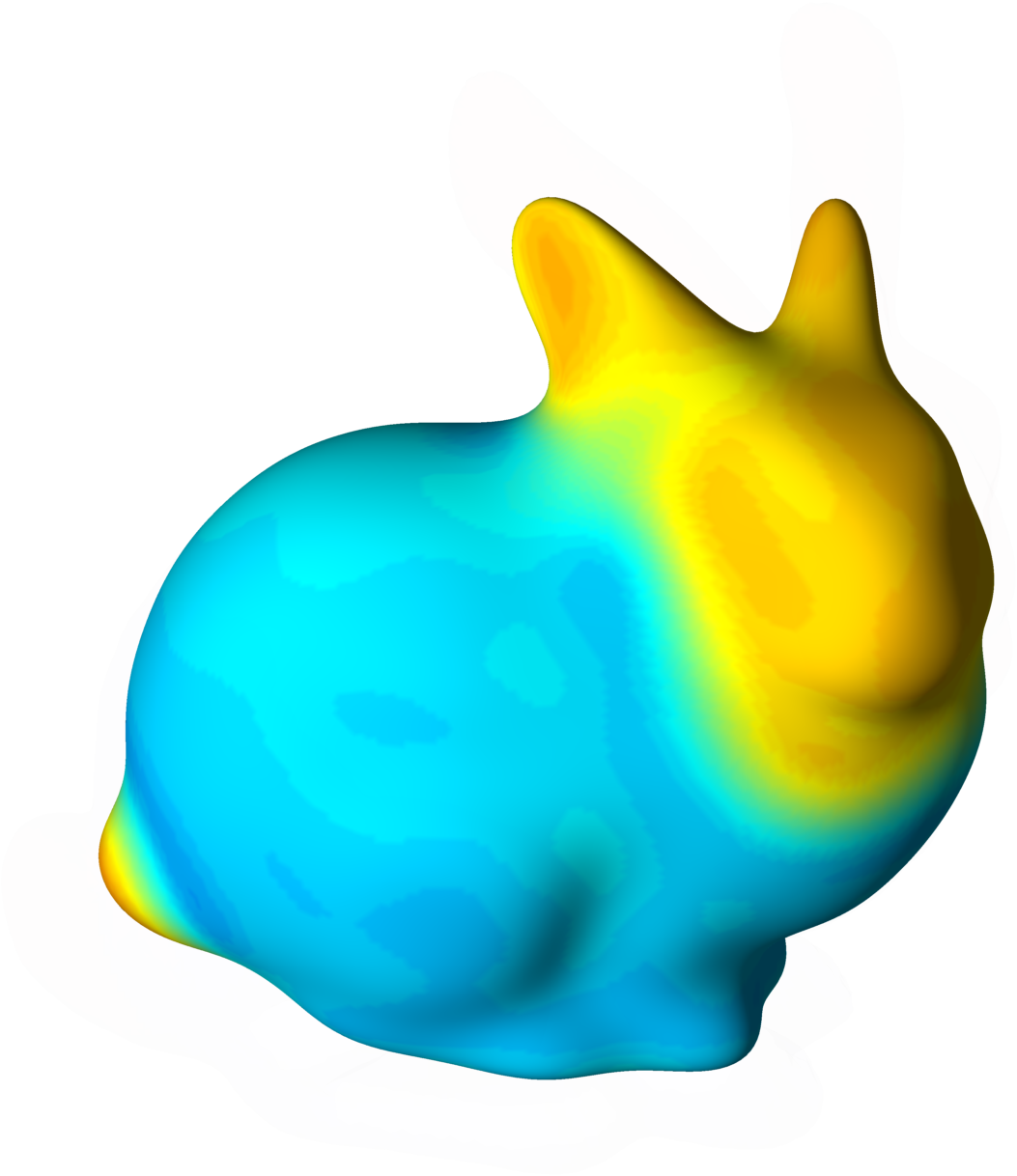}\includegraphics[width=.20\textwidth]{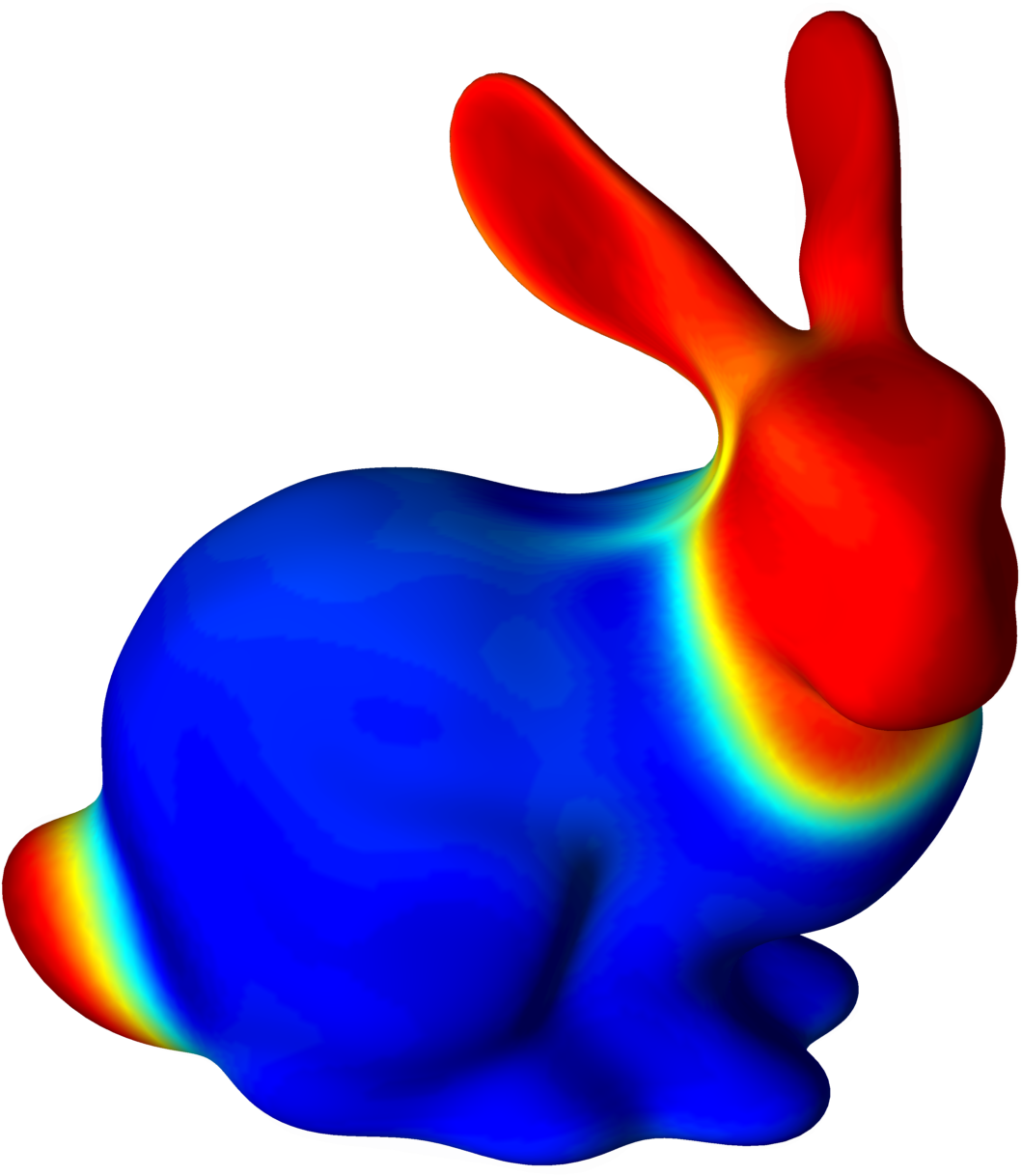}}}
			\parbox[t]{.2\textwidth}{\subcaptionbox[]{Target \label{fig.resultHR}}{\includegraphics[width=.20\textwidth]{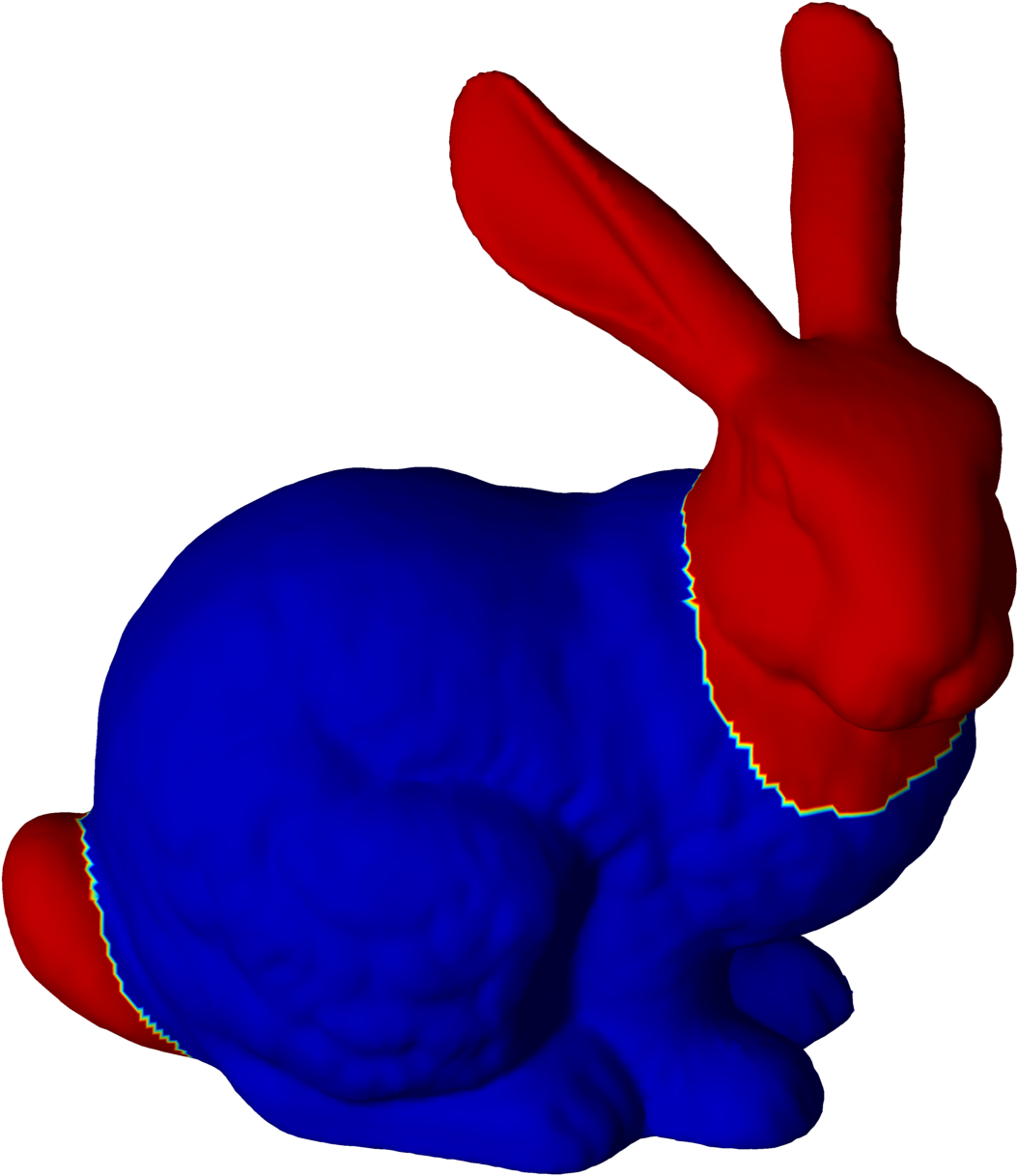}}}
			& \raisebox{-.1\height}{\hspace*{1.5em}\includegraphics[height=4cm]{./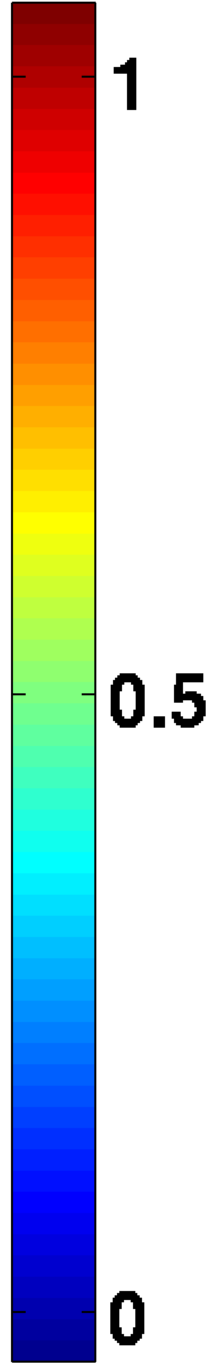}} 
		\end{tabular}
	\caption{Results of registration presented Figure \ref{fig:invariancePres}. First row: coarse resolution (around 1200 points) computations takes 30 seconds. Second row: high resolution (around 80000 points) computations take 2.5 hours.\label{fig:invariance}}
	\end{figure}

\section{Conclusion}
In this article, we have proposed a well-formalized and efficient extension of the ideas of shape spaces for the treatment of functional shapes. To do so, we have introduced the structure of fshapes' vector bundle and the metamorphosis setting to model and quantify geometrico-functional transformations within these bundles. In addition, the concepts of geometric measure theory (varifolds in particular) were generalized to provide dissimilarity metrics between such objects. The combination of these settings enabled us to express atlas estimation on populations of fshapes as a variational problem, for which we were able to prove existence of solutions. 

In the second part of the paper, we addressed the issue of providing practical numerical schemes to efficiently perform the functionals' optimization. In particular, we detailed the discrete expressions corresponding to the fvarifolds' data attachment terms, proposed a gradient descent algorithm for the estimation of all variables in the atlas and carefully examined some of the important numerical issues related to these algorithms. The resulting codes includes atlas estimation on populations of classical curves and surfaces and extends it to fshapes. They shall be made publicly available shortly. This was implemented under the simplified 'tangential model'~: the full metamorphosis setting derived in the theoretical part of the paper is likely to be implemented as well in the near future.    

Extensions of the present framework are possible in several directions worth exploring as future work. One of them is to consider fshape bundles modeled on more regular spaces than $L^2$ and hopefully recover existence of solutions in a more direct way than the proof presented in this paper, at the price of more involved numerical computations. Another track is to generalize such a framework to more general spaces of signals (e.g vector fields or tensor fields) and more general group actions, which has been considered only partially in \cite{Charon_thesis}. Finally, one important follow-up problem to atlas estimation is the one of statistical analysis and classification based on these fshape atlases~: we have deliberately postponed such questions to an upcoming paper.     

\subsection*{Acknowledgment}

We would like to thank Mirza Faisal Beg, Sieun Lee, Evgeniy Lebed, Marinko Sarunic and their collaborators for providing the OCT dataset and for fruitful discussions. The authors also acknowledge the support of the French \textit{Agence Nationale de la Recherche} project HM-TC (number ANR-09-EMER-006).

\appendix
\section{Variation formula for fvarifolds: proof of Theorem \ref{theo:variation_formula}}
\label{appendix:varition_fvar}
The proof follows the same steps as the corresponding result for usual varifolds (cf \cite{Charon2}). Given a $C^{1}$ vector field $v$ on $E$ with compact support, we can consider the $1$-parameter group of diffeomorphisms $\phi_{t}$ with $\phi_{0}=Id$ and $\rst{\partial_{t}}{t=0} \phi_{t} =v$. Then, it follows that:
 \begin{align}
 \label{eq:Lie_der1}
 (\pounds_{(v,h)}\omega)(x,T_{x}X,f(x)) &= \rst{\dfrac{d}{dt}}{t=0} (\psi_{t}^{\ast}\omega)(x,T_{x}X,f(x)) \nonumber \\
  &=\rst{\dfrac{d}{dt}}{t=0} |\rst{d_{x}\phi_{t}}{T_{x}X}|.\omega(\phi_{t}(x),d_{x}\phi_{t}(T_{x}X),f(x)+th(x)) \,.
 \end{align}
 As we see, the previous leads to several terms in the derivative: differentiate the volume change term $J_t\doteq |\rst{d_{x}\phi_{t}}{T_{x}X}|$, the function $\omega$ with respect to the position variable, with respect to the tangent space direction and to the signal part. The derivative with respect to point positions and signal values are easy to obtain and equals respectively, since $\omega$ is assumed to be $C^{1}$, $\left ( \dfrac{\partial \omega}{\partial x} | v \right )$ and $\dfrac{\partial \omega}{\partial m}h$. The two other terms require more attention. \\
\\
\textbf{Derivative of the volume change:} For any vector field $u$ defined on $X$, we shall denote by $u^{\top}$ and $u^{\bot}$ the tangential and normal components of $u$ with respect to the tangent space of $X$ at each point. We also introduce the connection $\nabla_{\cdot}\cdotp$ on the ambient space and an orthonormal frame of tangent vector fields $(e_{i})_{i=1,..,d}$ on $X$. Now $J_{t}=\sqrt{\det([\langle d_{x}\phi_{t}(e_{i}),d_{x}\phi_{t}(e_{j})]_{i,j})}$ so a simple calculation shows that:
\begin{equation*}
	\rst{\dfrac{d}{dt}}{t=0}  J_{t} = \sum_{i=1}^{d} \langle e_{i}, \nabla_{e_{i}}v \rangle
\end{equation*}
Writing $v=v^{\top}+v^{\bot}$ provides a first term $\sum_{i=1}^{d} \langle e_{i}, \nabla_{e_{i}}v^{\top} \rangle$ which is the tangential divergence of the vector field $v^{\top}$ denoted usually $\mdiv_{X}(v^{\top})$. The second term becomes $\sum_{i=1}^{d} \langle e_{i}, \nabla_{e_{i}}v^{\bot} \rangle$. For all $i=1,..,d$, we have $\langle e_{i}, v^{\bot} \rangle = 0$ so that after differentiation we find that $\langle e_{i}, \nabla_{e_{i}}v^{\bot} \rangle = - \langle \nabla_{e_{i}} e_{i}, v^{\bot} \rangle$. Therefore: 
\begin{eqnarray*}
 \sum_{i=1}^{d} \langle e_{i}, \nabla_{e_{i}}v^{\bot} \rangle &=& -\sum_{i=1}^{d} \langle \nabla_{e_{i}} e_{i}, v^{\bot} \rangle \\
 &=& -\left \langle \left (\sum_{i=1}^{d} \nabla_{e_{i}} e_{i} \right )^{\bot}, v^{\bot} \right \rangle \,.
\end{eqnarray*}
In this last expression, we recognize the \textbf{mean curvature vector} to the submanifold $X$, which is the trace of the Weingarten map and is denoted $H_{X}$. As a result, we find that:
\begin{equation*}
	\int_{X} \omega  \rst{\dfrac{d}{dt}}{t=0} J_{t} = \int_{X} \omega  \mdiv_{X}(v^{\top}) - \int_{X} \omega \langle H_{X},v^{\bot} \rangle \,.
\end{equation*}
Now, the first term can be rewritten as a boundary integral by applying the Divergence Theorem. Indeed, if we denote by $\tilde{\omega}$ the function defined on $X$ by $\tilde{\omega}(x)=\omega(x,T_{x}X,f(x))$ which is $C^{1}$, we have $\mdiv_{X}(\tilde{\omega}v^{\top})=\tilde{\omega}\mdiv_{X}(v^{\top})+\langle \nabla \tilde{\omega} | v^{\top} \rangle$. Applying the Divergence Theorem (cf \cite{Simon} Section 7) on the submanifold $X$ gives:
\begin{equation*}
 \int_{X} \omega  \mdiv_{X}(v^{\top}) = -\int_{X} \langle \nabla \tilde{\omega} | v^{\top} \rangle + \int_{\partial X} \omega\langle \nu , v^{\top} \rangle
\end{equation*}
where $\nu$ is the unit outward normal to the boundary. \\
\\
\textbf{Derivative with respect to tangent spaces:} We now come to the derivative term on the tangent space part in equation \eqref{eq:Lie_der1}. To explicitly compute variations with respect to variables in the Grassmann varifold, the most convenient way is to use the embedding of $G_{d}(E)$ into $\mathcal{L}(E)$ that identifies any subspace $V$ with the orthogonal projector $p_{V}$ on $V$. With this identification, one can represent the tangent space at $V$ of $G_{d}(E)$ as $\mathcal{L}(V,V^{\bot})$. Then, as explained with more details in \cite{Charon2}, if we set $V_{t}=d_{x} \phi_{t}(T_{x}X)$, one can show that:
\begin{equation*}
	\rst{\dfrac{d}{dt}}{t=0} V_{t} = p_{T_{x}X^{\bot}} \circ \rst{\nabla v}{T_{x}X} \in \mathcal{L}(T_{x}X,(T_{x}X)^{\bot})\,.
\end{equation*}
We can now introduce $\dfrac{\partial \omega}{\partial V}$ as an element of $\mathcal{L}(T_{x}X,T_{x}X^{\bot})^*\approx(T_{x}X^{\bot})^{*} \otimes T_{x}X$ and which we can write: $\dfrac{\partial \omega}{\partial V} = \sum_{j=d+1}^{n} \eta_{j}^{*}\otimes \alpha_{j}$ for $(\eta_{d+1},..,\eta_{n})$ an orthonormal frame of $T_{x}X^{\bot}$ and $(\alpha_{j})$ vectors of $T_{x}X$ (as usual $\eta^*$ denotes the linear form $\langle \eta,.\rangle$). Then the variation we wish to compute is:
\begin{equation*}
 \left ( \dfrac{\partial \omega}{\partial V} | \nabla v \right ) = \sum_{j=d+1}^{n} \langle \eta_{j}, \nabla_{\alpha_{j}}v \rangle\,.
\end{equation*}
If we introduce $\left( \dfrac{\partial \omega}{\partial V} | v \right )= \sum_{j=d+1}^{n} \eta_{j}^{*}(v)\alpha_{j} = \sum_{j=d+1}^{n} \langle \eta_{j},v \rangle\alpha_{j}$ which is a tangent vector field on $X$, we have:
\begin{equation*}
 \mdiv_{X}\left( \dfrac{\partial \omega}{\partial V} | v \right ) = \sum_{i=1}^{d} \sum_{j=d+1}^{n} \left( \langle e_{i}, \nabla_{e_{i}} \alpha_{j} \rangle\langle \eta_{j},v \rangle + \langle e_{i}, \langle \nabla_{e_{i}}\eta_{j},v \rangle \alpha_{j} \rangle + \langle e_{i}, \langle \eta_{j},\nabla_{e_i}v \rangle \alpha_{j} \rangle \right)
\end{equation*}
The last term in the sum is also $\sum_{j=d+1}^{n} \langle \eta_{j},\nabla_{\alpha_{j}}v \rangle$, which is nothing else than $\left ( \dfrac{\partial \omega}{\partial V} | \nabla v \right )$. As for the two other terms in the sum, it is easy to see that it equals:
\begin{equation*}
 \left ( \sum_{i=1}^{d} \langle e_{i}, \nabla_{e_{i}}\sum_{j=d+1}^{n} \eta_{j}^{*}\otimes \alpha_{j} \rangle | v \right ) = ( \mdiv_{X} \left (\dfrac{\partial \omega}{\partial V} \right ) | v )
\end{equation*}
Hence, it follows that: 
\begin{equation}
\label{eq:Lie_der2}
 \left ( \dfrac{\partial \omega}{\partial V} | \nabla v \right ) = \mdiv_{X}\left( \dfrac{\partial \omega}{\partial V} | v \right ) - ( \mdiv_{X} \left (\dfrac{\partial \omega}{\partial V} \right ) | v )
\end{equation}
Integrating equation (\ref{eq:Lie_der2}) over the submanifold $X$ and using the Divergence Theorem as before, we find that:
\begin{equation}
\label{eq:Lie_der3}
 \int_{X} \left ( \dfrac{\partial \omega}{\partial V} | \nabla v \right ) = \int_{\partial X} \langle \nu , \left( \dfrac{\partial \omega}{\partial V} | v \right ) \rangle - \int_{X} ( \mdiv_{X} \left (\dfrac{\partial \omega}{\partial V} \right ) | v ) 
\end{equation}
\vskip2ex
\noindent \textbf{Synthesis:} Summing all the different terms from eq.\eqref{eq:Lie_der1}, we eventually obtain:
\begin{eqnarray*}
 \int_X (\pounds_{(v,h)}\omega) &=& \int_{X} \left ( \dfrac{\partial \omega}{\partial x} - \mdiv_{X} \left (\dfrac{\partial \omega}{\partial V} \right ) | v \right ) -\int_{X} \langle \nabla \tilde{\omega} | v^{\top} \rangle + \omega\langle H_{X} | v^{\bot} \rangle \\
 & & +\int_{X} \dfrac{\partial \omega}{\partial m}. h  + \int_{\partial X} \langle \nu, \left( \dfrac{\partial \omega}{\partial V} | v \right ) + \omega v^{\top} \rangle \,.
\end{eqnarray*}
Now, we remind that $\tilde{\omega}(x)=\omega(x,T_{x}X,f(x))$ so 
$$(\nabla \tilde{\omega} | v^{\top} ) = \left ( \dfrac{\partial \omega}{\partial x} | v^{\top} \right ) + \left ( \dfrac{\partial \omega}{\partial V} | \nabla v^{\top} \right ) + \dfrac{\partial \omega}{\partial m} \langle \nabla f,v^{\top} \rangle$$
and applying the result of equation (\ref{eq:Lie_der2}) to $v^{\top}$:
\begin{equation*}
\left ( \dfrac{\partial \omega}{\partial V} | \nabla v^{\top} \right ) = \mdiv_{X}\left( \dfrac{\partial \omega}{\partial V} | v^{\top} \right ) - ( \mdiv_{X} \left (\dfrac{\partial \omega}{\partial V} \right ) | v^{\top} )
\end{equation*}
We notice that $\left( \dfrac{\partial \omega}{\partial V} | v^{\top} \right ) = 0$ by the expression of $\dfrac{\partial \omega}{\partial V}$ and using the equality $v=v^\top+v^\bot$ we find eventually that:
\begin{equation*}
\int_X \pounds_{(v,h)}\omega = \int_{X} \left ( \dfrac{\partial \omega}{\partial x} - \mdiv_{X} \left (\dfrac{\partial \omega}{\partial V} \right ) - \omega H_{X} | v^{\bot} \right ) + \dfrac{\partial \omega}{\partial m}.(h-\langle \nabla f,v^{\top}\rangle) + \int_{\partial X} \langle \nu, \left( \dfrac{\partial \omega}{\partial V} | v \right ) + \omega v^{\top} \rangle
\end{equation*}
which proves the result of Theorem \ref{theo:variation_formula}. 

\section{Proof of Proposition \ref{prop:JP1}}
\label{appendix:proof_propJP1}

\subsection{Perturbation}
\label{ssec:perturbation}
We introduce now a perturbation process on any measure $\nu$ on $E \times G_{d}(E) \times \R$ that shall be useful for the following. Let $a>0$ to be fixed later and consider for any $t\in\R$ the function $\rho_t:\R\to\R$ such that
\begin{equation}
	\rho_t(z)=z+t(\sgn(z)a-z)\one_{|z|>a}
	\label{eq:extra9}
\end{equation}
where $\sgn(z)$ is the sign of $z$. We have $\rho_0=\text{Id}_\R$ and $\rho_1$ is a symmetric threshold at level $a$. Now for any $t\in\R$, we denote $\nu_t$
the new measure defined for any $\omega \in C_b(E \times G_{d}(E) \times \R)$ as:
\begin{equation}
  \label{eq:JP4}
  \nu_t(\omega)=\int \omega(x,V,\rho_t(f))d\nu(x,V,f)\,.
\end{equation}
Obviously $\nu_0=\nu$ and $\nu_1$ is such that $\nu_1(|f|>a)=0$ so that $t\mapsto \nu_t$ is an homotopy from $\nu$ to a measure under which the signal is a.e. bounded by $a$.

\subsection{Proof of Lemma \ref{lemma:JP1}}
We show the existence of a fvarifold minimizer in $\MeasX$ (cf \eqref{eq:JP8}) for the extended functional $\eJ$. For any $\nu\in\MeasX$ and $t\in\R$, we denote $J_t\doteq \eJ(\nu_t)$ where $\nu_{t}$ is the previously defined perturbation of $\nu$ (cf \ref{ssec:perturbation}) and we assume that $J_0<\infty$ (which is equivalent to say that $\nu(|f|^2)<\infty$). We recall that $\|\mu_{(X^i,f^i)}-\nu_{t}\|_{W'}^2 = (\mu_{(X^i,f^i)}-\nu_{t})(\omega^{i})$ where $\omega^{i}=K_{W}(\mu_{(X^i,f^i)}-\nu_{t}) \in W$. Then one easily checks that $J_t<\infty$ and, since we assume that $W$ is continuously embedded into $C_0^2(E\times G_{d}(E) \times \R)$, with existing derivative $J'_t$ at any location $t$ given by
\begin{equation}
  \label{eq:JP10}
    J'_t  = \nu\left(\frac{d}{dt}\left(\rho_t(f)\right)\left(\gamma_f\rho_t(f)+\gamma_W\sum_{i=1}^N\frac{\partial \omega^i}{\partial f}(x,V,\rho_{t}(f))\right)\right) 
\end{equation}
Using again the continuous embedding of $W$ into $C^2_0(E\times G_{d}(E) \times \R)$ we get for a constant $C>0$ that
\begin{align}
    \bigg|\frac{\partial \omega^i}{\partial f}(x,V,\rho_{t}(f))\bigg|&\leq C\|\omega^{i}\|_{W} \nonumber \\
    &\leq C \left ( \|\mu_{(X^i,f^i)}\|_{W'} + \|\nu_{t}\|_{W'} \right ) \,.
\label{eq:JP23bis}
\end{align}
Moreover, as we mentioned after Proposition \ref{prop:control_RKHSnorm_totalvar}, $\|\mu_{(X^i,f^i)}\|_{W'} \leq \text{cte}.\Haus^{d}(X^i)$. Similarly, $\|\nu_{t}\|_{W'} \leq \text{cte}.\nu_{t}(E \times G_{d}(E) \times \R)$ and, since $\nu_{t} \in \MeasX$, we have $\nu_{t}(E \times G_{d}(E) \times \R) = \Haus^{d}(X)$ and consequently $\|\nu_{t}\|_{W'} \leq \text{cte}.\Haus^{d}(X)$. Thus, there exists a constant $K>0$ such that: 
\begin{equation}
  \label{eq:JP23}
\bigg|\sum_{i=1}^{N} \frac{\partial \omega^i}{\partial f}(x,V,\rho_{t}(f))\bigg| \leq K \sum_{i=1}^{N} \left(\Haus^{d}(X^i) + \Haus^{d}(X) \right ) 
\end{equation}

Noticing now that $\frac{d}{dt}\left(\rho_t(f)\right)\rho_t(f)\leq 0$, that $|\frac{d}{dt}\left(\rho_t(f)\right)|=0$ for $|f|\leq a$ and that $|\rho_t(f)|\geq a$ for $|f|\geq a$ and $t\in [0,1]$, we get for $t\in [0,1]$
  \begin{equation}
  \label{eq:JP13}
  J'_t\leq \nu\left(-\bigg|\frac{d}{dt}\left(\rho_t(f)\right)\bigg|\one_{|f|>a}\left(\gamma_fa-\gamma_W K \sum_{i=1}^N\left(\Haus^d(X^i)+\Haus^d(X)\right)\right)\right)
\end{equation}
so that
\begin{equation}
  \label{eq:JP14}
  \eJ(\nu_1)\leq \eJ(\nu_0) \text{ if }a\geq K\frac{\gamma_W}{\gamma_f}\sum_{i=1}^N\left(\Haus^d(X^i)+\Haus^d(X)\right)\,.
\end{equation}
An important consequence of (\ref{eq:JP14}) is that one can restrict the search of a minimum for $\eJ$ on fvarifolds $\nu$ such that
\begin{equation}
  \label{eq:JP15}
  \nu(\one_{|f|> a})=0 
\end{equation}
with $a=K\frac{\gamma_W}{\gamma_f}\sum_{i=1}^N\left(\Haus^d(X^i)+\Haus^d(X)\right)$. In particular, since $\nu\in\MeasX$, we will have
\begin{equation}
  \label{eq:JP16}
  x\in X \text{ and } |f|\leq a\text{ $\nu$ a.e.}
\end{equation}
Since $X$ is bounded and $G_{d}(E)$ compact, we can restrict the search of a minimum on a measure supported on a \emph{compact} subset $C\subset E \times G_{d}(E) \times \R$ so that we introduce: 
\begin{equation}
  \label{eq:JP17}
  \MeasXC\doteq \{\ \nu\in \MeasX\ |\ (x,V,f)\in C\ \text{ $\nu$  a.e.}\ \}\,.
\end{equation}

An easy check shows that $\eJ$ is lower semi-continuous on the set $\MeasXC$ for the weak convergence topology. In addition, $\MeasXC$ is sequentially compact. Indeed, if $\nu_{n}$ is a sequence in $\MeasXC$ then all $\nu_{n}$ are supported by the compact $C$ and in particular $(\nu_{n})$ is tight. Also, as already noted, there exists a constant $\text{cte}$ independent of $n$ such that $\nu_{n}(E \times G_{d}(E) \times \mathbb{R}) \leq \text{cte} \Haus^{d}(X)$ and thus the sequence is uniformly bounded for the total variation norm. It results, thanks to the Prokhorov Theorem, that there exists a subsequence of $(\nu_{n})$ converging for the weak topology. These compactness and lower semicontinuity properties guarantee the existence of a minimizer $\nu_*$ of $\eJ$ with $\nu_*\in\MeasXC$ and 
\begin{equation}
  \label{eq:JP18}
  \eJ(\nu_*)\leq \inf_{f\in L^2(X)}J_X(f)\,.
\end{equation}

\subsection{Proof of proposition}
\label{appendix:proof_propJP1:ssec2}
At this point, we do not yet have a minimizer of $J_X$. The problem is that if the marginal on $E\times G_{d}(E)$ of $\nu_*$ is the transport of $\Haus^d_{|X}$ under the application $x\mapsto (x,T_{x}X)$, we cannot guarantee that $\nu_*$ does not weight multiple signal values in the fiber above a location $(x,T_{x}X)$. We will now show that for large enough $\gamma_f/\gamma_W$, there exists $f_*\in L^2(X)$ such that 
$\nu_*=\nu_{X,f_*}$ so that we will deduce 
\begin{equation}
  \label{eq:JP19}
  J_X(f_*)=\eJ(\nu_*)\leq \inf_{f\in L^2(X)}J_X(f)
\end{equation}
and the existence of a minimizer on $L^2(X)$.

Let $\df\in C_b(E\times G_{d}(E) \times \R)$ and for any $t\in \R$ consider the perturbation $\nu_t\in\MeasX$ of any $\nu\in\MeasXC$ such that for any $g\in C_b(E\times G_{d}(E) \times \R)$ we have:
\begin{equation}
  \label{eq:JP20}
  \nu_t(g)\doteq\int g(x,V,f+t\df(x,V,f))d\nu(x,V,f)\,.
\end{equation}
Here again, the function $t\mapsto \eJ(\nu_t)$ is differentiable everywhere and we have for $\omega^i\doteq K_W (\mu_{(X^i,f^i)}-\nu)$
$$\rst{\frac{d}{dt}\eJ(\nu_t)}{t=0}=\nu\left(\left(\gamma_f f+\gamma_W\sum_{i=1}^N\frac{\partial \omega^i}{\partial f}(x,V,f)\right)\df(x,V,f)\right)\,,$$
so that when $\nu=\nu_*$ we get
\begin{equation}
 \left\{ \begin{array}[h]{l}
     \gamma_f f+\gamma_W A(x,V,f)=0\ \nu_*\text{ a.e.}\\
\text{with}\\
A(x,V,f)\doteq \sum_{i=1}^N\frac{\partial \omega^i}{\partial f}(x,V,f)\,.
  \end{array}
 \right.\label{eq:JP25}
\end{equation}
The partial derivative of $f \mapsto \gamma_f f+\gamma_W A(x,V,f)$ with respect to $f$ equals $\gamma_f + \gamma_W \frac{\partial A}{\partial f}(x,V,f)$. As before, using the continuous embedding $W \hookrightarrow C_{0}^{2}(E\times G_{d}(E) \times \R)$, we have once again a certain constant $K$ such that 
\begin{equation}\label{eq:constantOscillation}
	\bigg| \frac{\partial A}{\partial f}(x,V,f) \bigg | \leq K\sum_{i=1}^N\left(\Haus^d(X^i)+\Haus^d(X)\right), \text{ for all } (x,V,f) \in E\times G_d(E)\times \R.
\end{equation}
It results that for $\gamma_f/\gamma_W$ large enough and for all $(x,V) \in E \times G_{d}(E)$, $f \mapsto \gamma_f f+\gamma_W A(x,V,f)$ is a strictly increasing function going from $-\infty$ at  $-\infty$ to  $+\infty$ at $+\infty$ and thus there is a unique solution $\tilde{f}(x,V)$ to (\ref{eq:JP25}). Now, since the application $f \mapsto \gamma_f f+\gamma_W A(x,V,f)$ is also $C^{1}$ on $E \times G_{d}(E) \times \R$, we deduce from the Implicit Function Theorem that $\tilde{f}$ is a $C^1$ function on $E \times G_{d}(E)$. Going back to the solution $\nu_{\ast}$, we know that for $\nu_{\ast}$ almost every $(x,V,f) \in E \times G_{d}(E) \times \R $, we have $(x,V,f) \in C$ and $f=\tilde{f}(x,V)$, so that $|\tilde{f}| \leq a$ a.e. For any continuous and bounded function $\omega$:
$$ \nu_{\ast}(\omega) = \int \omega(x,V,f) d \nu_{\ast} = \int \omega(x,V,\tilde{f}(x,V)) d \nu_{\ast} $$
and if we denote by $\tilde{\omega}(x,V)\doteq \omega(x,V,\tilde{f}(x,V))$ which is a continuous and bounded function on $E \times G_{d}(E)$, we have by definition of the space $\MeasX$ (eq.\eqref{eq:JP8}):
\begin{equation}
\label{eq:JP24}
 \nu_{\ast}(\omega) = \nu_{\ast}(\tilde{\omega}) = \int_{X} \omega(x,T_{x}X,\tilde{f}(x,T_{x}X)) d\Haus^{d}(x) \,.
\end{equation}
Therefore, setting $f_{\ast}(x) = \tilde{f}(x,T_{x}X)$ for $x \in X$, we see that $|f_{\ast}|\leq a$ so that $f \in L^{\infty}(X)$ and with \eqref{eq:JP24}, we deduce that $\nu_{\ast} = \mu_{(X,f_{\ast})}$ which shows that the solution of the optimization is a fvarifold associated to a true fshape $(X,f_{\ast})$. In addition, if $X$ is a $C^{p}$ submanifold then $x\mapsto T_{x}X$ is a $C^{p-1}$ function on $X$ and, if $W\hookrightarrow C_{0}^{m}(E\times G_{d}(E)\times \R)$ with $m\geq 2$ and $m\geq p$, $A$ and $\tilde{f}$ are $C^{p-1}$ functions so $f_{\ast}$ is also $C^{p-1}$, which concludes the proof of Proposition \ref{prop:JP1}.

\section{Proof of Theorem \ref{theo:existence_atlas_fvar_tang}}
\label{appendix:proof_existence_atlas_fvar_tang}
 We shall basically follow the same steps as in the previous simpler cases. First of all, exactly as in \ref{subsection:atlas_existence_Xnonfixed}, existence of a template shape $X$ is guaranteed with the same compacity and lower semicontinuity arguments. Thus we may assume that $X$ is fixed and we only have to show existence of minimizers to the simplified functional:
\begin{equation*}
  \begin{split}
    & J_{X}(f,(\zeta^{i}),(v^i)) \doteq \frac{\gamma_f}{2}\int_X|f(x)|^2d\Haus^d(x)\\
&+\frac{1}{2}\sum_{i=1}^N\left(\|v^i\|^2_{L^{2}([0,1],V)}+\gamma_\zeta\int_X|\zeta^i(x)|^2d\Haus^d(x)+\gamma_W\|\mu_{(X^i,f^i)}-\mu_{(\phi^{v^i}_{1}(X),(f+\zeta^i)\circ(\phi^{v^i}_{1})^{-1})}\|^2_{W'}\right)
  \end{split}
\end{equation*} 
Now, as for $v^{0}$, due to the presence of the penalizations $\|v^i\|^2_{L^{2}([0,1],V)}$, one can assume that all vector fields $v^{i}$ belong to a fixed closed ball $B$ of radius $r > 0$ in $L^{2}([0,1],V)$. As in the proof of Proposition \ref{prop:JP1}, we first show existence of a minimizer in a space of fvarifolds. Namely, extending the definitions of the previous subsections, we introduce the space $\MeasX$ of measures $\nu$ on $E \times G_{d}(E) \times \mathbb{R} \times \mathbb{R}^{N}$ such that for all continuous and bounded function $h$ on $E \times G_{d}(E)$, we have:   
\begin{equation*}
 \nu(h) = \int h(x,V) d\nu(x,V,f,(\zeta^{i})) = \int_{X} h(x,T_{x}X) d \mathcal{H}^{d}(x) \,.
\end{equation*}
For a measure $\nu$ on $E \times G_{d}(E) \times \mathbb{R} \times \mathbb{R}^{N}$ and a diffeomorphism $\phi$, we denote by $\phi.\nu$ the transport of $\nu$ by $\phi$ defined by:
\begin{equation*}
	(\phi.\nu)(g) = \int |\rst{d_{x}\phi}{V}| g(\phi(x),d_{x}\phi(V),f,(\zeta^{i})) d\nu(x,V,f,(\zeta^{i})) \,.
\end{equation*}
We now introduce the extended functional:
\begin{equation*}
    \tilde{J}(\nu,(v^i)) \doteq \frac{\gamma_f}{2} \nu(|f|^{2})+\frac{1}{2}\sum_{i=1}^N\left(\|v^i\|^2_{L^{2}([0,1],V)}+\gamma_\zeta\nu(|\zeta^{i}|^{2})+\gamma_W\|\mu_{(X^i,f^i)}-(\phi^{v^i}_{1}).\nu^{i}\|^2_{W'}\right)
\end{equation*} 
for $\nu \in \MeasX$, $(v^{i}) \in (L^{2}([0,1],V))^{N}$ and for all $i \in \{1,..,N\}$, $\nu^{i}$ being the fvarifold defined for all $\omega \in W$ by:
$$\nu^{i}(\omega) = \int \omega(x,V,f+\zeta^{i}) d\nu(x,V,f,\zeta^{i}) \,.$$
As previously, we can consider the perturbation function $\rho_{t}$ acting on signals and the measures $$\nu_{t}(g)\doteq \int g(x,V,\rho_{t}(f),(\rho_{t}(\zeta^{i}))) d\nu(x,V,f,(\zeta^{i})) \,.$$
Denoting $J_{t}=\tilde{J}(\nu_{t},(v^i))$, we have, for $t\in [0,1]$,
  \begin{align}
	  J'_t &= \nu\left(\frac{d}{dt}\left(\rho_t(f)\right)\left(\gamma_f\rho_t(f)+\gamma_W\sum_{i=1}^N |\rst{d_{x}\phi^{v^i}_{1}}{V}|.\frac{\partial \omega^i}{\partial f}(\phi^{v^{i}}(x),d_{x}\phi^{v^{i}}(V),\rho_{t}(f)+\rho_{t}(\zeta^{i}))\right)\right) \nonumber \\
	  &+ \nu\left(\sum_{i=1}^{N}\frac{d}{dt}\left(\rho_t(\zeta^{i})\right)\left(\gamma_f\rho_t(\zeta^{i})+\gamma_W |\rst{d_{x}\phi^{v^i}_{1}}{V}|.\frac{\partial \omega^i}{\partial f}(\phi^{v^{i}}(x),d_{x}\phi^{v^{i}}(V),\rho_{t}(f)+\rho_{t}(\zeta^{i}))\right)\right)
\end{align}
where, for all $i \in \{1,..,N\}$, $\omega^{i}=K_{W}(\mu_{(X^{i},f^{i})} - (\phi^{v^i}_{1}).\nu_{t}^{i})$. On the first hand, we know that there exists a constant $\text{cte}$ such that for all $i$, $x \in E$ and $V \in G_{d}(E)$, $|\rst{d_{x}\phi^{v^{i}}}{V}|\leq \text{cte} |d\phi^{v^{i}}|_{\infty}$. In addition, it is a classical result on flows of differential equations (cf \cite{Younes}) that there exists a non-decreasing continuous function $\tau: \R^{+}\rightarrow \R^{+}$ independent of $v \in L^{2}([0,1],V)$ such that $|d\phi^{v}_{1}|_{\infty}\leq \tau(\|v\|_{L^{2}([0,1],V)})$. Now, using the same controls as in the previous subsections, we have, on the other hand:
\begin{align*}
  \left| \frac{\partial \omega^i}{\partial f}(\phi^{v^{i}}(x),d_{x}\phi^{v^{i}}(V),\rho_{t}(f)+\rho_{t}(\zeta^{i})) \right | &\leq \left| \frac{\partial \omega^i}{\partial f} \right |_{\infty} \\
  &\leq \text{cte} \|\omega^{i}\|_{W} \\
  &\leq \text{cte} (\|\mu_{(X^{i},f^{i})}\|_{W'}+\|(\phi^{v^i}_{1}).\nu_{t}^{i}\|_{W'}) \\
  &\leq \text{cte} (\mathcal{H}^{d}(X^{i}) + (\phi^{v^i}_{1}).\nu_{t}^{i}(E\times G_{d}(E)\times \R)) \,. 
\end{align*} 
 It is also straightforward that $(\phi^{v^i}_{1}).\nu_{t}^{i}(E\times G_{d}(E)\times \R)) \leq \text{cte} |d\phi^{v^{i}}|_{\infty} \nu_{t}^{i}(E \times G_{d}(E) \times \R)$ and, using the fact that $\nu \in \MeasX$ as already argued in \ref{ssec:atlas_existence_Xfixed}, $\nu_{t}^{i}(E \times G_{d}(E) \times \R)=\mathcal{H}^{d}(X)$. It results, from all the previous inequalities, the existence of a non-decreasing continuous function that we will still call $\tau$ such that for all $i,x,V,f,\zeta^{i}$:
 \begin{equation}
 \label{eq:A2}
 \left| |\rst{d_{x}\phi^{v^i}_{1}}{V}|.\frac{\partial \omega^i}{\partial f}(\phi^{v^{i}}(x),d_{x}\phi^{v^{i}}(V),\rho_{t}(f)+\rho_{t}(\zeta^{i})) \right | \leq \tau(\|v^i\|_{L^{2}([0,1],V)}).(\Haus^{d}(X^{i})+\Haus^{d}(X))
 \end{equation}
 Following the same path that previously lead to \eqref{eq:JP13}
 \begin{align}
  \label{eq:A3}
  J'_t&\leq \nu\left(-\bigg|\frac{d}{dt}\left(\rho_t(f)\right)\bigg|\one_{|f|>a}\left(\gamma_f\, a-\gamma_W  \sum_{i=1}^N \tau(\|v^i\|_{L^{2}([0,1],V)})\left(\Haus^d(X^i)+\Haus^d(X)\right)\right)\right) \nonumber \\
  &+ \sum_{i=1}^{N}\nu\left(-\bigg|\frac{d}{dt}\left(\rho_t(\zeta^{i})\right)\bigg|\one_{|\zeta^{i}|>a}\left(\gamma_\zeta\, a-\gamma_W \tau(\|v^i\|_{L^{2}([0,1],V)}) \left(\Haus^d(X^i)+\Haus^d(X)\right)\right)\right) \,.
\end{align}
Just as in \ref{ssec:atlas_existence_Xfixed}, this implies that $\tilde{J}(\nu_{1},(v^{i})) \leq \tilde{J}(\nu_{0},(v^{i}))$ as soon as:
\begin{equation*}
 \left\{ \begin{array}[h]{l}
  a\geq \frac{\gamma_{W}}{\gamma_{f}} \sum_{i=1}^{N} \tau(\|v^i\|_{L^{2}([0,1],V)})\left(\Haus^d(X^i)+\Haus^d(X)\right) \\
\text{ and }\\
 a\geq \max_{i} \frac{\gamma_{W}}{\gamma_{\zeta}} \tau(\|v^i\|_{L^{2}([0,1],V)}) \left(\Haus^d(X^i)+\Haus^d(X)\right)
\end{array}\right.
\end{equation*}
Therefore, one may restrict the search of a minimum on a set of measures $\nu$ that are supported on a compact subset $C$ of $E \times G_{d} \times \R \times \R^{N}$, which space we shall denote again $\MeasXC$. The rest of the proof is now very close to the one of \ref{ssec:atlas_existence_Xfixed}. Due to lower semi-continuity of the functional and the compacity of $\MeasXC$ and $B$ for the weak convergence topologies (respectively on the space of measures and on $L^{2}([0,1],V)$), we obtain the existence of a minimizer $(\nu_{\ast},(v^{i})_{\ast})$ for the functional $\tilde{J}$. 

The last step is to prove that $\nu_{\ast}$, which belongs a priori to the measure space $\MeasXC$, can be written under the form $\nu_{\ast}= \nu_{X,f_{\ast},(\zeta^{i}_{\ast})}$, i.e that there exists functions $f_{\ast}$ and $\zeta^{i}_{\ast}$ on $X$ such that, for all continuous and bounded function $g$ on $E \times G_{d}(E) \times \R \times \R^{N}$:
\begin{equation}
\label{eq:A4}
 \nu_{\ast}(g) = \int_{X} g(x,T_{x}X,f_{\ast}(x),(\zeta^{i}_{\ast}(x))) d\Haus^{d}(x)
\end{equation}
We then consider variations of the signals $(\delta f, (\delta \zeta^{i}))$ all belonging to the space $C_{b}(E \times G_{d}(E) \times \R \times \R^{N})$ and the path $t\mapsto \nu_{t}$ defined by:
\begin{equation*}
 \nu_{t}(g)= \int g(x,V,f+t\delta f(x,V,f,(\zeta^{i})),(\zeta^{i}+t\delta\zeta^{i}(x,V,f,(\zeta^{i})))) d\nu_{\ast}(x,V,f,(\zeta^{i})) \,.
\end{equation*}
Now, if $J_{t}\doteq \tilde{J}(\nu_{t},v^{i}_{\ast})$, expressing that $\rst{J_{t}'}{t=0}=0$ for all $\delta f$ and $(\delta \zeta^{i})$ gives, similarly to \ref{ssec:atlas_existence_Xfixed}, the following set of equations:
\begin{equation}
\label{eq:A5}
 (\gamma_{f} f, (\gamma_{\zeta} \zeta^{i})) = -A(x,V,f,(\zeta^{i})) \ \nu_{\ast}\text{-a.e}
\end{equation}
with $A(x,V,f,(\zeta^{i}))\doteq \left ( \sum_{i=1}^{N} \frac{\partial \omega^{i}}{\partial f}(\phi^{v^{i}_{\ast}}_{1}(x),d_{x}\phi^{v^{i}_{\ast}}_{1}(V),f+\zeta^{i}), (\frac{\partial \omega^{i}}{\partial f}(\phi^{v^{i}_{\ast}}_{1}(x),d_{x}\phi^{v^{i}_{\ast}}_{1}(V),f+\zeta^{i})) \right )$. The derivatives $\partial_{f} A$ and $\partial_{\zeta^{i}} A$ can be shown again to be uniformly bounded in $x,V,f,\zeta^{i}$, and a previous argument provides the existence of unique solutions $f=\tilde{f}(x,V)$ and $\zeta^{i}=\tilde{\zeta}^{i}(x,V)$ to \eqref{eq:A5}. The rest of the proof is exactly the same as in the end of appendix \ref{appendix:proof_propJP1:ssec2}: we set $f_{\ast}(x)=\tilde{f}(x,T_{x}X)$ and $\zeta^{i}_{\ast}(x)=\tilde{\zeta}^{i}(x,T_{x}X)$, which are again $L^{\infty}$ functions on $X$. In addition, one shows easily that the minimizing measure $\nu_{\ast}$ equals $\nu_{X,f_{\ast},(\zeta_{\ast}^{i})}$ in the sense of \eqref{eq:A4}. Finally, the regularity of $f_{\ast}$ and $\zeta_{\ast}$ when $X$ is a $C^{p}$ submanifold is obtained again by applying the Implicit Function Theorem to \eqref{eq:A5}.

\section{Proof of Theorem \ref{theo:existence_atlas_fvar_meta}}
\label{appendix:proof_existence_atlas_fvar_meta}
 We shall only sketch the essential steps to adapt the content of appendix \ref{appendix:proof_existence_atlas_fvar_tang}. We start by writing \eqref{eq:M1} in an extensive way. This gives:
  \begin{align}
\label{eq:M2}
& J((v^{0},h^{0}),(v^{i},h^{i})) = \frac{\gamma_{V_0}}{2} \|v^{0}\|_{L^{2}([0,1],V_{0})}^{2} + \frac{\gamma_{f_0}}{2} \int_0^1\int_{X_0}|h^0_t|^2|\rst{d_x\phi^{v^0}_1}{T_{x}X}|d\Haus^d(x) \\ 
&+ \sum_{i=1}^N\left( \frac{\gamma_{V}}{2} \|v^{i}\|_{L^{2}([0,1],V)}^{2} +\frac{\gamma_{f}}{2} \int_0^1\int_{X}|h^i_t|^2|\rst{d_x\phi^{v^i}_1}{T_{x}X}|d\Haus^d(x) +\frac{\gamma_W}{2}\|\mu_{(X^i,f^i)}-\mu_{(\phi^{v^i}_{1}(X),(f+\zeta^{h^{i}}_{1})\circ(\phi^{v^i}_{1})^{-1})} \|^2_{W'}\right)
 \end{align} 
Now, with Lemma \ref{lemma_meta_atlas}, we know that the optimal functions $h^{0}_{\ast}$ and $h^{i}_{\ast}$ are given by \eqref{eq:F1} and thus the variational problem of \eqref{eq:M2} can be replaced by the optimization with respect to residual functions $\zeta^0$ and $\zeta^{i}$ living in $L^{2}(X)$ of the functional:
  \begin{align*}
\label{eq:M3}
   & J((v^{0},\zeta^{0}),(v^{i},\zeta^{i})) = \frac{\gamma_{V_0}}{2} \|v^{0}\|_{L^{2}([0,1],V_{0})}^{2} + \frac{\gamma_{f_0}}{2} \int_{X_0} C^0(x).|\zeta^0(x)|^2 d\Haus^d(x) \\ 
    &+ \sum_{i=1}^N\left( \frac{\gamma_{V}}{2} \|v^{i}\|_{L^{2}([0,1],V)}^{2} +\frac{\gamma_{f}}{2} \int_{X}C^{i}(x).|\zeta^i(x)|^2 d\Haus^d(x) +\frac{\gamma_W}{2}\|\mu_{(X^i,f^i)}-\mu_{(\phi^{v^i}_{1}(X),(f+\zeta^{i})\circ(\phi^{v^i}_{1})^{-1})} \|^2_{W'}\right)
  \end{align*}
  where $C^{0}(x)\doteq(\int_0^1\frac{1}{|\rst{d_x[\phi^{v^{0}}_{s}\circ (\phi_1^{v_0})^{-1}]}{T_{x}X}|}ds)^{-1}$ and for all $i\in \{1,...,N\}$, $C^{i}(x)\doteq(\int_0^1\frac{1}{|\rst{d_x\phi^{v^{i}}_{s}}{T_{x}X}|}ds)^{-1}$. But we note that the previous, up to the weights in the $L^2$ metrics given by functions $C^i$, becomes now extremely close to the problem examined in Theorem \ref{theo:existence_atlas_fvar_tang}. In fact, the proof of appendix \ref{appendix:proof_existence_atlas_fvar_tang} can be adapted almost straightforwardly to this situation. As previously, the essential step is to reformulate the optimization problem in a space of measures. Defining the functions: 
\begin{align*}
	\tilde{C}^{0}(x,H)&= \left ( \int_0^1\frac{1}{|\rst{d_x[\phi^{v^{0}}_{s}\circ (\phi_1^{v_0})^{-1}]}{H}|}ds \right )^{-1} \\
	\tilde{C}^{i}(x,H)&= \left ( \int_0^1\frac{1}{|\rst{d_x\phi^{v^{i}}_{s}}{H}|}ds \right )^{-1}
 \end{align*}
for $i\in \{1,...,N\}$ and $(x,H) \in E \times G_{d}(E)$, we can set, with the same definitions as in appendix \ref{appendix:proof_existence_atlas_fvar_tang}:
\begin{equation}
 \tilde{J}(\nu,(v^i))\doteq \frac{\gamma_{f_0}}{2} \nu(\tilde{C}^0.|f|^2) + \dfrac{1}{2} \sum_{i=1}^N\left( \gamma_{V} \|v^{i}\|_{L^{2}([0,1],V)}^{2} +\gamma_{f} \nu(\tilde{C}^{i}.|\zeta^i|^2) + \gamma_W \|\mu_{(X^i,f^i)}-(\phi^{v^i})_{\ast}\nu^i \|^2_{W'}\right)
\end{equation}
for $\nu \in \MeasX$. The rest of the proof follows the same path, relying on the fact that we can assume the vector fields $v^0$ and $v^i$ to be bounded in $L^{2}([0,1],V_{0})$ and $L^{2}([0,1],V)$ as we explained in the beginning of appendix \ref{appendix:proof_existence_atlas_fvar_tang}. This implies, as already argued in the same section, that we have uniform lower and upper bounds for $|\rst{d_x\phi^{v^{i}}_{s}}{H}|$, $s \in [0,1]$ and $i \in \{1,...,N\}$ and for the quantities $|\rst{d_x[\phi^{v^{0}}_{s}\circ (\phi_1^{v_0})^{-1}]}{H}|$. Consequently, we can assume that we have $\alpha,\beta >0$ such that for all $i \in \{0,...,N\}$, $\alpha \leq \|\tilde{C}^i\|_{\infty} \leq \beta$. Using these inequalities, one can check that we get equivalent controls as in the proof of Theorem \ref{theo:existence_atlas_fvar_tang} which allows us to conclude the existence of a measure minimizer for the extended functional $\tilde{J}$ and then go back to a fshape solution for $J$ with a similar implicit functions' argument.

\bibliographystyle{plain}
\bibliography{biblio_thesis} 

\end{document}